 \newif \ifmycolour \mycolourfalse
\definecolor{darkred}{rgb}{.8,0,0}
 \def\ie{\emph{i.e.}}
 \newcommand{\Comment}[1]{}
\def\Steffen#1{}\def\Ugo#1{}\def\Franco#1{}
\newcommand{\vect}[1]{\vec{#1}}
 \def\LC{the $`l$-calculus}
\begin{document}

 \title[Intersection Types for $`l`m $] {Intersection Types for the $`l`m$-Calculus}

 \author[S. van Bakel et al.]{Steffen van Bakel}
 \address{Department of Computing, Imperial College London, 180 Queen's Gate, London SW7 2BZ, UK}	%required
 \email{s.vanbakel@imperial.ac.uk} 
 \thanks{Authors have been partially supported by COST Action EUTYPES CA-15123, Project FIR 1B8C1 of the University of Catania (the second author) and Project FORMS 2015 of the University of Turin (the third author).}

 \author[]{Franco Barbanera}	
 \address{Dipartimento di Matematica e Informatica, Universit\`a degli Studi di Catania, Viale A. Doria 6, 95125 Catania, Italy}	
 \email{barba@dmi.unict.it} 
% \thanks{Partially supported by COST Action EUTYPES CA-15123 and Project FIR 1B8C1 of the University of Catania}

 \author[]{Ugo de'Liguoro}
 \address{Dipartimento di Informatica, Universit\`a degli Studi di Torino, Corso Svizzera 185, 10149 Torino, Italy}
 \email{ugo.deliguoro@unito.it}
%\thanks{Partially supported by COST Action EUTYPES CA-15123 and Project FORMS 2015 of the University of Turin.}

 \keywords{$`l`m$-calculus, intersection types, filter semantics, strong normalisation.}
 \subjclass{F.4 MATHEMATICAL LOGIC AND FORMAL LANGUAGES, F.4.1 Lambda calculus and related systems}

 \begin{abstract}
We introduce an intersection type system for the $`l`m$-calculus that is invariant under subject reduction and expansion.
The system is obtained by describing Streicher and Reus's denotational model of continuations in the category of $`w$-algebraic lattices via Abramsky's domain-logic approach.
This provides at the same time an interpretation of the type system and a proof of the completeness of the system with respect to the continuation models by means of a filter model construction.

We then define a restriction of our system, such that a $`l`m$-term is typeable if and only if it is strongly normalising. 
We also show that Parigot's typing of $`l`m$-terms with classically valid propositional formulas can be translated into the restricted system, which then provides an alternative proof of strong normalisability for the typed $`l`m$-calculus.

 \end{abstract}

 \maketitle

 \section*{Introduction} \label{sec:intro}

The $`l`m$-calculus is a type-free calculus introduced by Parigot \cite{Parigot'92} to denote classical proofs and to compute with them. 
It is an extension of the proofs-as-programs paradigm where types can be understood as classical formulas and (closed) terms inhabiting a type as the respective proofs in a variant of Gentzen's natural deduction calculus for classical logic \cite{Gentzen'35}.
The study of the syntactic properties of the $`l`m$-calculus has been challenging, which led to the introduction of variants of term syntax, reduction rules, and typing as, for example, in de Groote's variant of the $`l`m$-calculus \cite{deGroote'94}.
These changes have an impact on the deep nature of the calculus which emerges both in the typed and in the untyped setting \cite{David-Py'01,Saurin'08}.

Types are of great help in understanding the computational properties of terms in an abstract way.
Although in \cite{Barendregt'84} Barendregt treats the theory of the pure $`l$-calculus without a reference to types, most of the fundamental results of the theory can be exposed in a quite elegant way by using the Coppo-Dezani intersection type system \cite{Coppo-Dezani'80}.
This is used by Krivine \cite{Krivine-book'93}, where the treatment of the pure $`l$-calculus relies on intersection typing systems called $ \mathcal D$ and $\Der \Omega$.

The quest for more expressive notions of typing for $`l`m$ is part of an ongoing investigation into calculi for classical logic.
In order to come to a characterisation of strong normalisation for Curien and Herbelin's (untyped) sequent calculus $ \lmmt$ \cite{Curien-Herbelin'00}, Dougherty, Ghilezan and Lescanne presented System $ \MIU$ \cite{DGL-ITRS'04,DGL-CDR'08}, that defines a notion of intersection and union typing for that calculus.
However, in \cite{Bakel-APAL'10} van Bakel showed that this system is not closed under conversion, an essential property of Coppo-Dezani systems; in fact, it is shown that it is \emph{impossible} to define a notion of typing for $ \lmmt$ that satisfies that property.

In \cite{Bakel-ITRS'10} van Bakel brought intersection (and union) types to the context of the (untyped) $`l`m$-calculus, and showed that for $`l`m$-conversion \emph{it is possible} to prove type preservation under conversion.
However, union types are no longer dual to intersection types and play only a marginal role, as was also the intention of \cite{DGL-CDR'08}.
In particular, the normal $(\unI)$ and $(\unE)$ rules as used in \cite{Barbanera-Dezani-Liguoro-IaC'95}, which are known to create a soundness problem in the context of the $`l$-calculus, are not allowed.
% Moreover, although one can link intersection types with the logical connector \textsl{and}, the union types used in \cite{Bakel-ITRS'10} bear \emph{no} relation with \textsl{or}; one could argue that therefore \emph{union} might perhaps not be the right name to use for this type constructor.
In the view of the above mentioned failure noted in \cite{Bakel-APAL'10}, the result of \cite{Bakel-ITRS'10} came as a surprise, and led automatically to the question we answer here: does a \emph{filter semantics} for $`l`m$ exist?

The idea of building a $`l$-model out of a suitable type assignment system appeared first in \cite{BCD'83}. In that system types are an extension of simple types with the binary operator $\inter$ for intersection, and are pre-ordered by an axiomatisable (in fact decidable) relation $\seq$; if types are interpreted by subsets of the domain $D$ (an applicative structure satisfying certain conditions), one can see $\inter$ as set theoretic intersection and $\seq$ as containment. The discovery of \cite{BCD'83} is that, taking a proper relation $\seq$, the set $\Filt_D$ of filters of types (sets of types closed under type intersection and $\seq$) is a $`l$-model, where (closed) terms can be interpreted by the set of types that can be assigned to them. This is what is called a filter semantics.

It emerged in \cite{Coppo-et.al'84} that models constructed as set of filters of intersection types are exactly the $`w$-algebraic lattices, a category of complete lattices, but with Scott-continuous maps as morphisms. 
$`w$-algebraic lattices are posets whose structure is fully determined by a countable subset of elements, called `compact points' for topological reasons.
Now the crucial fact is that given an $`w$-algebraic lattice $D$, the set $\Compact(D)$ of its compact points can be described by putting its elements into a one-to-one correspondence with a suitable set of intersection types, in such a way that the order over $\Compact(D)$ is reflected by the inverse of the $\seq$ pre-order over types. 
Then one can show that the filter structure $\Filt_D$ obtained from the type pre-order is isomorphic with the original $D$. 
In fact, Abramsky proved that this is not true only of $`w$-algebraic lattices, but of quite
larger categories of domains, like $2/3$-SFP domains that can be finitely described by a generalisation of intersection type theories, called the \emph{logics} of the respective domains in \cite{Abramsky'91}.

Here, instead of defining a suitable type system for $`l`m$, and then trying to prove that it actually induces a filter model, we follow the opposite route. 
We start from a model of the $`l`m$-calculus in $\ALG$, the category of $\omega$-algebraic lattices. 
We then distill the type syntax and the corresponding type theory out of the construction of the model, and recover the typing rules from the clauses that define term interpretation in the given model $\Filt_D$ that is by construction isomorphic to the given $D$.

However, things are %slightly 
more complex than this. 
First we need a domain theoretic model of $`l`m$; we use for that purpose Streicher and Reus's \emph{models of continuations}. 
Building on Lafont's ideas and the papers \cite{Lafont-Reus-Streicher'93,Ong-Stewart'97}, in \cite{Streicher-Reus'98} Streicher and Reus proposed a model of both typed and untyped $`l$-calculi embodying a concept of continuation, including Felleisen's $`l{\mathcal C}$-calculus \cite{Felleisen-Friedman-Kohlbecker'86,Felleisen-PhD'87} and a version of Parigot's $`l`m$.
The model is based on the solution of the domain equations $D = C\To R$ and $C = D \times C$, where $R$ is an arbitrary domain of `results'. 
The domain $C$ is set of what are called `continuations' in \cite{Streicher-Reus'98}, which are infinite tuples of elements in $D$. 
$D$ is the domain of continuous functions from $C$ to $R$ and is the set of `denotations' of terms. 
We call the triple $(R,D,C)$ a $`l`m$-model, that exists in $\ALG$ by the inverse limit technique, provided that $R\ele \ALG$. 

The next step is to find type languages $\Lang_D$ and $\Lang_C$, and type theories axiomatising the respective pre-orders $\seqD$ and $\seq_C$, such that $D$ and $C$ are isomorphic to $\Filt_D$ and $\Filt_C$, respectively. 
To this aim we may suppose that logical description of $R$ is given via a language of types $\Lang_R$ and a pre-order $\seq_R$.
But then we need a detailed analysis of $\Compact(D)$ and $\Compact(C)$, keeping into account that $D$ and $C$ are both co-limits of certain chains of domains, and that their compact points are into one-to-one correspondence with the union of the compact points of the domains approximating $D$ and $C$.
This leads us to a mutually inductive definition of $\Lang_D$ and $\Lang_C$ and of $\seqD$ and $\seq_C$.
In this way, we obtain an extension of the type theory used in \cite{BCD'83}, which is a \emph{natural equated} intersection type theory in terms of \cite{Alessi-Severi'08} and hence is isomorphic to the inverse limit construction of a $D_ \infty$ $`l$-model (as an aside, we observe that this matches perfectly with Theorem 3.1 in \cite{Streicher-Reus'98}).

Once the filter domains $\Filt_D$ and $\Filt_C$ have been constructed, we can consider the interpretation of terms and of commands (`unnamed' and `named terms' respectively in Parigot's terminology). 
Following \cite{Streicher-Reus'98}, we define the interpretation of expressions of Parigot's $`l`m$-calculus inductively via a set of equations.
Guided by these equations in the particular case of $\Filt_D$ and $\Filt_C$, and considering the correspondence we have established among types and compact points, we are able to reconstruct the inference rules of a type assignment system which forms the main contribution of our work.

The study of the properties of the system produces a series of results that confirm the validity of the construction. 
First we prove that in the filter model the meaning of $M$ in the environment $e$, denoted by $\Sem{ M }{ e }$, coincides with the filter of all types $`d\ele \Lang_D$ such that $ \derLmu `G |- M : `d | `D $ is derivable in the system, for $`G$ and $`D $ such that $e$ satisfies both $`G$ and $`D $, and similarly for $\Sem{\Cmd}{ e }$, where $\Cmd$ is a command. 
Then if two terms or commands are convertible, they must have the same types. 
In fact, we will prove this result twice: first abstractly, making essential use of the filter model construction; then concretely, by studying in depth the structure of the derivations in our system, and establishing that types are preserved under subject reduction and expansion.

We then face the problem of characterising strong normalisation in the case of $`l`m$. 
This is a characteristic property of intersection types, stated the first time by Pottinger \cite{Pottinger'80} for the ordinary $`l$-calculus: strongly normalising terms can be captured by certain `restricted' type systems, ruling out the universal type $`w$.
As will be apparent in the technical treatment, we cannot simply restrict our system by removing $`w$; however, the characterisation can be obtained by distinguishing certain `good' occurrences of $`w$ that cannot be eliminated, and the `bad' ones that must be forbidden. 
This is still guided by the semantics and by the proof theoretic study of the system, and we can establish that there exists a subsystem of our system that is determined just by a restriction on type syntax, plus the elimination of the rule $(`w)$ from our type system.

% A last question we answer is about 
% SvB there is no question here
We conclude by looking at the relation between our type system and the original one proposed by Parigot \cite{Parigot'92} on the basis of the Curry-Howard correspondence between types and formulas and $`l`m$-terms and proofs of classical logic. 
We show that there exists an interpretation of Parigot's first order types into intersection types such that the structure of derivations is preserved; moreover, since the translations are all restricted intersection types, we obtain a new proof that all proof-terms in $`l`m$, \ie~those typeable in Parigot's system (even extended with negation), are strongly normalising. \\

\noindent
This paper is the full version of \cite{Bakel-Barbanera-deLiguoro-TLCA'11}, extended with a revised version of \cite{BakBdL-ITRS12}.

 \subsection*{Outline of this paper}
The paper is organised as follows. 
After recalling the $`l`m$-calculus in Sect.\skp\ref{sec:calculus}, we study the domain theoretic models in Sect.\skp\ref{sec:semantics}. 
In Sect.\skp\ref{sec:filter} we introduce intersection types and type theories and we illustrate the filter model construction.
The main part of the paper is Sect.\skp\ref{sec:types}, where we introduce the type assignment system; we study type invariance under reduction and expansion in Sect.\skp\ref{sec:closureUnderConv}.
Sect.\skp\ref{sec:character} is devoted to the characterisation of strongly normalising terms by means of a subsystem of ours obtained by suitably restricting the type syntax. 
Then, in Sect.\skp\ref{sec:Parigot}, we compare our system with Parigot's, and show that Parigot's types are translatable into our restricted types while preserving type derivability (in the two systems). 
We finish by discussing some related work in Sect.\skp\ref{sec:related} and draw our conclusions.

 \section{The untyped $`l`m$-calculus} \label{sec:calculus}
The $`l`m$-calculus, as introduced in \cite{Parigot'92}, is an extension of the untyped $`l$-calculus obtained by adding \emph{names} and a name-abstraction operator, $`m$. 
It was intended as a proof calculus for a fragment of classical logic. 
Logical formulas of the implicational fragment of the propositional calculus can be assigned as types to $`l`m$-terms much in the formulae-as-types paradigm of the Curry-Howard correspondence between typed $`l$-calculus and intuitionistic logic. 
With $`l`m$ Parigot created a multi-conclusion typing system. 
In the notation of \cite{Saurin'08}, the derivable statements have the shape $\derLmu `G |- M : A | `D $, where $A$ is the main conclusion of the statement, expressed as the \emph{active} conclusion, and $`D $ contains the alternative conclusions, consisting of pairs of names and types; the left-hand context $`G$, as usual, is a mapping from term variables to types, and represents the assumptions about free term variables of $M$.

As with implicative intuitionistic logic, the reduction rules for the terms that represent proofs correspond to proof contractions; the difference is that the reduction rules for {\LC} are the \emph{logical} reductions, \ie~deal with the elimination of a type construct that has been introduced directly above. 
In addition to these, Parigot expressed also the \emph{structural} rules, where elimination takes place for a type constructor that appears in one of the alternative conclusions (the Greek variable is the name given to a subterm): he therefore needed to express that the focus of the derivation (proof) changes, and this is achieved by extending the syntax with two new constructs $[`a]M$ and $`m`a.M$ that act as witness to \emph{deactivation} and \emph{activation}, which together move the focus of the derivation.

$\lmu$ was conceived in the spirit of Felleisen's $`l{\mathcal C}$-calculus, that Griffin showed to be typeable with classical propositional logic in \cite{Griffin'90}. 
%
%In the same way, t
% SvB What way?
The $`l`m$-calculus is type free and uses names and $`m$ to model a form of functional programming with control \cite{deGroote'94}.

Here we briefly revise the basic notions of the $`l`m$-calculus, though slightly changing the notation and terminology, and defer the presentation of the typed $`l`m$-calculus to Sect.\skp\ref{sec:Parigot}.

 \begin{defi}[Term Syntax \cite{Parigot'92}] % \label{def:terms}
The sets $ \Terms$ of \emph{terms} (ranged over by $M$, $N$, $L$) and $ \Commands$ of \emph{commands} (ranged over by $\Cmd$) are defined inductively by the following grammar, where $x \ele \TVar$, the set of \emph{term variables} (ranged over by $x$, $y$, $z$) and $`a \ele \CVar$, the set of \emph{names} (ranged over by $`a$, $`b$, $`g$), both denumerable:
 \[ \begin{array}{rcl@{\dquad}l}
M,N & :: = &x \mid `lx.M \mid MN \mid `m`a.\Cmd & (\textrm{terms}) \\
 \Cmd & :: = & [`a]M & (\textrm{commands})
 \end{array} \]
We let $T$ range over $\Terms \union \Commands$.

As usual, $`l x.M$ binds $x$ in $M$, and $`m `a .\Cmd$ binds $`a $ in $\Cmd$. 
A variable or a name occurrence is free in a term if it occurs and is not bound: we denote the free variables and the free names occurring in $T$ by $\fv(T)$ and $\fn(T)$, respectively.
 \end{defi}

We identify terms and commands obtained by renaming of free variables or names, and we adopt Barendregt's convention that free and bound variables and names are distinct, assuming silent $`a$-conversion during reduction to avoid capture.
We will extend this convention to also consider occurrences of $x$ and $`a$ bound over $M$ in type judgements like $\derLmu `G,x{:}A |- M : B | `a{:}C,`D $ (see Sect.\skp\ref{sec:types}).

\Comment{%%%%%%%%%%%%%%%%%%%%%%%%%%%%%%%%%%%%%%%%%%%
% SvB this comment makes no contribution
Note that we could have defined the set of \emph{pure $`l`m$-terms} via the grammar:
 \[ \begin{array}{rcl}
M,N & :: =& x \mid `lx.M \mid MN \mid `m`a.[`b]M
 \end{array} \]
but it is convenient to have commands as part of the syntax as well.
}% Comment %%%%%%%%%%%%%%%%%%%%%%%%%%%%%%%%%%%%%%%%%%

In \cite{Parigot'92} terms and commands are called `terms' and `named terms', respectively; names are called $`m$-variables, but might be better understood as `continuation variables' (see \cite{Streicher-Reus'98}).
Since this would imply a %strong 
commitment to a particular interpretation, we prefer a more neutral terminology.

In the $`l`m$-calculus, substitution takes the following three forms:
 \[ \begin{array}{l@{\dquad}ll}
 \textit{term substitution:} & T[N/x] & (\textrm{$N$ is substituted for $x$ in $T$}) \\
 \textit{renaming:} & T[`a/`b] & (\textrm{$`b$ in $T$ is replaced by $`a$}) \\
 \textit{structural substitution:} & T \strSub[`a<=L] & (\textrm{every subterm $[`a]N$ of $T$ is replaced by $[`a]NL$})
 \end{array} \]
In particular, \emph{structural substitution} is defined by induction over terms and commands as follows:

 \begin{defi}[Structural Substitution] \label{def:substitution}
The key case for the structural substitution is defined as:
 \[ \begin{array}{rcl}
([`a]M) \strSub[`a<=L] & \ByDef & [`a](M \strSub[`a<=L])L 
 \end{array} \]
The other cases are defined as:
 \[ \begin{array}{rcl@{\quad}l}
x \strSub[`a<=L] & \ByDef & x \\
(`lx.M) \strSub[`a<=L] & \ByDef & `lx.M \strSub[`a<=L] \\
(MN) \strSub[`a<=L] & \ByDef & (M \strSub[`a<=L])(N \strSub[`a<=L]) \\
 \end{array} \]
 \[ \begin{array}{rcl@{\quad}l}
(`m`b.\Cmd) \strSub[`a<=L] & \ByDef & `m`b.\Cmd \strSub[`a<=L] \\
([`b]M) \strSub[`a<=L] & \ByDef & [`b]M \strSub[`a<=L] & (\textrm{if }`a \neq`b )
 \end{array} \]
 \end{defi}
Notice that the first case places the argument of the substitution to the right of a term with name $`a$, and the others propagate the substitution towards subterms that are named $`a$.

The reduction relation for the $`l`m$-calculus is defined as follows.

 \begin{defi}[Reduction $\redbm$ \cite{Parigot'92}] \label{def:reduction}
The reduction relation $\redbm$ is the compatible closure of the following rules:
 \[ \begin{array}{r@{\dquad}rcll}
(`b): & (`lx.M)N & \reduces & M[N/x] \\
(`m): & (`m`a.\Cmd)N & \reduces & `m`a.\Cmd\StrSub{`a}{N} \footnotemark \\
(\Rename): & [`a]`m`b.\Cmd & \reduces & \Cmd[`a/`b] .
 \end{array} \]
 \end{defi}
\footnotetext{\def\Cmd{\textrm{\sf C}}%
This is the common notation for this rule, although one could argue that a better formulation would be: $(`m`a.\Cmd)N \reduces `m`g.\Cmd\StrSub{`a}{N{`.}`g}$ where $`g$ is fresh, and let the structural substitution rename the term: $ ([`a]M) \strSub[`a<=L{`.}`g] = [`g](M \strSub[`a<=L{`.}`g])L $; in fact, when making the substitution \emph{explicit} (see \cite{Bakel-Vigliotti-IFIPTCS'12}), this becomes necessary.
This is reflected in Ex.\skp\ref{ex:mu-self-app}, where before the reduction $(`m`a.[`a]x)x \reduces `m `a.[`a]xx$, $`a$ has type $`d\prod`k$, and after it has type $`k$.}

Note that Rule $(`b)$ is the normal $`b$-reduction rule of the $`l$-calculus. 
Rule $(`m)$ is characteristic for $`l`m$; the intuition behind this rule has been explained by de Groote in \cite{deGroote'94}, by arguing on the grounds of the intended typing of the $`m$-terms: `\emph{in a $`l`m$-term $`m`a.M$ of type $A\arrow B$, only the subterms named by $`a$ are \emph{really} of type $A\arrow B$ (\ldots); hence, when such a $`m$-abstraction is applied to an argument, this argument must be passed over to the sub-terms named by $`a$.}'. 
The `renaming' rule $(\Rename)$ is called `structural reduction' in \cite{Parigot'92} and rule $(`r)$ in \cite{Py-PhD'98}; it is an auxiliary notion of reduction, aimed at simplifying proof terms.

 \begin{defi}[The theory $ \LamMuTheory $] \label{def:theories}
The theory $ \LamMuTheory$ is the equational theory determined by the compatible closure of: $M \redbm N \Then \derLM M = N $.
 \end{defi}

Py \cite{Py-PhD'98} has shown that % the reduction relation 
$ \redbm$ is confluent.
Therefore the convertibility relation $=_{`b`m}$ determined by $ \redbm$ is consistent in the usual sense that distinct normal forms are not equated.

 \section{$`l`m$-models and term interpretation} \label{sec:semantics}

As is the case for the $`l$-calculus, in order to provide a semantics to the untyped $`l`m$-calculus we need to look for a domain $D$ and a mapping $\Sem{ \cdot}^D$ such that $\Sem{ M }^D{ e } \ele D$ for each term $M$, where $e$ maps variables to terms and names to continuations. 
Since the interpretation of terms depends on the interpretation of names and commands, we need an auxiliary domain $C$ and a mapping $\Sem{ \cdot}^C$ such that $e \psk `a \ele C$ for any name $`a$, and $\Sem{ \Cmd}^C{ e } \ele C$. The term interpretation is a \emph{model of the theory} $`l`m$ if $\Sem{ M }^D = \Sem{ N }^D$ whenever $\derLM M = N$.

The semantics we consider here is due to Streicher and Reus \cite{Streicher-Reus'98}, but for a minor variant explained below.
The idea is to work in the category ${\mathcal N}_R$ of `negated' domains of the shape $A\To R$, where $R$ is a parameter for the domain of results.
In such a category, continuations are directly modelled and treated as the fundamental concept, providing a semantics both to Felleisen's $ `l{\mathcal C}$-calculus and to a variant of $ `l`m$ that has, next to the two sorts of term we consider here (terms and commands) also \emph{continuation} terms.

Here we adapt that semantics to Parigot's original $ `l`m$. 
We rephrase the model definition in the setting of the normal categories of domains, obtaining something similar to Hindley-Longo `syntactical models.' 
Our models are essentially a particular case of the definitions in \cite{Ong'96,Hofmann-Streicher'97}.

 \begin{defi}[$ `l`m$-Model] \label{def:lambdaMuModel} \label{eq:domain}
A triple ${\ModM} = (R,D,C)$ is a $ `l`m$-model in a category of domains $ \mathcal D$ if $R \ele \Der$ is a fixed domain of \emph{results} and $D$ and $C$ (called domains of \emph{denotations} and of \emph{continuations}, respectively) are solutions in $ \mathcal D$ of the equations:
 \[ \begin{cases}{lll}
D &=& C \SemArrow R \\
C &=& D \times C
 \end{cases} \]
In the terminology of \cite{Streicher-Reus'98}, elements of $D$ are \emph{denotations}, while those of $C$ are \emph{continuations}.
We refer to the above equations as the \emph{continuation domain equations}.
We let $k$ range over $C$, and $d$ over $D$.
 \end{defi}

 \begin{rem} \label{rem:ExtLamMod}
If $(R,D,C)$ is a $`l`m$-model then $C$ is (isomorphic to) the infinite product $D \times D \times D \times \cdots$. 
On the other hand, as observed in \cite{Streicher-Reus'98} \S 3.1, we also have:
 \[ \begin{array}{rcl@{\quad}lrcl@{\quad}lc}
D &\Isom& C \To R &\Isom& (D \times C) \To R &\Isom& D \To (C \To R) &\Isom& D \To D. 
 \end{array} \]
since categories of domains are cartesian closed.
Therefore, a $ `l`m$-model as defined in Def.\skp\ref{def:lambdaMuModel} is an extensional $`l$-model.
 \end{rem}

 \begin{defi}[Term Interpretation] \label{def:interpretation} 
Let ${\ModM}=(R,D,C)$ be a $ `l`m$-model.

 \begin{enumerate}

 \item
We define $\Env = (\TVar \SemArrow D) \times (\CVar \SemArrow C)$ and call elements of $\Env$ \emph{environments}; 
We write $e(x) = e_1(x)$ and $e(`a) = e_2(`a)$ for $e = \Pair{e_1}{e_2} \ele \Env$, and $\EnvM$ for the set of environments interpreting variables and names into $\ModM$.

 \item
We define an \emph{environment update} as: 
 \[ \begin{array}{rcl}
e [x\to d] \psk y & = & 
 \begin{cases}{l@{\dquad}l}
d & (x = y) \\
e \psk y & (\textrm{otherwise})
 \end{cases}
 \\ [4mm]
e [`a\to k] \psk `b & = & 
 \begin{cases}{l@{\dquad}l}
k & (`a = `b) \\
e \psk `b & (\textrm{otherwise})
 \end{cases}
 \end{array} \]

 \item
The \emph{interpretation mappings}
$ \Sem{ \cdot}_{\ModM}^{D }: \Terms \SemArrow \Env \SemArrow D$ and
$ \Sem{ \cdot}_{\ModM}^{C}: \Commands \SemArrow \Env \SemArrow C$, written $\Sem{ \cdot}^D$ and $\Sem{ \cdot}^C$ when $\ModM$ is understood, are mutually defined by the equations:
 \[ \def\arraystretch{1.2} \begin{array}{rcl@{\dquad}l}
 \Sem{ x}^D{ e }{ k }
	&=& e ~ x ~ k \\
 \Sem{ `lx.M}^D{ e }{ k }
	&=& \Sem{ M }^D{ e[x \to d] }{ k' } & (k = \Pair<d,k'>) \\
 \Sem{ MN }^D { e }{ k }
	&=& \Sem{ M }^D{ e }{ \Pair<\Sem{ N }^D { e } , k > } \\
 \Sem{ `m`a.\Cmd}^D{ e }{ k }
	&=& d ~ k' & (\Pair<d , k'> = \Sem{ \Cmd}^C{ e[`a \to k]}) \\
 \Sem{ [`a]M}^C{ e }{ \textcolor{white}{ k }}
	&=& \Pair<\Sem{ M }^D { e } , e \psk `a >
 \end{array} \]

 \end{enumerate}
 \end{defi}
This definition has a strong similarity with Bierman's interpretation of $`l`m$ \cite{Bierman'98}; however, he considers a \emph{typed} version.
In the second equation of the definition of $ \Sem{ \cdot}^D$, the assumption $k = \Pair<d,k'>$ is not restrictive: in particular, if $k = \bot_C = \Pair<\bot_D , \bot_C>$, then $d = \bot_D$ and $k' = k = \bot_C$.

The last two equations differ from those in \cite{Bierman'98} and \cite{Streicher-Reus'98} since there the interpretation of a command is a result:
 \[ \begin{array}{rcl}
\Sem{ [`a]M}_*^R \, e &=& \Sem{ M }_*^D \, e \, (e \psk `a) 
 \end{array} \]
and consequently 
 \[ \begin{array}{rcl}
\Sem{ `m`a.\Cmd}_*^D \, e \, k &=& \Sem{ \Cmd}_*^R \, e[`a \to k],
 \end{array} \]
writing $\Sem{ \cdot}_*^A$ for the resulting interpretation maps. This is not an essential difference however:
%Let $M \ele \Terms$, $e \ele \Env$, ${k}\ele C$ and set 
let $e' = e[`a \to {k}]$, then
 \[ \begin{array}{rcccl}
\Sem{ `m`a.[`b]M}_*^{D }{e}{k} &=& \Sem{ [`b]M}_*^{R}\, e' &=& \Sem{ M}^{D } \, e' (e'\, `b).
 \end{array} \]

On the other hand, by Definition \ref{def:interpretation}:
 \[ \begin{array}{rcl}
\Sem{ [`b]M}^{C}{e}'&=& \Pair{\Sem{ M}^{D}{e'}}{(e' \, `b) }, 
 \end{array} \]
so
 \[ \begin{array}{rcl}
\Sem{ `m`a.[`b]M}^{D }{e}{k} &=& \Sem{ M}^{D }{e'}{(e'\, `b)}.
 \end{array} \]
Therefore we can show $\Sem{ `m`a.[`b]M}_*^{D } = \Sem{ `m`a.[`b]M}^{D }$ by induction.

The motivation for interpreting commands into continuations instead of results is that the latter are elements of the parametric domain $R$; hence
in the system of Sect.\skp\ref{sec:types} results do not have significant types. On the other hand, by choosing our interpretation of commands we get explicit typing of commands with continuation types. 
Conceptually this could be justified by observing that commands are peculiar `evaluation contexts' of the $`l`m$-calculus, and continuations have been the understood as evaluation contexts since Felleisen's work.
Technically, here we have just a variant of the treatment of, for example, \cite{deLiguoro:ApproxLM12}, which system is based on the more standard interpretation.
 
Below, we fix a $`l`m$-model $\ModM$, and we shall write $\Sem{ \cdot}^{\ModM}$ or simply $\Sem{ \cdot}$ by
omitting the superscripts $C$ and $D$ whenever clear from the context.

 \begin{lem} \label{lem:beta-weak} \label{lem:eta-sem}
% For all $e \ele \Env$:

 \begin{enumerate}

 \firstitem \label{lem:eta-sem i}
If $x \notele \fv(M)$, then $\Sem{ M }{ e } = \Sem{ M }{ e[x \to d] } $, for all $d \ele D$.

 \item \label{lem:eta-sem ii}
If $`a \notele \fn(M)$, then $\Sem{ M }{ e } = \Sem{ M }{ e[`a \to k] } $, for all $k \ele C$.

 \end{enumerate}
 \end{lem}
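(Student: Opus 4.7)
The plan is to prove both statements simultaneously by structural induction on $T \ele \Terms \union \Commands$, exactly mirroring the mutually recursive shape of $\Sem{\cdot}^D$ and $\Sem{\cdot}^C$ in Def.\skp\ref{def:interpretation}. Both claims are instances of the standard coincidence/weakening lemma: the interpretation depends only on values of the environment at free variables/names of $T$. So the bulk of the work is mechanical case analysis on the five clauses of the interpretation, with Barendregt's convention used to ensure that bound variables and names never coincide with the one being freshly updated.

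For part (\ref{lem:eta-sem i}), the base case $M=y$ with $y\neq x$ is immediate since $\Sem{y}^D{e}{k} = e(y)(k) = e[x\to d](y)(k)$. For $M = PN$, the hypothesis $x\notele\fv(PN)$ gives $x\notele\fv(P)$ and $x\notele\fv(N)$, and two applications of the induction hypothesis (one inside the pair $\Pair<\Sem{N}^D{e},k>$, one on the outside) finish the case. For $M = `ly.N$, Barendregt's convention gives $y\neq x$, so $e[y\to d'][x\to d] = e[x\to d][y\to d']$ as environments, and since $x\notele\fv(N)$ the induction hypothesis on $N$ closes the case. For $M = `m`a.\Cmd$ we invoke the induction hypothesis on $\Cmd$, noting that updating at $x$ and updating at $`a$ commute because they act on disjoint components of $e = \Pair{e_1}{e_2}$. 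The command case $\Cmd = [`a]M$ is straightforward: the first component of the output pair uses the induction hypothesis on $M$, and the second component $e\psk`a$ is preserved because updating the variable component leaves the name component unchanged.

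Part (\ref{lem:eta-sem ii}) is symmetric. The critical clauses are the binders. For $M = `m`b.\Cmd$, Barendregt's convention gives $`b\neq `a$, hence $e[`a\to k][`b\to k'] = e[`b\to k'][`a\to k]$, and since $`a\notele\fn(\Cmd)$ the induction hypothesis applies. For $\Cmd = [`b]M$, we have $`b\neq `a$ (so $e\psk`b = e[`a\to k]\psk`b$) and $`a\notele\fn(M)$, so the induction hypothesis handles the $\Sem{M}^D$ component. The variable and application cases are entirely analogous to part (\ref{lem:eta-sem i}), just transferred to the name component.

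There is no genuine obstacle here: the proof is entirely structural, and the only subtlety is the mutual recursion between $\Sem{\cdot}^D$ and $\Sem{\cdot}^C$, which forces us to state and prove the two items of the lemma together with an auxiliary clause for commands in the induction (namely: if $x\notele\fv(\Cmd)$, then $\Sem{\Cmd}^C{e} = \Sem{\Cmd}^C{e[x\to d]}$, and similarly for names). The Barendregt convention is what keeps the two environment updates from interfering at binders, so the proof is essentially bookkeeping.
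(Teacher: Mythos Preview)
Your proposal is correct and is exactly the standard structural induction that the paper has in mind; the paper's own proof simply reads ``Easy.'' with no further detail, so your argument is a faithful unfolding of what the authors intended.
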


 \begin{Proof} Easy. %\qed 
 \end{Proof}

We now establish the relation between the various kinds of substitution and the interpretation.

 \begin{lem} \label{lem:sem_substitution_renaming}
 $ \begin{array}[t]{rcll}
 \Sem{ M[N/x] }{ e }{ k } &=& \Sem{ M }{ e[x \to \Sem{ N }{ e }] }{ k } \\
 \Sem{ T[`a/`b] }{ e }{ k } &=& \Sem{ T }{ e[`b \to e\psk`a] }{ k } & \textrm{where }T\ele \Terms \cup \Commands \\
 \Sem{ M \StrSub{`a}{N} }{ e }{ k } &=& \Sem{ M }{ e[`a \becomes \Pair< \Sem{ N }{ e } , e\psk`a > ] }{ k }
 \end{array} $

 \end{lem}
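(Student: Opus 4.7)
The plan is to prove the three equations by induction on the structure of $M$, and for part~(2) additionally on the structure of commands, relying on Barendregt's convention so that the bound variables and names of $M$ are disjoint from $x$, $`a$, $`b$, and from the free variables and names of $N$. Each case unfolds the substitution (Def.\skp\ref{def:substitution} and the analogous clauses for term substitution and renaming), unfolds the interpretation (Def.\skp\ref{def:interpretation}), and applies the induction hypothesis; Lemma~\ref{lem:eta-sem} is invoked whenever a binder introduces an environment update at a variable or name that does not occur free in $N$, so that the two updates commute.

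Parts~(1) and~(2) are routine. For part~(1) the cases $M \equiv x$, $M \equiv `ly.M'$, and $M \equiv M_1 M_2$ reduce directly from the defining equations of $\Sem{\cdot}^D$; the case $M \equiv `m`b.\Cmd$ proceeds by sub-induction on $\Cmd$ (necessarily of the shape $[`g]M'$) and invokes the IH on $M'$. For part~(2) the term cases are analogous, and the command case $T \equiv [`g]M$ splits on whether $`g = `b$: in both subcases the two $C$-pairs computed by the sides of the equation agree pointwise, the head by the IH on $M$ and the tail by inspection of the updated environment.

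Part~(3) is the main obstacle, because pointwise equality of command denotations fails. In the critical subcase $M \equiv `m`b.\Cmd$ with $\Cmd \equiv [`a]M'$, the rule $([`a]M')\StrSub{`a}{N} = [`a](M'\StrSub{`a}{N})N$ produces, setting $e' = e[`a \becomes \Pair<\Sem{N}^D{e}, e\psk`a>]$, two different $C$-pairs: $\Pair<\Sem{(M'\StrSub{`a}{N})N}^D{e[`b \to k]}, e\psk`a>$ on the left, and $\Pair<\Sem{M'}^D{e'[`b \to k]}, \Pair<\Sem{N}^D{e}, e\psk`a>>$ on the right, with the argument $N$ appearing as an explicit application on the left and as the injected head of the tail on the right. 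What does coincide is the value obtained by applying the head of the pair to the tail, which is precisely what the $`m$-clause of Def.\skp\ref{def:interpretation} computes: unfolding the application in $(M'\StrSub{`a}{N})N$ and invoking the IH on $M'$ at the continuation $\Pair<\Sem{N}^D{e}, e\psk`a>$ reduces both sides to $\Sem{M'}^D{e'[`b \to k]}{\Pair<\Sem{N}^D{e}, e\psk`a>}$, closing the case. Lemma~\ref{lem:eta-sem} is used to drop the fresh $`b$-update inside $\Sem{N}^D$, and to commute $[`b \to k]$ with $[`a \becomes \cdots]$ so that $(e[`b \to k])' = e'[`b \to k]$.
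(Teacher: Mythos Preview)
Your proposal is correct and follows essentially the same route as the paper's proof: both proceed by induction on the structure of the term (equivalently, on the definition of the substitution), and both isolate the case $M \equiv `m`b.[`a]M'$ with $`b \neq `a$ as the only non-trivial one for part~(3), reducing both sides to $\Sem{M'}{e'[`b\to k]}{\Pair<\Sem{N}{e}, e\,`a>}$ via the IH and Lemma~\ref{lem:eta-sem}. Your explicit remark that the two $C$-pairs differ while only the head-applied-to-tail value (which is all the $`m$-clause extracts) needs to agree is a helpful conceptual gloss that the paper leaves implicit in its calculation.
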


 \begin{proof}
%The first two parts follow by straightforward induction on the definition of substitution; the third is shown by induction on the definition of (structural) substitution. 
By induction on the definition of (structural) substitution.
The only non-trivial case is when $M \equiv `m `b. [`a]L$ with $`b \neq `a$, so that $(`m `b. [`a]L)\StrSub{`a}{N} \equiv `m`b.[`a]L\StrSub{`a}{N} N$.
By unravelling definitions we have:
 \[ \begin{array}{rcl}
 \Sem{ `m`b.[`a]L\StrSub{`a}{N} N }{ e }{ k } 
	&=& 
d'\, k'
 \end{array} \]
where
 \[ \begin{array}{rcl}
 \Pair{d' }{ k' } 
	&=& 
 \Sem{ [`a]L\StrSub{`a}{N} N }{ e[`b \to k]} \\
	&=& 
 \Pair{\Sem{ L\StrSub{`a}{N} N }{ e[`b \to k]} }{ e[`b \to k] \psk `a},
 \end{array} \]
observing that $e[`b \to k] \psk `a = e\psk`a$, since $`b \neq `a$. 
Then:
 \[ \begin{array}{rcl@{\quad}l}
\Sem{ `m`b.[`a]L\StrSub{`a}{N} N }{ e }{ k } & = & 
\Sem{ L\StrSub{`a}{N} N }{ e[`b \to k] }{ (e\psk `a) } \\ & = & 
\Sem{ L\StrSub{`a}{N} }{ e[`b \to k] }{ \Pair<\Sem{ N }{ e [`b \to k] } , e\psk`a > } \\ & = &
\Sem{ L }{ e[`b \to k, `a \to \Pair<\Sem{ N }{ e } , e\psk`a > ] }{ \Pair<\Sem{ N }{ e } , e\psk`a > }
 \end{array} \]
where the last equation follows by induction and the fact that we can assume that $`b\notele \fv(N)$, so that $\Sem{ N }{ e[`b \to k] } = \Sem{ N }{ e } $. 
Let $e' = e[`a \to \Pair<\Sem{ N }{ e } , e\psk`a >]$, then:
 \[ \begin{array}{rcl}
 \Sem{L}{ e[`b \to k, `a \to \Pair{\Sem{ N }{ e } }{ e\psk`a }] }{ \Pair<\Sem{ N }{ e } , e\psk`a > } &=&
 \Sem{ L }{ e'[`b \to k]}{\Pair<\Sem{ N }{ e } , e\psk`a >}
 \end{array} \]
and
 \[ \begin{array}{rcl@{\quad}l}
 \Pair{\Sem{L}{ e'[`b \to k] } }{ \Pair<\Sem{ N }{ e } , e\psk`a > } 
	& = & 
\Pair{\Sem{ L }{ e'[`b \to k] } }{ e'\psk `a} 
	\\ & = & 
\Pair{\Sem{ L }{ e'[`b \to k] } }{ e'[`b \to k]\psk `a} 
	\\ & = & 
\Sem{ [`a]L}{e'[`b \to k]}
 \end{array} \]
which implies
 \[ \begin{array}{@{}rcl@{\quad}lc@{}}
 \Sem{ `m `b. [`a]L }{ e' }{ k } 
	&=& 
 \Sem{ L }{ e'[`b \to k] }{ \Pair<\Sem{ N }{ e } , e\psk`a > } \\
	&=& 
 \Sem{ `m`b.[`a]L\StrSub{`a}{N} N }{ e }{ k } 
 \end{array} \]
\arrayqed[-20pt]
 \end{proof}

Since our interpretation in Def.\skp\ref{def:interpretation} does not coincide exactly with the one of Streicher and Reus, we have to check that it actually models $`l`m$ convertibility.
We begin by stating the key fact about the semantics, \ie~that it satisfies the following `swapping continuations' equation:\footnote{The equation is from \cite{Streicher-Reus'98}, where it is actually:
$ \Sem{ `m`a.[`b]M }{ e }{ k } = \Sem{ M }{ e[`a \becomes k] }{ (e\psk `b)}$, but this is certainly just a typo.}

 \begin{lem} \label{lem:sem-shortcut}
$ \Sem{ `m`a.[`b]M }{ e }{ k } = \Sem{ M }{ e[`a \to k] }{ (e[`a \to k] \psk `b) }$.
 \end{lem}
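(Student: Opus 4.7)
The plan is to prove the equation by directly unfolding the two relevant clauses of the term interpretation in Def.\skp\ref{def:interpretation}, with no induction needed.

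First I would set $e' = e[`a \to k]$ for readability. Applying the clause for $`m$-abstraction, namely
\[
\Sem{`m`a.\Cmd}^D\,e\,k \;=\; d\,k' \quad \text{where } \Pair<d, k'> = \Sem{\Cmd}^C\,e[`a \to k],
\]
with $\Cmd = [`b]M$, reduces the left-hand side to $d\,k'$ where $\Pair<d,k'> = \Sem{[`b]M}^C\,e'$.

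Second, I would apply the clause for commands,
\[
\Sem{[`b]M}^C\,e' \;=\; \Pair<\Sem{M}^D\,e',\ e'\psk`b>,
\]
so that $d = \Sem{M}^D\,e'$ and $k' = e'\psk`b$. Substituting these back yields
\[
\Sem{`m`a.[`b]M}^D\,e\,k \;=\; \Sem{M}^D\,e'\,(e'\psk`b) \;=\; \Sem{M}^D\,e[`a\to k]\,(e[`a\to k]\psk`b),
\]
which is exactly the desired right-hand side.

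There is no real obstacle here: the equation is essentially a compatibility check between the two clauses that together handle the $`m$-binder and the naming construct. The only thing worth flagging in the write-up is that the equation holds uniformly regardless of whether $`a = `b$ or $`a \neq `b$, because both cases are handled by the single expression $e[`a\to k]\psk `b$ (which evaluates to $k$ in the former case and to $e\psk`b$ in the latter). This is precisely why the lemma is stated with $e[`a \to k]\psk`b$ on the right rather than with a case split, and it is what makes the result a convenient "shortcut" lemma for later reasoning about $`m$-reductions and the renaming rule $(\Rename)$.
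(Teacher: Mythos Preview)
Your proposal is correct and takes essentially the same approach as the paper: unfold the $`m$-abstraction clause to get $d\,k'$ with $\Pair<d,k'> = \Sem{[`b]M}{e[`a\to k]}$, then unfold the command clause to identify $d$ and $k'$. The paper's proof is just a terser one-line version of exactly this computation, without your remark about the uniform handling of $`a = `b$ versus $`a \neq `b$.
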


 \begin{Proof}
Since $\Sem{ `m`a.[`b]M }{ e }{ k } = d\psk k'$, where
$ \Pair{ d }{ k' } = \Sem{ [`b]M}{ e[`a\to k] } = \langle \Sem{ M }{ e[`a\to k] } ,$ \\ $ e[`a\to k]\psk `b \rangle $. %\qed
 \end{Proof}

We are now in place to establish the soundness of the interpretation.

 \begin{thm}[Soundness of $ \Sem{ `. } $ with respect to $ \LamMuTheory $] \label{thm:sem_soundness}
% Let $M,N \ele \Terms$. 
If $ \derLM M = N $ then $ \Sem{ M } = \Sem{ N } $.
 \end{thm}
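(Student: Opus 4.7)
The plan is to proceed by induction on a derivation of $ \derLM M = N $. Since $ \LamMuTheory $ is the compatible closure of $ \redbm $, such a derivation is built from reflexivity, symmetry, transitivity, the congruence rules for every term- and command-forming construct, and one of the three reduction axioms $(`b)$, $(`m)$, and $(\Rename)$. The equivalence and congruence cases are immediate: each defining clause of $ \Sem{ \cdot}^{D} $ and $ \Sem{ \cdot}^{C} $ in Def.\skp\ref{def:interpretation} expresses the interpretation of a compound expression by applying a fixed operator (pair formation, function application in $D$, or environment update) to the interpretations of its immediate subexpressions, so an equality of sub-interpretations lifts pointwise to the enclosing context.

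The substance of the proof is therefore the verification that each of the three base reductions preserves $ \Sem{ \cdot} $ in an arbitrary environment $e$ and, where relevant, continuation $k \ele C$. For $(`b)$, unfolding the application and $`l$ clauses of Def.\skp\ref{def:interpretation} reduces $ \Sem{ (`lx.M)N }{ e }{ k } $ to $ \Sem{ M }{ e[x \to \Sem{ N }{ e }] }{ k } $, which by the first clause of Lemma\skp\ref{lem:sem_substitution_renaming} equals $ \Sem{ M[N/x] }{ e }{ k } $. For $(\Rename)$, consider its smallest term-context lifting $ `m`g.[`a]`m`b.\Cmd \redbm `m`g.\Cmd[`a/`b] $: two nested applications of the swapping-continuations identity (Lemma\skp\ref{lem:sem-shortcut}) on the left, together with one application of Lemma\skp\ref{lem:sem-shortcut} and the renaming clause of Lemma\skp\ref{lem:sem_substitution_renaming} on the right, reduce both sides, after a case split on whether the outermost name of $\Cmd = [`d]L$ equals $`b$, to the same expression in $R$.

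The main obstacle is $(`m)$. Unfolding the application and $`m$ clauses on the left gives $ \Sem{ (`m`a.\Cmd)N }{ e }{ k } $ expressed via $ \Sem{ \Cmd }{ e[`a \to \Pair{\Sem{ N }{ e }}{ k }] } $, whereas on the right $ \Sem{ `m`a.\Cmd\StrSub{`a}{N} }{ e }{ k } $ is expressed via $ \Sem{ \Cmd\StrSub{`a}{N} }{ e[`a \to k] } $. The reconciliation of the syntactic structural substitution $ \Cmd\StrSub{`a}{N} $ with the semantic environment update replacing $`a$ by $ \Pair{\Sem{ N }{ e }}{ e\psk`a} $ is precisely the content of the third clause of Lemma\skp\ref{lem:sem_substitution_renaming}; writing $\Cmd = [`b]L$ and splitting on whether $`b = `a$, and using Barendregt's convention to ensure $`a \notele \fn(N)$ (so that $\Sem{N}{ e[`a \to k] } = \Sem{N}{e}$), shows that both sides collapse to $ \Sem{L}{ e[`a \to \Pair{\Sem{N}{e}}{k}] }{ k^* } $ for the same $k^*$, closing the case and hence the proof.
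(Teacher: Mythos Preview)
Your proposal is correct and follows essentially the same approach as the paper: induction on the derivation of equality, with the substantive work confined to the three reduction axioms, each dispatched via the substitution lemma (Lemma~\ref{lem:sem_substitution_renaming}) and the swapping-continuations identity (Lemma~\ref{lem:sem-shortcut}). The only minor organisational difference is that in the $(`m)$ case the paper applies Lemma~\ref{lem:sem-shortcut} first to reduce to a term-level expression and then invokes the third clause of Lemma~\ref{lem:sem_substitution_renaming} for terms, whereas you work at the command level and need that clause for commands; since you explicitly write $\Cmd = [`b]L$ and case-split on $`b = `a$, you are in effect re-deriving the command instance on the spot, so no gap arises.
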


 \begin{proof} By induction on the definition of $\eqbmu$; it suffices to check the axioms $(`b)$, $(`m)$, and $(\Rename)$:

 \begin{description}

 \item [$ {(`lx.M)N = M[N/x] } $] 
$ \begin{array}[t]{lll@{\quad}l}
 \Sem{ (`lx.M)N }{ e }{ k } 
	&\ByDef& 
 \Sem{ M }{ e }{ \Pair< \Sem{ N }{ e } , k >} \\
	&=& 
 \Sem{ M }{ e[x \to \Sem{ N }{ e }] }{ k } \\
	&=& 
 \Sem{ M[N/x] }{ e }{ k } & (\textrm{Lem.\skp\ref{lem:sem_substitution_renaming} }) 
 \end{array} $

 \item [{$(`m`a.[`b]M)N = `m`a.([`b]M) \StrSub{`a} {N} $}] 
Notice that, by Barendregt's convention, we can assume that $`a \notele \fn(N)$; let $e' = e[`a \to \Pair{\Sem{ N }{ e } }{ k }]$, then:
 \[ \begin{array}[t]{lll@{\quad}l}
 \Sem{ (`m`a.[`b]M)N }{ e }{ k } 
	&\ByDef& \Sem{ `m`a.[`b]M }{ e }{ \Pair{\Sem{ N }{ e } }{ k } } \\
	&=& \Sem{ M }{ e' }{ (e' `b) } & (\textrm{Lem.\skp\ref{lem:sem-shortcut}}) \\
	&=& \Sem{ M\StrSub{`a}{N} }{ e[`a\to k] }{ (e' \psk `b) } 
& (\textrm{Lem.\skp\ref{lem:sem_substitution_renaming}} ) 
 \end{array} \]
Now if $`b = `a$ we have:
 \[ \begin{array}{rcl@{\quad}l}
 \Sem{ M\StrSub{`a}{N} }{ e[`a\to k] }{ (e' \psk `b) } 
	& = & \Sem{ M\StrSub{`a}{N} }{ e[`a\to k] }{ \Pair{\Sem{ N }{ e } }{ k } } & (`b = `a) \\
	& = & \Sem{ M\StrSub{`a}{N} }{ e[`a\to k] }{ \Pair{\Sem{ N }{ e [`a\to k] } }{ k } } & (`a\notele \fn(N)) \\
	& = & \Sem{ M\StrSub{`a}{N}N }{ e[`a\to k] }{ (e[`a\to k]\psk`a) } \\
	& = & \Sem{ `m`a.[`a]M\StrSub{`a}{N} N }{ e }{ k } & (\textrm{Lem.\skp\ref{lem:sem-shortcut}}) \\
	&\ByDef& \Sem{ `m`a.([`a]M) \StrSub{`a} {N} }{ e }{ k }
 \end{array} \]
Otherwise, if $`b \neq `a$ we have:
 \[ \begin{array}{rcl@{\quad}l}
 \Sem{ M\StrSub{`a}{N} }{ e[`a\to k] }{ (e' \psk `b) } 
	& = & \Sem{ M\StrSub{`a}{N} }{ e[`a\to k] }{ (e[`a \to k] \psk `b) } \\
	& = & \Sem{ `m`a.[`b]M \StrSub{`a}{N} }{ e }{ k } & (\textrm{Lem.\skp\ref{lem:sem-shortcut}}) \\
	&\ByDef& \Sem{ `m`a.([`b]M) \StrSub{`a} {N} }{ e }{ k } 
 \end{array} \]

\item [{$`m`j.[`a]`m`b.[`g]M = `m`j. ([`g]M)[`a/`b] $}]
For any $e'$ %\ele \Env$ 
we have 
 \[ \begin{array}{rcl}
\Sem{[`a]`m`b.[`g]M}{e'} 
	&=& \Pair{\Sem{`m`b.[`g]M}{e'}}{(e'\, `a)} \\ 
	&=& \Pair{\Phi_{e'}}{(e'\, `a)} \end{array} \]
where
 \[ \begin{array}{rcl}
\Phi_{e'} 
	&=& `l k \ele C.~\Sem{`m`b.[`g]M}{e'}{k} \\ 
	&=& `l k \ele C.~\Sem{M}{e'[`b \to k]}{(e'[`b \to k]\, `g)}. 
 \end{array} \]
On the other hand, by definition of interpretation and by Lem.\skp\ref{lem:sem_substitution_renaming} we have:
 \[ \begin{array}{rcl}
\Sem{([`g]M)[`a/`b]}{e'} 
	&=& \Sem{[`g]M}{e'[`b \to e\,`a]} \\
	&=& \Pair{\Sem{M}{e'[`b \to e\,`a]}}{(e'[`b \to e\,`a]\,`g)}.
 \end{array} \]
Therefore, taking $e' = e[`j \to k]$, by Lem.\skp\ref{lem:sem-shortcut} we conclude:
 \[ \begin{array}{rcl}
\Sem{`m`j.[`a]`m`b.[`g]M}{ e }{ k } 
	&=& \Phi_{e'}(e'\,`a) \\
	&=& \Sem{`m`j. ([`g]M)[`a/`b]}{e}{k}.
 \end{array} \]
 \arrayqed[-22pt]

 \end{description}
 \end{proof}

 \section{The filter domain} \label{sec:filter}

 In this section we will build a $`l`m$-model in the category of $`w$-algebraic lattices. 
The model is obtained in Sect.\skp\ref{sub:domainTheoretic} by means of standard domain theoretic techniques, following the construction in \cite{Streicher-Reus'98}; we exploit the fact that compact points of any $`w$-algebraic lattice can be described by means of a suitable \emph{intersection type theory} (recalled in Sect.\skp\ref{w-algebraic-lattices}), to get a description of the model as a filter-model in Sect.\skp\ref{filter-model-solution}. 
This provides us with a semantically justified definition of intersection types (actually of three kinds, to describe the domains $R$, $D$ and $C$, respectively, that form the model) and of their pre-orders that we shall use in Sect.\skp\ref{sec:types} for the type assignment system.
% to $`l`m$ terms and commands.

The treatment of Sect.\skp\ref{sub:domainTheoretic} is introductory and can be skipped by readers who are familiar with domain theory, but for Prop.\skp\ref{prop:compactXinfty} and \ref{prop:compcats_of_C}, which are referred to in % the subsequent parts of 
the paper. 
A fuller treatment of these topics can be found for example in \cite{Amadio-Curien'98}, Ch.~1-3 and 7. 
The developments in Sect.\skp\ref{w-algebraic-lattices} and Sect.\skp\ref{filter-model-solution} are inspired by \cite{BCD'83,Coppo-et.al'84} and \cite{Abramsky'91}; in particular, we have used \cite{Dezani-Honsell-Alessi'03} in Sect.\skp\ref{filter-model-solution}; we borrow the terminology of `intersection type theory' from \cite{Bar2013}, where intersection type systems and filter models are treated in full detail in Part III.

 \subsection{A domain theoretic solution of continuation domain equations} \label{sub:domainTheoretic}

Complete lattices are partial orders $(X,\po)$, closed under meet $\bigsqcap Z$ 
(greatest lower bound) and join $\bigsqcup Z$ (smallest upper bound) of arbitrary subsets $Z\subseteq X$. 
Observing that $\bigsqcap Z = \bigsqcup \Set{x \ele X \mid \forall z \ele Z \Pred[x \po z]}$ and $\bigsqcup Z = \bigsqcap \Set{x \ele X \mid \forall z \ele Z \Pred[ z \po x]}$, we have that if $X$ is closed under arbitrary meets (joins) it is likewise under arbitrary joins (meets). 
Furthermore, in $X$ there exist $\bot = \bigsqcup \emptyset$ and $\top = \bigsqcap \emptyset$, which are the bottom and top elements of $X$ with respect to $\po$, respectively.

A subset $Z\subseteq X$ is \emph{directed} if for any finite subset $V\subseteq Z$ there exists $z\ele Z$ which is an upper bound of $V$. 
In particular, directed subsets are always non-empty. An element $e\ele X$ is \emph{compact} if, whenever $e \po \bigsqcup Z$ for some directed $Z\subseteq X$, there exists $z\ele Z$ such that $e \po z$; we write $\Compact(X)$ for the set of compact elements of $X$. 
For $x\ele X$ we define $\Compact(x) = \Set{e \ele \Compact(X) \mid e \po x}$;
since directed sets are non-empty, $\bot \ele \Compact(X)$ and hence $\bot\ele \Compact(x)$, for all $x\ele X$. 
A complete lattice $X$ is \emph{algebraic} if $\Compact(x)$ is directed for any $x\ele X$ and $x = \bigsqcup \Compact(x)$; $X$ is \emph{$`w$-algebraic} if it is algebraic and the subset $\Compact(X)$ is countable.

A function $f:X\To Y$ of $`w$-algebraic lattices is \emph{Scott-continuous} if and only if it preserves directed sups, namely $f(\bigsqcup Z) = \bigsqcup_{z\ele Z}f(z)$ whenever $Z\subseteq X$ is directed. 
By algebraicity, any continuous function with domain $X$ is fully determined by its restriction to $\Compact(X)$, that is, given a monotonic function $g:\Compact(X)\To Y$ there exists a unique continuous function $\widehat{g}:X\To Y$ that coincides with $g$ over $\Compact(X)$, namely $\widehat{g}(x) = \bigsqcup g(\Compact(x))$; $\widehat{g}$ is called the \emph{continuous extension} of $g$.
The category $\ALG$ has $`w$-algebraic lattices as objects and Scott-continuous maps as morphisms. 
As such $\ALG$ is a full subcategory of the category of domains, but not of the category of (complete) lattices, since morphisms do not preserve arbitrary joins. 
In this paper we use the word \emph{domain} as synonym of $`w$-algebraic lattice.

If $X,Y$ are domains, then the component-wise ordered cartesian product $X\prod Y$ and the point-wise ordered set $[X \To Y]$ of Scott-continuous functions from $X$ to $Y$ are both domains.
In particular, if $f,g \ele [X\To Y]$ then the function $(f\join g)(x) = f(x) \join g(x)$ is the join of $f$ and $g$, that are always \emph{compatible} since they have an upper bound.
If $Z$ is an $`w$-algebraic lattice then $[X\prod Y \To Z] \simeq [X \To [Y \To Z]]$ is a natural isomorphism, and therefore the category $\ALG$ is cartesian closed. 

%The set of compact points of $X\prod Y$ is 
$\Compact(X\prod Y) = \Compact(X)\prod \Compact(Y)$; %the set of compact points of $[X\To Y]$, \ie~
$\Compact[X\To Y]$ is the set of finite joins of \emph{step functions} $\StepFun{a }{ b}$ where $a \ele \Compact(X)$, $b \ele \Compact(Y)$, defined by 
 \[ \begin{array}{rcl}
 \StepFun{a }{ b}(x) &=& 
 \begin{cases}{l@{\quad}l} b & (\textrm{if }a \po x) \\
 \bot & (\textrm{otherwise}) 
 \end{cases}
 \end{array} \]

An infinite sequence $(X_n)_{n\ele \Nat}$ of domains is \emph{projective} if for all $n$ the continuous functions $e_n:X_n \To X_{n+1}$ and $p_n:X_{n+1}\To X_n$ exist, called \emph{embedding-projection pairs}, that satisfy $p_n \circ e_n = \textsf{id}_{X_n}$ and $e_n \circ p_n \seq \textsf{id}_{X_{n+1}}$, where $\seq$ is the pointwise ordering and $ \textsf{id}_V $ is the identity function on $V$. 
The \emph{inverse limit} of the projective chain $(X_n)_{n\ele \Nat}$ is the set $X_\infty = \lim_\leftarrow X_n$ which is defined as the set of all vectors $\vect{x} \ele \Pi_n X_n$ such that $x_i = p_i(x_{i+1})$ for all $i$, ordered component wise. 
Moreover, for all $n$ there exists an embedding-projection pair $e_{n,\infty}:X_n\To X_\infty$ and $p_{n,\infty}:X_\infty\To X_n$ such that $p_{n,\infty}(\vect{x}) = x_n$ for all $\vect{x}\ele X_\infty$: for details see for example \cite{Amadio-Curien'98}, Ch.~7. 

 \begin{prop} \label{prop:compactXinfty}
The inverse limit $X_\infty = \lim_\leftarrow X_n$ of a sequence $(X_n)_{n\ele \Nat}$ of domains is itself a domain such that
 \[ \begin{array}{rcl}
\Compact(X_\infty) 
	&=& 
\bigcup_n \Set{e_{n,\infty}(x) \mid x \ele \Compact(X_n)}.
 \end{array} \]
 \end{prop}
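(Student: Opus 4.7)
The plan is to verify in turn that $X_\infty$ is a complete lattice, to establish the two inclusions characterising $\Compact(X_\infty)$, and finally to deduce algebraicity and countability.

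For the lattice structure, the key observation is that $p_n\circ e_n = \textsf{id}_{X_n}$ together with $e_n\circ p_n \seq \textsf{id}_{X_{n+1}}$ make each $(e_n,p_n)$ a Galois connection with $e_n$ the left adjoint of $p_n$. Hence each $p_n$ preserves arbitrary meets and each $e_n$ arbitrary joins; the same inherits to the limit maps $e_{n,\infty}$ and $p_{n,\infty}$. Since $X_\infty$ is carved out of $\prod_n X_n$ by the conditions $x_i = p_i(x_{i+1})$, which are preserved under componentwise arbitrary meets, $X_\infty$ is closed under meets and is therefore a complete lattice by Tarski. Directed joins in $X_\infty$ coincide with those computed componentwise, by continuity of the projections.

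For the inclusion $(\supseteq)$, fix $x\ele \Compact(X_n)$, let $\vect{y} = e_{n,\infty}(x)$, and assume $\vect{y} \seq \bigsqcup Z$ for a directed $Z\subseteq X_\infty$. Continuity of $p_{n,\infty}$ gives $x = p_{n,\infty}(\vect{y}) \seq \bigsqcup_{z\ele Z} p_{n,\infty}(z)$, so compactness of $x$ yields some $z\ele Z$ with $x \seq p_{n,\infty}(z)$, whence $\vect{y} = e_{n,\infty}(x) \seq e_{n,\infty}(p_{n,\infty}(z)) \seq z$. The converse inclusion is the main obstacle. It rests on the standard identity $\vect{y} = \bigsqcup_n e_{n,\infty}(p_{n,\infty}(\vect{y}))$, whose family is increasing (because $e_n\circ p_n \seq \textsf{id}$ lifts through the colimit) and hence directed. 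If $\vect{y}$ is compact, the sup is attained at some finite index, giving $\vect{y} = e_{n,\infty}(x_n)$ with $x_n := p_{n,\infty}(\vect{y})$. It then remains to check that $x_n$ is compact in $X_n$: if $x_n \seq \bigsqcup_{a\ele A} a$ with $A\subseteq X_n$ directed, applying the join-preserving $e_{n,\infty}$ yields $\vect{y} \seq \bigsqcup_{a\ele A} e_{n,\infty}(a)$, so compactness of $\vect{y}$ produces some $a\ele A$ with $\vect{y} \seq e_{n,\infty}(a)$, whence $x_n \seq p_{n,\infty}(e_{n,\infty}(a)) = a$.

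Algebraicity then follows by combining the identity $\vect{y} = \bigsqcup_n e_{n,\infty}(x_n)$ with $x_n = \bigsqcup \Compact(x_n)$ and continuity of the embeddings: $\vect{y}$ is the sup of the compact points $e_{n,\infty}(a)$, for $n\ele \Nat$ and $a\ele \Compact(x_n)$, all of which lie below it by $(\supseteq)$. Directedness of this set is verified by noting that any finite family $e_{n_i,\infty}(a_i)$ is dominated by $e_{N,\infty}(b)$, where $N = \max_i n_i$ and $b = \bigsqcup_i e_{n_i,N}(a_i)$ is a finite join of compacts in $X_N$ and hence itself compact. Countability of $\Compact(X_\infty)$ is immediate, being a countable union of countable sets, which concludes the argument that $X_\infty$ is a domain.
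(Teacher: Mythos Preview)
Your proof is correct and follows the same overall strategy as the paper: both directions of the characterisation of $\Compact(X_\infty)$ hinge on the embedding--projection identities and on the standard sup $\vect{y} = \bigsqcup_n e_{n,\infty}(p_{n,\infty}(\vect{y}))$, which a compact $\vect{y}$ must attain at some finite stage. The one substantive difference is in how you establish that $x_n = p_{n,\infty}(\vect{y})$ is compact in $X_n$. The paper argues this \emph{before} knowing $\vect{y} = e_{n,\infty}(x_n)$, by building from a directed $U \subseteq X_n$ an auxiliary directed set $V \subseteq X_\infty$ whose $n$-th components range over $U$ while the others are frozen at the $x_m$. You instead first obtain $\vect{y} = e_{n,\infty}(x_n)$ and then lift a directed $A \subseteq X_n$ through the join-preserving $e_{n,\infty}$, reducing compactness of $x_n$ directly to compactness of $\vect{y}$. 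Your route is somewhat cleaner and avoids having to verify that the ad-hoc tuples in $V$ actually satisfy the inverse-limit compatibility conditions. You also spell out the complete-lattice structure and the algebraicity and countability arguments in more detail than the paper, which leaves these largely implicit.
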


 \begin{Proof} That $X_\infty$ is a domain follows by the fact that each $X_n$ is, for all $n$, and that for all $x\ele X_n$ there exists $\vect{x} = e_{n,\infty}(x)$ such that $x_n = p_{n,\infty}(\vect{x}) = x$.

Writing $\bigcup_n e_{n,\infty} \Compact(X_n)$ for the right-hand side of the above equation, consider a directed subset $Z\subseteq X_\infty$.
For any $n$, if $x \ele \Compact(X_n)$ then $e_{n,\infty}(x) \po \bigsqcup Z$ implies
 \[ \begin{array}{rcl@{\quad}lccc}
x 
	&=& 
p_{n,\infty} \circ e_{n,\infty}(x) 
	&\po& 
p_{n,\infty}(\bigsqcup Z) 
	&=& 
\bigsqcup p_{n,\infty}(Z)
 \end{array} \]
by the fact that $(e_{n,\infty},p_{n,\infty})$ is an embedding-projection pair, and the continuity of $p_{n,\infty}$. By assumption there exists $z\ele Z$ such that $x \po p_{n,\infty}(z)$, and therefore 
 \[ \begin{array}{rcl@{\quad}lccc}
e_{n,\infty}(x) &\po& (e_{n,\infty}\circ p_{n,\infty})(z) &\po& z \ele Z ,
 \end{array} \] 
hence $e_{n,\infty}(x) \ele \Compact(X_\infty)$, by the arbitrary choice of $Z$. 
This proves $\Compact(X_\infty) \supseteq\bigcup_n e_{n,\infty} \Compact(X_n)$.

To see the converse inclusion, take $\vect{x} \ele \Compact(X_\infty)$. 
We claim that $x_n = p_{n,\infty}(\vect{x}) \ele \Compact(X_n)$, for any $n$. 
Indeed, if $U\subseteq X_n$ is directed and such that $x_n \po \join U$, consider the set 
 \[ \begin{array}{rcl}
V &=& \Set{ \vect{y} \ele X_\infty \mid \forall m\neq n \Pred[{ y_m = x_m \And\ \exists u \ele U \Pred[ \ y_n = u ] }] }
 \end{array} \]
Then $V$ is directed and $\vect{x} \po \join V$. 
From the hypothesis $\vect{x} \ele \Compact(X_\infty)$ we know that there exists $\vect{y} \ele V$ such that $\vect{x} \po\vect{y}$, and so there exists $u = y_n \ele U$ such that $x_n \po u$, establishing the claim. 
From this and the first part of this proof it follows that 
 \[ \begin{array}{rclcl}
\Set{e_{n,\infty}(p_{n,\infty}(\vect{x}))}_{n\ele \Nat} &=& \Set{e_{n,\infty}(x_n)}_{n\ele \Nat} &\subseteq& \bigcup_n e_{n,\infty} \Compact(X_n)
 \end{array} \] 
is a chain of elements in $\Compact(X_\infty)$, and by construction
$\vect{x} = \bigsqcup_n e_{n,\infty}(x_n)$; since $\vect{x} \ele \Compact(X_\infty)$ we can conclude that $\vect{x} = e_{n_0,\infty}(x_{n_0})$ for some $n_0$, so that $\Compact(X_\infty) \subseteq\bigcup_n e_{n,\infty} \Compact(X_n)$ as desired.
% \qed
 \end{Proof}

We now consider the construction in \cite{Streicher-Reus'98} for the particular case of $\ALG$. Let $R$ be some fixed domain, dubbed the domain of \emph{results} (for the sake of solving the continuation domain equations in a non-trivial way it suffices to take $R = \Set{\bot,\top}$ with $\bot \spo \top $, the two-point lattice). Now define the following sequences of domains:
 \[ \begin{array}{lll}
C_0 &=& \Set{\bot} \\
D_n &=& [C_n \To R] \\
C_{n+1} &=& D_n\prod C_n 
 \end{array} \]
where $ \Set{\bot}$ is the trivial lattice such that $\bot = \top$. 
Observe that $D_0 = [C_0\To R] \Isom R $ and $ D_0 \Isom D_0 \times \Set{\bot} = C_1$ and so $D_1= [C_1 \To R] \Isom [R \To R]$. By unravelling the definition of $C_n$ and $D_n$ we obtain:
 \[ \begin{array}{rcl}
C_n	&=& 
[C_{n-1} \To R]\prod [C_{n-2} \To R]\prod \cdots\prod [C_0\To R]\prod C_0.
 \end{array} \]
In \cite{Streicher-Reus'98} Thm.~3.1, it is proved that these sequences are projective, so that $D = \lim_{\leftarrow} D_n$ and $C = \lim_{\leftarrow} C_n$ are the initial/final solution of the continuation domain equations such that $R \Isom D_0$.
By Prop.\skp\ref{prop:compactXinfty} we know that, up to the embeddings of each $D_n$ into $D$ and of each $C_n$ into $C$, the compact points of $D$ and $C$ are the union of the compacts of the $D_n$ and $C_n$, respectively:
 \begin{eqnarray} \label{eq:compactUnion}
 \Compact(D) = \bigcup_n \Compact(D_n), &\quad& \Compact(C) = \bigcup_n \Compact(C_n).
 \end{eqnarray}
In particular, $\Compact(R) = \Compact(D_0) \subseteq \Compact(D)$.
Since $C \simeq D\prod C$, $C$ can be seen as the infinite product $\Pi_n D = D\prod D \cdots \,$; also, $\Compact(C)$ is a proper subset of the product $\Pi_n \, \Compact(D) = \Compact(D)\prod \Compact(D)\prod \cdots$. 

 \begin{prop} \label{prop:compcats_of_C}
The compact points in $C = \lim_{\leftarrow} C_n$ are those infinite tuples in $\Pi_n \, \Compact(D)$ whose components are all equal to $\bot$ but for a finite number of cases:
 \[ \begin{array}{rcl}
\Compact(C) &=& \Set{\Tuple <d_1, d_2, \ldots > \ele \Pi_n \, \Compact(D) \mid \exists \,i \ \forall j\geq i \Pred[ d_j = \bot] }.
 \end{array} \]
 \end{prop}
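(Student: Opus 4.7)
The plan is to apply Proposition~\ref{prop:compactXinfty} to $C = \lim_{\leftarrow} C_n$ and then unfold the recursive structure of each $C_n$. Unrolling $C_{n+1} = D_n \prod C_n$ down to $C_0 = \Set{\bot}$ gives the decomposition $C_n \Isom D_{n-1} \prod \cdots \prod D_0 \prod C_0$, and since compact points of a finite product of domains are tuples of compact points of the factors, each element of $\Compact(C_n)$ has the form $(y_{n-1}, \ldots, y_0, \bot)$ with $y_i \ele \Compact(D_i)$. Under the identification $C \Isom \Pi_n D$, the embedding $e_{n,\infty}$ should then place the $D$-images of these $n$ components into finitely many coordinates of the infinite tuple and leave the rest equal to $\bot$.

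For the left-to-right inclusion, I would start from $\vect{x} \ele \Compact(C)$, use Proposition~\ref{prop:compactXinfty} to write it as $e_{n,\infty}(\vect{y})$ for some $\vect{y} \ele \Compact(C_n)$, and then conclude via the above identification that each coordinate of $\vect{x}$ is either $\bot$ or lies in $\Compact(D_i) \subseteq \Compact(D)$ by equation~(\ref{eq:compactUnion}); only finitely many coordinates can be non-$\bot$.

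For the converse inclusion, given a tuple $\Tuple<d_1, d_2, \ldots>$ with $d_j \ele \Compact(D)$ for all $j$ and $d_j = \bot$ for every $j \geq i$, I would apply~(\ref{eq:compactUnion}) to each of the finitely many non-$\bot$ components $d_j$ to obtain an index $m_j$ such that $d_j \ele \Compact(D_{m_j})$. Choosing $n$ larger than both $i$ and all the $m_j$'s, I can exhibit a tuple $\vect{y} \ele \Compact(C_n)$ whose image $e_{n,\infty}(\vect{y})$ is the original infinite tuple, so Proposition~\ref{prop:compactXinfty} yields compactness in $C$.

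The main obstacle I anticipate is pinning down the explicit form of $e_{n,\infty}\colon C_n \To C$ under the identification $C \Isom \Pi_n D$: one has to verify by induction on $n$ that the recursively defined embeddings $C_n \hookrightarrow C_{n+1}$ act coordinate-wise on the product decomposition, so that passing to the inverse limit simply extends each finite tuple by an infinite string of $\bot$'s. Once this identification is in place, both inclusions become routine bookkeeping supported by~(\ref{eq:compactUnion}) and Proposition~\ref{prop:compactXinfty}.
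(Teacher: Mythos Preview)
Your proposal is correct, but it follows a different route from the paper's main proof. The paper argues directly on the infinite product $C \Isom \Pi_n D$ with its pointwise order: the inclusion $K \subseteq \Compact(C)$ is stated as immediate from the pointwise ordering, and for the converse it takes $k = \Tuple<d_1,d_2,\ldots> \ele \Compact(C)$, supposes for contradiction that infinitely many $d_j$ are non-$\bot$, and observes that the truncations $k_j = \Tuple<d_1,\ldots,d_j,\bot,\ldots>$ form a chain with $k = \bigsqcup_j k_j$ yet $k \not\po k_j$ for any $j$, contradicting compactness of $k$. No appeal to Proposition~\ref{prop:compactXinfty} or to the explicit structure of the $C_n$ is made.

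Your approach instead leverages Proposition~\ref{prop:compactXinfty} together with the finite-product decomposition $C_n \Isom D_{n-1}\prod\cdots\prod D_0\prod C_0$, which is exactly the alternative viewpoint the paper sketches in the paragraph immediately following its proof. The trade-off is that the paper's direct argument is shorter and avoids the bookkeeping you identify as the main obstacle (pinning down $e_{n,\infty}$ coordinate-wise), whereas your approach makes the connection to the inverse-limit machinery and equation~(\ref{eq:compactUnion}) explicit and would generalise more readily to similar situations.
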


 \begin{Proof}
Let 
 \[ \begin{array}{rcl}
K &=& \Set{\Tuple <d_1, d_2, \ldots > \ele \Pi_n \, \Compact(D) \mid \exists \,i \ \forall j\geq i \Pred[ d_j = \bot ] }
 \end{array} \]
Since $C = \Pi_n D$ is pointwise ordered, we have %immediately know that 
$K \subseteq \Compact(C)$, so %that 
it suffices to show the inverse inclusion.
Let $k = \Tuple <d_1, d_2, \ldots > \ele \Compact(C)$ and assume, towards a contradiction, that there exist infinitely many components $\Set{d_i \mid i \ele \Nat}$ of $k$ that are different from $\bot$. 
Set $k_j = \Tuple <d_1, d_2, \ldots, d_{j}, \bot, \ldots > \ele K$ (the tuple definitely equal to $\bot$ after $d_j$, while previous components are the same as in $k$), then the set $\Set{k_j \mid j\ele \Nat}$ is directed (actually a chain) and $k = \bigsqcup_j k_j$; but $k \notpo k_j$ for any $j$, hence $k\notele \Compact(C)$, a contradiction. 
% \qed
 \end{Proof}

Another way to see this proposition is to observe that the $C_n$ are finite products of the shape $[C_{n-1} \To R]\prod [C_{n-2} \To R]\prod \cdots\prod [C_0\To R]\prod C_0$, where $C_0 = \Set{\bot}$.
Hence any tuple in $C_n$, and therefore in $\Compact(C_n)$, has the form $\Tuple <d_1,\ldots,d_{n-2},\bot>$. 
Now the embedding of such a tuple into $C = \lim_{\leftarrow} C_n$ is the infinite tuple
$\Tuple <d_1,\ldots,d_{n-2},\bot,\ldots,\bot,\ldots>$ that is definitely $\bot$ after $d_{n-2}$, and we know that the images of compact points in the $C_n$'s are exactly the elements of $\Compact(C)$.

 \subsection{Intersection type theories and the filter construction} \label{w-algebraic-lattices}

Intersection types form the `domain logic' of $\omega$-algebraic lattices in the sense of \cite{Abramsky'91}. 
This means that for each domain $X$ in $\ALG$ there exists a countable language $\Lang_X$ of intersection types together with an appropriate pre-order $\seq_X$ such that $(\Lang_X,\seq_X)$ is the Lindenbaum algebra of the compact points $\Compact(X)$ of $X$, 
\ie~an axiomatic, and hence finitary presentation of the structure $\CompOp(X) = (\Compact(X), \po^{\SuperOp})$ where $\po^{\SuperOp}$ is just the inverse of the partial order $\po$ of $X$.

To understand this, first observe that $\Compact(X)$ is closed under binary joins. 
Indeed, for any $e_1,e_2 \ele \Compact(X)$ if $e_1 \join e_2 = \bigsqcup \Set{e_1, e_2} \po \bigsqcup Z$ for some directed $Z\subseteq X$ then $e_1,e_2 \po \bigsqcup Z$, which implies that there exist $z_1,z_2\ele Z$ such that $e_1 \po z_1$ and $e_2 \po z_2$. 
By directness of $Z$ there exists some $z_3 \ele Z$ such that $z_1,z_2 \po z_3$, hence
$e_1 \join e_2 \po z_1 \join z_2 \po z_3$. 
Thereby the structure $(\Compact(X), \po)$ is a sup-semilattice (a poset closed under finite joins), hence its dual $\CompOp(X)$ is an inf-semilattice (a poset closed under finite meets), whose meet operator $\sqcap^{\SuperOp}$ coincides with the join $\join$ over $\Compact(X)$.

By algebraicity $X$, is generated by $\Compact(X)$ in the sense that $(X,\po)$ is isomorphic to the poset $(\Idl(\Compact(X)),\subseteq)$, where
$\Idl(\Compact(X))$, the set of ideals over $\Compact(X)$, consists of directed and downward closed subsets of $\Compact(X)$. It turns out that the compact elements of $\Idl(\Compact(X))$ 
are just the images ${\downarrow} e = \Compact(e)$ of the elements $e \ele \Compact(X)$. 
Dually, $(X,\po)$ is isomorphic to the poset $(\Filter(\CompOp(X)),\subseteq)$, where $\Filter(\CompOp(X))$ is the set of \emph{filters} over $\CompOp(X)$, that are non-empty subsets of $\Compact(X)$ which are upward closed with respect to $\po^{\SuperOp}$ and closed under $\sqcap^{\SuperOp}$. 

Therefore filters over $\CompOp(X)$ give rise to the algebraic lattice $(\Filter(\CompOp(X)),\subseteq)$, whose
compact elements are $\filt^{\SuperOp} e = \Set{e' \ele \Compact(X) \mid e\po^{\SuperOp} e'}$, called the \emph{principal filters}. 
In summary, we have the isomorphisms in the category $\ALG$:
 \[ \begin{array}{rcccl}
X &\Isom& \Idl(\Compact(X)) &\Isom& \Filter(\CompOp(X)) .
 \end{array} \]

The fact that $\Compact(X)$ is a countable set allows for a finitary (syntactic) presentation
of $X\Isom \Filter(\CompOp(X))$ itself by introducing a language of types denoting the elements of $\Compact(X)$ and axioms and rules defining a pre-order over types whose intended meaning is $\po^{\SuperOp}$.

An \emph{intersection type language} $\Lang$ is a set of expressions closed under the binary operation $\inter$ and including the constant $`w$. 
A pre-order $\seq$ is defined over this set, making $\inter$ into the meet and $`w$ into the top element, as formally stated in the next definition.

 \begin{defi}[Intersection Type Language and Theory] \label{def:interLangTheory}

 \begin{enumerate}

 \firstitem 
A denumerable set of type expressions $\Lang$ is called an \emph{intersection type language} if there exists a constant $`w \ele \Lang$ and $\Lang$ is closed under the binary operator $`s\inter`t $, called \emph{type intersection}.

 \item 
An \emph{intersection type theory $\TypeTheor$ over $\Lang$} (where $\Lang$ is an intersection type language) is an axiomatic presentation of a pre-order $\seqT$ over %intersection 
types in $\Lang$ validating the following axioms and rules:
 \[ \begin{array}{c@{\dquad}c@{\dquad}c@{\dquad}c}
	\Inf {`s \inter `t \seqT `s } &
	\Inf {`s \inter `t \seqT `t } &
	\Inf {`s \seqT `w } &
	\Inf {`r \seqT `s \quad `r \seqT `t }{ `r \seqT `s \inter `t }
	\end{array} \]

 \item
We abbreviate $`s\seqT`t\seqT`s$ by $`s \simT`t$ and write $[`s]_{\TypeTheor}$ for the equivalence class of $`s$ with respect to $\simT$. The subscript ${\TypeTheor}$ will be omitted when no ambiguity is possible.

 \end{enumerate}
 \end{defi}

The type $`s\inter`t$ is called an intersection type in the literature. 
The reason for this is that as a type of $`l$-terms it is interpreted as the intersection of the interpretations of $`s$ and $`t$ in set theoretic models of the $`l$-calculus. 
This is rather unfortunate in the present setting, where we shall speak of filters and of their intersections. 
To avoid confusion, we speak of `type intersections' when we refer to expressions of the shape $`s\inter`t$, reserving the word `intersection' to the set theoretic operation.

Given an intersection type theory $\TypeTheor$ over a language $\Lang$ that axiomatises the pre-order $\seq_{\TypeTheor}$, the quotient $\LangQseqT$ is an inf-semilattice. 
We will now establish a sufficient condition for $\seq_{\TypeTheor}$ to be isomorphic to $\CompOp(X)$ for some $X \ele \ALG$.

 \begin{lem} \label{lem:compactRep}
Let $ \TypeTheor$ be an intersection type theory over $\Lang$ and $\seq_{\TypeTheor}$ the relative pre-order. 
Let $(X,\po)$ be a domain and $ \TypeToComp: \Lang \To \Compact(X) $ an order-reversing surjective mapping, \ie~such that for all $`s,`t \ele \Lang$:
 $ \begin{array}{rcl}
`s \seq_{\TypeTheor} `t &\Then& \TypeToComp(`t) \po \TypeToComp(`s). 
 \end{array} $
Then $\LangQseqT \Isom \CompOp(X)$ as inf-semi-lattices.
 \end{lem}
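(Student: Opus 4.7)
The plan is to exhibit the isomorphism $\Phi : \LangQseqT \to \CompOp(X)$ defined by $\Phi([`s]) = \TypeToComp(`s)$, and to verify in turn that it is well-defined, bijective, and preserves the inf-semilattice operations. Well-definedness is immediate from the hypothesis: if $`s \simT `t$ then both $`s \seqT `t$ and $`t \seqT `s$ hold, so applying the order-reversing assumption twice gives $\TypeToComp(`s) = \TypeToComp(`t)$ in $\Compact(X)$. One direction of order-preservation is the hypothesis read contravariantly: $[`s] \seqT [`t]$ yields $\TypeToComp(`t) \po \TypeToComp(`s)$, i.e., $\Phi([`s])$ lies above $\Phi([`t])$ in $\CompOp(X)$.

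For the inf-semilattice structure I would check that $\Phi$ sends $[`w]$ to the top of $\CompOp(X)$ and $[`s \inter `t]$ to the meet $\TypeToComp(`s) \join \TypeToComp(`t)$ of $\CompOp(X)$. For the top, $`s \seqT `w$ for every $`s$ combined with surjectivity forces $\TypeToComp(`w) = \bot \in \Compact(X)$, which is the top of $\CompOp(X)$. For the meet, the theory axioms $`s \inter `t \seqT `s$ and $`s \inter `t \seqT `t$ give $\TypeToComp(`s) \join \TypeToComp(`t) \po \TypeToComp(`s \inter `t)$; for the converse, using surjectivity I would write $\TypeToComp(`s) \join \TypeToComp(`t) = \TypeToComp(`r)$ and then apply the converse of the order-reversal to deduce $`r \seqT `s$ and $`r \seqT `t$, so $`r \seqT `s \inter `t$ and therefore $\TypeToComp(`s \inter `t) \po \TypeToComp(`r)$ as desired.

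Surjectivity of $\Phi$ is inherited from that of $\TypeToComp$, and injectivity reduces to the converse implication $\TypeToComp(`t) \po \TypeToComp(`s) \Rightarrow `s \seqT `t$. I regard this converse --- which turns the surjection of pre-orders into a genuine order-reflection, and which is used crucially in the meet-preservation step above --- as the main obstacle of the proof. It must be extracted either from reading the stated order-reversing hypothesis as a biconditional or from the axiomatic tuning of the intersection type theory, in which every inequality between compact points $\TypeToComp(`t) \po \TypeToComp(`s)$ is witnessed by a syntactic derivation $`s \seqT `t$. Once this converse is in hand, the four verifications above assemble routinely into the desired isomorphism of inf-semilattices.
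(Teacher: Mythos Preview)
Your approach mirrors the paper's exactly: define $\Phi([`s]) = \TypeToComp(`s)$, check well-definedness from the forward implication, then argue bijectivity and preservation of meets and top. The paper's proof is terser but follows the same outline.

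You are right to flag the converse implication $\TypeToComp(`t) \po \TypeToComp(`s) \Rightarrow `s \seqT `t$ as the crux, and in fact the lemma as literally stated does not give you this. With only a one-way order-reversing surjection the conclusion fails: take $X = \{\bot \po \top\}$, a language with two incomparable atoms $a,b$ under the minimal intersection type theory, and send both $a$ and $b$ to $\top$; this map is surjective and order-reversing, yet $[a] \neq [b]$ while $\TypeToComp(a) = \TypeToComp(b)$, so no isomorphism exists. The paper's own proof glosses over exactly this point, asserting without argument that ``$\TypeToComp$ preserves and reflects $\seqT$''.

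The resolution is the first of your two options: the hypothesis is intended as a biconditional. Every place in the paper that invokes this lemma (Lem.~\ref{lem:KRtheory}, Lem.~\ref{lem:Theta_nSurRev}, Lem.~\ref{lem:Theta_AsurjOderRev}) first establishes the full equivalence $`s \seq `t \iff \TypeToComp(`t) \po \TypeToComp(`s)$ and only then claims to ``satisfy the hypotheses of Lem.~\ref{lem:compactRep}''. So read the statement with $\Leftrightarrow$ in place of $\Rightarrow$; once you do, your argument for injectivity and your surjectivity-based argument for meet preservation both go through cleanly, and the proof is complete.
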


 \begin{Proof}
Let $ \TypeToComp': \LangQseqT \To \CompOp(X)$ be defined by $ \TypeToComp'([`s]) = \TypeToComp(`s)$.
If $[`s] = [`t]$, then $`s \simT `t$, so by assumption $ \TypeToComp(`s) \sqsupseteq \TypeToComp(`t) \sqsupseteq \TypeToComp(`s)$, which implies $ \TypeToComp(`s) = \TypeToComp(`t)$. 
This implies that $ \TypeToComp'$ is well defined and that $\TypeToComp$ preserves and reflects $\seqT$ with respect to $ \po^{\SuperOp}$, and that $\TypeToComp'$ is a bijection, since $ \TypeToComp$ is surjective. 
Finally, $ \TypeToComp'([`s \inter `t]) = \TypeToComp(`s \inter `t) = \TypeToComp(`s) \join \TypeToComp(`t) = \TypeToComp(`s) \meet^{\SuperOp} \TypeToComp(`t)$. 
In particular, $ \TypeToComp'([`w]) = \TypeToComp(`w) = \bot$.
 % \qed
 \end{Proof}

Under the hypotheses of the last lemma we have $X \Isom \Filter(\CompOp(X)) \Isom \Filter(\LangQseqT)$; nonetheless, we consider the more convenient isomorphism of $X$ with the set of filters over the pre-order $(\Lang,\seq_{\TypeTheor})$, which we call \emph{formal filters}.

 \begin{defi}[Formal Filters] \label{def:formal-filt}
 \begin{enumerate}

 \firstitem
A \emph{formal filter} with respect to an intersection type theory $\TypeTheor$ over $\Lang$ is a subset $f\subseteq\Lang$ such that:
 \[ \begin{array}{c@{\dquad}c@{\dquad}c} 
\Inf{`w \ele f} &
\Inf{`s \ele f \quad `s \seqT`t }{`t \ele f} &
\Inf{`s \ele f \quad `t \ele f }{`s \inter `t \ele f}
 \end{array} \]
%
% \item
$\Filt(\TypeTheor)$ is the set of formal filters induced by the theory and we let $f,g$ range over $\Filt(\TypeTheor)$.

 \item 
The filter $ \filtThr `s = \Set{`t \ele \Lang \mid `s \le_{\TypeTheor} `t}$ is called \emph{principal} and we write $\Filt_p(\TypeTheor)$ for the set of principal filters.

 \end{enumerate}
 \end{defi}

We recall some properties of formal filters and of the poset $(\Filt(\TypeTheor),\subseteq)$. 
Since these are easily established or well known from the literature, we just state them or provide short arguments. 
The following is a list of some useful facts that follow immediately by definition. 

 \begin{fact} \label{filter lemma}
% Let $`s,`t \ele \Lang$ and $f \ele \Filt(\TypeTheor)$, then:
 \begin{enumerate}

 \firstitem \label{filter lemma leq} 
$`s\seqT `t$ if and only if $\filtThr`t \subseteq \filtThr`s$.

 \item \label{filter lemma join-meet} 
 $\filtThr `s \join \filtThr`t = \filtThr`s\inter`t$.

 \item \label{filter lemma ele} 
%If $f \ele \Filt(\TypeTheor)$ and 
If $`s \ele f$, then $\filtThr `s \subseteq f$.

 \item \label{filter lemma filtersup}
%If $f \ele \Filt(\TypeTheor)$ then 
$f = \bigcup_{`s \ele f} \filtThr `s $.

 \item \label{filter lemma arbitraryInter}
For any ${\FiltMod[G]}\subseteq \Filt(\TypeTheor)$, $\bigcap{\FiltMod[G]} \ele \Filt(\TypeTheor)$.

 \end{enumerate}
 \end{fact}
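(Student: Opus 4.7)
The plan is to verify each of the five items by routine unfolding of the definitions in Def.\skp\ref{def:formal-filt} together with the four axioms of $\seqT$ from Def.\skp\ref{def:interLangTheory}.

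For item (\ref{filter lemma leq}) I rely on the standard characterisation of principal filters over a pre-order: if $`s\seqT`t$, then transitivity of $\seqT$ gives $`s\seqT`r$ whenever $`t\seqT`r$, so $\filtThr`t\subseteq\filtThr`s$; conversely, reflexivity yields $`t\ele\filtThr`t$, so from $\filtThr`t\subseteq\filtThr`s$ one obtains $`t\ele\filtThr`s$, which unfolds to $`s\seqT`t$.

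For item (\ref{filter lemma join-meet}), the key observation from the axioms of Def.\skp\ref{def:interLangTheory} is that $`s\inter`t$ is the greatest lower bound of $`s$ and $`t$ in $(\Lang,\seqT)$. The filter $\filtThr`s\inter`t$ contains $`s$ and $`t$ via the axioms $`s\inter`t\seqT`s$ and $`s\inter`t\seqT`t$ together with upward closure, so item (\ref{filter lemma ele}) applied twice yields $\filtThr`s\cup\filtThr`t\subseteq\filtThr`s\inter`t$. For minimality, any filter containing both $\filtThr`s$ and $\filtThr`t$ must contain $`s$ and $`t$, hence $`s\inter`t$ by the $\inter$-closure rule, and therefore contains $\filtThr`s\inter`t$ by (\ref{filter lemma ele}); this identifies $\filtThr`s\inter`t$ as the join in $(\Filt(\TypeTheor),\subseteq)$.

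Items (\ref{filter lemma ele})--(\ref{filter lemma arbitraryInter}) are more direct. Item (\ref{filter lemma ele}) is simply the upward-closure rule of formal filters applied to every $`t$ with $`s\seqT`t$. Item (\ref{filter lemma filtersup}) combines (\ref{filter lemma ele}) for the $\supseteq$ inclusion with reflexivity of $\seqT$ (which places $`s\ele\filtThr`s$ for each $`s\ele f$) for the $\subseteq$ inclusion. Item (\ref{filter lemma arbitraryInter}) follows because each of the three closure conditions in Def.\skp\ref{def:formal-filt} transfers through arbitrary set-theoretic intersection. The only conceptual subtlety worth flagging is in item (\ref{filter lemma join-meet}), namely that the join in $(\Filt(\TypeTheor),\subseteq)$ is not the set-theoretic union of the two principal filters but the smallest filter containing that union; no genuine obstacle is anticipated beyond this remark.
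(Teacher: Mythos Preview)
Your proposal is correct and matches the paper's approach: the paper does not give an explicit proof but simply remarks that these are ``useful facts that follow immediately by definition,'' and your routine unfolding of Def.\skp\ref{def:formal-filt} and the axioms of Def.\skp\ref{def:interLangTheory} is exactly that. Your flag about item~(\ref{filter lemma join-meet}) --- that the join in $(\Filt(\TypeTheor),\subseteq)$ is the least filter containing the union rather than the union itself --- is the only point requiring any care, and you handle it correctly.
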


From Fact \ref{filter lemma}\,(\ref{filter lemma arbitraryInter}) follows that the poset $(\Filt(\TypeTheor),\subseteq)$ is a complete lattice, with set-theoretic intersection as (arbitrary) meet. 
Notice that, for ${\FiltMod[G]}\subseteq \Filt(\TypeTheor)$, the join 
 \[ \begin{array}{rcl}
\bigsqcup {\FiltMod[G]} &=& \bigcap \Set{f \ele \Filt(\TypeTheor) \mid \forall g \ele {\FiltMod[G]} \Pred[ g \subseteq f] }
 \end{array} \]
includes $\bigcup {\FiltMod[G]}$ but does not coincide with it in general, since $\bigcup {\FiltMod[G]}$ is not necessarily closed under $\inter$. 
However, since $\bigcup {\FiltMod[G]}$ is upper-closed with respect to $\seqT$, to get an explicit characterisation of $\bigsqcup {\FiltMod[G]}$ it is enough to close $\bigcup {\FiltMod[G]}$ under finite type intersections:
 \[ \begin{array}{rcl}
\bigsqcup {\FiltMod[G]} &=&
 \Set{`s \mid \Exists n, `s_i ~ (i \ele \n\footnotemark) \Pred[{ \forall i \seq n \Pred[`s_i \ele \bigcup{\FiltMod[G]} ] \And `s \simT `s_1\inter\cdots\inter`s_n }] } 
 \end{array} \] 
\footnotetext{We write $\n$ for the set $\Set{1,\ldots,n}$ and $a_i \ele V ~ (i \ele \n)$ for $a_1 \ele V, \ldots, a_n \ele V $.}%
On the other hand, if ${\FiltMod[G]}$ is directed with respect to $\subseteq$, then $\join{\FiltMod[G]} =\bigcup {\FiltMod[G]}$. 
In fact, if $`s_i \ele \bigcup{\FiltMod[G]} ~ (i \ele \n)$ we have that
$\filtThr `s_i \subseteq g_i$, for certain $g_i\ele {\FiltMod[G]} ~ (i \ele \n) $, by Fact \ref{filter lemma}\,(\ref{filter lemma ele}). 
By directness there exists $g' \ele {\FiltMod[G]}$ such that $g_1 \cup \cdots \cup g_n \subseteq g'$, and hence the same holds for $\filtThr `s_1 \cup \cdots \cup \filtThr `s_n$. 
Then $`s_i \ele g' ~ (i \ele \n)$ and so $`s_1\inter\cdots\inter`s_n \ele g' \subseteq \bigcup {\FiltMod[G]}$, since $g'$ is a formal filter.

 \Steffen{The next lemma is not referenced in the paper.
Moreover, it seems that you are using some knowledge here that is unspecified.
For example, it seems you show that $\Compact(\Filt(\TypeTheor)) = \Filt_p(\TypeTheor)$; we have defined $\Filt_p(\TypeTheor)$, but not $\Compact(\Filt(\TypeTheor))$.
So how can this proof check the equality?.}

The next lemma is not referenced explicitly in the paper, but is used throughout the rest of this section; it can be considered folklore in the theory of filter $`l$-models.

 \begin{lem} \label{prop:formal-filt}
% The poset 
$(\Filt(\TypeTheor),\subseteq)$ is an $`w $-algebraic lattice with top $\filt_{\TypeTheor} `w$ and compacts $\Compact(\Filt(\TypeTheor)) = \Filt_p(\TypeTheor)$.
 \end{lem}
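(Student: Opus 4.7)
The plan is to verify three things in turn: that $(\Filt(\TypeTheor),\subseteq)$ is a complete lattice with the claimed extremal element $\filtThr `w$, that its compact elements coincide with the principal filters, and that this set is countable. For completeness, Fact~\ref{filter lemma}(\ref{filter lemma arbitraryInter}) already gives closure under arbitrary intersections, i.e., arbitrary meets in $\subseteq$; combined with the existence of a greatest element, this yields arbitrary joins, with the explicit characterisation of $\bigsqcup \FiltMod[G]$ already discussed above. Verifying the role of $\filtThr `w$ in the lattice is then a short direct check using the axiom $`s \seqT `w$ together with the defining clauses of a formal filter.

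For the compact characterisation I would prove both inclusions. First, each $\filtThr `s$ is compact: if $\filtThr `s \subseteq \bigsqcup \FiltMod[G]$ for a directed $\FiltMod[G] \subseteq \Filt(\TypeTheor)$, then by the remark preceding the lemma $\bigsqcup \FiltMod[G] = \bigcup \FiltMod[G]$, so $`s$ lies in some $g \ele \FiltMod[G]$, and Fact~\ref{filter lemma}(\ref{filter lemma ele}) yields $\filtThr `s \subseteq g$. Conversely, for any $f \ele \Filt(\TypeTheor)$, Fact~\ref{filter lemma}(\ref{filter lemma filtersup}) expresses $f$ as the union of the family $\{\filtThr `s \mid `s \ele f\}$. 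This family is directed in $(\Filt(\TypeTheor),\subseteq)$: for any $`s_1,`s_2 \ele f$, closure of $f$ under $\inter$ gives $`s_1 \inter `s_2 \ele f$, and Fact~\ref{filter lemma}(\ref{filter lemma join-meet}) shows $\filtThr `s_1 \cup \filtThr `s_2 \subseteq \filtThr(`s_1 \inter `s_2)$. Hence $f = \bigsqcup \{\filtThr `s \mid `s \ele f\}$ as a directed supremum of principal filters; if $f$ is compact, applying the defining compactness property to this directed sup forces $f \subseteq \filtThr `s$ for some $`s \ele f$, and together with the reverse inclusion $\filtThr `s \subseteq f$ from Fact~\ref{filter lemma}(\ref{filter lemma ele}) we conclude $f = \filtThr `s \ele \Filt_p(\TypeTheor)$.

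Algebraicity is then immediate, since the directed decomposition just displayed is precisely $f = \bigsqcup \Compact(f)$. Finally, $`w$-algebraicity follows because $\Lang$ is denumerable by Def.~\ref{def:interLangTheory}(1), so the map $`s \mapsto \filtThr `s$ exhibits $\Filt_p(\TypeTheor)$ as countable. The main obstacle is the directedness of the family $\{\filtThr `s \mid `s \ele f\}$: this single fact underpins both the identification $\Compact(\Filt(\TypeTheor)) = \Filt_p(\TypeTheor)$ and the algebraicity decomposition, and relies simultaneously on closure of $f$ under $\inter$ and on the identity $\filtThr(`s_1 \inter `s_2) = \filtThr `s_1 \join \filtThr `s_2$ coming from Fact~\ref{filter lemma}(\ref{filter lemma join-meet}); after this, everything else is routine bookkeeping with Fact~\ref{filter lemma}.
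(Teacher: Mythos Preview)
Your proof is correct and follows essentially the same route as the paper: both arguments establish completeness from Fact~\ref{filter lemma}(\ref{filter lemma arbitraryInter}), prove $\Filt_p(\TypeTheor)\subseteq\Compact(\Filt(\TypeTheor))$ via the directed-sup-equals-union observation, show the reverse inclusion by exhibiting each $f$ as the directed join $\bigsqcup\{\filtThr`s\mid`s\ele f\}$ (directedness from closure under $\inter$ and Fact~\ref{filter lemma}(\ref{filter lemma join-meet})), and conclude $`w$-algebraicity from countability of $\Lang$. One minor remark: the statement's ``top $\filtThr`w$'' is a slip in the paper---$\filtThr`w$ is the \emph{bottom} of $(\Filt(\TypeTheor),\subseteq)$, the top being $\Lang$ itself---and your careful phrasing ``the role of $\filtThr`w$'' together with separately invoking a greatest element for completeness handles this correctly.
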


 \begin{Proof}
From the discussion above we know that $(\Filt(\TypeTheor),\subseteq)$ is a complete lattice, so it remains to show that it is $`w$-algebraic.

Let $\filtThr `s \subseteq \bigsqcup{\FiltMod[G]}$ for some directed ${\FiltMod[G]} \subseteq \Filt(\TypeTheor)$. 
Then $\bigsqcup{\FiltMod[G]} = \bigcup{\FiltMod[G]}$ so that $`s\ele \filtThr `s \subseteq \bigcup{\FiltMod[G]}$. Therefore there exists $g\in{\FiltMod[G]}$ such that $`s\ele g$ which implies $\filtThr `s \subseteq g$ by \ref{filter lemma}\,(\ref{filter lemma ele}). 
Hence $\Filt_p(\TypeTheor) \subseteq \Compact(\Filt(\TypeTheor))$.
By \ref{filter lemma}\,(\ref{filter lemma filtersup}) we have $f = \bigcup_{`s \ele f} \filtThr `s $ for any $f \ele \Filt(\TypeTheor)$. 
Let $\filtThr `s_i \subseteq f$ for certain $`s_i \ele f ~ (i \ele \n)$; using \ref{filter lemma}\,(\ref{filter lemma join-meet}) repeatedly we have $\filtThr `s_1 \join\cdots\join \filtThr`s_n = \filtThr(`s_1\inter\cdots\inter `s_n)$. 
On the other hand, $`s_1\inter\cdots\inter `s_n \ele f$ since $f$ is a formal filter, and $`s_1\inter\cdots\inter `s_n \seqT `s_i$ for all $i \ele \n$, which by \ref{filter lemma}\,(\ref{filter lemma leq}) implies that $ \filtThr `s_i \subseteq \filtThr(`s_1\inter\cdots\inter `s_n)$, \ie~$\Set{\filtThr `s \mid `s\ele f}$ is directed. 

Now if $f\ele \Compact(\Filt(\TypeTheor))$, then $f\subseteq \filtThr `s$ for some $`s \ele f$; by \ref{filter lemma}\,(\ref{filter lemma ele}) we conclude $f = \filtThr `s$. 
Therefore $\Compact(\Filt(\TypeTheor))\subseteq \Filt_p(\TypeTheor)$ and hence $\Compact(\Filt(\TypeTheor)) = \Filt_p(\TypeTheor)$ by the above. 
By this and \ref{filter lemma}\,(\ref{filter lemma filtersup}) we conclude that $(\Filt(\TypeTheor),\subseteq)$ is algebraic, and in fact is $`w$-algebraic because $\Lang$ is countable and the map $`s \mapsto \filtThr `s$ from $\Lang$ to $ \Filt_p(\TypeTheor)$ is obviously onto.
% \qed
 \end{Proof}

 \begin{prop} \label{prop:filter-iso} 
Let $\TypeTheor$ be an intersection type theory over $\Lang$, $X$ a domain and $ \TypeToComp: \Lang \To \Compact(X)$ a mapping that satisfies the hypotheses of Lem.\skp\ref{lem:compactRep}. 
Then $\Filt(\TypeTheor) \Isom \Filter(\CompOp(X)) \Isom X$.
 \end{prop}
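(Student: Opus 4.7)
The second isomorphism $\Filter(\CompOp(X)) \Isom X$ is exactly the standard domain-theoretic fact recalled in the opening discussion of this subsection (every $`w$-algebraic lattice is isomorphic, via $x \mapsto \Compact(x)$ dualised, to the lattice of filters on its compacts under the reverse order), so the work will lie entirely in establishing $\Filt(\TypeTheor) \Isom \Filter(\CompOp(X))$. Conceptually this factors as $\Filt(\TypeTheor) \Isom \Filter(\LangQseqT) \Isom \Filter(\CompOp(X))$, the first because formal filters are automatically $\simT$-closed and so correspond bijectively to filters on the quotient, the second by transport along the inf-semilattice isomorphism supplied by Lem.\skp\ref{lem:compactRep}; I will however prefer to give the isomorphism directly, transporting filters along $\TypeToComp$ in one step.

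\textbf{Maps and a key observation.} Define $\Phi: \Filt(\TypeTheor) \To \Filter(\CompOp(X))$ by $\Phi(f) = \TypeToComp[f]$ and $\Psi: \Filter(\CompOp(X)) \To \Filt(\TypeTheor)$ by $\Psi(F) = \TypeToComp^{-1}[F]$. The crucial fact needed throughout is that, although the hypothesis of Lem.\skp\ref{lem:compactRep} only demands that $\TypeToComp$ be order-reversing, the proof of that lemma shows that the induced quotient map is a bijective homomorphism of inf-semilattices, and such a map automatically reflects order. Hence $`s \seqT `t$ \emph{iff} $\TypeToComp(`t) \po \TypeToComp(`s)$, and likewise $\TypeToComp(`s) = \TypeToComp(`t)$ iff $`s \simT `t$. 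This bidirectionality, combined with surjectivity, is what makes both $\Phi$ and $\Psi$ well behaved.

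\textbf{Verification.} To show $\Phi(f) \ele \Filter(\CompOp(X))$: it contains $\bot = \TypeToComp(`w)$ because $`w \ele f$; it is closed under $\sqcap^{\SuperOp} = \join$ since $\TypeToComp(`s \inter `t) = \TypeToComp(`s) \join \TypeToComp(`t)$ and $f$ is closed under $\inter$; and upward closure with respect to $\po^{\SuperOp}$ follows by picking a preimage via surjectivity and using the reflecting biconditional to pull the comparison back into $f$. Dually, $\Psi(F)$ contains $`w$ because any non-empty filter of $\CompOp(X)$ contains $\bot$ (using downward $\po$-closure), is $\seqT$-upward closed by order reversal, and is closed under $\inter$ via the same join formula and closure of $F$ under $\sqcap^{\SuperOp}$. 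The identities $\Phi \circ \Psi = \mathrm{id}$ and $\Psi \circ \Phi = \mathrm{id}$ follow respectively from surjectivity of $\TypeToComp$ and from closure of $f$ under $\simT$-equivalence. Since both $\Phi$ and $\Psi$ are obviously monotone with respect to $\subseteq$, they give the desired isomorphism of complete lattices.

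\textbf{Main obstacle.} The only subtle point is invoking the order-reflecting direction of $\TypeToComp$, which is not literally part of the hypothesis but is extractable from the proof of Lem.\skp\ref{lem:compactRep} via the general fact that a bijective inf-semilattice homomorphism reflects order; everything else is a routine check.
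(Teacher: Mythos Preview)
Your proof is correct and follows essentially the same route as the paper: both establish $\Filt(\TypeTheor) \Isom \Filter(\CompOp(X))$ by transporting along the inf-semilattice isomorphism $\LangQseqT \Isom \CompOp(X)$ from Lem.~\ref{lem:compactRep}, and both appeal to the standard fact $\Filter(\CompOp(X)) \Isom X$ for the second isomorphism. The only cosmetic difference is that the paper factors explicitly through $\Filter(\LangQseqT)$ (formal filters $\leftrightarrow$ filters of equivalence classes $\leftrightarrow$ filters on $\CompOp(X)$) and then writes the composite to $X$ as $f \mapsto \bigsqcup\{\TypeToComp(`s) \mid `s \in f\}$, whereas you collapse the two steps into the direct maps $\Phi = \TypeToComp[\,\cdot\,]$, $\Psi = \TypeToComp^{-1}[\,\cdot\,]$ and verify the filter axioms by hand; your identification of the order-\emph{reflecting} property of $\TypeToComp$ (extracted from the conclusion of Lem.~\ref{lem:compactRep}) as the key ingredient is exactly right.
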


 \begin{proof} 
If $f \ele \Filt(\TypeTheor)$ then by Lem.\skp\ref{filter lemma}, $\Set{[`s] \mid `s \ele f}$ is a filter over $\LangQseqT$; vice versa, if $f$ is a filter over $\LangQseqT$ then $\bigcup f = \Set{`s \mid [`s] \ele f}$ is a formal filter; therefore $\Filt(\TypeTheor) \Isom \Filter(\LangQseqT)$. 
On the other hand, by Lem.\skp\ref{lem:compactRep} and the hypothesis, we have $\LangQseqT \Isom \CompOp(X)$ via the mapping $ \TypeToComp'([`s]) = \TypeToComp(`s)$, so that the desired isomorphism $\Filt(\TypeTheor) \Isom\Filter(\CompOp(X))$ is given by 
 \[ \begin{array}{rcl} 
 \Semlambda f . \bigsqcup \Set{\TypeToComp'([`s]) \mid `s \ele f} 
	&=& 
 \Semlambda f . \bigsqcup \Set{\TypeToComp(`s) \mid `s \ele f}. 
 \end{array} \]
\arrayqed[-28pt]
 \end{proof}

Because of Prop.\skp\ref{prop:filter-iso} and essentially following \cite{BCD'83}, we ignore the distinction between formal filters over a pre-order and filters over the ordered quotient and we shall work with the simpler formal filters, henceforth just called filters.

 \subsection{A filter domain solution to the continuation domain equations} \label{filter-model-solution}

Given an arbitrary domain $R$ as in Def.\skp\ref{eq:domain}, we fix the initial and final solution $D$, $C$ of the continuation equations in the category $\ALG$. 
For $A = R,D,C$, we will now define the languages $\Lang_A$ and the theories $\TypeTheor_A$, inducing the pre-orders $\seq_{\TypeTheor_A}$, which we write as $\seqA$.

 \begin{defi} \label{def:typeTHeorFromDomain}
Take $R \ele \ALG$, ordered by $\po_R$, with bottom $\bot$ and join $\join$.

 \begin{enumerate}

 \item 
The intersection type language $\Lang_R$ is defined by the grammar:
 \[ \begin{array}{rcl@{\quad}l}
`r & :: = & `y_a \mid `w \mid `r \inter `r & (a \ele \Compact(R))
 \end{array} \]
We let $`y$ range over the set $\Set {`y_a \mid a \ele \Compact(R)}$.

 \item $\TypeTheor_R$ is the smallest intersection type theory axiomatising the pre-order $\seq_R$ such that (where $\simR = \seq_R \cap \seq^{\SuperOp}_R$):
 \[ \begin{array}{c@{\dquad}c}
\Inf	{`y_ \bot \simR `w}
	&
\Inf	{`y_{a \join b} \simR `y_a \inter `y_b}
 \end{array} \]

 \item 
The mapping $ \TypeToComp_R: \Lang_R\To\Compact(R)$ is defined by:
 \[ \begin{array}{rcl}
 \TypeToComp_R(`y_a) &=& a \\ 
 \TypeToComp_R(`w) &=& \bot \\ 
 \TypeToComp_R(`r_1 \inter `r_2) &=& \TypeToComp_R(`r_1) \join \TypeToComp_R(`r_2) 
 \end{array} \]

 \end{enumerate}
Observe that $\inter$ is the meet with respect to $\seqR$.
 \end{defi}

The following property, that states the relation between $\po_R$ and $\seqR$, holds naturally:

 \begin{lem} \label{lem:KRtheory}
 \begin{enumerate}

 \firstitem \label{lem:KRtheory-1}
	$`y_a \seqR `y_b \Iff b \po_R a$.

 \item \label{lem:KRtheory-2}
	$`r \seqR `r' \Iff \TypeToComp_R(`r') \po_R \TypeToComp_R(`r)$.

 \end{enumerate}
 \end{lem}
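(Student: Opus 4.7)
The plan is to prove both parts together after establishing an auxiliary \emph{normal form} fact stating that every $`r \ele \Lang_R$ is provably equivalent to $`y_{\TypeToComp_R(`r)}$. This is the essential tool that collapses the apparently richer syntax of $\Lang_R$ onto the $`y$-atoms, and every remaining step becomes routine once it is in place.

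First I would prove, by induction on the structure of $`r$, that $`r \simR `y_{\TypeToComp_R(`r)}$. The base case $`r = `y_a$ is immediate; for $`r = `w$ the axiom $`y_\bot \simR `w$ combined with $\TypeToComp_R(`w) = \bot$ gives it; for $`r = `r_1 \inter `r_2$ the induction hypothesis and the axiom $`y_{a\join b} \simR `y_a\inter`y_b$ together with congruence of $\inter$ with respect to $\simR$ give the conclusion.

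Next I would prove the soundness direction of part~(2), namely $`r \seqR `r' \Then \TypeToComp_R(`r') \po_R \TypeToComp_R(`r)$, by induction on the derivation in $\TypeTheor_R$. The generic intersection axioms and rules of Def.\skp\ref{def:interLangTheory} translate directly: $\TypeToComp_R(`s\inter`t) = \TypeToComp_R(`s) \join \TypeToComp_R(`t)$ makes $\inter$ the join in $R$, and $\TypeToComp_R(`w) = \bot$ is the least element; the two extra axioms $`y_\bot \simR `w$ and $`y_{a\join b}\simR `y_a\inter`y_b$ hold as equalities under $\TypeToComp_R$ by definition. This immediately yields the $\Rightarrow$ direction of part~(1) by specialising to $`r=`y_a$, $`r'=`y_b$.

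The $\Leftarrow$ direction of part~(1) is a direct syntactic derivation: if $b \po_R a$ then $a \join b = a$, hence by the axiom $`y_a \simR `y_{a\join b} \simR `y_a\inter `y_b$, and $`y_a\inter`y_b \seqR `y_b$ is a standard intersection axiom. Finally, the $\Leftarrow$ direction of part~(2) is obtained by combining the normal form lemma with part~(1): assuming $\TypeToComp_R(`r') \po_R \TypeToComp_R(`r)$, part~(1) gives $`y_{\TypeToComp_R(`r)} \seqR `y_{\TypeToComp_R(`r')}$, and transporting this along the equivalences $`r \simR `y_{\TypeToComp_R(`r)}$ and $`r' \simR `y_{\TypeToComp_R(`r')}$ yields $`r \seqR `r'$. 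The only non-mechanical step in the whole argument is the normal form lemma; once that is available, neither direction presents a real obstacle.
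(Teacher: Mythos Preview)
Your proof is correct and shares its core idea with the paper: both arguments hinge on the normal form observation that every $`r \ele \Lang_R$ is $\simR$-equivalent to a single atom $`y_{\TypeToComp_R(`r)}$, after which part~(2) reduces to part~(1). The paper states this reduction inline in the proof of~(2), while you isolate it as a separate lemma; the content is identical.

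Where your route differs is in the soundness (left-to-right) direction. You establish $`r \seqR `r' \Rightarrow \TypeToComp_R(`r') \po_R \TypeToComp_R(`r)$ by a clean induction on derivations in $\TypeTheor_R$, and then obtain the $\Rightarrow$ direction of~(1) as the special case $`r = `y_a$, $`r' = `y_b$. The paper instead argues~(1)$\Rightarrow$ directly: from $`y_a \seqR `y_b$ it infers $`y_a \simR `y_a \inter `y_b \simR `y_{a\sqcup b}$ and then concludes $a = a \sqcup b$. That last step---that $`y_a \simR `y_{a\sqcup b}$ forces $a = a\sqcup b$---is exactly an injectivity statement about the map $c \mapsto [`y_c]_{\simR}$, which is not available without something like your induction on derivations (or a soundness argument against a model). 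So your organisation is slightly more robust: the induction on derivations you perform is precisely what is needed to close the step the paper leaves implicit. The $\Leftarrow$ directions of both parts are handled identically in your proposal and in the paper.
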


 \begin{proof} In the following we remove all the subscripts $R$ for notational simplicity.

 \begin{enumerate}

 \item 
If $`y_a \seq `y_b$ then either $b = \bot $, so that $\bot \po a$, or
$`y_a \sim `y_a \inter `y_b$, since $\inter$ is the meet with respect to $\seq$, which is an intersection type theory. 
By definition we have $`y_{a \sqcup b} \sim `y_a \inter `y_b$, so it follows that $`y_a \sim `y_{a \sqcup b}$, which implies $a = a \join b $, so $b \po a$.

Vice versa, if $b \po a$ then $a = a \join b$ and we have $`y_a = `y_{a \sqcup b} \sim `y_a \inter `y_b$ from which we conclude $`y_a \seq `y_b$.
	
 \item 
When we, consistently with the theory $\seq$, identify $`y_\bot$ and $`w$, then $`r = `y_ {a_1}\inter \cdots \inter `y_{a_h}$, for some $a_i \ele \Compact(R) ~ (i \ele \h)$; notice that $`y_ {a_1}\inter \cdots \inter `y_{a_h} \sim `y_{a_1\sqcup\cdots\sqcup a_h}$. 
Likewise, $`r' = `y_ {b_1}\inter \cdots \inter `y_{b_k } $, then by part (\ref{lem:KRtheory-1}) we have:
 \[ \begin{array}{rcl@{\quad}lccc@{\quad}c@{\quad}ccc}
`r 
	& \sim & 
`y_{a_1\sqcup\cdots\sqcup a_h} 
	& \seq & 
`y_{b_1\sqcup\cdots\sqcup b_k } 
	& \sim & 
`r' 
	&\Iff&
b_1\join\cdots\join b_k 
	&\po&
a_1\join\cdots\join a_h
 \end{array} \]
The result follows from %by observing that 
$\TypeToComp_R(`r') = b_1\join\cdots\join b_k$ and $\TypeToComp_R(`r) = a_1\join\cdots\join a_h$.%
\qed
 \end{enumerate}
 \end{proof}

The following corollary is the converse of Prop.\skp\ref{prop:filter-iso}.

 \begin{cor} \label{cor:fromDomToFilter}
There exists an intersection type theory $\TypeTheor_R$ such that $\Filt_R \Isom R$.
 \end{cor}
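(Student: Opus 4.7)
The plan is to observe that the corollary is essentially an immediate application of Prop.\skp\ref{prop:filter-iso} to the triple $(R, \TypeTheor_R, \TypeToComp_R)$ produced by Def.\skp\ref{def:typeTHeorFromDomain}, with $\Filt_R$ understood as $\Filt(\TypeTheor_R)$. That proposition reduces the isomorphism $\Filt(\TypeTheor_R) \Isom R$ to verifying the two hypotheses of Lem.\skp\ref{lem:compactRep} for $\TypeToComp_R: \Lang_R \To \Compact(R)$: namely, that $\TypeToComp_R$ is surjective onto $\Compact(R)$ and is order-reversing from $\seqR$ to $\po_R$.

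First I would handle surjectivity. By construction of $\Lang_R$, for every $a \ele \Compact(R)$ there is an atomic type $`y_a$, and Def.\skp\ref{def:typeTHeorFromDomain}(3) stipulates $\TypeToComp_R(`y_a) = a$. Hence $\TypeToComp_R$ maps $\Lang_R$ onto $\Compact(R)$, so surjectivity is immediate.

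Second, order-reversal is precisely the content of Lem.\skp\ref{lem:KRtheory}(\ref{lem:KRtheory-2}), which states $`r \seqR `r' \Iff \TypeToComp_R(`r') \po_R \TypeToComp_R(`r)$. In particular the forward implication supplies the condition required by Lem.\skp\ref{lem:compactRep}. Thus both hypotheses of Prop.\skp\ref{prop:filter-iso} are met, yielding $\Filt(\TypeTheor_R) \Isom \Filter(\CompOp(R)) \Isom R$ in $\ALG$, which is the desired conclusion.

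There is no real obstacle here; the work has already been done in Def.\skp\ref{def:typeTHeorFromDomain}, Lem.\skp\ref{lem:KRtheory}, and Prop.\skp\ref{prop:filter-iso}. The only thing worth commenting on in the write-up is the (purely notational) identification of $\Filt_R$ with $\Filt(\TypeTheor_R)$, together with the observation that the theory $\TypeTheor_R$ of Def.\skp\ref{def:typeTHeorFromDomain} is well-defined even without any assumption on $R$ beyond $R \ele \ALG$, because the axioms $`y_\bot \simR `w$ and $`y_{a \join b} \simR `y_a \inter `y_b$ only refer to the join-semilattice structure of $\Compact(R)$, which always exists.
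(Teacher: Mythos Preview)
Your proposal is correct and follows essentially the same approach as the paper: verify that $\TypeToComp_R$ is surjective via $\TypeToComp_R(`y_a)=a$, invoke Lem.\skp\ref{lem:KRtheory}(\ref{lem:KRtheory-2}) for the order-reversing property, and conclude by Prop.\skp\ref{prop:filter-iso}. Your additional remarks on the notational identification $\Filt_R = \Filt(\TypeTheor_R)$ and the well-definedness of the axioms are fine elaborations but not required for the argument.
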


 \begin{Proof}
Let $\TypeTheor_R$ and $\TypeToComp_R$ be defined as in Def.\skp\ref{def:typeTHeorFromDomain}.
Now $\TypeToComp_R$ is surjective since $\TypeToComp_R(`y_a) = a$ for all $a\ele \Compact(R)$ and, by Lem.\skp\ref{lem:KRtheory}\,(\ref{lem:KRtheory-2}), it satisfies the hypotheses of Lem.\skp\ref{lem:compactRep}.
We conclude that $\Filt_R \Isom R$ by Prop.\skp\ref{prop:filter-iso}. % \qed
 \end{Proof}

 \begin{rem} \label{rem:Filt_R}
By Prop.\skp\ref{prop:filter-iso}, the isomorphism $\Filt_R \Isom R$ is given by the map $r \mapsto \bigsqcup \Set{\TypeToComp_R(`r) \mid `r \ele r}$; as observed in the proof of Lem.\skp\ref{lem:KRtheory}\,(\ref{lem:KRtheory-2}), for any $`r\ele \Lang_R$ there exists $a\ele \Compact(R)$ such that $`r \simR `y_a$, therefore the filter $r$ is mapped isomorphically to $\bigsqcup \Set{\TypeToComp_R(`y_a) \mid `y_a \ele r} = \bigsqcup \Set{a \mid `y_a \ele r}$ by Def.\skp\ref{def:typeTHeorFromDomain}. 
In case $r \ele \Compact(\Filt_R)$ then by Lem.\skp\ref{prop:formal-filt} and the previous remarks $r = \filt_R`r = \filt_R`y_a$ for some $`r$ and $a$, and its image in $R$ is just $a$. 
 \end{rem}

 \begin{defi} [Type theories $\TypeTheor_D$ and $\TypeTheor_C $] \label{def:typeTheories}

 \begin{enumerate}

 \firstitem
$\Lang_D$ and $\Lang_C$ are the intersection type languages defined by the grammar:
 \[ \begin{array}{r@{\quad}rcll} 
 \Lang_D: & `d & :: =& `r \mid `k\arrow `r \mid `w \mid `d\inter `d & (`r \ele \Lang_R) \\
 \Lang_C: & `k & :: =& `d\prod `k \mid `w \mid `k \inter `k 
 \end{array} \]
We let $`d$ range over $\Lang_D$, and $`k$ over $\Lang_C$, and $`s,`t$ over $\Lang_D \union \Lang_C$.

 \item
We define $\Inter_{i \ele I}`s_i$ through:
 \[ \begin{array}{rcl@{\quad}l}
\Inter_{i \ele \emptyset}`s_i &\ByDef& `w \\
\Inter_{i \ele I}`s_i &\ByDef& `s_p \inter (\Inter_{i \ele I\except p}`s_i) & (p \ele I) 
 \end{array} \]

 \item
The theories $\TypeTheor_D$ and $ \TypeTheor_C$ are the least intersection type theories closed under the following axioms and rules, inducing the pre-orders $\seqD$ and $ \seqC $ over $\Lang_D$ and $\Lang_C$ respectively: 
 \[ \begin{array}{c@{\dquad}c@{\dquad}c@{\dquad}c@{\dquad}c}
\Inf	{ `r_1 \seqR `r_2 }
	{ `r_1 \seqD `r_2 }
	&
\Inf	{ `w \seqD `w\arrow `w }
	&
\Inf	{ `y \seqD `w\arrow `y }
	&
\Inf	{ `w\arrow `y \seqD `y }
	&
\Inf	{ `w \seqC `w\prod `w }
 \end{array} \]
 \[ \begin{array}{c@{\dquad}c}
\Inf	{(`k\arrow `d_1) \inter (`k\arrow `d_2) \seqD `k\arrow(`d_1 \inter`d_2) }
	&
\Inf {(`d_1\prod `k_1) \inter (`d_2\prod `k_2) \seqC (`d_1 \inter `d_2)\prod (`k_1 \inter `k_2)}
 \end{array} \]

 \[ \begin{array}{c@{\dquad\dquad}c}
\Inf	{`k_2 \seqC `k_1 \quad `r_1 \seq_R `r_2
	}{ `k_1\arrow `r_1 \seqD `k_2\arrow `r_2 }
	&
\Inf	{`d_1 \seqD`d_2 \quad `k_1 \seqC `k_2
	}{ `d_1\prod `k_1 \seqC `d_2\prod `k_2 }
 \end{array} \]
As usual, we define $`s \simA `t$ if and only if $`s \seqA `t \seqA `s$, for $A = C,D$.

 \end{enumerate}
 \end{defi}
It is straightforward to show that both $(`s \inter `t) \inter `r \simA `s \inter (`t \inter `r)$ and $`s \inter `t \simA `t \inter `s$, so the type constructor $\inter$ is associative and commutative, and we will write $`s \inter `t \inter `r$ rather than $(`s \inter `t) \inter `r$.
Thereby the definition of $ \Inter_{i \ele I}`s_i $ does not depend on the order in which $p$ is chosen from $I$.

The pre-order $\seqD$ is the usual one on arrow types in that the arrow is contra-variant in the first argument and co-variant in the second one. 
The pre-order $\seqC$ on product types is co-variant in both arguments, and is the component-wise pre-order. 
As immediate consequence of Def.\skp\ref{def:typeTheories} we have that $`w \simD `w\arrow `w $, $`w \simC `w\prod `w $, 
$ (`k\arrow `d_1) \inter (`k\arrow `d_2) $ 
	$\simD$
$`k\arrow(`d_1 \inter`d_2) $,
and
$ (`d_1\prod `k_1) \inter (`d_2\prod `k_2) $
	$\simC$ 
$(`d_1 \inter `d_2)\prod (`k_1 \inter `k_2) $.

The equation $`w \simD `w\arrow `w $ together with $`y \simD `w\arrow `y $ are typical of filter models that are extensional $`l$-models.
The equation $`w \simC `w\prod `w $ allows for a finite representation of compact elements in $C$, that otherwise should be described by infinite expressions of the form $`d_1\prod \cdots\prod `d_k\prod `w\prod `w\prod \cdots$ (see points (\ref{lem:SubtypeProp-1}) and (\ref{lem:SubtypeProp-2}) of Lem.\skp\ref{lem:SubtypeProp} below). 

We have also:
 \[ \begin{array}{rcl@{\quad}lccc}
`w &\simD& `w\arrow `w &\seqD& `k\arrow `w &\seqD& `w ,
 \end{array} \]
which implies that $`k\arrow `r \simD `w $ if and only if $`r\simD`w $.

 \begin{lem} \label{lem:SubtypeProp}

 \begin{enumerate}

 \firstitem \label{lem:SubtypeProp-1}
$\Forall `k \ele \Lang_C \ \Exists `d_i \ele \Lang_D ~ (i \ele \n) \Pred[ `k \simC `d_1\prod \cdots\prod `d_n\prod `w ] $.

 \item \label{lem:SubtypeProp-2}
$`d_1\prod \cdots\prod `d_h\prod `w \seqC `d'_1\prod \cdots\prod `d'_k\prod `w \Iff k \seq h \And \Forall i \seq h \Pred [ `d_i \seqD `d'_i ] $.

 \item \label{lem:SubtypeProp-4}
$\Forall `d \ele \Lang_C \ \Exists n>0,`k_i \ele \Lang_C ,\ `r_i \ele \Lang_R ~ (i \ele \n) \Pred[ `d \simD \Inter_{\n} (`k _i\arrow `r_i) ] $.

 \item \label{lem:SubtypeProp-3}
If $I,J$ are finite and non-empty sets of indices and $`r_i\not\!\!\simD \,`w $ for all $i \ele I$ then:
 \[ \begin{array}{rcl}
 \Inter_{j \ele J} (`k '_j\arrow `r'_j) \seqD \Inter_{i \ele I} (`k _i\arrow `r_i) 
	&\Iff& \\
\multicolumn{3}{l}{\dquad\dquad
 \Forall i \ele I \, \Exists J_i\subseteq J \Pred[ J_i \neq\emptyset \And `k _i\seqC \Inter_{j \ele J_i}`k _j \And \Inter_{j \ele J_j}`r_j \seq_R `r_i ]}
 \end{array} \]

 \end{enumerate}
 \end{lem}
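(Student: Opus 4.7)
My plan is to prove parts (1)--(3) by straightforward structural inductions that use the axioms of Def.\ref{def:typeTheories}, and then to attack the BCD-style inversion of part (4), which is the main obstacle, via a semantic argument using step functions.

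For part (1), induct on $`k \ele \Lang_C$. The base case $`k = `w$ is handled by $`w \simC `w\prod`w$ (yielding the empty-product form, $n=0$); the case $`k = `d\prod`k'$ applies the induction hypothesis to $`k'$; and for $`k = `k_1 \inter `k_2$ the induction hypothesis yields canonical forms which, after padding the shorter one with trailing $`w$'s, can be collapsed componentwise via the axiom $(`d_1 \prod `k_1) \inter (`d_2 \prod `k_2) \simC (`d_1 \inter `d_2) \prod (`k_1 \inter `k_2)$ (together with the covariance rule for $\prod$ to give the reverse inequality). Part (3), read with the evident correction $`d \ele \Lang_D$, proceeds analogously: $`r \simD `w\arrow`r$ and $`w \simD `w\arrow`w$ supply the base cases, $`k\arrow`r$ is already an arrow, and for intersections the lists concatenate; the requirement $n>0$ is preserved throughout.

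Part (2) is routine. The ``if'' direction pads the right-hand side with $`w$'s (using $`d \seqD `w$ for the new positions) and assembles the product inequality by the covariance rule for $\prod$. For the ``only if'' direction I would argue by induction on the derivation of $\seqC$, noting that none of the axioms or rules can turn a strictly shorter non-trivial product into a longer one and that the covariance rule preserves componentwise inequalities; alternatively one may appeal to the filter model together with Prop.\ref{prop:compcats_of_C}, which makes the componentwise order on $\Compact(C)$ explicit.

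Part (4) is the main obstacle. The ``if'' direction is a direct assembly: from $`k_i \seqC \Inter_{j \ele J_i}`k'_j$ and $\Inter_{j \ele J_i}`r'_j \seqR `r_i$, the contravariant/covariant arrow rule combined with the axiom $(`k\arrow`d_1) \inter (`k\arrow`d_2) \seqD `k\arrow(`d_1\inter`d_2)$ yields $\Inter_{j \ele J_i}(`k'_j\arrow`r'_j) \seqD `k_i\arrow`r_i$, and intersecting over $I$ concludes. The ``only if'' direction is the genuine inversion lemma; I plan to argue semantically by mapping each $`k \arrow `r$, under the isomorphism $D \Isom [C \To R]$, to the step function $\StepFun{a}{b}$ with $a = \TypeToComp_C(`k) \ele \Compact(C)$ and $b = \TypeToComp_R(`r) \ele \Compact(R)$, so that an intersection $\Inter_{j \ele J}(`k'_j\arrow`r'_j)$ becomes the join $\bigsqcup_{j \ele J} \StepFun{a'_j}{b'_j}$. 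A direct calculation on step functions then shows that $\StepFun{a_i}{b_i} \po \bigsqcup_{j \ele J} \StepFun{a'_j}{b'_j}$ is equivalent, by evaluating at $a_i$, to $b_i \po \bigsqcup_{j \ele J_i} b'_j$ where $J_i = \Set{j \ele J \mid a'_j \po a_i}$, and that $J_i$ must be non-empty precisely when $b_i \neq \bot$. The hypothesis $`r_i \not\simD `w$ is exactly what forces $b_i \neq \bot$ via Lem.\ref{lem:KRtheory}; without it the step-function inequality would hold vacuously and no such $J_i$ need exist. Translating the witness back to the type theories via the order-reversing maps (Lem.\ref{lem:KRtheory} and its $\Lang_C$-analogue) yields the stated characterisation, with $\bigsqcup_{j \ele J_i} a'_j \po a_i$ holding automatically by the construction of $J_i$.
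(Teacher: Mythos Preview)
For parts (1)--(3) your structural-induction plan is correct and matches the paper's one-line ``induction on the structure of types and derivations.''

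For part (4) your route is genuinely different. The paper proves the BCD-style inversion syntactically, by induction on derivations of $\seqD$; you instead pass to step functions in $[C\To R]$ and read off the witnesses there. The semantic argument is attractive, but as written it is circular within the paper's development: you invoke ``Lem.~\ref{lem:KRtheory} and its $\Lang_C$-analogue'' to translate $\bigsqcup_{j\ele J_i} a'_j \po a_i$ back to $`k_i \seqC \Inter_{j\ele J_i}`k'_j$. That $\Lang_C$-analogue is precisely the order-reflecting half of Lem.~\ref{lem:Theta_nSurRev} (and hence of Lem.~\ref{lem:Theta_AsurjOderRev}), which the paper proves \emph{after} the present lemma and whose proof explicitly appeals to Lem.~\ref{lem:SubtypeProp}\,(\ref{lem:SubtypeProp-3}). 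The same issue hits your very first step, where you need order-reversing for $\TypeToComp_D$ to pass from $\seqD$ to the step-function order; the paper's proof of that direction also quotes part~(4).

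The circularity can be broken, but only with a reorganisation you do not sketch: establish order-reversing for $\TypeToComp_D$ and $\TypeToComp_C$ by a direct soundness induction on the rules defining $\seqD$ and $\seqC$ (this needs nothing from the present lemma), and separately establish order-\emph{reflecting} for $\TypeToComp_C$ and $\TypeToComp_D$ by mutual induction on rank, using parts (1)--(3) together with only the easy ``if'' direction of (4). Once those are in place your step-function calculation goes through and delivers the hard direction of (4). The paper's syntactic route avoids this bootstrap entirely, at the cost of the usual delicate induction on $\seqD$-derivations handling transitivity; your route trades that for a semantic detour that requires setting up the $\TypeToComp$-correspondence earlier and more carefully than the paper does.
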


 \begin{Proof} By induction on the structure of types and derivations in the theories $\TypeTheor_C$ and $\TypeTheor_D$. 
% \qed
 \end{Proof}

\noindent
Note that the equivalence $`y \simD `w\arrow `y$ is necessary to show Lem.\skp\ref{lem:SubtypeProp}\,(\ref{lem:SubtypeProp-4}). 

The proofs of parts (\ref{lem:SubtypeProp-1}) and (\ref{lem:SubtypeProp-2}) of the last lemma are straightforward; they should be compared however with Prop.\skp\ref{prop:compcats_of_C}, to understand how compact points in $C$ are represented by types in $\Lang_C$.
Parts (\ref{lem:SubtypeProp-4}) and (\ref{lem:SubtypeProp-3}) are characteristic of \emph{extended abstract type structures} (\EATS; see for example \cite{Amadio-Curien'98} Sect.~3.3); in particular the latter implies:
 \[ \begin{array}{rcl}
\Inter_{j \ele J} (`k '_j\arrow `r'_j) \seqD `k\arrow `r 
	&\Then& 
\Inter\Set{`r'_j\mid `k\seqC `k'_j} \seqR `r.
 \end{array} \]
See \cite{BCD'83} Lem.~2.4 (ii) or \cite{Amadio-Curien'98} Prop.~3.3.18.
 \Ugo{The previous remark, that was removed, has been restored because the above is the form of the statement in the BCD paper, and is more widely known than the formulation in the lemma. 
The latter is however more general and useful to us.}

The next step is to define the mappings $\TypeToComp_D:\Lang_D \To \Compact(D)$ and $\TypeToComp_C:\Lang_C \To \Compact(C)$ such that both satisfy the hypotheses of Lem.\skp\ref{lem:compactRep}. 
In doing that we shall exploit Equation \ref{eq:compactUnion} in Sect.\skp\ref{sub:domainTheoretic}, by introducing a stratification of $\Lang_D$ and $\Lang_C$, thereby extending \cite{Dezani-Honsell-Alessi'03}. 
First we define the \emph{rank} of a type in $\Lang_D$ and $\Lang_C$ inductively as follows:
 \[ \begin{array}{rcl}
 \Rank(`r) = \Rank(`w) &=& 0 \\
 \Rank(`s \inter `t) &=& \max \Set{\Rank(`s), \Rank(`t)} \\
 \Rank(`d\prod `k) &=& \max \Set{\Rank(`d), \Rank(`k)} + 1 \\
 \Rank(`k\arrow `r) &=&\Rank(`k) + 1.
 \end{array} \]
Then we define $\Lang_{A_n} = \Set{`s \ele \Lang_A \mid \Rank(`s) \seq n}$ for $A=D,C$. By this we have that if $n \seq m$ then $\Lang_{A_n} \subseteq \Lang_{A_m}$ and that $\Lang_A = \bigcup_n \Lang_{A_n}$.

 \begin{defi} \label{def:typesToComp}
The mappings $ \TypeToComp_{C_n}: \Lang_{C_n} \To \Compact(C_n)$ and
$ \TypeToComp_{D_n}: \Lang_{D_n} \To \Compact(D_n)$ are defined by mutual induction through:
 \[ \begin{array}{lcl}
 \TypeToComp_{C_0}(`k) &=& \bot \\
 \TypeToComp_{D_n}(`y) &=& \StepFun{\bot }{ \TypeToComp_R(`y)} = \Semlambda \_ \, . \, \TypeToComp_R(`y) \\
 \TypeToComp_{D_n}(`k\arrow `r) &=& \StepFun{\TypeToComp_{C_n}(`k) }{ \TypeToComp_R(`r)} \\
 \TypeToComp_{C_{n+1}}(`d\prod `k) &=& \Pair{\TypeToComp_{D_n}(`d) }{ \TypeToComp_{C_n}(`k)}
 \end{array} \]
And, for $A_n = C_n, D_n$:
 \[ \begin{array}{lll}
 \TypeToComp_{A_n}(`w) &=& \bot \\
 \TypeToComp_{A_n}(`s \inter`t) &=& \TypeToComp_{A_n}(`s) \join \TypeToComp_{A_n}(`t)
 \end{array} \]
 \end{defi}

The following lemma states that the mappings $\TypeToComp_{C_n}$ and $\TypeToComp_{D_n}$ are well defined, which is necessary since $\Lang_{C_n} \subseteq \Lang_{C_m}$ and $\Lang_{D_n} \subseteq \Lang_{D_m}$ when $n\seq m$.

 \begin{lem} \label{lem:rank}
For all $`k \ele \Lang_C$ and $`d \ele \Lang_D$, if $\Rank(`k) \seq m$ and $\Rank(`d) \seq n$ then
 \[ \begin{array}{rcl}
\TypeToComp_{C_m}(`k) = \TypeToComp_{C_{\Rank(`k)}}(`k) 
	&\textrm{and}&
\TypeToComp_{D_n}(`d) = \TypeToComp_{D_{\Rank(`d)}}(`d).
 \end{array} \]
 \end{lem}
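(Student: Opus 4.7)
\medskip
\noindent
\textbf{Proof plan.}
I would prove the two equalities simultaneously, by mutual induction on the structure of $`k$ and $`d$, or equivalently on $\Rank(`k) + \Rank(`d)$. The equalities are to be read modulo the canonical embeddings $e_{i,j}^A : A_i \To A_j$ (for $i\seq j$, $A=C,D$) coming from the projective sequence; equivalently, in view of Prop.\skp\ref{prop:compactXinfty}, they are literal equalities inside $\Compact(C)$ and $\Compact(D)$ after identifying each $\Compact(A_n)$ with its image in $\Compact(A)$. So the actual content is that applying $\TypeToComp_{A_m}$ to a type $`s$ with $\Rank(`s) = r \seq m$ yields the embedding into $A_m$ of the element $\TypeToComp_{A_r}(`s) \ele A_r$.

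First I would dispatch the easy cases. For $`s = `w$ we have $\TypeToComp_{A_n}(`w) = \bot$ in every $A_n$, and embeddings preserve $\bot$. For intersections, $\Rank(`s \inter `t) = \max(\Rank(`s),\Rank(`t))$, so the inductive hypothesis applies to each conjunct in $A_m$, and the claim follows because embeddings are continuous (hence preserve finite joins of compacts).

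The two interesting cases are the arrow and product constructors. For $`d = `k\arrow `r$ with $\Rank(`d) = \Rank(`k)+1 =: r+1 \seq n$, by definition $\TypeToComp_{D_n}(`k\arrow `r) = \StepFun{\TypeToComp_{C_n}(`k)}{\TypeToComp_R(`r)}$, which lives in $[C_n \To R] = D_n$. By the induction hypothesis on $`k$, $\TypeToComp_{C_n}(`k)$ is the image of $\TypeToComp_{C_r}(`k)$ under the embedding $e_{r,n}^C$. Using the standard fact that the embedding $D_r \hookrightarrow D_n$ (induced by pre-composition with $p_{n,r}^C$) sends a step function $\StepFun{a}{b}\ele D_r$ to $\StepFun{e_{r,n}^C(a)}{b}\ele D_n$, we conclude that $\TypeToComp_{D_n}(`k\arrow `r)$ is indeed the image of $\TypeToComp_{D_{r+1}}(`k\arrow `r)$ under $e_{r+1,n}^D$. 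The product case $`k = `d\prod `k'$ with $\Rank(`k) = \max(\Rank(`d),\Rank(`k'))+1$ is analogous, using that $e_{r+1,n}^C : C_{r+1} \hookrightarrow C_n$ acts component-wise on $D \prod C$, combined with the two induction hypotheses applied to $`d$ and $`k'$. The base case $`d = `y$ of rank $0$ reduces to the observation that the constant function $\StepFun{\bot_{C_0}}{\TypeToComp_R(`y)}$ is mapped by every embedding $e_{0,n}^D$ to the constant function on $C_n$ with the same value, because the embedding precomposes with $p_{n,0}^C$ which sends every element of $C_n$ to $\bot_{C_0}$.

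The main obstacle is exactly this bookkeeping of embeddings: one must verify that the continuous extension of the inclusion $\Compact(A_r) \subseteq \Compact(A_m)$ commutes both with the step-function constructor and with the pairing $\Pair{\cdot}{\cdot}$ used in Def.\skp\ref{def:typesToComp}. Once these two commutation lemmas are isolated (they are immediate from the defining equations of $D_n = [C_n \To R]$ and $C_{n+1} = D_n \prod C_n$ and the projective-chain definition of the embeddings), the induction goes through uniformly and the statement follows.
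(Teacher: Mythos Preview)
Your proposal is correct and takes essentially the same approach as the paper: a structural induction on types that unfolds Def.\skp\ref{def:typesToComp} and appeals to the inductive hypothesis on the immediate subterms. The paper phrases the induction as being ``over $m - \Rank(`k)$ and $n - \Rank(`d)$'', but the sample case it works out is really the structural step you describe; your formulation is the cleaner one. You are also more explicit than the paper about the r\^ole of the embeddings $e^A_{i,j}$ and the two commutation facts (embeddings send $\StepFun{a}{b}$ to $\StepFun{e^C(a)}{b}$, and act componentwise on pairs), which the paper leaves implicit in its equalities between elements of different $D_n$ and $C_n$.
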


 \begin{proof} By easy induction over $m - \Rank(`k)$ and $n - \Rank(`d)$, respectively. 
Consider the case of $`d\prod`k\arrow `r$; then $\Rank(`d\prod`k\arrow `r) = p + 2$ where $p = \max \Set{\Rank(`d),\Rank(`k)}$ and
 \[ \begin{array}{rcl}
\Theta_{D_n}(`d\prod`k\arrow `r) &=& 
\StepFun{\Theta_{C_n}(`d\prod`k) }{ \Theta_R(`r)} \\ &=&
\StepFun{\Pair{\Theta_{D_{n-1}}(`d) }{ \Theta_{C_{n-1}}(`k)}}{ \Theta_R(`r)} . 
 \end{array} \]
If $p+2 \seq n$ then $\Rank(`d) \seq p < p+1 \seq n-1$ and $\Rank(`k) \seq p < p+1 \seq n-1$, so that by induction:
$\TypeToComp_{D_{n-1}}(`d) = \TypeToComp_{D_{\Rank(`d)}}(`d)$ and $\TypeToComp_{C_{n-1}}(`k) = \TypeToComp_{C_{\Rank(`k)}}(`k)$. Then:
 \[ \begin{array}{rcl} 
\StepFun{\Pair{\TypeToComp_{D_{n-1}}(`d) }{ \TypeToComp_{C_{n-1}}(`k)}}{ \TypeToComp_R(`r)} 
	&=& 
 \StepFun{\Pair{\TypeToComp_{D_{\Rank(`d)}}(`d) }{ \TypeToComp_{C_{\Rank(`k)}}(`k)}}{ \TypeToComp_R(`r)} \\
	&=& 
	\TypeToComp_{D_{p+2}}(`d\prod`k\arrow `r).
% \hspace*{24mm} \qed
 \end{array} \]
\arrayqed[-22pt] 
 \end{proof}

For $A=D,C$, let $\seq_{A_n}$ be the pre-order $\seqA$ restricted to $\Lang_{A_n}$.

 \begin{lem} \label{lem:Theta_nSurRev}
For every $n$, the mappings $\TypeToComp_{C_n}$ and $ \TypeToComp_{D_n}$ are surjective and order reversing with respect to $\seq_{C_n}$ and $\seq_{D_n}$ respectively, \ie~they satisfy the hypotheses of Lem.\skp\ref{lem:compactRep}.
 \end{lem}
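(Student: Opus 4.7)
The plan is to proceed by induction on $n$, simultaneously establishing surjectivity and order-reversal for both mappings at each level. For the base case $n=0$: the set $\Compact(C_0)=\{\bot\}$ is trivially covered by $\TypeToComp_{C_0}$, which is constantly $\bot$, and order-reversal is vacuous. For $\TypeToComp_{D_0}$, any type of rank $0$ in $\Lang_D$ contains no arrow subtype and hence lies in $\Lang_R$; under the canonical identification $D_0 = [C_0 \arrow R] \cong R$ a constant function $\Semlambda \_ \, . \, a$ corresponds to $a \in R$, so $\TypeToComp_{D_0}$ restricted to $\Lang_{D_0}$ reduces to $\TypeToComp_R$, whose required properties are already supplied by Lem.\skp\ref{lem:KRtheory} and Cor.\skp\ref{cor:fromDomToFilter}.

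In the inductive step I would treat $\TypeToComp_{C_{n+1}}$ before $\TypeToComp_{D_{n+1}}$, so the argument for the arrow-type level depends only on what has just been proved at the product-type level. For surjectivity of $\TypeToComp_{C_{n+1}}$: every compact of $C_{n+1} = D_n \prod C_n$ is a pair $\langle a, b\rangle$ with $a\in\Compact(D_n)$ and $b\in\Compact(C_n)$, and by the inductive hypothesis there are $\delta\in\Lang_{D_n}$ and $\kappa\in\Lang_{C_n}$ with $\TypeToComp_{D_n}(\delta)=a$ and $\TypeToComp_{C_n}(\kappa)=b$, so $\delta\prod\kappa \in \Lang_{C_{n+1}}$ has the required image. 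For surjectivity of $\TypeToComp_{D_{n+1}}$: the compacts of $D_{n+1}=[C_{n+1}\arrow R]$ are finite joins of step functions $\StepFun{a}{b}$ with $a\in\Compact(C_{n+1})$ and $b\in\Compact(R)$; invoking the case just handled together with Cor.\skp\ref{cor:fromDomToFilter} one obtains $\kappa$ and $\rho$ with the right images, so that $\kappa\arrow\rho$ is sent to $\StepFun{a}{b}$, and the intersection clause caters to finite joins. Lem.\skp\ref{lem:rank} is used throughout to reconcile evaluations of the $\TypeToComp$ maps at different levels on common subexpressions.

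For order-reversal one verifies each axiom and rule of $\TypeTheor_D$ and $\TypeTheor_C$ from Def.\skp\ref{def:typeTheories}. The embedding $\rho_1 \seqR \rho_2 \Rightarrow \rho_1 \seqD \rho_2$ is immediate from the base case. The monotonicity rules for $\arrow$ and $\prod$ reduce, via the inductive hypothesis, to the standard pointwise inequality $\StepFun{a}{b'} \po \StepFun{a'}{b}$ whenever $a'\po a$ and $b'\po b$, and to componentwise comparison of pairs, respectively. The extensional equations $\omega \simD \omega \arrow \omega$, $\psi \simD \omega \arrow \psi$ and $\omega \arrow \psi \simD \psi$ follow from the identification of $\StepFun{\bot}{b}$ with the constant function $\Semlambda \_ \, . \, b$, while $\omega \simC \omega \prod \omega$ holds because $\bot_{C_{n+1}} = \langle \bot_{D_n}, \bot_{C_n}\rangle$. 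The distributivity axioms hold because step functions sharing a common domain-part join in their codomain and because joins in a product are computed componentwise. The main obstacle is this case analysis, where one must carefully track the level at which each $\TypeToComp_{\bullet}$ is evaluated, appealing to Lem.\skp\ref{lem:rank} as needed to bridge the computations across levels.
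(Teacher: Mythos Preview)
Your proposal is correct, and the overall architecture --- induction on $n$, the treatment of the base case via the identification $D_0 \cong R$ and Lem.\,\ref{lem:KRtheory}, and the surjectivity arguments at the inductive step --- matches the paper's proof. (A minor point: the reference you want for surjectivity of $\TypeToComp_R$ is its definition, or Lem.\,\ref{lem:KRtheory}, rather than Cor.\,\ref{cor:fromDomToFilter}.)

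Where you diverge is in the argument for order-reversal of $\TypeToComp_{D_n}$. You propose a rule-induction over the axioms and rules defining $\seqD$ and $\seqC$, checking each clause of Def.\,\ref{def:typeTheories} (and the ambient intersection-type axioms) against the definition of $\TypeToComp_{A_n}$. The paper instead works semantically: it unfolds the inequality $\StepFun{k}{r} \po \bigsqcup_{i\in I}\StepFun{k_i}{r_i}$ in the function space, obtaining the characterisation ``there exists $J\subseteq I$ with $r \po \bigsqcup_{j\in J} r_j$ and $\bigsqcup_{j\in J} k_j \po k$'', and then matches this against the syntactic characterisation of $\Inter_{i\in I}(`k_i\arrow`r_i) \seqD `k\arrow`r$ supplied by Lem.\,\ref{lem:SubtypeProp}\,(\ref{lem:SubtypeProp-3}).

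Both routes are sound. Yours is more elementary and self-contained, avoiding the appeal to Lem.\,\ref{lem:SubtypeProp}; it yields exactly the implication $`s \seq_{A_n} `t \Rightarrow \TypeToComp_{A_n}(`t) \po \TypeToComp_{A_n}(`s)$, which is all the lemma statement demands. The paper's route is a bit more work up front but actually establishes the biconditional (note the $\Iff$'s in its display), so that $\TypeToComp_{A_n}$ not only reverses but also \emph{reflects} the order --- a fact tacitly relied upon later (e.g.\ in the chain of equivalences behind Cor.\,\ref{cor:leq}).
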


 \begin{proof}
By induction on the definition of $\TypeToComp_{C_n}$ and $ \TypeToComp_{D_n}$. 

 \begin{description}

 \item[$n=0$]
The language $\Lang_{C_0}$ is generated by the constant $`w$ and the connectives $\prod $ and $\inter$, so all types in $\Lang_{C_0}$ are equated by $\sim{C_0}$. Then the thesis holds for $\TypeToComp_{C_0}$, since $C_0=\Set{\bot}$. 
On the other hand, since the only way for a type in $\Lang_D$ to be of rank greater than $0$ is to include an arrow, $\Lang_{D_0}$ is generated by the constants $`w$ and $`y_a$ for $a \ele \Compact(R)$ and the connective $\inter$ so that $\Lang_{D_0} = \Lang_R$. 
Besides, the isomorphism $D_0 \Isom R$ is given by the mapping $\Semlambda \_ \, . \, r \to r$, which is the continuous extension of the mapping $\StepFun{\bot }{ a} \to a$ from $\Compact(D_0)$ to $\Compact(R)$. 
Then $\TypeToComp_{D_0}(a) = \StepFun{\bot }{ a} { \to} a = \TypeToComp_R(`y_a)$, 
where the last mapping is an isomorphism of ordered sets, hence order preserving and respecting.
We conclude that $\TypeToComp_{D_0} = \TypeToComp_{R}$ up to the isomorphism $\Compact(D_0) \Isom \Compact(R)$, hence satisfies the hypotheses of Lem.\skp\ref{lem:compactRep} by Lem.\skp\ref{lem:KRtheory}\,(\ref{lem:KRtheory-2}).

 \item[$n>0$]
If $`k \ele \Lang_{C_n}$ then:
 \[ \begin{array}{rcl}
 \TypeToComp_{C_n}(`k) 
	&=& 
\left \{
 \begin{array}{ll}
	\Pair{\TypeToComp_{D_{n{-}1}}(`d) }{ \TypeToComp_{C_{n{-}1}}(`k)} & (\textrm{if }`k = `d'\prod `k') \\
	\TypeToComp_{C_n}(`k_1)\join \TypeToComp_{C_n}(`k_2) & (\textrm{if }`k = `k_1\inter`k_2) \\
	\bot & (\textrm{if }`k = `w).
 \end{array} \right.
 \end{array} \]
For $`k = `d'\prod `k'$ we have $\Rank(`k) \seq n$ implies $\Rank(`d'),\Rank(`k') \seq n{-}1$ by definition of $\Rank$; hence 
$`d' \ele \Lang_{D_{n{-}1}}$ and $`k' \ele \Lang_{C_{n{-}1}}$. 
By induction, $\TypeToComp_{D_{n{-}1}}: \Lang_{D_{n{-}1}} \To \Compact(D_{n{-}1})$ and
$\TypeToComp_{C_{n{-}1}}: \Lang_{C_{n{-}1}} \To \Compact(C_{n{-}1})$ are onto and order reversing. 
Since $\Compact(C_n) = \Compact(D_{n{-}1}) \times \Compact(C_{n{-}1})$, 
by induction $\TypeToComp_{C_n}$ is onto and order reversing. 
If $`k = `k_1\inter`k_2$ then for any $`k_3 \ele \Lang_{C_n}$:
 \[ \begin{array}{rcl@{\quad}l}
\TypeToComp_{C_n}(`k_1)\join \TypeToComp_{C_n}(`k_2) \po \TypeToComp_{C_n}(`k_3) 
& \Iff & \TypeToComp_{C_n}(`k_i) \po \TypeToComp_{C_n}(`k_3) & (i =1,2) \\ 
& \Iff & `k_3 \seq_{C_{n}} `k_i & (\textrm{by a subordinate induction on $`k$}) \\
& \Iff & `k_3 \seq_{C_{n}} `k_1\inter `k_2.
 \end{array} \]
Finally, the case $`k = `w$ is obvious as $\bot = \Pair{\bot }{ \bot}$ is the bottom in $\Compact(C_n)$, while $`w \sim_{C_n} `w\prod `w$ is the top
in $(\Lang_{C_n}, \seq_{C_{n}})$.

If $`d \ele \Lang_{D_n}$ then:
 \[ \begin{array}{rcl}
\TypeToComp_{D_n}(`d) 
	&=& 
\left \{
 \begin{array}{ll}
\StepFun{\bot }{ \TypeToComp_R(`r)} & (\textrm{if }`d = `r \ele \Lang_R) \\
\StepFun{\TypeToComp_{C_n}(`k) }{ \TypeToComp_R(`r)} & (\textrm{if }`d = `k\arrow `r ) \\ 
\TypeToComp_{D_n}(`d_1)\join \TypeToComp_{D_n}(`d_2) & (\textrm{if }`d = `d_1\inter `d_2) \\ 
\bot & (\textrm{if }`d = `w)
 \end{array} \right.
 \end{array} \]
By construction 
 \[ \begin{array}{rcl}
\Compact(D_{n}) 
	&=& 
\Compact([C_n \To R]) \\
	&=& \Set{\join_{i \ele I} \StepFun{k_i }{ r_i} \mid I \textrm{ is finite} \And \Forall i \ele I \Pred[ k_i \ele \Compact(C_n) \And r_i \ele \Compact(R) ] }.
 \end{array} \] 
We know from the above that $\TypeToComp_{C_n}$ is surjective (since both $\TypeToComp_{D_{n{-}1}}$ and $\TypeToComp_{C_{n{-}1}}$ are), while $\TypeToComp_R$ is surjective by definition. 
Let $k_i = \TypeToComp_{C_n}(`k_i)$ and $r_i = \TypeToComp_R(`r_i)$, then
$\join_{i \ele I} \StepFun{k_i }{ r_i} = \TypeToComp_{D_n}(\Inter_{i \ele I} `k_i\arrow `r_i)$; since also $ \TypeToComp_{D_n}(`w) = \bot = \StepFun{\bot }{ \bot} = \TypeToComp_{D_n}(`w\arrow `w)$, we conclude that $\TypeToComp_{D_n}$ is surjective.

To see that $\TypeToComp_{D_n}$ is order reversing, note that $\StepFun{k }{ r} \po f$ for $f \ele [C_n \To R]$ if and only if $r \po f(k)$, which is trivially the case if $r = \bot$. Since $\bot = \TypeToComp_R(`w)$ and also $\bot = \StepFun{k }{ \bot} = \TypeToComp_{D_n}(`k\arrow `w)$ for any $k$ and $`k$, while $`k\arrow `w \sim{D_n} `w \geq_{D_n}`d$ for any $`d \ele \Lang_{D_n}$, the thesis trivially holds if $r = \bot$.

Suppose that $r \neq \bot$. 
Since
 $ \begin{array}{rcl}
 \left(\join_{i \ele I} \StepFun{k_i }{ r_i}\right)(x) &=& \join_{j \ele J}r_j 
 \end{array} $
for $J = \Set{j \ele I \mid k_j\po x}$, we have 
 \[ \begin{array}{lll}
 \StepFun{k }{ r} \po \join_{i \ele I} \StepFun{k_i }{ r_i} & \Iff &
r \po \join_{i \ele I} \StepFun{k_i }{ r_i}(k) \\ % [1mm]
& \Iff & \Exists J \subseteq I \Pred[ r \po \join_{j \ele J}r_j \And \join_{j \ele J}k_j \po k ]
 \end{array} \]
By subjectivity of $\TypeToComp_{C_n}$ and $\TypeToComp_R$ we know that
there exist $`k,`r,$ such that $\TypeToComp_{C_n}(`k) = k$ and $\TypeToComp_{C_n}(`k_i) = k_i$,
and $`k_i,`r_i$ such that $\TypeToComp_R(`r) = r$,$\TypeToComp_{R}(`r_i) = r_i$, for every $i\ele I$. Therefore,
 \[ \begin{array}{lcll}
 \TypeToComp_{D_n}(`k\arrow `r) 
	& = & 
 \StepFun{\TypeToComp_{C_n}(`k) }{ \TypeToComp_R(`r)} \po 
 \join_{i \ele I} \StepFun{\TypeToComp_{C_n}(`k_i) }{ \TypeToComp_R(`r_i)} 
	\\ & \Iff & 
 \Exists J \subseteq I \Pred[ \TypeToComp_R(`r) \po \join_{j \ele J}\TypeToComp_R(`r_j) \And 
 \join_{j \ele J}\TypeToComp_{C_n}(`k_j) \po \TypeToComp_{C_n}(`k) ]
	\\ & \Iff & 
 \Exists J \subseteq I \Pred[ \TypeToComp_R(`r) \po \TypeToComp_R(\Inter_{j \ele J}`r_j) \And 
 \TypeToComp_{C_n}(\Inter_{j \ele J}`k_j) \po \TypeToComp_{C_n}(`k) ]
	\\ & \Iff & 
 \Exists J \subseteq I \Pred[ \Inter_{j \ele J}`r_j \seqR `r \And `k \seq_{C_n} \Inter_{j \ele J}`r_j ]
	\\ & \Iff & 
 \Inter_{i \ele I} (`k_i\arrow `r_i) \seq_{D_n} `k\arrow `r 
	\hfill \textrm{(by\skp\ref{lem:SubtypeProp}\,(\ref{lem:SubtypeProp-3}))}
 \end{array} \]
where we use that both $\TypeToComp_{C_n}$ (as proved above) and $\TypeToComp_R$ (by Lem.\skp\ref{lem:KRtheory}\,(\ref{lem:KRtheory-2})) are order reversing, and that\skp\ref{lem:SubtypeProp}\,(\ref{lem:SubtypeProp-3}) applies because $\TypeToComp_R(`r) \neq \bot$ if and only if $`r \neq_R `w$ by\skp\ref{lem:KRtheory}\,(\ref{lem:KRtheory-2}).
The general case 
 \[ \begin{array}{rcl@{\quad}lc}
 \TypeToComp_{D_n}(\Inter_{i \ele I}`k_i\arrow `r_i) 
	&=& 
 \join_{i \ele I}\TypeToComp_{D_n}(`k_i\arrow `r_i)
	&\po& 
 \TypeToComp_{D_n}(\Inter_{j \ele J}`k'_j\arrow `r'_j) 
 \end{array} \] 
now follows, since this is equivalent to
 \[ \begin{array}{rcl@{\quad}lccc}
 \TypeToComp_{D_n}(`k_i\arrow `r_i) 
	&=& 
 \StepFun{\TypeToComp_{C_n}(`k_i) }{ \TypeToComp_R(`r_i)} \\
	&\po& 
 \TypeToComp_{D_n}(\Inter_{j \ele J}`k'_j\arrow `r'_j) \\
	&=& 
\join_{j \ele J} \StepFun{\TypeToComp_{C_n}(`k'_j) }{ \TypeToComp_R(`r'_j)}
 \end{array} \]
for all $i \ele I$.%
\qed

 \end{description}
 \end{proof}

 \begin{defi} \label{def:typeInterpDandC}
The mappings $\TypeToComp_D:\Lang_D\arrow \Compact(D)$ and $\TypeToComp_C:\Lang_C\arrow \Compact(C)$ are defined by
 \[ \begin{array}{rcl}
\TypeToComp_D(`d) &=& \TypeToComp_{D_{\Rank(`d)}}(`d) \\
\TypeToComp_C(`k) &=& \TypeToComp_{C_{\Rank(`k)}}(`k)
 \end{array} \]
 \end{defi}

 \begin{rem} \label{rem:ThetaRank}
By Lem.\skp\ref{lem:rank} and of the definition of $\Rank$, $\TypeToComp_D(`k\arrow `r) = \StepFun{\TypeToComp_C(`k)}{\TypeToComp_R(`r)}$ and similarly $\TypeToComp_C(`d\prod`k) = \Pair{\TypeToComp_D(`d)}{\TypeToComp_C(`k)}$. In general all the equations
in Def.\skp\ref{def:typesToComp} concerning the mappings $\TypeToComp_{A_n}$ do hold for the respective maps
$\TypeToComp_{A}$.
 \end{rem}

 \begin{lem} \label{lem:Theta_AsurjOderRev}
The mappings $\TypeToComp_D$ and $\TypeToComp_C$ are surjective and order reversing, \ie~satisfy the hypotheses of Lem.\skp\ref{lem:compactRep}.
 \end{lem}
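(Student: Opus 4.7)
My plan is to reduce the claim to the stratified version already established in Lemma~\ref{lem:Theta_nSurRev}, using three ingredients: the equality $\Compact(A) = \bigcup_n \Compact(A_n)$ from Equation~\ref{eq:compactUnion} (under the embeddings $e_{n,\infty}$), the coherence statement in Lemma~\ref{lem:rank} saying that $\TypeToComp_{A_m}$ and $\TypeToComp_{A_{\Rank(`s)}}$ agree on $`s$ whenever $\Rank(`s)\le m$, and the observation (from the definition preceding Lemma~\ref{lem:Theta_nSurRev}) that $\seq_{A_n}$ is literally the restriction of $\seqA$ to $\Lang_{A_n}$, so for $`s,`t$ of rank at most $n$ one has $`s\seqA `t$ iff $`s\seq_{A_n} `t$.

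For surjectivity of $\TypeToComp_D$ (the case of $\TypeToComp_C$ is identical), fix any $x\ele \Compact(D)$. By Equation~\ref{eq:compactUnion} there exists $n$ such that $x\ele \Compact(D_n)$ up to the embedding $e_{n,\infty}$; by Lemma~\ref{lem:Theta_nSurRev} the map $\TypeToComp_{D_n}$ is onto $\Compact(D_n)$, so there is some $`d\ele \Lang_{D_n}$ with $\TypeToComp_{D_n}(`d) = x$. Since $\Rank(`d)\le n$, Lemma~\ref{lem:rank} gives $\TypeToComp_D(`d) = \TypeToComp_{D_{\Rank(`d)}}(`d) = \TypeToComp_{D_n}(`d) = x$, as required.

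For the order-reversing property, suppose $`s,`t\ele \Lang_A$ with $`s\seqA `t$, and let $n = \max\{\Rank(`s),\Rank(`t)\}$, so that both types lie in $\Lang_{A_n}$. Since $\seq_{A_n}$ is the restriction of $\seqA$, we have $`s\seq_{A_n} `t$, whence Lemma~\ref{lem:Theta_nSurRev} yields $\TypeToComp_{A_n}(`t) \po \TypeToComp_{A_n}(`s)$. Applying Lemma~\ref{lem:rank} on each side, $\TypeToComp_A(`t) = \TypeToComp_{A_n}(`t) \po \TypeToComp_{A_n}(`s) = \TypeToComp_A(`s)$, which gives the hypothesis of Lemma~\ref{lem:compactRep}.

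Nothing here looks like a genuine obstacle; the only point requiring care is that the images $\TypeToComp_{A_n}(`s)$ in $\Compact(A_n)$ are silently identified with their images under $e_{n,\infty}$ in $\Compact(A)$, but this identification is precisely what Equation~\ref{eq:compactUnion} and Lemma~\ref{lem:rank} legitimate, and the mutual induction between $D$ and $C$ has already been handled once and for all in the proof of Lemma~\ref{lem:Theta_nSurRev}.
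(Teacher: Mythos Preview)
Your proof is correct and follows essentially the same approach as the paper: both reduce surjectivity to the stratified Lemma~\ref{lem:Theta_nSurRev} via Equation~\ref{eq:compactUnion} and Lemma~\ref{lem:rank}, and both handle order-reversal by taking $n=\max\{\Rank(`s),\Rank(`t)\}$, passing to $\seq_{A_n}$, and invoking Lemma~\ref{lem:Theta_nSurRev} together with Lemma~\ref{lem:rank}. Your explicit remark about the silent identification via $e_{n,\infty}$ is a welcome clarification that the paper leaves implicit.
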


 \begin{Proof}
First observe that if $n \geq \Rank(`d)$ then $\TypeToComp_D(`d) = \TypeToComp_{D_{\Rank(`d)}}(`d) = \TypeToComp_{D_n}(`d)$ 
by Lem.\skp\ref{lem:rank}, and similarly for $\TypeToComp_C$.
Now if $d\ele \Compact(D)$ then by Equation (\ref{eq:compactUnion}) we have $\Compact(D) = \bigcup_n \Compact(D_n)$, so that there exists $n$ such that $d \ele \Compact(D_n)$. By Lem.\skp\ref{lem:Theta_nSurRev} $\TypeToComp_{D_n}$ is surjective, hence there exists $`d \ele \Lang_{D_n}$ such that $\TypeToComp_{D_n}(`d) = d$. Then $\Rank(`d) \seq n$ and
$\TypeToComp_D(`d) = \TypeToComp_{D_n}(`d)= d$, by the above remark. Hence $\TypeToComp_D$ is surjective.

On the other hand, if $`d_1 \seqD `d_2$ then $`d_1 \seq_{D_n} `d_2$ for any $n \geq \max\Set{\Rank(`d_1),\Rank(`d_2)}$; by 
Lem.\skp\ref{lem:Theta_nSurRev} and the above remark we conclude that
 \[ \begin{array}{rcl@{\quad}lccc}
\TypeToComp_D(`d_1) &=& \TypeToComp_{D_n}(`d_1) &\sqsupseteq& \TypeToComp_{D_n}(`d_2) &=& \TypeToComp_D(`d_1),
 \end{array} \]
which establishes that $\TypeToComp_D$ is order reversing. 
The proof concerning $\TypeToComp_C$ is similar. ~% \qed
 \end{Proof}

 \begin{thm} \label{thm:iso-theorem}
For $A = R,D,C$, the filter domain $\Filt_A$ is isomorphic to $A$.
 \end{thm}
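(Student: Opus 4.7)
The plan is to combine the material developed in the subsection into a direct application of Prop.\skp\ref{prop:filter-iso}, once for each of $R$, $D$, $C$. For $A = R$ there is nothing further to do: Cor.\skp\ref{cor:fromDomToFilter} already gives $\Filt_R \Isom R$, obtained by feeding the theory $\TypeTheor_R$ and the map $\TypeToComp_R$ of Def.\skp\ref{def:typeTHeorFromDomain} into Prop.\skp\ref{prop:filter-iso}, the surjectivity and order-reversing property of $\TypeToComp_R$ having been checked via Lem.\skp\ref{lem:KRtheory}.

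For $A = D$ and $A = C$, the hypotheses of Prop.\skp\ref{prop:filter-iso} are precisely what Lem.\skp\ref{lem:Theta_AsurjOderRev} has just verified: the mappings $\TypeToComp_D \colon \Lang_D \To \Compact(D)$ and $\TypeToComp_C \colon \Lang_C \To \Compact(C)$ are surjective and order-reversing with respect to $\seqD$ (resp.\ $\seqC$) and $\po^{\SuperOp}$. Applying Prop.\skp\ref{prop:filter-iso} twice yields the required isomorphisms
\[
\Filt_D \Isom \Filter(\CompOp(D)) \Isom D,\quad
\Filt_C \Isom \Filter(\CompOp(C)) \Isom C,
\]
each of them explicitly given by $f \mapsto \bigsqcup\Set{\TypeToComp_A(`s) \mid `s \ele f}$.

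The upshot is that the theorem is a packaging result: all the substantive work lies upstream. The real content was the mutually recursive construction of $\Lang_D, \Lang_C$ and their pre-orders $\seqD,\seqC$ so as to reflect the inverse-limit structure of $D$ and $C$ (Def.\skp\ref{def:typeTheories}), together with the stratified maps $\TypeToComp_{D_n}, \TypeToComp_{C_n}$ that, by Prop.\skp\ref{prop:compactXinfty} and Equation (\ref{eq:compactUnion}), describe $\Compact(D)$ and $\Compact(C)$ level by level. The one coherence worry — that $\TypeToComp_D$ and $\TypeToComp_C$ are well-defined independently of the layer in which a type first appears — is handled by Lem.\skp\ref{lem:rank}, and the compatibility of the induced ordering with $\seqD,\seqC$ is Lem.\skp\ref{lem:Theta_AsurjOderRev} (whose key step is the non-trivial equivalence from Lem.\skp\ref{lem:SubtypeProp}\,(\ref{lem:SubtypeProp-3})). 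With those pieces already in place, no further obstacle arises, and the proof reduces to citing the three instances of Prop.\skp\ref{prop:filter-iso}.
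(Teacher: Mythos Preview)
Your proposal is correct and follows exactly the same approach as the paper: cite Cor.\skp\ref{cor:fromDomToFilter} for $R$, and for $D,C$ invoke Lem.\skp\ref{lem:Theta_AsurjOderRev} to meet the hypotheses of Lem.\skp\ref{lem:compactRep} and then apply Prop.\skp\ref{prop:filter-iso}. The additional commentary you give on the upstream lemmas is helpful exposition but not part of the proof itself.
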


 \begin{Proof}
That $\Filt_R \Isom R$ is stated in Cor.\skp\ref{cor:fromDomToFilter}. By Lem.\skp\ref{lem:Theta_AsurjOderRev}
 $\TypeToComp_D$ and $\TypeToComp_C$ satisfy Lem.\skp\ref{lem:compactRep}, hence we conclude by Prop.\skp\ref{prop:filter-iso}. ~% \qed
 \end{Proof}

Thm.\skp\ref{thm:iso-theorem} implies that $(\Filt_R,\Filt_D,\Filt_C)$ is a $`l`m$-model. However, it is a rather implicit description of the model on which we base the construction of the intersection type assignment system in the next section. 
To get a better picture relating term and type interpretation, below we will show how functional application and the operation of adding an element of $\Filt_D$ in front of a continuation in $\Filt_C$ are defined in this model; this provides us with a more explicit description of the isomorphisms relating $\Filt_D$ and $\Filt_C$.

In the following, we let $d$ and $k$ range over filters in $\Filt_D$ and $\Filt_C$, respectively; notice that above they were used for elements of $C$ and $D$. 
Since no confusion is possible, and since a clear link exists between these concepts, we permit ourselves a little overloading in notation.

 \begin{defi}
For $d \ele \Filt_D$ and $k \ele \Filt_C$ we define:
 \[ \begin{array}{rcl}
 d`.k &\ByDef& \filt_D\Set{`r \ele \Lang_R \mid \Exists `k\arrow `r \ele d \Pred[ `k \ele k ] } 
 	\\ % [1mm]
 d\cons k &\ByDef& \filt_C\Set{\Inter_{i \ele I}`d_i\prod `k_i \ele \Lang_C \mid \Forall i \ele I \Pred[ `d_i \ele d \And `k_i \ele k ] }
 \end{array} \]
 \end{defi}

The upward closure $ \filt_D$ in the definition of $d`.k$ is redundant, since we can show that the set $\Set{\Inter_{i \ele I}`d_i\prod `k_i \ele \Lang_C \mid \Forall i \ele I \Pred[ `d_i \ele d \And `k_i \ele k ] }$ is a filter.
We have added $ \filt_D$ to simplify proofs; in fact any set of types $\filt A$ is clearly closed under $\sim$.
A similar remark holds for $\filt_C$ in the definition of $d\cons k$, where we have to include $`w$. Alternatively one could stipulate the usual convention that $\Inter_{i \ele I}`d_i\prod `k_i$ is syntactically the same as $`w$ when $I=\emptyset$.

 \begin{lem} \label{lem:appConsCont}
$d`.k \ele \Filt_R$ and $d\cons k \ele \Filt_C$, for any $d \ele \Filt_D$ and $k \ele \Filt_C$. 
Moreover, the mappings
$\_ `.\_$ and $\_ \cons \_$ are continuous in both their arguments.
 \end{lem}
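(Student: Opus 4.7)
The plan is to verify the lemma in two stages: first, that $d`.k$ and $d\cons k$ are filters in $\Filt_R$ and $\Filt_C$ respectively; second, that the two operations are Scott-continuous in each argument. Since $\Filt_R$ and $\Filt_C$ are $`w$-algebraic lattices by Thm.\skp\ref{thm:iso-theorem}, continuity will reduce to monotonicity plus preservation of directed suprema, and in filter lattices the directed sups coincide with unions (as discussed after Fact \ref{filter lemma}).

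For $d`.k$, I would show directly that the underlying set $S_R = \Set{`r \ele \Lang_R \mid \Exists `k\arrow`r \ele d \Pred[ `k \ele k ] }$ is already a filter, confirming that the explicit upward-closure is redundant (as the authors note). Non-emptiness: $`w\arrow`w \simD `w \ele d$ and $`w \ele k$ force $`w \ele S_R$. Upward closure under $\seq_R$ follows from covariance of the arrow in its right argument. The key step is closure under $\inter$: from witnesses $`k_i\arrow`r_i \ele d$ with $`k_i \ele k$ for $i=1,2$, set $`k := `k_1\inter`k_2 \ele k$; contravariance in the domain gives $`k\arrow`r_i \ele d$, and then the distributivity axiom $(`k\arrow`r_1)\inter(`k\arrow`r_2) \seqD `k\arrow(`r_1\inter`r_2)$ from Def.\skp\ref{def:typeTheories} yields $`r_1\inter`r_2 \ele S_R$. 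For $d\cons k$, the explicit $\filt_C$ takes care of upward closure automatically, so it suffices to verify that the underlying set $X = \Set{\Inter_{i \ele I}`d_i\prod`k_i \mid \Forall i \ele I \Pred[ `d_i \ele d \And `k_i \ele k ]}$ is non-empty (take $I=\emptyset$, producing $`w \ele X$) and closed under $\inter$ (concatenate the finite index families); a standard filter-generation argument then makes $\filt_C X$ a filter.

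For continuity, monotonicity in both arguments is immediate from the definitions, since enlarging $d$ or $k$ enlarges the set of witnesses. Preservation of directed suprema in the first argument of $`.$ goes as follows: let $\Set{d_i}_{i \ele I}$ be directed in $\Filt_D$, so that $\bigsqcup_i d_i = \bigcup_i d_i$; if $`r \ele (\bigcup_i d_i)`.k$, a witness $`k\arrow`r \ele \bigcup_i d_i$ must lie in some particular $d_j$, giving $`r \ele d_j`.k \subseteq \bigsqcup_i (d_i`.k)$, and the reverse inclusion is monotonicity. The second argument of $`.$ and both arguments of $\cons$ are handled analogously; the $\cons$ case additionally uses finiteness of the index family $L$ together with directness to place all relevant components $`d_l$ (or $`k_l$) into a single filter of the directed family.

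The main subtlety is the closure under $\inter$ for $d`.k$, where the distributivity of $\arrow$ over $\inter$ in the codomain (together with contravariance in the domain) is genuinely needed; the remaining parts are essentially routine manipulations with filters, witness extraction, and directness.
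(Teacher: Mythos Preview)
Your proposal is correct and essentially reconstructs the argument that the paper only sketches: the paper dispatches $d`.k$ by citing the standard \EATS\ argument (Amadio--Curien \S3.3), which is exactly your closure-under-$\inter$ step via contravariance plus the distributivity axiom, and handles $d\cons k$ by the one-line unfolding $d\cons k = \bigcup_{`d\ele d,\,`k\ele k}\filt_D`d \cons \filt_C`k$, which is the principal-filter form of your witness-extraction continuity argument. Your version is more explicit but follows the same route.
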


 \begin{Proof} The proof that $d`.k$ is well defined and continuous is essentially the same as that with {\EATS} (see for example \cite{Amadio-Curien'98} Sect.~3.3).
The set $d\cons k$ is a filter by definition. By definition unfolding we have that 
 \[ \begin{array}{rcl@{\quad}lc}
d\cons k 
	&=& 
(\bigcup_{`d \ele d} \filt_D `d) \cons (\bigcup_{`k \ele k } \filt_C `k) 
	&=& 
\bigcup_{`d \ele d,`k \ele k } \filt_D `d \cons \filt_C `k,
 \end{array} \]
hence $\_ \cons \_$ is continuous.
% \qed
 \end{Proof}

% We have reported in Sect.\skp\ref{sub:domainTheoretic} the definition of step functions.

In the particular case of $[\Filt_C\To\Filt_R]$, step functions (see Sect.\skp\ref{sub:domainTheoretic}) take the form $\StepFun{\filt_C `k }{ \filt_R \psk `r}$. 
Indeed for $k \ele \Filt_C$ we have that $\filt_C `k \subseteq k$ if and only if
$`k \ele k$, so that
we have:
 \[ \begin{array}{lll}
 \StepFun{\filt_C `k }{ \filt_R \psk `r} (k) & = &
 \left\{
 \begin{array}{ll}
 \filt_R\psk `r & (\textrm{if }`k \ele k ) \\
 \filt_R\psk `w & (\textrm{ otherwise})
 \end{array} \right. \\ [4mm]
& = &
 \filt_D(`k\arrow `r)`.k
 \end{array} \]

Thus arrow types represent step functions. Similarly, the product of domains $X\prod Y$ ordered component-wise is a domain such that $\Compact(X\prod Y) = \Compact(X)\prod \Compact(Y)$. 
In case of $\Filt_D \times \Filt_C$ compact points are of the shape $\Pair{\filt_D`d }{ \filt_C`k }$, which corresponds to 
the filter $\filt_C `d\prod `k \ele \Filt_C$.
This justifies the following definition:

 \begin{defi} \label{def:iso-maps}
We define the following maps:
 \[ \begin{array}{r@{~}l@{\qquad}lcl}
F:& \Filt_D \To [ \Filt_C\To\Filt_R] &
F \, d \, k &=& d `. k 
	\\
G:&[ \Filt_C\To\Filt_R] \To \Filt_D &
G \, f &=& \filt_D\,\Set{\Inter_{i \ele I} `k_i\arrow `r_i \ele \Lang_D \mid \Forall i \ele I \Pred[ `r_i \ele f(\filt `k_i) ] } 
	\\
H:& \Filt_C \To (\Filt_D \times \Filt_C) &
H \, k &=& \Pair< \Set{`d \ele \Lang_D \mid `d\prod `k \ele k } , \Set{`k \ele \Lang_C \mid `d\prod `k \ele k } >
	\\
K:&(\Filt_D \times \Filt_C) \To \Filt_C &
K \Pair<d,k> &=& d \cons k 
 \end{array} \]
 \end{defi}

 \begin{rem} \label{rem:GK}
As expected form the claim that step functions in $[ \Filt_C\To\Filt_R]$ are represented by arrow types in $\Lang_D$, for any $`k\arrow `r \ele \Lang_D$ we have $G\StepFun{\filt_C`k}{\filt_R`r} = \filt_D(`k\arrow `r )$. 
Indeed, $`k\arrow `r \seqD `k'\arrow `r'$ if and only if $`k'\seq_C`k$ and $`r\seq_R`r'$, \ie~$\filt_C`k\subseteq\filt_C`k'$ and $\filt_R`r'\subseteq\filt_R`r$, %namely 
if and only if $`r' \ele \filt_R`r = \StepFun{\filt_C`k}{\filt_R`r}(\filt_C`k')$.
Similarly, the type $`d\prod`k\ele \Lang_C$ represents pairs in $\Filt_D \times \Filt_C$ via $K$, \ie~$K\Pair{\filt_D`d}{\filt_C`k} = \filt_D`d \cons \filt_C`k = \filt_C(`d\prod`k)$.
 \end{rem}

When no ambiguity is possible, we will write $\filt `r$ for $\filt_R `r$, and similarly for $\filt_D$ and $\filt_C$.

 \begin{lem} \label{lem:filterMappingWellDef}
The functions $F$, $G$ and $H$, $K$ are well defined and monotonic with respect to subset inclusion.
 \end{lem}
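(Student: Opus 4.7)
The plan is to treat the four mappings in turn, since each case reduces to a straightforward unfolding of the relevant definitions, and to rely heavily on Lem.\skp\ref{lem:appConsCont} for the two mappings $F$ and $K$ involving the operations $\_\,`.\,\_$ and $\_\cons\_$.

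For $F$, well-definedness amounts to two checks: first, that $F\,d$ is a function from $\Filt_C$ to $\Filt_R$, which is exactly Lem.\skp\ref{lem:appConsCont}; second, that $F\,d$ is Scott-continuous, which follows from the continuity of $\_\,`.\,\_$ in its second argument (also Lem.\skp\ref{lem:appConsCont}). For monotonicity in $d$, I would observe that if $d \subseteq d'$ and $`r \ele (F\,d)(k) = d\,`.\,k$, then there is some $`k\arrow`r \ele d$ with $`k \ele k$; but then $`k\arrow`r \ele d'$ as well, so $`r \ele (F\,d')(k)$. Monotonicity of $K$ is similar: if $d_1 \subseteq d_2$ and $k_1 \subseteq k_2$, any generator $\Inter_{i \ele I}`d_i\prod`k_i$ of $K\,\Pair<d_1,k_1>$ is a generator of $K\,\Pair<d_2,k_2>$, so taking the principal-filter closure preserves inclusion; well-definedness is again Lem.\skp\ref{lem:appConsCont}.

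For $G$, the set
$S_f = \Set{\Inter_{i \ele I} `k_i\arrow `r_i \ele \Lang_D \mid \Forall i \ele I \Pred[ `r_i \ele f(\filt `k_i) ] }$
is closed under binary intersection (merge index sets) and contains $`w$ (take $I = \emptyset$), so $G\,f = \filt_D S_f$ is trivially a filter in $\Lang_D$. For monotonicity, if $f \seq f'$ pointwise then $f(\filt\psk`k_i) \subseteq f'(\filt\psk`k_i)$ for every $`k_i$, hence $S_f \subseteq S_{f'}$ and so $G\,f \subseteq G\,f'$.

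The main work is therefore $H$: I have to verify that each of its two components is a filter. For the first component $d' = \Set{`d \ele \Lang_D \mid \exists\,`k'\ele\Lang_C\Pred[ `d\prod`k' \ele k ]}$, I would check the three filter axioms. Membership of $`w$ follows from $`w \ele k$ combined with $`w \simC `w\prod`w$ and upward closure in $k$. Closure under intersection uses that if $`d_1\prod`k_1, `d_2\prod`k_2 \ele k$ then $(`d_1\prod`k_1)\inter(`d_2\prod`k_2)\ele k$, and by Def.\skp\ref{def:typeTheories} this is $\seqC (`d_1\inter`d_2)\prod(`k_1\inter`k_2)$, whose upward closure therefore gives $`d_1\inter`d_2 \ele d'$. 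Upward closure in $\seqD$ uses the monotonicity rule for $\prod$ against a fixed continuation witness $`k'$. The second component is handled symmetrically. Monotonicity of $H$ then drops out: if $k \subseteq k'$ then every witness $`d\prod`k'' \ele k$ lies in $k'$ as well, so both projections are preserved. The main obstacle, such as it is, is just keeping the variable conventions straight in the definition of $H$ and making systematic use of the product-type pre-order rules of Def.\skp\ref{def:typeTheories}.
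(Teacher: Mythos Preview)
Your proposal is correct and follows essentially the same approach as the paper: both invoke Lem.\skp\ref{lem:appConsCont} for $F$ and $K$, and both verify the filter axioms for the components of $H$ via the product-type pre-order rules in the same way. The only cosmetic difference is that for the monotonicity of $G$ the paper reformulates membership in $G\,f$ via the step-function bound $\bigsqcup_{i\ele I}\StepFun{\filt`k_i}{\filt`r_i}\po f$, whereas you argue the inclusion $S_f\subseteq S_{f'}$ directly from pointwise inclusion of $f$ in $f'$; both routes are immediate.
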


 \begin{Proof} By Lem.\skp\ref{lem:appConsCont}, $F$ and $K$ are well defined and continuous, hence monotonic. 

For all $f \ele [ \Filt_C\To\Filt_R]$, by definition the set $G\,f$ is a filter over $(\Lang_D, \seqD)$; we check that $G$ is monotonic.
Observe that $\Inter_{i \ele I} `k_i\arrow `r_i \ele G\,f$ if and only if $\join_{i \ele I} \StepFun{\filt `k_i }{ \filt `r_i} \po f$; on the other hand,
if $f \po g$ then $\join_{i \ele I} \StepFun{\filt `k_i }{ \filt `r_i} \po f$ implies $\join_{i \ele I} \StepFun{\filt `k_i }{ \filt `r_i} \po g$, so $\Inter_{i \ele I} `k_i\arrow `r_i \ele G\,f$ implies $\Inter_{i \ele I} `k_i\arrow `r_i \ele G\,g$.

The function $H$ is evidently monotonic with respect to $\subseteq$.
We check that it is well defined, \ie~that both 
 \[ \begin{array}{rcll}
d' &=& \Set{`d \ele \Lang_D \mid `d\prod `k \ele k } & \textrm{and} \\
k' &=& \Set{`k \ele \Lang_C \mid `d\prod `k \ele k }
 \end{array} \] 
are filters whenever $k$ is one. 
Let $`d_1$, $`d_2 \ele d'$, then there exist $`k_1$, $`k_2$ such that $`d_1\prod `k_1$, $`d_2\prod `k_2 \ele k$ (and hence $`k_1$, $`k_2 \ele k'$).
Since $k$ is a filter, we have $`d_1\prod `k_1\inter `d_2\prod `k_2 \ele k$; also, $`d_1\prod `k_1\inter `d_2\prod `k_2 \simC (`d_1\inter`d_2)\prod (`k_1\inter`k_2)$ implies $(`d_1\inter`d_2)\prod (`k_1\inter`k_2) \ele k$, as $k$, being a filter, is closed under meets and $\simC$. 
We conclude that $`d_1\inter`d_2 \ele d'$; similarly, we can reason that $`k_1\inter`k_2 \ele k'$. 

The same reasoning shows that both $d'$ and $k'$ are upward closed sets with respect to $\seqD$ and $\seqC$, respectively.
% \qed
 \end{Proof}

We can show that the following isomorphisms exist: 

 \begin{thm} \label{iso-theorem}
$\Filt_D \Isom [\Filt_C\To\Filt_R]$ via $F$ with inverse $G$, and $\Filt_C \Isom \Filt_D \times \Filt_C$ via $H$ with inverse $K$.
 \end{thm}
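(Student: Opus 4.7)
The plan is as follows. Since $F, G, H, K$ have already been shown well-defined and monotonic in Lem.\skp\ref{lem:filterMappingWellDef} (with $F$ and $K$ even continuous by Lem.\skp\ref{lem:appConsCont}), what remains is to verify that $F,G$ and $K,H$ form mutually inverse pairs. I would establish the four equations $G \circ F = \textsf{id}_{\Filt_D}$, $F \circ G = \textsf{id}_{[\Filt_C\To\Filt_R]}$, $K \circ H = \textsf{id}_{\Filt_C}$ and $H \circ K = \textsf{id}_{\Filt_D \times \Filt_C}$ by direct syntactic computation on filter membership, using Lem.\skp\ref{lem:SubtypeProp} to reduce arbitrary types to canonical shape. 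As a coherence check, Thm.\skp\ref{thm:iso-theorem} together with the cartesian closed structure of $\ALG$ and the continuation domain equations already guarantees that \emph{some} such isomorphisms exist, so the content here is that the explicit maps of Def.\skp\ref{def:iso-maps} do the job.

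For the first isomorphism the key reduction is Lem.\skp\ref{lem:SubtypeProp}\,(\ref{lem:SubtypeProp-4}): every $`d \ele \Lang_D$ is equivalent to a finite intersection $\Inter_i(`k_i\arrow `r_i)$, so membership in a filter over $\Lang_D$ is controlled by which arrow types it contains. To prove $G \circ F = \textsf{id}$, I would show $`k\arrow `r \ele G(F(d))$ iff $`k\arrow `r \ele d$ by unfolding: the former amounts to $`r \ele F(d)(\filt `k) = d `. \filt `k$, and using the definition of $`.$ together with the subtyping rule for $\arrow$ (contravariant in the domain, covariant in the codomain) and the filter closure of $d$, this reduces to $`k\arrow `r \ele d$; the case of a general $`d$ then follows from filter closure under $\inter$ and Lem.\skp\ref{lem:SubtypeProp}\,(\ref{lem:SubtypeProp-4}). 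For $F \circ G = \textsf{id}_{[\Filt_C\To\Filt_R]}$, I would evaluate at any $k \ele \Filt_C$: since $k = \bigsqcup_{`k \ele k} \filt `k$ is a directed join and $f$ is continuous, $f(k) = \bigcup_{`k \ele k} f(\filt `k)$, and unpacking $F(G(f))(k)$ yields exactly this union.

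For the second isomorphism, to show $K \circ H = \textsf{id}_{\Filt_C}$, I would take $`k' \ele k$ and apply Lem.\skp\ref{lem:SubtypeProp}\,(\ref{lem:SubtypeProp-1}) to write $`k' \simC `d_1 \prod `k_1$ with $`k_1 = `d_2 \prod \cdots \prod `d_n \prod `w$; by definition of $H$ the head $`d_1$ belongs to the first coordinate of $H(k)$ and $`k_1$ to the second, whence $`d_1 \prod `k_1 \ele K(H(k))$, and closure under $\simC$ delivers $`k' \ele K(H(k))$; the converse inclusion uses filter closure of $k$ together with the $\prod$-splitting axiom of Def.\skp\ref{def:typeTheories}. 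For $H \circ K = \textsf{id}_{\Filt_D \times \Filt_C}$, I would show $`d \ele (H(K\Pair{d}{k}))_1$ iff $`d \ele d$: the ``if'' direction uses the witness $`d \prod `w$, which lies in $d \cons k$ because $`w$ belongs to every filter; for ``only if'', any membership $`d \prod `k' \ele d \cons k$ must arise from an intersection $\Inter_i (`d_i \prod `k_i)$ with $`d_i \ele d$ and $`k_i \ele k$ that is $\seqC$-below $`d \prod `k'$, and collapsing the intersection via $(`d_1\prod `k_1) \inter (`d_2\prod `k_2) \simC (`d_1\inter `d_2) \prod (`k_1 \inter `k_2)$ combined with Lem.\skp\ref{lem:SubtypeProp}\,(\ref{lem:SubtypeProp-2}) forces some $`d' \seqD `d$ with $`d' \ele d$, hence $`d \ele d$; the second coordinate is symmetric.

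The main delicacy is the careful handling of the $\prod$-splitting axiom when reasoning about intersections of product types: the reformulation as a single product holds only up to $\simC$, and throughout one must keep track that filters are closed under $\simC$ and under the component-wise product subtyping of Def.\skp\ref{def:typeTheories}. A secondary, routine matter is the bookkeeping of the explicit principal-filter closures in the definitions of $d `. k$ and $d \cons k$; the remarks following Lem.\skp\ref{lem:appConsCont} show these closures are essentially redundant and so do not materially affect the calculations above.
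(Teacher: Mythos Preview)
Your proposal is correct and follows essentially the same route as the paper: reduce the isomorphism claims to verifying the four inverse identities $F\circ G=\textsf{id}$, $G\circ F=\textsf{id}$, $H\circ K=\textsf{id}$, $K\circ H=\textsf{id}$ by direct unfolding on filter membership, invoking continuity of $f$ for $F\circ G$ and the canonical shapes from Lem.\skp\ref{lem:SubtypeProp} for the product side. The paper's write-up is somewhat terser (it computes each composite as a set expression and simplifies), but the underlying argument is the same as yours.
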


 \begin{proof} Since any monotonic function of posets that is invertible is an isomorphism,
by Lem.\skp\ref{lem:filterMappingWellDef} it suffices to show that $G = F^{-1}$ and $K = H^{-1}$.

 \begin{enumerate} \itemsep 2pt

 \item
 $ \begin{array}[t]{@{}lll@{\kern-2cm}l}
(F \circ G)\,f\,k 
	&=& 
F (\filt\,\Set{\Inter_{i \ele I} `k_i\arrow `r_i \ele \Lang_D \mid \Forall i \ele I \Pred[ `r_i \ele f(\filt `k_i) }) \, k ] \\
	&=& 
\filt\,\Set{\Inter_{i \ele I} `k_i\arrow `r_i \ele \Lang_D \mid \Forall i \ele I \Pred [ `r_i \ele f(\filt `k_i) } `.k ] \\ 
	&=& 
\filt\Set{`r \mid \Exists `k \ele k \Pred [ `r \ele f(\filt `k)} = \bigcup_{`k \ele k } f(\filt`k) ] \\ 
	&=& 
\join_{\filt`k \subseteq k } f(\filt`k) & (\textrm{since }\Set{\filt`k \mid `k \ele k } \textrm{ is directed}) \\ 
	&=& 
f(k) & (\textrm{by continuity of }f)
 \end{array} $

\noindent
hence $(F \circ G)\,f = f$.

 \item
 $ \begin{array}[t]{@{}lllll}
(G \circ F) \, d 
	& = & 
G(\Semlambda k \ele \Filt_C \, . \, d`.k) \\ 
	& = & 
\filt \Set{\Inter_{i \ele I} `k_i\arrow `r_i \mid \Forall i \ele I \Pred [ `r_i \ele d `.\filt `k_i ] } \\ 
	& = & 
\filt \Set{\Inter_{i \ele I} `k_i\arrow `r_i \mid \Forall i \ele I ~ \Exists `k'_i \Pred [`k_i \seqC `k'_i \And `k'_i\arrow `r_i \ele d ] } % \\ 
	& = & 
d
 \end{array} $

\noindent
where $\Semlambda$ represents semantic abstraction.
In the last equation, the inclusion $\supseteq$ is obvious, while the inclusion $\subseteq$ follows by the fact that if $`k_i \seqC `k'_i$ then $`k'_i\arrow `r_i \seqD `k_i\arrow `r_i$, hence $`k'_i\arrow `r_i \ele d$ implies $`k_i\arrow `r_i \ele d$ for all $i \ele I$, which in turn implies that $\Inter_{i \ele I} `k_i\arrow `r_i \ele d$.

 \item
 $ \begin{array}[t]{@{}lllll}
	(H \circ K) \Pair{d }{ k } & = & H (d\cons k) \\ 
	& = & \Pair< \Set{`d \ele \Lang_D \mid `d\prod `k \ele d\cons k } , \Set{`k \ele \Lang_C \mid `d\prod `k \ele d\cons k } > % \\ 
	& = & \Pair{d }{ k }
 \end{array} $

\noindent
by observing that $`d\prod `k \ele d\cons k$ if and only if $`d \ele d$ and $`k \ele k$, and that if $`k' \ele \filt`w \subseteq d\cons k$ then $`k' \simC `w\prod `w$ and obviously $`w \ele d$ and $`w \ele k$.

 \item
 $ \begin{array}[t]{@{}llll}
	(K \circ H)\, k & = &
	\Set{`d \ele \Lang_D \mid `d\prod `k \ele k } \cons \Set{`k \ele \Lang_C \mid `d\prod `k \ele k } \\ 
	& = & \filt \Set{\Inter_{i \ele I}`d_i\prod `k_i \mid \Forall i \ele I \, \Exists `d'_i, `k'_i \Pred [ d_i\prod `k'_i, `d'_i\prod `k_i \ele k ] } \\
	& = & k & (\textrm{since }k \ele \Filt_C )
 \end{array} $
\arrayqed
 \end{enumerate}
 \end{proof}

 \begin{rem} \label{rem:naturalEquated}
As observed in Remark\skp\ref{rem:ExtLamMod}, a $`l`m$-model is an extensional $`l$-model. 
Thm.~3.1 in \cite{Streicher-Reus'98} states that the initial/final solution of the continuation domain equations is isomorphic to the domain $R_ \infty \Isom [R_ \infty\To R_\infty]$, \ie~Scott's $D_\infty$ $`l$-model obtained as inverse limit of a chain where $D_0 = R$.

To see this from the point of view of the intersection type theory, consider the extension $\Lang_{`l} = \dots \mid `d\arrow `d $ of $\Lang_{D }$. Let $ \TypeTheor_{`l}$ be the theory obtained by adding to $ \TypeTheor_D$ the equation $ `d\prod `k\arrow `r = `d\arrow `k\arrow `r $.
Then in the intersection type theory $ \TypeTheor_{`l}$, the following rules are derivable:
 \[ \begin{array}{c@{\dquad}c}
\Inf	{(`d\arrow `d_1) \inter (`d\arrow `d_2) \seq_{`l} `d\arrow(`d_1 \inter`d_2) }
&
\Inf	{`d'_1 \seq_{`l} `d_1 \quad `d_2 \seq_{`l} `d'_2
	}{ `d_1\arrow `d_2 \seq_{`l} `d'_1\arrow `d'_2 }
 \end{array} \]
 
By this, $ \TypeTheor_{`l}$ is a \emph{natural equated} intersection type theory in terms of \cite{Alessi-Severi'08},
and hence $ \Filt^{`l} \Isom [ \Filt^{`l}\To\Filt^{`l}]$ where $ \Filt^{`l}$ is the set of filters generated by the pre-order $\seq_{`l}$ (see \cite{Alessi-Severi'08}, Cor.~28\,(4)). 
 \end{rem}

 \section{An intersection type system} \label{sec:types}

Let ${\ModM} = (R,D,C)$ be a $`l`m$-model, where $D,C$ are initial solutions of the continuation domain equations (we say then that $\ModM$ is \emph{initial}).
In this section, using the fact that ${\ModM}$ is isomorphic to the filter model $\Filt = (\Filt_R,\Filt_D,\Filt_C)$, as established by Thm.\skp\ref{thm:iso-theorem} and \ref{iso-theorem}, we will define a type assignment system such that the statement $M : `d$ (or $\Cmd: `k$) is derivable, under appropriate assumptions about the variables and names in it, if and only if $\SemF { M }{ e } \ele \filt_D `d$ (or $\SemF { \Cmd}{ e } \ele \filt_C `k$) for all environments $e$ respecting those assumptions.

Thereby an interpretation of types can be defined such that $\SemF { `s } = \filt_A\psk `s$ for $A=D,C$. Since filters are upward closed sets of types, we have that $\SemF { T}{ e } \ele \filt_A\psk `s$ if and only if $`s \ele \SemF { T}{ e }$,
and we obtain that the denotation of a term/command is just the set of types that can be inferred for it in the assignment system.

 \subsection{Type assignment} \label{subsec:Type assignment}
We now give some preliminary definitions for our type system.

 \begin{defi}[Bases, Name Contexts, and Judgements] 

 \begin{enumerate}

 \firstitem
A \emph{basis} is a finite mapping from term variables to types in $\TypeTheor_D$, written as a finite set $`G = \Set{x_1{:}`d_1, \ldots,x_n{:}`d_n}$ where the term variables $x_i$ are pairwise distinct.
% A \emph{basis} is a total mapping from term variables to types in $\TypeTheor_D$ where only a finite number of variables has an image different from $`w$; we write a context as a finite set $`G = \Set{x_1{:}`d_1, \ldots,x_n{:}`d_n}$ where the term variables $x_i$ are pairwise distinct and $`d_i \not= `w ~ (i \ele \n)$.

 \item
A \emph{name context} (or \emph{context}) is a finite mapping from names to types in $\TypeTheor_C$, written as a finite set $`D = \Set{`a_1{:}`k_1, \ldots, `a_m{:}`k_m}$ where the continuation variables $`a_i$ are pairwise distinct.
% A \emph{name context} (or \emph{context}) is a total mapping from names to types in $\TypeTheor_C$ where only a finite number of names has an image different from $`w$; we write a context as a finite set $`D = \Set{`a_1{:}`k_1, \ldots, `a_m{:}`k_m}$ where the continuation variables $`a_i$ are pairwise distinct and $`k_i \not= `w ~ (i \ele \m)$.

 \item
We write $`G,x{:}`d$ for the basis $`G \union \Set{x{:}`d}$, and assume that either $x$ does not occur in $`G$ or $x{:}`d \ele `G$, and similarly for $`a{:}`k,`D $.

 \item
We write $`G \except x $ for $`G \except \Set{x{:}`G(x)}$ and $`D \except `a$ for $`D \except \Set{`a{:}`D(`a)}$.

 \item \label{def:ext-leq-i}
Let $`G$ be a basis and $`D $ a name context. We define 
 \[ \begin{array}{rcl}
 \dom(`G) &\ByDef& \Set{x \mid \Exists `d  \Pred[ x{:}`d \ele `G] }
 \\ % [1mm]
 \dom(`D) &\ByDef& \Set{`a \mid \Exists `k  \Pred[ `a{:}`k \ele `D] }
 \end{array} \]
and write $x \notele `G$ ($`a \notele `D $) if $x \notele \dom(`G)$ ($`a \notele \dom(`D)$).

 \item
A \emph{judgement} is an expression of the form $ \derLmu `G |- M : `d | `D $ or $ \derLmu `G |- {\Cmd} : `k | `D $ where $`G$ is a basis and $`D $ is a name context.
$M$ and $\Cmd$ are the \emph{subjects} and $`d$ and $`k$ the \emph{predicates}.

 \end{enumerate}
 \end{defi}
\noindent
We will occasionally allow ourselves some freedom when writing basis and contexts, and also consider $`G,x{:}`w$ a basis and $`a{:}`w,`D $ a context.

Judgements are in appearance very similar to Parigot's (see Sect.\skp\ref{sec:Parigot}), apart from the obvious difference in the language of types; in fact, there exists a relation between Parigot's system and the one presented here, which will be treated in detail in Sect.\skp\ref{sec:Parigot}.
Since bases and contexts are sets, the order in which variable and name assumptions are listed is immaterial.

%Let $`G$ be a basis and $`D $ a context;
We will occasionally treat basis and contexts as total functions by using the following notation:
 \[ \begin{array}{rcl}
 \fun{`G}(x) &=& 
 \begin{cases}{l@{\quad}l}
	`d & (\textrm{if }x{:}`d \ele `G) \\
	`w & (\textrm{otherwise})
 \end{cases}
% \end{array} \] \[ \begin{array}{rcl}
\\ [4mm]
 \fun{`D }(`a) &=& 
 \begin{cases}{l@{\quad}l}
	`k & (\textrm{if }`a{:}`k \ele `D) \\
	`w & (\textrm{otherwise})
 \end{cases}
 \end{array} \]
Notice that then $`G \except x $ corresponds to the function update $`G[x:= `w]$ and $`D \except `a$ to $`D[`a:= `w]$.

 \begin{defi}[Intersection type system for $`l`m $] \label{def:intersTypeAss}
We define intersection type assignment for $`l`m$ through the following sets of inference rules: 
 \begin{description} 

 \item [\emph{Type rules}]
 $ \kern-2cm \begin{array}[t]{rl@{\dquad}rl}
 \multicolumn{4}{c}{
 \begin{array}{rl}
(\Axiom) : & \Inf	{\derLmu `G, x{:}`d |- x : `d | `D }
 \end{array} }
 \\ [4mm]
%(\Abs) : &
%\Inf	[x \notele `G]
%	{\derLmu `G,x{:}`d |- M : `k\arrow `r | `D }
%	{\derLmu `G |- `lx.M : `d\prod `k\arrow `r | `D }
(\Abs) : &
\Inf	[`G(x) = `d]
	{\derLmu `G |- M : `k\arrow `r | `D }
	{\derLmu `G \except x |- `lx.M : `d\prod `k\arrow `r | `D }
&
(\App) : &
\Inf	{\derLmu `G |- M : `d\prod `k\arrow `r | `D
	\quad
	\derLmu `G |- N : `d | `D
	}{ \derLmu `G |- MN : `k\arrow `r | `D } 
 \\ [5mm]
%(\TCmd) : &
%\Inf	%[\fun{`D }(`a)=`k]	
%	{\derLmu `G |- M : `d | `a{:}`k,`D }
%	{\derLmu `G |- [`a]M : `d\prod `k | `a{:}`k,`D }
(\TCmd) : &
\Inf	[\fun{`D }(`a)=`k]	
	{\derLmu `G |- M : `d | `D }
	{\derLmu `G |- [`a]M : `d\prod `k | `D }
&
%(\MAbs) : &
%\Inf	[`a \notele `D]
%	{\derLmu `G |- {\Cmd} : (`k'\arrow `r)\prod `k' | `a{:}`k,`D }
%	{\derLmu `G |- `m`a.{\Cmd} : `k\arrow `r | `D }
(\MAbs) : &
\Inf	[\fun{`D }(`a)=`k]
	{\derLmu `G |- {\Cmd} : (`k'\arrow `r)\prod `k' | `D }
	{\derLmu `G |- `m`a.{\Cmd} : `k\arrow `r | `D \except `a }
\\ ~
 \end{array} $ 

 \item [\emph{Logical rules}]
~
 \[ \begin{array}{rl@{\dquad}rl@{\dquad}rl}
(\inter) : &
\Inf	{\derLmu `G |- T : `s | `D
	\quad
	\derLmu `G |- T : `t | `D
	}{ \derLmu `G |- T : `s\inter`t | `D }
&
(`w) : &
	\Inf	{\derLmu `G |- T : `w | `D }
&
(\seq) : &
\Inf	{\derLmu `G |- T : `s | `D \quad `s \seq `t }
	{\derLmu `G |- T : `t | `D }
 \end{array} \]

 \end{description}

We will write $ \derLmu `G |- T : `s | `D $ if there exists a derivation built using the above rules that has this judgement in the bottom line, and $\Der :: \derLmu `G |- T : `s | `D $ if we want to name that derivation.

 \end{defi}

As mentioned above, we extend Barendregt's convention to judgements $ \derLmu `G |- T : `s | `D $ by seeing the variables that occur in $`G$ and names in $`D $ as binding occurrences over $T$ as well; in particular, we will assume that no variable in $`G$ and no name in $`D $ is bound in $T$.

To understand these rules we can think of types as properties of term denotations in the 
initial model ${\ModM} = (R,D,C)$. In particular, if $`s \ele \Lang_A$ 
then $`s$ denotes a subset $\Sem{ `s }^{\ModM} \subseteq A$, for $A = R,D,C$. 
The judgement $ \derLmu `G |- T : `s | `D $ is then interpreted as the claim
that $\Sem{ T }^{\ModM}{ e } \ele \Sem{ `s }^{\ModM}$ whenever $e\psk x \ele \Sem{`G(x)}$ and $e\psk `a \ele \Sem{`D(`a)}$ for all $x$ and $`a$ (the formal definitions will be given in Sect.\skp\ref{subsec:typeInterpretation}).

The \emph{logical} rules, which are familiar from intersection type systems for the standard $`l$-calculus, just state that types are sets: $`w$ is the largest set which coincides with the domain of interpretation itself, the pre-order is subset inclusion, and $`s\inter`t$ is the set theoretic intersection of $`s$ and $`t$. 
Note that the subject in the conclusion of a logical rule is the same as in the premises. 
Moreover, remark that we use here the term `logical' in the sense of Abramsky's domain logic, not in the sense of (propositional) logic of any kind. 
In particular, intersection is \emph{not} conjunction, both in systems for the $`l$-calculus and in the present one.

The \emph{type rules} are syntax directed% and correspond to term and command constructors
; they have been obtained from the equations in Def.\skp\ref{def:interpretation} by representing the left-hand side of the equation in the conclusion and the right-hand side in the premises of the corresponding rule:

 \begin{description}

 \item [$\LAbs$] 
This rule corresponds to the equation $\Sem[`lx.M]^D{ e }{ \Pair<d,k> } = \Sem[M]^D{ e[x \to d] }{ k }$, where $\Sem[\cdot]^D$ is short for $\Sem[\cdot]_\Der^{\ModM}$. 
It states that $`lx.M$ is a function of continuations $\Pair<d,k>$, whose values are those of $M$ where $x$ is interpreted by $d$, and applied to continuation $k$. 
On the other hand, the arrow types from $\Lang_D$ represent properties of functions:
a property of $`lx.M$ is then a type $`d\prod `k\arrow `r$ (the conclusion of the rule) so that whenever $`d\prod `k$ is a property of $\Pair<d,k>$, \ie~$d \ele \Sem[`d]^{\ModM}$ and $k \ele \Sem[`k]^{\ModM}$, $`r$ is a property of the result. 
But since the result is $\Sem[M]^D{ e[x \to d] }{ k }$, it suffices to prove that $M$ has the property $`k\arrow `r$ whenever $x$ is interpreted by $d$, which is represented by the assumption $x{:}`d$ in the premise of $(\LAbs)$.

 \item [$\App$] 
Dually, this rule comes from the equation $\Sem[MN]^D{ e }{ k } = \Sem[M]^D{ e }{ \Pair<\Sem[N]^D{ e } , k > }$. 
For the application $MN$ to have the property $`k\arrow `r$ (as in the conclusion)
it suffices that if applied to a continuation $k \ele \Sem{ `k }^{\ModM}$ it yields a value with property $`r$. By the equation, such a value is
computed by putting $\Sem{ N }^D{ e }$ before $k$ in the continuation passed to $\Sem{ M }^D{ e }$. 
Therefore, for the conclusion to hold it suffices to prove that $N$ has type $`d$ and $M$ type $`d\prod `k\arrow `r$. % (premises).

 \item [$\TCmd$] 
This rule is based on the equation $\Sem{ [`a]M}^C{ e } = \Pair<\Sem{ M }^D { e } , e\psk `a >$, which states that the meaning of a command $[`a]M$ is a continuation $\Pair<d,k>$ where $d$ is the meaning of $M$ and $k = e \, `a $. 
For $\Pair<d,k>$ to have the property $`d\prod `k$ (as in the conclusion) we have to check that $M$ has the property $`d$ whenever $`a$ denotes the continuation $k$ with property $`k$. 
Since the assumptions about the environment are in the contexts $`D $ in case of names, this is represented by the side condition $\fun{`D }(`a)=`k$ of the rule.

 \item [$\MAbs$] 
This rule is the more involved case, which corresponds to the equation
$\Sem{ `m`a.\Cmd}^D{ e }{ k } = d \, k' $, where $\Pair<d , k'> = \Sem{ \Cmd}^C{ e[`a \to k] }$.
This states that $\Sem{ `m`a.\Cmd}^D{ e }$ is the function that, when applied to a continuation $k$ yields the value of the application of the first component $d$ to the second component $k'$ of a different continuation $\Pair<d , k'>$, which however depends on $k$, because it is computed by $\Cmd$ whenever $`a$ is sent to $k$. Now the result $ d \, k'$
will have the property $`r$ if for some $`k'$ both $k' \ele \Sem{ `k' }^{\ModM}$ and $d \ele \Sem{ `k'\arrow `r}^{\ModM}$.
Therefore, to type $`m`a.\Cmd$ by $`k\arrow `r$ (as in the conclusion) we have to ensure that the continuation represented by $\Cmd$ has the property $(`k'\arrow `r)\prod `k'$, whenever $`a{:}`k$ occurs in the context (as in the premise).

 \end{description}

 \begin{rem}
Note how rules $(\App)$ and $(\LAbs)$ are actually instances of the familiar rules for application and $`l$-abstraction in the simply typed $ `l$-calculus. In fact, $`d\prod `k\arrow `r \ele \Lang_D$ is equivalent to $`d\arrow(`k\arrow `r) \ele \Lang_{`l}$ so that, if we admitted types of $\Lang_{`l}$, the following rules would be admissible:
 \[ \begin{array}{rl@{\dquad}rl}
(\arrE): & %(\App'): &
\Inf	{\derLmu `G |- M : `d\arrow (`k\arrow `r) | `D
	\quad
	\derLmu `G |- N : `d | `D
	}{ \derLmu `G |- MN : `k\arrow `r | `D }
&
(\arrI): & % (\LAbs'): &
\Inf	{\derLmu `G,x{:}`d |- M : `k\arrow `r | `D
	}{ \derLmu `G |- `lx.M : `d\arrow (`k\arrow `r) | `D }
 \end{array} \]
We will use the following variant of rule $(\inter)$:
 \[ \begin{array}{rl}
(\Inter) : &
\Inf	{\derLmu `G |- T : `s_i | `D
	\quad
	(\Forall i \ele I)
	}{ \derLmu `G |- T : \Inter_{i \ele I}`s_i | `D }
 \end{array} \]

 \end{rem}

 \begin{rem} \label{rem:TCmd-rule}
Rule $(\TCmd)$ is equivalent to the following two:
 \[ \begin{array}[t]{rl@{\dquad}rl}
(\TCmd_1 ) : &
\Inf	{\derLmu `G |- M : `d | `D }
	{\derLmu `G |- [`a]M : `d\prod `k | `a{:}`k,`D } &
(\TCmd_2) : &
\Inf	[`D(`a) = `w]
	{\derLmu `G |- M : `d | `D }
	{\derLmu `G |- [`a]M : `d\prod `w | `D }
 \end{array} \]
By definition, the context $`a{:}`k,`D $ in the conclusion of $(\TCmd_1)$ is only legal when either $`a\notele \dom(`D)$ or $`a{:}`k\in`D $.

The need of $(\TCmd_2)$ will become apparent when proving the admissibility of the strengthening rule (Lem.\skp\ref{lem:strengthening}) and the completeness of the type assignment. 
For the moment we observe that with $(\TCmd_1 )$ the conclusion of the shape $ \derLmu `G |- {[`a]}M : {`d\prod `w} | `D $ would be derivable from $ \derLmu `G |- M : `d | `D $ only if
$`a{:}`w\ele `D $; note that $\fun{`D }(`a)=`w$ does not require that $`a\ele \dom(`D)$. 
On the other hand, $(\TCmd_2)$ allows the implicit typing of $`a$ by $`w$ even if $`a\notele \dom(`D)$.
Not having rule $(\TCmd_2)$ (that is a particular case of $(\TCmd)$) would introduce an asymmetry with respect to the typing with $`w$ of the term variable $x$ in a basis $`G$, since we can conclude $ \derLmu `G |- x : `w | `D $ either by rule $(\Ax)$ (in which case $x{:}`w \ele `G$ is required), or by rule $(`w)$, where $x\notele `G$ is allowed.

With the above proviso, in the proofs we shall often consider the rules $(\TCmd_1 )$ and $(\TCmd_2 )$ instances of $(\TCmd)$ without explicit mention.
 \end{rem}

The admissibility of the following rules will be useful:

 \begin{lem}[Admissibility of Weakening and Thinning] \label{lem:weakening}
The following rules are admissible:
 \[ \begin{array}{rl}
(\Weak): &
\Inf	[ `G \subseteq `G' \And `D \subseteq `D' ]
	{ \derLmu `G |- T : `s | `D 
	}{ \derLmu `G' |- T : `s | `D' }
 \\ [5mm]
(\Thin): &
\Inf	[`G' \supseteq \Set {x{:}`d\ele`G \mid x \ele \fv(T) }, ~ `D' \supseteq \Set {`a{:}`k\ele`D \mid `a \ele \fn(T) }]
	{\derLmu `G |- T : `s | `D
	}{\derLmu `G' |- T : `s | `D' }
 \end{array} \] 
 \end{lem}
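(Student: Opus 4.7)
The plan is to prove both rules by a single structural induction on the derivation, after observing that Thinning already subsumes Weakening: if $`G \subseteq `G'$ then trivially $`G' \supseteq \Set{x{:}`d\ele `G \mid x \ele \fv(T)}$ and similarly for contexts, so a proof of Thinning yields Weakening as a special case. I would nonetheless state Weakening first, since its proof is a simplification of the general one and it is often used independently in later sections.

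The logical rules $(`w)$, $(\inter)$, $(\seq)$ are immediate: the subject is identical in premises and conclusion, so $\fv$ and $\fn$ are unchanged and the induction hypothesis applies verbatim. For the axiom $(\Axiom)$ with conclusion $\derLmu `G,x{:}`d |- x : `d | `D$, we have $\fv(x) = \Set{x}$, hence the thinning hypothesis forces $x{:}`d \ele `G'$, and $(\Axiom)$ re-applies. For the syntax-directed rules $(\App)$ and $(\TCmd)$, the free variables and names of the subject are unions of those of the subterms, so the thinning conditions for the conclusion transfer directly to each premise, and the induction hypothesis suffices.

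The binder cases $(\Abs)$ and $(\MAbs)$ are where Barendregt's convention does the work: the bound variable (name) is assumed not to occur in the original or target basis (context). Consider $(\Abs)$, whose conclusion has subject $`lx.M$, basis $`G\except x$, and side condition $`G(x)=`d$; we thin the premise from $`G$ to $`G' \cup \Set{x{:}`d}$. Checking: for $y{:}`d' \ele `G$ with $y \ele \fv(M)$, either $y=x$ and $y{:}`d \ele `G' \cup \Set{x{:}`d}$ by construction, or $y \ne x$ and $y \ele \fv(M)\except\Set{x} = \fv(`lx.M)$, whence $y{:}`d' \ele `G'$ by hypothesis. The induction hypothesis then yields the thinned premise, and applying $(\Abs)$ removes $x$ to give the desired conclusion. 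The case $(\MAbs)$ is strictly parallel with $`D,`D',`a,`k$ replacing $`G,`G',x,`d$.

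The main obstacle is the side condition in $(\TCmd)$: its predicate $`d\prod \fun{`D}(`a)$ genuinely depends on the ambient context. If $`a \ele \dom(`D)$ then $`a \ele \fn([`a]M)$, so the thinning hypothesis forces $`a{:}\fun{`D}(`a) \ele `D'$ and $(\TCmd)$ re-applies with the same type. If $`a \notele \dom(`D)$ then $\fun{`D}(`a) = `w$, but $`a$ may now occur in $`D'$ with some $`k'$ (possibly $`w$, by default). Applying $(\TCmd)$ with $`D'$ (or $(\TCmd_2)$ when $`a\notele \dom(`D')$) yields $[`a]M : `d\prod \fun{`D'}(`a)$, and since $`k'\seq_C`w$ and the product is covariant, $`d\prod `k' \seq_C `d\prod `w$; rule $(\seq)$ then recovers the required type. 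Everything else reduces to bookkeeping against Barendregt's convention.
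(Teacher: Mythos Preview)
Your proposal is correct and is exactly the routine structural induction that the paper has in mind; the paper's own proof consists of the single word ``Easy'', so you have simply made explicit what the authors left implicit. Your observation that Weakening is an instance of Thinning, and your careful treatment of the $(\TCmd)$ side condition via $(\seq)$ when $`a\notin\dom(`D)$, are both sound and in the spirit of the system (cf.\ the paper's Remark on $(\TCmd_1)$ and $(\TCmd_2)$).
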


 \begin{Proof} Easy. 
 % \qed
 \end{Proof}

Notice that, by our interpretation of Barendregt's convention, the variables in $`G'$ and names in $`D'$ are not bound in $T$.

\Comment{%%%%%%%%%%%%%%%%%%%%%%%%%%%%%%%%%%%%%%%%%%%%%%
% SvB this remark is a bit too much perhaps, we can just extend the notion without a detailed motivation 
 \begin{rem}
Notice that weakening is only sound when extending Barendregt's convention to judgements.
If we would not do so, then, for example, we could derive
 \[ \begin{array}{c}
\Inf	[\Weak]
	{\Inf	[\LAbs]
{\Inf	[\Ax]
	{ \derLmu x{:}`k\arrow `r |- x : `k\arrow `r | {} }
}{ \derLmu {} |- `lx.x : (`k\arrow `r)\prod `k\arrow `r | {} }
	}{ \derLmu x{:}`d |- `lx.x : (`k\arrow `r)\prod `k\arrow `r | {} }
 \end{array} \]

Note that we cannot apply rule $(\LAbs)$ to this derivation, since that would create the term $`lx.`lx.x$, which is not a valid term.
We can solve this through $`a$-conversion, replacing $`lx.x$ by $`ly.y$; 
 \[ \begin{array}{c}
\Inf	[\Weak]
	{\Inf	[\LAbs]
{\Inf	[\Ax]
	{ \derLmu y{:}`k\arrow `r |- y : `k\arrow `r | {} }
}{ \derLmu {} |- `ly.y : (`k\arrow `r)\prod `k\arrow `r | {} }
	}{ \derLmu x{:}`d |- `ly.y : (`k\arrow `r)\prod `k\arrow `r | {} }
 \end{array} \]
notice that then the application of rule $(\Weak)$ is valid according to our criteria.
 \end{rem}
}% Comment %%%%%%%%%%%%%%%%%%%%%%%%%%%%%%%%%%%%%%%%%%%%%

In presence of the subtyping $(\seq)$ we can have a further form of weakening, namely by weakening the types in the assumptions. 
We first extend the operator $\inter$ and the pre-orders $\seqD$ and $\seqC$ to bases and contexts.

 \begin{defi} \label{def:ext-leq}

 \begin{enumerate}

 \firstitem \label{def:ext-leq-ii}
For bases $`G_1, `G_2$ we define the basis $`G_1\inter\,`G_2$ by:
 \[ \begin{array}{rcl}
`G_1\inter\,`G_2 &\ByDef& 
	\Set{x{:}`G_1(x)\inter\, `G_2(x) \mid x \ele \dom(`G_1) \cap \dom(`G_2) } \\ %[1mm]
	&& \quad \Union \Set {x{:}`d \ele `G_1 \mid x \notele \Dom(`G_2) } \\ %[1mm]
	&& \quad \Union \Set {x{:}`d \ele `G_2 \mid x \notele \Dom(`G_1) }
 \end{array} \]
For contexts $`D_1,`D _2$, we define the context $`D_1\inter\,`D _2$ similarly.
 
 \item \label{def:ext-leq-iii}
We extend the relations $\seqD$ and $\seqC$ to bases and contexts respectively by:
 \[ \begin{array}[t]{rcl}
`G_1 \seqD `G_2 & \ByDef & 
	\Forall x\ele \TVar \Pred[ `G_1(x) \seqD `G_2(x) ]
	\\ % [1mm]
`D_1 \seqC `D_2& \ByDef & 
	 \Forall `a\ele \CVar \Pred[ `D_1(`a) \seq_C `D_2(`a) ]
 \end{array} \]

 \end{enumerate}
 \end{defi}

Note that, if $`G_1$, $`G_2$ are well-formed bases then so is $`G_1\inter\,`G_2$, and if $`D_1$, $`D _2$ are well-formed contexts, then so is $`D_1\inter\,`D _2$. 
Also, $\dom(`G_1\inter\,`G_2) = \dom(`G_1) \union \dom(`G_2)$ and
$\dom(`D_1\inter\,`D _2) = \dom(`D_1) \union \dom(`D_2)$. 
Therefore $`G_1\inter\,`G_2$ is often called `union of bases' in the literature.

% Through the abbreviations $`G(x)$ and $`D(`a)$ we implicitly see bases and contexts as (total) functions from term variables to $\Lang_D$ and from names to $\Lang_C$ respectively. 
The relations $`G_1 \seqD `G_2$ and $`D_1 \seqC `D_2$ are the pointwise extensions of the relations $\seqD$ and $\seqC$ over types; note that the quantifications are not restricted to the domains of the bases nor of the contexts.

Another immediate consequence of the definition is that $`G_1 \inter\,`G_2 \seqD `G_i$ and $`D_1 \inter\,`D _2 \seqC `D_i$, for $i = 1,2$. 
However if $`G_1 \seqD `G_2$ then $\dom(`G_1)$ and $\dom(`G_2)$ are unrelated in general, since we have for example $\Set{x{:}`w,y{:}`d_1\inter`d_2} \seqD \Set{z{:}`w,y{:}`d_1}$. 
Therefore $`G_1 \seqD `G_2$ does not imply that $`G_1$ and $`G_1 \inter\,`G_2$ are the equal, as one perhaps would expect; this is however without consequence since in this case $`G_1 \seqD `G_1 \inter\,`G_2$, so that using the admissibility of strengthening to be shown below, one can prove that all the typings obtainable by means of either basis, can be obtained by the other one.
A similar remark holds for contexts.

Now we are in place to prove the admissibility of strengthening:

 \begin{lem}[Admissibility of Strengthening] \label{lem:strengthening}
The following rule is admissible:
 \[ \begin{array}{rl}
(\Strength): &
\Inf	[ `G' \seqD `G \And `D' \seqC `D ]
	{ \derLmu `G |- T : `s | `D 
	}{ \derLmu `G' |- T : `s | `D' }
 \end{array} \] 
 \end{lem}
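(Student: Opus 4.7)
The plan is to prove the strengthening rule by structural induction on a derivation $\Der :: \derLmu `G |- T : `s | `D$. The logical rules $(\inter)$, $(`w)$ and $(\seq)$ and the syntax-directed $(\App)$ rule are immediate: for each, one simply applies the induction hypothesis to the premise(s) using the same strengthenings $`G' \seqD `G$ and $`D' \seqC `D$, and reapplies the original rule. The real content lies in the rules whose side conditions or whose conclusions depend on the very contexts we are strengthening.

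For $(\Axiom)$, $T = x$ with $x{:}`d \ele `G$ and $`s = `d$; the hypothesis $`G' \seqD `G$ gives $\fun{`G'}(x) \seqD `d$. Two subcases: if $x \ele \dom(`G')$ with $`G'(x) = `d'$, apply $(\Axiom)$ to derive $\derLmu `G' |- x : `d' | `D'$ and then $(\seq)$ using $`d' \seqD `d$; if $x \notele \dom(`G')$, then $`w \seqD `d$ forces $`d \simD `w$, and a $(`w)$ step followed by $(\seq)$ closes the case (freely using the convention that $x{:}`w$ may be treated as if present in the basis).

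The binding rules $(\Abs)$ and $(\MAbs)$ are handled by a uniform ``re-enrichment'' trick. For $(\Abs)$: if the final step derives $\derLmu `G \except x |- `lx.M : `d\prod `k\arrow `r | `D$ from $\derLmu `G |- M : `k\arrow `r | `D$ with $\fun{`G}(x) = `d$, then given $`G' \seqD `G \except x$ I set $`G'' = (`G' \except x) \union \Set{x{:}`d}$. Our extension of Barendregt's convention to judgements guarantees that $x$ is not already in $\dom(`G')$, so $`G'' \except x = `G'$; a pointwise check gives $`G'' \seqD `G$. The induction hypothesis on the premise then delivers $\derLmu `G'' |- M : `k\arrow `r | `D'$, and an application of $(\Abs)$ with $\fun{`G''}(x) = `d$ produces the desired conclusion. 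The case $(\MAbs)$ is entirely symmetric on the name context, using the bound name $`a$ in place of $x$.

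The remaining case $(\TCmd)$ mixes both subtleties, since the predicate of the conclusion itself refers to the context via $\fun{`D}(`a) = `k$. From the premise the induction hypothesis yields $\derLmu `G' |- M : `d | `D'$. Setting $`k' = \fun{`D'}(`a)$, one has $`k' \seqC `k$ by hypothesis; rule $(\TCmd)$ (using the variant $(\TCmd_2)$ when $`a \notele \dom(`D')$, as discussed in Rem.\skp\ref{rem:TCmd-rule}) then gives $\derLmu `G' |- [`a]M : `d\prod `k' | `D'$. Covariance of $\prod$ in its second argument yields $`d\prod `k' \seqC `d\prod `k$, and one final application of $(\seq)$ recovers the target type. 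I expect this last case, together with the need to select the right form of the $(\TCmd)$ rule depending on whether $`a$ still belongs to the strengthened context, to be the main place requiring care; the other cases reduce to routine pointwise manipulations of the two pre-orders.
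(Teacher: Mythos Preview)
Your proposal is correct and follows essentially the same approach as the paper's proof, which is a straightforward induction on the structure of derivations. In particular, your handling of the $(\TCmd)$ case---deriving $`d\prod `k'$ with $`k' = \fun{`D'}(`a)$ via $(\TCmd)$ (falling back on $(\TCmd_2)$ when $`a\notele\dom(`D')$) and then applying $(\seq)$ using covariance of $\prod$---is exactly what the paper does, and your treatment of $(\Abs)$ via the re-enriched basis $`G''$ is the same idea as the paper's, which simply writes $`G',x{:}`d$ after invoking Barendregt's convention; you are also slightly more careful than the paper in the $(\Ax)$ case by explicitly separating out the subcase $x\notele\dom(`G')$.
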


 \begin{proof}
By straightforward induction over the structure of derivations.\qed
%\Comment
{% SvB: I feel this proof does not contribute much, but we can add it, of course.
%%%%%%%%%%%%%%%%%%%%%%%%%%%%%%%%%%%%%%%%%
The proof is by straightforward induction over the structure of derivations; we only deal with some interesting cases. 

 \begin{description}
 \item [$ \Ax $] 
Then $T\equiv x$ and $`s = `G(x)$. %, for some $x \ele \TVar$. 
We can construct: %Then we replace the rule by the inference:
 \[ \begin{array}{c}
\Inf	[ `G'(x) \seqD `G(x)]
	{\Inf	[\Ax]{}
{ \derLmu `G' |- x : `G'(x) | `D' } 
	}{ \derLmu `G' |- x : `G(x) | `D' }
 \end{array} \]
where $`G'(x) \seqD `G(x)$ follows by the assumption that $`G' \seqD `G$.

 \item [$ \Abs $] 
Then $T \equiv `lx.M$, $`s = `d\prod`k\arrow `r$ and the inference ends with:
 \[ \begin{array}{c}
\Inf	[\Abs]
	{\InfBox{ \derLmu `G,x{:}`d |- M : `k\arrow `r | `D } 
	}{\derLmu `G |- `lx.M : `d\prod `k\arrow `r | `D }
 \end{array} \]
where $x \notele \dom(`G)$. 
By our extension to Barendregt's convention over judgements, we can assume that also $x \notele \dom(`G')$. 
Thereby $`G', x{:}`d$ is a well-formed basis, and clearly $`G' \seqD `G$ implies $`G', x{:}`d \seqD `G, x{:}`d$. 
By induction we have $ \derLmu `G',x{:}`d |- M : `k\arrow `r | `D' $, from which we conclude $ \derLmu `G' |- `lx.M : `d\prod `k\arrow `r | `D' $ by rule $(\Abs)$.

% \item [$ \MAbs $] 
%This case is similar to that of $(\Abs)$, this time reasoning on the contexts $`D $ and $`D'$.

 \item [$ \TCmd $] 
Then $T\equiv [`a]M$ and $`s = `d\prod `k$, where $`k=`D(`a)$, and the derivation ends by
 \[ \begin{array}{c}
\Inf	[\TCmd]
	{\InfBox{ \derLmu `G |- M : `d | `D }
	}{ \derLmu `G |- [`a]M : `d\prod `k | `D }
 \end{array} \]
Since $`D'(`a)=`k'\seqC`k$ implies $`d\prod `k' \seqC `d\prod `k$, by induction $ \derLmu `G' |- M : `d | `D' $, we can construct:
 \[ \begin{array}{c}
\Inf	[\seq]
	{\Inf	[\TCmd]
{\InfBox{ \derLmu `G' |- M : `d | `D' }
}{ \derLmu `G' |- [`a]M : `d\prod `k' | `D' } 
	 \quad
	 \InfBox{`d\prod `k' \seqC `d\prod `k}
	}{ \derLmu `G' |- [`a]M : `d\prod `k | `D' }
 \end{array} \]
Note that in case $`a{:}`w\ele `D $ but $`a\notele \dom(`D')$ we still would have
that $`D'\seqC `D $, but the instance $(\TCmd_1)$ of rule $(\TCmd)$ would not be applicable.
 \qed

 \end{description}
} % Comment %%%%%%%%%%%%%%%%%%%%%%%%%%%%%%%%%%%%%%%%%%
 \end{proof}
It is straightforward to show that $`G \subseteq `G'$ implies $`G' \seq `G$ (and $`D \subseteq `D'$ implies $`D' \seq `D$), so rule $(\Strength)$ contains rule $(\Weak)$.

The following lemma describes the set of types that can be assigned to a term or a command. 
%(see Lem.\skp\ref{mu gen Lemma} below).

 \begin{lem} \label{lem:logicalRules}
If $\Der :: \derLmu `G |- T : `s | `D $, then either $`s \simA `w$ or there exist sub-derivations $\Der_i :: \derLmu `G |- T : `s_i | `D $ of $\Der$ $(i \ele \n)$, such that $\Inter_{i=1}^n`s_i \seqA `s$ and the last rule of each $\Der_i$ is a type rule.
 \end{lem}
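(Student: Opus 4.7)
\medskip

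\noindent\textbf{Proof proposal.} I would prove the lemma by induction on the structure of the derivation $\Der$, proceeding by cases on the last rule applied. Observe first that each logical rule $(\inter)$, $(`w)$, $(\seq)$ has the same subject in its conclusion and premises; consequently any sub-derivation of a premise of a logical rule remains a sub-derivation of $\Der$ with the same subject $T$, the same basis $`G$, and the same name context $`D$. This is what allows the induction to collect the witnesses demanded by the statement.

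If the last rule of $\Der$ is one of the type rules $(\Ax)$, $(\LAbs)$, $(\App)$, $(\TCmd)$, or $(\MAbs)$, then the conclusion is immediate by taking $n=1$ and $\Der_1 = \Der$. If the last rule is $(`w)$, then $`s = `w$ and we are in the first alternative. If the last rule is $(\seq)$, with premise $\Der' :: \derLmu `G |- T : `s' | `D $ and side condition $`s' \seqA `s$, then the induction hypothesis applied to $\Der'$ gives either $`s' \simA `w$, whence $`w \simA `s' \seqA `s$ forces $`s \simA `w$; or a family of type-rule sub-derivations $\Der_i$ of $\Der'$ (hence of $\Der$) with $\Inter_{i=1}^n `s_i \seqA `s'$, and transitivity yields $\Inter_{i=1}^n `s_i \seqA `s$.

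The $(\inter)$ case is the one requiring a little care. Here $`s = `t_1 \inter `t_2$ and $\Der$ has premises $\Der^1 :: \derLmu `G |- T : `t_1 | `D $ and $\Der^2 :: \derLmu `G |- T : `t_2 | `D $. Applying the induction hypothesis to each, we obtain for $j = 1,2$ either $`t_j \simA `w$ or families $(\Der^j_i)_{i \ele I_j}$ of type-rule sub-derivations of $\Der^j$ with $\Inter_{i \ele I_j} `s^j_i \seqA `t_j$. Let $K = \{j \in \{1,2\} \mid `t_j \not\simA `w\}$. If $K = \emptyset$ then $`t_1, `t_2 \simA `w$, so $`s = `t_1 \inter `t_2 \simA `w$ and we are done. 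Otherwise, take the collection $\{\Der^j_i \mid j \in K,\, i \ele I_j\}$ as the desired list; each is still a sub-derivation of $\Der$ whose last rule is a type rule, and using that $`w$ is the top of $\seqA$ together with the axioms and rules of Def.\skp\ref{def:interLangTheory}, a routine check gives $\Inter_{j \in K} \Inter_{i \ele I_j} `s^j_i \seqA \Inter_{j \in K} `t_j \simA `t_1 \inter `t_2 = `s$.

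The main (mild) obstacle is purely bookkeeping in the $(\inter)$ case: one must be careful to discard the components whose induction hypothesis returned $`s \simA `w$ and to verify that what remains still dominates $`s$. All other cases are immediate from the syntax-directedness of the type rules and from transitivity of $\seqA$; no analysis of the shape of $T$ is needed, since the statement is phrased entirely in terms of derivations.
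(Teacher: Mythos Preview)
Your proposal is correct and follows exactly the approach the paper indicates: a straightforward induction on the structure of derivations, casing on the last rule. The paper's proof is the single line ``By straightforward induction over the structure of derivations''; you have simply spelled out the cases (in particular the bookkeeping in the $(\inter)$ case) that the paper leaves implicit.
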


 \begin{Proof}
By straightforward induction over the structure of derivations. 
% \qed
 \end{Proof}

This particular property will be of use in many of the proofs below, where we reason by induction over the structure of derivations and allows us to always assume that a type rule was applied last, and not treat the logical rules.

Another way to state the above result is the following:

\newpage

\def\derLmuInt{\derLmu}

 \begin{lem}[Generation Lemma] \label{mu gen Lemma} \label{gen lemma}
Let $`d \not\sim_D `w$ and $`k \not\sim_C`w$:
 \[ \begin{array}{rcl}
\derLmuInt `G |- x : `d | `D &\Iff& \Exists x{:}`d \ele `G \Pred[ `d' \seqD `d ]
	\\
\derLmuInt `G |- `lx.M : `d | `D &\Iff& 
\Exists I \, \Forall i \ele I ~ \Exists `d_i,`k_i,`r_i \Pred[ \derLmuInt `G, x{:}`d_i |- M : `k_i\arrow `r_i | `D \And \Inter_I `d_i\prod `k_i\arrow `r_i \seqD `d ] 
	\\
\derLmuInt `G |- MN : `d | `D &\Iff& 
\\ \multicolumn{3}{r} {
\Exists I \, \Forall i \ele I ~ \Exists `d_i,`k_i,`r_i \Pred[ \derLmuInt `G |- M : `d_i\prod `k_i\arrow `r_i | `D \And \derLmuInt `G |- N : `d_i | `D \And \Inter_I`k_i\arrow `r_i \seqD `d ] 
}
\quad
	\\
\derLmuInt `G |- `m`a.{\Cmd} : `d | `D &\Iff& 
\Exists I \, \Forall i \ele I ~ \Exists `k_i,`r_i, `k'_i \Pred[ \derLmuInt `G |- {\Cmd} : (`k_i\arrow `r_i)\prod `k_i | `a{:}`k'_i, `D \And \Inter_I`k'_i\arrow `r_i \seqD `d ] 
	\\
\derLmuInt `G |- [`a]M : `k | `D &\Iff& 
\Exists `k',I \, \Forall i \ele I ~ \Exists `d_i \Pred[ \derLmuInt `G |- M : `d_i | `D \And `D(`a) = `k'  \And \Inter_I `d_i\prod `k' \seqC `k ] 
 \end{array} \]
 \end{lem}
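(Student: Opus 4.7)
The plan is to prove both directions of each equivalence separately. For the $(\Leftarrow)$ direction, given the data on the right-hand side, for each $i \ele I$ I would apply the appropriate type rule ($(\Axiom)$, $(\Abs)$, $(\App)$, $(\MAbs)$, or $(\TCmd)$) to derive $\derLmu `G |- T : `s_i | `D$; I would then combine these using rule $(\inter)$ repeatedly (or its iterated variant $(\Inter)$) to obtain $\derLmu `G |- T : \Inter_I `s_i | `D$, and finish with an application of $(\seq)$ to the assumed subtyping $\Inter_I `s_i \seqA `s$. In the variable case the intersection collapses since each $`s_i$ must coincide with $`G(x)$, and in each of the other cases the appropriate type rule produces precisely the syntactic shape appearing on the right.

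For $(\Rightarrow)$, the starting point is Lem.\skp\ref{lem:logicalRules}: since $`s \not\sim_A `w$, a derivation of $\derLmu `G |- T : `s | `D$ decomposes into sub-derivations $\Der_i :: \derLmu `G |- T : `s_i | `D$ each ending with a type rule, such that $\Inter_I `s_i \seqA `s$. The type rules are syntax-directed, so the shape of $T$ determines which type rule can have been applied last in each $\Der_i$: only $(\Axiom)$ concludes a judgement with a variable as subject, only $(\Abs)$ with a $`l$-abstraction, and analogously for $(\App)$, $(\MAbs)$, and $(\TCmd)$. Reading off the premises of the last rule in each $\Der_i$ then yields exactly the existential data required on the right-hand side. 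For the command case one must treat $(\TCmd_2)$ from Rem.\skp\ref{rem:TCmd-rule} as a particular instance of $(\TCmd)$, covering the situation in which $`a \notele \dom(`D)$ and the common continuation type is $`k' = `w$.

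The main obstacle is the bookkeeping around bound variables and names. Rule $(\Abs)$ transfers $x{:}`d_i$ between the basis of the premise and the $\prod$-component of the conclusion, and rule $(\MAbs)$ performs the analogous transfer of $`a{:}`k'_i$ between the context of the premise and the codomain of the continuation arrow in the conclusion. Here the extension of Barendregt's convention to judgements is crucial, for it guarantees that $x \notele \dom(`G)$ and $`a \notele \dom(`D)$ in the respective conclusions, so that the premise judgements $\derLmu `G, x{:}`d_i |- M : `k_i\arrow `r_i | `D$ and $\derLmu `G |- \Cmd : (`k_i\arrow `r_i)\prod `k_i | `a{:}`k'_i, `D$ are well-formed and uniquely determined by the bottom rule. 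Once this binding discipline is accounted for, the case analysis is routine.
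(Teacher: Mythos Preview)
Your proposal is correct and follows essentially the same approach as the paper. The paper's proof is declared ``standard'' and only the $[`a]M$ case is worked out in detail; there the $(\Leftarrow)$ direction is exactly your combination of the type rule, $(\inter)$, and $(\seq)$, while the $(\Rightarrow)$ direction is done by an explicit induction on derivations handling the last rule being $(\TCmd)$, $(\inter)$, or $(\seq)$---which is precisely what Lem.~\ref{lem:logicalRules} packages up, so your appeal to that lemma is the intended shortcut.
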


\begin{Proof}
The proof is standard. For illustrate we just develop in detail one of the cases. 

\begin{itemize}
\item [$\Leftarrow$:] assume that $\derLmuInt `G |- M : `d_i | `D $,  $`D(`a) =`k' \seqC `k_i$ and $\Inter_{i \ele I} `d_i\prod `k_i \seqC `k$.
	Then by $(\TCmd)$ we get $\derLmuInt `G |- [`a]M : `d_i \prod `k' | `D $ for all $i$; therefore by (possibly multiple inferences of) $(\inter)$ we obtain
	$\derLmuInt `G |- [`a]M : \Inter_i `d_i \prod `k' | `D$ and the thesis follows by $(\leq)$.

\item [$\Rightarrow$:] 
	Let $ \derLmuInt `G |- [`a]M : `k | `D $; to show: $ \Exists `k',I \, \Forall i \ele I ~ \Exists `d_i \Pred[ \derLmuInt `G |- M : `d_i | `D \And `a{:}`k' \ele `D \And \Inter_I `d_i\prod `k' \seqC `k ]  $.
The derivation can finish with the rules: $(\TCmd)$, $(\inter)$ or $(\seq)$.

 \begin{description}
 \item[$\TCmd$] 
So the derivation is shaped like:
 \[ \begin{array}{c}
\Inf	[\fun{`D }(`a)=`k'']	
	{ \InfBox{ \derLmu `G |- M : `d | `D }
	}{ \derLmu `G |- [`a]M : `d\prod `k'' | `D }
 \end{array} \]
with $`k = `d\prod `k''$; take $`k' = `k''$, $I = \Set{1}$, $`d_1 = `d$.

 \item[$\inter$] 
So the derivation is shaped like:
 \[ \begin{array}{c}
\Inf	{\InfBox{ \derLmu `G |- T : `k_1 | `D }
	\quad
	\InfBox{ \derLmu `G |- T : `k_2 | `D }
	}{ \derLmu `G |- T : `k_1\inter`k_2 | `D }
 \end{array} \]
By induction, there exist $`k'_1$, $`k'_2$, $I_1$, $I_2$, such that
 \[ \begin{array}{c}
\Forall i \ele I_1 ~ \Exists `d^1_i \Pred[ \derLmuInt `G |- M : `d^1_i | `D \And `D(`a) = `k'_1 \And \Inter_{I_1} `d_i\prod `k'_1 \seqC `k_1 ] ~ \And
	\\
\Forall i \ele I_2 ~ \Exists `d^2_i \Pred[ \derLmuInt `G |- M : `d^2_i | `D \And `D(`a) = `k'_2 \And \Inter_{I_2} `d^2_i\prod `k'_2 \seqC `k_2 ] 
 \end{array} \]
Then necessarily $`k'_1 = `k'_2$.
Take $`k' = `k'_1 = `k'_2$, $I = I_1 \union I_2$, then for all $`d_i$ with $i \ele I$ we have $ \derLmuInt `G |- M : `d_i | `D $, and $ \Inter_I `d_i\prod `k' = ( \Inter_{I_1} `d^1_i\prod `k'_1 ) \inter ( \Inter_{I_2} `d^2_i\prod `k'_2 ) \seqC `k_1 \inter `k_2 $.

 \item[$\seq$]
So there 
 \[ \begin{array}{c}
\Inf	{\InfBox{ \derLmu `G |- T : `k_1 | `D } \dquad \InfBox{ `k_1 \seq `k_2 }
	}{\derLmu `G |- T : `k_2 | `D }
 \end{array} \]
By induction, there exist $`k'$, $I$, such that
 \[ \begin{array}{c}
\Forall i \ele I ~ \Exists `d_i \Pred[ \derLmuInt `G |- M : `d_i | `D \And `D(`a) = `k' \And \Inter_I `d_i\prod `k' \seqC `k_1 ] 
 \end{array} \]
Notice that then also $ \Inter_I `d_i\prod `k' \seqC `k_2 $. 

 \end{description}

\end{itemize}
 \end{Proof}

%When considering a derivation for $\derLmu `G |- T : `s | `D $ with $`s \not= `w$, we can always assume the derivation is constructed as:
%
% \[ \begin{array}{c}
%\Inf	[\seq]
%	{\Inf	[\Inter] 
%		{\InfBox [\emph{type rule}]
%			{ \derLmu `G |- T : `s_i | `D }
%%			 \hspace*{15mm}
%			 (\Forall i \ele \n)
%		}{ \derLmu `G |- T : \Inter_{i \ele \n}`s_i | `D }
%	 \quad
%	 \InfBox{\Inter_{i \ele \n} `s_i \seq `s}
%	}{ \derLmu `G |- T : `s | `D }
% \end{array} \]

%In view of this, in the proofs below when proving by cases over the structure of derivations, we will normally assume that, without loss of generality, the derivation ends with a type rule, and then $`s_i$ is either an arrow type or a product; we can do this, because the case $`d_1\inter`d_2$ is dealt with by splitting up the two cases, and for $`d = `w$ the proof becomes trivial; we can prove the general case by several uses of $(\inter)$, $(\seq)$ and $(\Strength)$.

 \subsection{Type interpretation and soundness} \label{subsec:typeInterpretation}

In this section we will formally define the type interpretation and thereby the interpretation of typing judgements. 
As anticipated above in the informal discussion of the system, the meaning of a type will be a subset of the domain of interpretation. 

In definitions and statements below we relate types to a $`l`m$-model ${\ModM}=(R,D,C)$, silently assuming that the language $\Lang_R$ includes a constant $`y_a$ for every $a\ele \Compact(R)$.

 \begin{defi}[Type interpretation] \label{def:typesInterpretation}
Let ${\ModM}=(R,D,C)$ be a $`l`m$-model.
For $A = R,D,C$ we define the interpretation $ \Sem{ \cdot}^{{\ModM},A} : \Lang_{A} \SemArrow \Power{A}$ (written $\Sem{ \cdot}^A$ when $\ModM$ is understood) as follows:

 \[ \begin{array}[t]{rcl}
\Sem{ `y_a}^R &=& \filt_R a = \Set{r \ele R \mid a \po r} \\ %[1mm]
\Sem{ `d\prod `k }^C &=& \Sem{`d}^D \times \Sem{ `k }^C \\ %[1mm]
\Sem{ `k\arrow `r}^D &=& \Set{d \ele D \mid \Forall k \ele \Sem{ `k }^C \Pred [ d \, k \ele \Sem{ `r}^R ]} \\ %[1mm]
\Sem{ `y_a}^D = \Sem{ `w\arrow `y_a}^D &=& \Set{d \ele D \mid \Forall k \ele C \Pred [ d \, k \ele \Sem{ `y_a}^R ] }
 \end{array} \]
and
 \[ \begin{array}[t]{rcl}
\Sem{ `w}^A &=& A \\ % [1mm]
\Sem{ `s_1 \inter`s_2}^A &=& \Sem{ `s_1}^A \cap \Sem{ `s_2}^A 
 \end{array} \]
 \end{defi}

 \begin{rem} \label{rem:typeconstinterp}
The last definition is a special case with respect to the natural adaptation of the intersection type interpretation as subsets of a $`l$-model, in that we fix the interpretation of the type constants $`y_a$. 
This is consistent with the approach of constructing types from the solution of the continuation domain equations, and is the intended interpretation throughout this paper. 
In particular, it implies that the language $\Lang_R$ depends on the chosen domain of results $R$, and that the interpretation of a type is always a principal filter of either $R$, $D$, or $C$, according to its kind, as is proven in the next lemma. 

This choice poses no limitations. 
If we postulate that there exist denumerably many constants $`y_0, `y_1, \ldots$ in $\Lang_R$, then we can generalise the definition of type interpretation in a straightforward way to $\Sem{`s}^A_{`h}$ (relative to the \emph{type environment} $`h$, a mapping of type constants such that $`h(`y_i) \subseteq R$ for all $i$) by defining $\Sem{ `y_i}^R_{`h} = `h(`y_i)$ as the base case of the inductive definition. 
Then the above definition is recovered by considering an arbitrary exhaustive enumeration of the compacts $a_0, a_1, \ldots = \Compact(R)$ (possibly with repetitions; this enumeration exists since $R$ is $`w$-algebraic) and defining the interpretation of type constants through $`h_0(`y_i) = \filt_R a_i$.
 \end{rem}

There exists a close relation between the interpretation of types and the maps $\TypeToComp_A$ (see Def.\skp\ref{def:typeInterpDandC}), that is made explicit in the following lemma.

 \begin{lem} \label{lem:typeInterpCompactCone}
 For $A=R,D,C$ and any $`s \ele \Lang_A$, we have that $\Sem{ `s}^A = \filt_A \TypeToComp_A(`s)$.
 \end{lem}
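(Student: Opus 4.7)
The plan is to prove the statement by (simultaneous) induction on the structure of types, handling $\Lang_R$, $\Lang_D$, and $\Lang_C$ together. Throughout, I will use the following two standard facts about principal filters in any $`w$-algebraic lattice $A$: (i) $\filt_A a \cap \filt_A b = \filt_A(a\join b)$, since $a \po x$ and $b \po x$ is equivalent to $a \join b \po x$; and (ii) $A = \filt_A \bot$. I will also repeatedly use that, by Def.\skp\ref{def:typesToComp} (and Rem.\skp\ref{rem:ThetaRank}), $\TypeToComp_A$ sends $`w$ to $\bot$ and $`s \inter `t$ to $\TypeToComp_A(`s) \join \TypeToComp_A(`t)$.

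For the base cases: $\Sem{`y_a}^R = \filt_R a = \filt_R \TypeToComp_R(`y_a)$ holds directly by Def.\skp\ref{def:typesInterpretation} and Def.\skp\ref{def:typeTHeorFromDomain}; and $\Sem{`w}^A = A = \filt_A \bot = \filt_A \TypeToComp_A(`w)$ by (ii). The intersection case is immediate from the induction hypothesis and (i). For $`s = `d\prod `k \ele \Lang_C$, the induction hypothesis gives $\Sem{`d\prod `k}^C = \filt_D\TypeToComp_D(`d) \times \filt_C\TypeToComp_C(`k)$; on the other hand, since the order on $C \Isom D \times C$ is componentwise, $\filt_C\Pair{d_0}{k_0} = \filt_D d_0 \times \filt_C k_0$, which together with the equation $\TypeToComp_C(`d\prod `k) = \Pair{\TypeToComp_D(`d)}{\TypeToComp_C(`k)}$ (Rem.\skp\ref{rem:ThetaRank}) yields the result.

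The main step is the arrow case $`s = `k\arrow `r \ele \Lang_D$. By the induction hypothesis and Def.\skp\ref{def:typesInterpretation}:
\[
\Sem{`k\arrow`r}^D \;=\; \Set{d \ele D \mid \forall k \ele \filt_C \TypeToComp_C(`k) \Pred[ d\,k \ele \filt_R \TypeToComp_R(`r) ]}.
\]
On the other hand, $\TypeToComp_D(`k\arrow `r) = \StepFun{\TypeToComp_C(`k)}{\TypeToComp_R(`r)}$, and by definition of step functions, $\StepFun{a}{b} \po d$ iff $b \po d(a)$: one direction is immediate by evaluating at $a$, and the other uses monotonicity of $d$, since $a \po x$ implies $d(a) \po d(x)$ and hence $\StepFun{a}{b}(x) = b \po d(a) \po d(x)$. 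Applying this with $a = \TypeToComp_C(`k)$ and $b = \TypeToComp_R(`r)$, together with the same monotonicity argument to equate the condition ``$\forall k \spo \TypeToComp_C(`k)\Pred[\TypeToComp_R(`r) \po d(k)]$'' with the single condition at $k = \TypeToComp_C(`k)$, we conclude that $\Sem{`k\arrow `r}^D = \filt_D \StepFun{\TypeToComp_C(`k)}{\TypeToComp_R(`r)} = \filt_D \TypeToComp_D(`k\arrow`r)$. The subcase $\Sem{`y_a}^D = \Sem{`w\arrow `y_a}^D$ needs no separate treatment, because $\TypeToComp_D(`y_a) = \StepFun{\bot}{\TypeToComp_R(`y_a)} = \Semlambda\_.\TypeToComp_R(`y_a)$, for which the same computation applies with $`k = `w$. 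I do not expect any serious obstacle: the only subtle point is recognising that the quantification over $k \spo \TypeToComp_C(`k)$ in the set-theoretic interpretation collapses to a single condition at the compact point, exactly because the elements of $D$ are Scott-continuous (hence monotonic), which is what makes the step-function representation work.
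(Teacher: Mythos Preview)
Your proposal is correct and follows essentially the same inductive argument as the paper's proof, handling $`w$, intersection, $`y_a$, products, and arrows in the same way. The only cosmetic difference is that for the arrow case you first reduce $\StepFun{a}{b} \po d$ to the single condition $b \po d(a)$ and then re-expand it to the quantified form via monotonicity, whereas the paper states the quantified equivalence $\StepFun{a}{b} \po d \Iff \forall k\,[a \po k \Then b \po d(k)]$ directly; both are standard and amount to the same computation.
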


 \begin{proof}
 By induction over the structure of types and by cases on $A$. 

 \begin{description} \itemsep 2pt
 \item [$ `s \equiv `w $] 
Then $\TypeToComp_A(`w) = \bot$ and $\filt_A \bot = A = \Sem{ `w}^A$.

 \item [$ `s \equiv `s_1 \inter`s_2 $] 
By induction, $\Sem{ `s_i}^A = \filt_A \TypeToComp_A(`s_i)$ and $\TypeToComp_A(`s_1 \inter`s_2) = \TypeToComp_A(`s_1) \join \TypeToComp_A(`s_2)$ by Rem.\skp\ref{rem:ThetaRank}. 
We have $\filt_A(\TypeToComp_A(`s_1) \join \TypeToComp_A(`s_2)) = \filt_A\TypeToComp_A(`s_1) \cap \filt_A\TypeToComp_A(`s_2)$, and therefore $\Sem{ `s_1 \inter`s_2}^A$ $= \Sem{ `s_1}^A \cap \Sem{ `s_2}^A = \filt_A \TypeToComp_A(`s_1 \inter`s_2) $.
	
 \item [$ `s \equiv `y_a $] ~

 \begin{description} 
 \item [$A = R$]
This follows immediately from $\TypeToComp_R(`y_a) = a$ and $\Sem{ `y_a}^R = \filt_R a$.
	
 \item[$A = D$]
Then $\TypeToComp_D(`y_a) = \StepFun{\bot}{a}$; but for any $d \ele D = [C \To R]$, by definition of step functions, $\StepFun{\bot}{a} \po d$ if and only if $a \po d \ k$ for all $k \ele C$, that is if and only if $d \ k \ele \Sem{ `y_a}^R$ for the above; it follows that $\TypeToComp_D(`y_a) \po d$ if and only if $d \ele \Sem{`y_a}^D$ as desired.

 \end{description} 

 \item [$ `s \equiv `d\prod `k $] 
Then $\TypeToComp_C(`d\prod `k) = \Pair{\TypeToComp_D(`d)}{\TypeToComp_C(`k)}$ by Rem.\skp\ref{rem:ThetaRank}, and for any
$ \Pair{d}{k} \ele C = D\prod C$ we have:
 \[ \begin{array}{lcll}
\Pair{\TypeToComp_D(`d)}{\TypeToComp_C(`k)} \po \Pair{d}{k} 
	& \Iff &
\TypeToComp_D(`d) \po d \And \TypeToComp_C(`k) \po k 
		& (\textrm{by definition of order over } D\prod C ) \\
	& \Iff & 
d \ele \Sem{`d}^D \And k \ele \Sem{`k}^C 
		& (\textrm{by ind.}) \\
	& \Iff & 
\Pair{d}{k} \ele \Sem{`d}^D \times \Sem{ `k }^C 
		& (\textrm{by Def. \ref{def:typesInterpretation}}) \\
	&=& 
\Sem{ `d\prod `k }^C 

 \end{array} \] 
	
 \item [$ `s \equiv `k\arrow `r $] 
Then $\TypeToComp_D(`k\arrow `r) = \StepFun{\TypeToComp_C(`k)}{\TypeToComp_R(`r)}$ by Rem.\skp\ref{rem:ThetaRank}; for any $d \ele D = [C\To R]$ we have: 
 \[ \begin{array}{lcll}
\StepFun{\TypeToComp_C(`k)}{\TypeToComp_R(`r)} \po d 
	& \Iff &
\forall k \ele C \Pred[ \TypeToComp_C(`k) \po k \Then \TypeToComp_R(`r) \po d \ k ]
		& (\textrm{by definition of } \StepFun{`.}{`.} ) \\ %({ of step functions}) \\
	& \Iff & 
\forall k \ele C \Pred[ k \ele \Sem{`k}^C \Then d \ k \ele \Sem{`r}^R ]
		& (\textrm{by induction}) \\
	& \Iff & 
d \ele \Sem{ `k\arrow `r}^D 
	& (\textrm{by Def. \ref{def:typesInterpretation}}) 
 \end{array} \]
\arrayqed[-22pt] % \qed
 \end{description}
 \end{proof}

 \begin{cor} \label{cor:leq} 
For $A = R,D,C$, if $`s, `t \ele \Lang_A$ then $`s \seqA `t \Iff \Sem{ `s }^A \subseteq \Sem{ `t }^A$.
 \end{cor}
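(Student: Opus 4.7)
The plan is to reduce the statement to two facts already established in the paper: the explicit description of the type interpretation as a principal filter (Lem.\skp\ref{lem:typeInterpCompactCone}), and the fact that the maps $\TypeToComp_A$ are order-reflecting (not merely order-reversing) with respect to $\seqA$ and $\po^{\SuperOp}$. With these in hand the corollary falls out by a routine calculation on principal filters.

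First I would record the elementary observation that for any domain $X$ and compact points $c_1,c_2\ele \Compact(X)$:
\[
\filt_X c_1 \subseteq \filt_X c_2 \Iff c_2 \po c_1.
\]
The direction ($\Leftarrow$) is immediate by transitivity of $\po$; for ($\Rightarrow$) we use $c_1 \ele \filt_X c_1$, hence $c_1 \ele \filt_X c_2$, i.e.\ $c_2 \po c_1$. This applies to $X = R,D,C$ since all three are domains.

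Next I would invoke the equivalence $`s\seqA `t \Iff \TypeToComp_A(`t) \po \TypeToComp_A(`s)$, which is exactly the content of Lem.\skp\ref{lem:KRtheory}\,(\ref{lem:KRtheory-2}) for $A=R$ and, for $A=D,C$, is the conjunction of surjectivity and order-reversal of $\TypeToComp_A$ established in Lem.\skp\ref{lem:Theta_AsurjOderRev} fed through Lem.\skp\ref{lem:compactRep} (which yields the isomorphism of inf-semilattices $\LangQseqA \Isom \CompOp(A)$, so that $\TypeToComp_A$ not only preserves but also reflects the pre-order). Chaining this with the principal-filter observation and Lem.\skp\ref{lem:typeInterpCompactCone}:
\[
`s \seqA `t
\Iff \TypeToComp_A(`t) \po \TypeToComp_A(`s)
\Iff \filt_A \TypeToComp_A(`s) \subseteq \filt_A \TypeToComp_A(`t)
\Iff \Sem{`s}^A \subseteq \Sem{`t}^A ,
\]
which is the desired equivalence; one would split by $A\ele\{R,D,C\}$ only to cite the appropriate lemma on each line.

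The only potential obstacle is making sure that the reflection direction ($\TypeToComp_A(`t)\po \TypeToComp_A(`s) \Then `s\seqA `t$) is indeed available. It is, but its derivation is spread across the development: one uses that under the hypotheses of Lem.\skp\ref{lem:compactRep}, the induced map on the quotient $\LangQseqA$ is a bijection that preserves the meet $\inter$, hence an iso of inf-semilattices, which forces the pre-order to be reflected. Apart from this point, the proof is a one-line computation and requires no case analysis on the syntactic form of $`s$ and $`t$.
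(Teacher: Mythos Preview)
Your proposal is correct and follows essentially the same route as the paper: the paper's proof is the three-line chain $`s \seqA `t \Iff \TypeToComp_A(`t) \po \TypeToComp_A(`s) \Iff \filt_A \TypeToComp_A(`s) \subseteq \filt_A \TypeToComp_A(`t) \Iff \Sem{`s}^A \subseteq \Sem{`t}^A$, citing Lem.\skp\ref{lem:KRtheory}\,(\ref{lem:KRtheory-2}), Lem.\skp\ref{lem:Theta_AsurjOderRev}, and Lem.\skp\ref{lem:typeInterpCompactCone}. Your version is slightly more explicit in isolating the principal-filter observation and in flagging that the reflection direction of the first equivalence requires the isomorphism of Lem.\skp\ref{lem:compactRep} (the paper simply cites Lem.\skp\ref{lem:Theta_AsurjOderRev}, whose proof in fact establishes both directions even though its statement only says ``order reversing''), but the argument is the same.
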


 \begin{proof}
 $ \begin{array}[t]{rcl@{\quad}l}
`s \seqA `t 
	& \Iff & 
\TypeToComp_A(`s) \sqsupseteq \TypeToComp_A(`t) 
	& (\textrm{by Lem.\skp\ref{lem:KRtheory}\,(\ref{lem:KRtheory-2}) and 
	\ref{lem:Theta_AsurjOderRev}}) \\ % [1mm]	& \Iff & 
	& \Iff & 
\filter \TypeToComp_A(`s) \subseteq \filter \TypeToComp_A(`t) 
	& (\textrm{by Lem.\skp\ref{lem:typeInterpCompactCone}}) \\ % [1mm]
	& \Iff & 
\Sem{ `s }^A \subseteq \Sem{ `t }^A 
 \end{array} $ 
\arrayqed 
 \end{proof}

We will now define satisfiability for typing judgements with respect to a $`l`m$-model.

 \begin{defi}[Satisfiability] \label{def:judegmentSat}
Let ${\ModM} = (R,D,C)$ be a $`l`m$-model.
We define semantic satisfiability through:
 \[ \def\arraystretch{1.2} \begin{array}{lcl}
 \modelsM {\env} |= `G;`D 
	& \Iff& 
 \Forall x \Pred[ \env \psk x \ele \Sem{`G(x)}_{\ModM}^{\mathcal D } ] \And	
 \Forall `a \Pred[ \env \psk `a \ele \Sem{ `D(`a) }_{\ModM}^{\mathcal C} ] 
 \\ 
 \modelsM `G |= M : `d | `D 
	& \Iff &	
 \Forall \env \Pred[ \modelsM {\env} |= `G;`D 
	\Then \Sem{ M }_{\ModM}^{\mathcal D }{ \env } \ele \Sem{`d}_{\ModM}^{\mathcal D } ] 
 \\ 
 \modelsM `G |= {\Cmd} : `k | `D 
	& \Iff &
 \Forall \env \Pred[ \modelsM {\env} |= `G;`D 
	\Then \Sem{ \Cmd}_{\ModM}^{\mathcal C }{ \env } \ele \Sem{ `k }_{\ModM}^{\mathcal C} ] 
 \end{array} \]
We will write $\models$ for $\models_{\ModM}$ when $\ModM$ is understood.
 \end{defi}

 \begin{rem} \label{rem:validity}
Continuing the discussion in Rem.\skp\ref{rem:typeconstinterp}, we note that we do not consider here the concept of \emph{validity}, namely satisfiability with respect to any $`l`m$-model $\ModM$, since we model both the language $\Lang_R$ and the pre-order $\seq_R$ \emph{after} $R$, which is the particular domain of results of $\ModM$. 

As a matter of fact, we could define validity as follows: first we fix the type theory $\TypeTheor_R$ for the language $\Lang_R$ with denumerably many type constants $`y_i$; then the satisfiability notion should be relativised to \emph{both} term \emph{and} type environments $e$ and $`h$, asking that the latter is a model of the theory $\TypeTheor_R$, in the sense that whenever $`r \seq_R `r'$, it holds that $`h(`r) \subseteq `h(`r')$.

However, we will not consider such a general formulation, as it would involve an unnecessary complication of the theory developed here.
 \end{rem}

The next result states the soundness of the typing system. 
Note that, although the construction of the system has been made by having an initial model in mind, the soundness theorem holds for any model.

 \begin{thm}[Soundness of type assignment] \label{thm:typeAssSound}
Let $\ModM$ be a $`l`m$-model. 
If $ \derLmu `G |- T : `s | `D $, then $ \modelsLmu `G |= T : `s | `D $.
 \end{thm}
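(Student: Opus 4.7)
The plan is to prove the statement by induction on the structure of the derivation $\Der :: \derLmu `G |- T : `s | `D$, handling each of the inference rules of Def.\skp\ref{def:intersTypeAss} in turn. The logical rules are dispatched immediately using the structural properties of the interpretation: rule $(`w)$ is trivial because $\Sem{`w}^A = A$; rule $(\inter)$ follows because $\Sem{`s\inter`t}^A = \Sem{`s}^A \cap \Sem{`t}^A$ by definition; and rule $(\seq)$ is a direct application of Cor.\skp\ref{cor:leq}, which gives $\Sem{`s}^A \subseteq \Sem{`t}^A$ whenever $`s \seqA `t$.

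The type rules are then handled by unfolding the term-interpretation equations of Def.\skp\ref{def:interpretation} and the type-interpretation clauses of Def.\skp\ref{def:typesInterpretation}, which by construction are matched by the rules; the proofs are `read off' from these definitions. For $(\Axiom)$ the conclusion is immediate from $\models_\env `G;`D$. For $(\App)$, given $e \models `G;`D$, by the induction hypothesis $\Sem{M}^D{e} \in \Sem{`d\prod `k\arrow `r}^D$ and $\Sem{N}^D{e} \in \Sem{`d}^D$; then for any $k \in \Sem{`k}^C$ the pair $\Pair{\Sem{N}^D{e}}{k}$ lies in $\Sem{`d\prod `k}^C$, so $\Sem{MN}^D{e}{k} = \Sem{M}^D{e}{\Pair<\Sem{N}^D{e},k>} \in \Sem{`r}^R$. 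For $(\Abs)$, given $e \models `G\except x;`D$ and $\Pair<d',k> \in \Sem{`d\prod `k}^C$, the updated environment $e[x\to d']$ satisfies $`G$ (since $`G(x)=`d$), so by IH $\Sem{M}^D{e[x\to d']}{k} \in \Sem{`r}^R$, which equals $\Sem{`lx.M}^D{e}{\Pair<d',k>}$. For $(\TCmd)$, using $e(`a) \in \Sem{\fun{`D}(`a)}^C = \Sem{`k}^C$ and the IH on $M$, the pair $\Pair<\Sem{M}^D{e},e(`a)>= \Sem{[`a]M}^C{e}$ lies in $\Sem{`d}^D \times \Sem{`k}^C = \Sem{`d\prod `k}^C$.

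The main obstacle, and genuinely interesting case, is $(\MAbs)$, because the continuation interpretation of a command produces \emph{both} the function and the continuation to which it is applied, and the name $`a$ is bound to an externally given continuation. Given $e \models `G;`D\except `a$, to show $\Sem{`m`a.\Cmd}^D{e} \in \Sem{`k\arrow `r}^D$, pick an arbitrary $k \in \Sem{`k}^C$; the extended environment $e[`a \to k]$ then satisfies $`G;`D$, because $\fun{`D}(`a)=`k$ and $e[`a\to k]$ agrees with $e$ elsewhere. By IH applied to the premise $\derLmu `G |- \Cmd : (`k'\arrow `r)\prod `k' | `D$, writing $\Pair<d,k'> = \Sem{\Cmd}^C{e[`a \to k]}$ we get $d \in \Sem{`k'\arrow `r}^D$ and $k' \in \Sem{`k'}^C$, hence $d\,k' \in \Sem{`r}^R$. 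By the equation defining $\Sem{`m`a.\Cmd}^D$ this is precisely $\Sem{`m`a.\Cmd}^D{e}{k}$, giving the required membership in $\Sem{`k\arrow `r}^D$.

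By Lem.\skp\ref{lem:logicalRules} one may assume that a derivation of any non-trivial judgement can be decomposed so that the last rule is a type rule (or one of the logical rules explicitly handled above), so the induction above covers all cases and the soundness theorem follows.
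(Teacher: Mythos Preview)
Your proof is correct and follows essentially the same approach as the paper: induction on the structure of derivations, dispatching the logical rules via the set-theoretic interpretation of $\inter$, $`w$, and Cor.\skp\ref{cor:leq}, and handling each type rule by unfolding the matching clause of Def.\skp\ref{def:interpretation} together with Def.\skp\ref{def:typesInterpretation}. Your final appeal to Lem.\skp\ref{lem:logicalRules} is unnecessary, however: since you have already treated every inference rule (type rules and logical rules alike) as an inductive case, the induction is complete without any further decomposition argument.
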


 \begin{proof}
By induction on the structure of derivations.
(We will drop the super and subscripts on the interpretation function.)

 \begin{description} \itemsep 3pt

 \item [$ \Axiom $] 
Then $T \equiv x$ and $`s = `d$; let $e \models `G,x{:}`d;`D $, then $e\psk x \ele \Sem{`d}$. 
Hence, by Def.\skp\ref{def:interpretation}, we get $ \Sem{ x }{ e } \ele \Sem{`d}$.

 \item [$ \LAbs $] 
Then $T \equiv `lx.N$ and there exist $`d, `k$ and $`r$ such that $ `s = `d\prod `k\arrow `r$ and $ \derLmu `G,x{:}`d |- N : `k\arrow `r | `D $.
By definition, $ \Sem{ `lx.N }{ e }{ k } = \Sem{ N }{ e[x\to d] }{ k' } $, where $k = \Pair<d,k'>$; also, $e \models `G;`D $ and $ d \ele \Sem{`d} $ if and only if $ e[x \to d] \models`G,x{:}`d;`D $.
By induction, for any $e \models `G;`D $ and $d \ele \Sem{`d}$, $ \modelsLmu `G,x{:}`d |= N : `k\arrow `r | `D $, so $ \Sem{ N }{ e[x\to d] } \ele \Sem{ `k\arrow `r }$, so $ \Sem{ N }{ e[x\to d] }{ k } \ele \Sem{ `r }$ for any $k \ele \Sem{ `k }$. 
So $ \Sem{ `lx.N }{ e }{ \Pair<d,k> } \ele \Sem{ `r }$, so $ \Sem{ `lx.N }{ e } \ele \Sem{ `d\prod `k\arrow `r}$, and we conclude $ \modelsLmu `G |= `lx.N : `d\prod `k\arrow `r | `D $.

 \item [$ \App $] 
Then $T \equiv PQ$ and there exist $`d, `k$ and $`r$ such that $ `s = `k\arrow `r$, $ \derLmu `G |- P : `d\prod `k\arrow `r | `D $ and $ \derLmu `G |- Q : `d | `D $.
By definition, $ \Sem{ PQ }{ e }{ k } = \Sem{ P }{ e }{ \Pair< \Sem{ Q }{ e } , k > } $.
Let $e \models `G;`D $.
By induction, $ \modelsLmu `G |= P : `d\prod `k\arrow `r | `D $ and $ \modelsLmu `G |= Q : `d | `D $, so $ \Sem{ P }{ e } \ele \Sem{ `d\prod `k\arrow `r }$ and $ \Sem{ Q }{ e } \ele \Sem{`d}$; in particular, for any $ \Pair<d,k'> \ele \Sem{`d}\prod \Sem{ `k } = \Sem{ `d\prod `k }$, we have $ \Sem{ P }{ e }{ \Pair<d,k'> } \ele \Sem{ `r }^R$, so $ \Sem{ PQ }{ e } \ele \Sem{ `k\arrow `r}$, and thereby $ \modelsLmu `G |= PQ : `k\arrow `r | `D $.

 \item [$ \TCmd $] 
Then $T \equiv [`a]N$ and there exist $`d$ and $`k = `D(`a)$ such that $`s = `d\prod `k$ and $ \derLmu `G |- N : `d | `D $.
By induction we have that $ \modelsLmu `G |= N : `d | `D $, so that for any $e \models `G;`D $ we have that $\Sem{N}\,e \ele \Sem{`d}$. 
But $e \models `G;`D $ implies that $e\,`a \ele \Sem{`D(`a)} = \Sem{`k}$, hence 
$\Sem{[`a]N}\, e = \Pair{\Sem{N}\,e}{e\,`a} \ele \Sem{`d}\prod\Sem{`k} =\Sem{`d\prod`k}$ as desired. 
Then $ \modelsLmu `G |= [`a]N : `d\prod`k | `D $ by the arbitrary choice of $e$.

 \item [$ \MAbs $] 
Then $T \equiv `m`a.\Cmd$, and there exist $`k, `k'$ and $`r$ such that $`s = `k\arrow `r $ and $ \derLmu `G |- {\Cmd} : (`k'\arrow `r)\prod `k' | `a{:}`k,`D $.
By definition, $ \Sem{ `m`a.\Cmd }{ e }{ k } = d\,k'$, where $\Pair<d,k'> = \Sem{ \Cmd }{ e[`a\to k] } $.
Let $e \models `G;`D $, and $k \ele \Sem{ `k }$, then $e[`a \to k] \models `G;`a{:}`k,`D $. 
By induction, $ \modelsLmu `G |= {\Cmd} : (`k'\arrow `r)\prod `k' | `a{:}`k,`D $, so $ \Sem{ \Cmd }{ e[`a\to k] } \ele \Sem{ (`k'\arrow `r)\prod `k' }$. 
Let $ \Sem{ \Cmd }{ e[`a\to k] } = p $, then $ \pi_1\, p \ele \Sem{ `k'\arrow `r }$ and $ \pi_2 \, p \ele \Sem{ `k' }$, and $ \pi_1 \, p \, (\pi_2 \, P ) \ele \Sem{ `r }$, so $ \Sem{ `m`a.\Cmd }{ e }{ k } \ele \Sem{ `r }$, for any $k \ele \Sem{ `k }$, so $ \Sem{ `m `a.\Cmd }{ e } \ele \Sem{ `k\arrow `r}$, and therefore $ \modelsLmu `G |= `m`a.{\Cmd} : `k\arrow `r | `a{:}`k,`D $.

 \item [$ \inter $] 
By induction and the interpretation of an intersection type.

 \item [$ `w $] 
Immediate by the definition of interpretation of $`w$.

 \item [$ \seq $] 
By induction and Cor.\skp\ref{cor:leq}.
 \qed

 \end{description}
 \end{proof}

We will now show that we can interpret a term or command by the set of types that can be given to it (Thm.\skp\ref{thm:filter-term interpretation}).
Towards that result, we first make the denotation of terms and commands and the interpretations of types in the filter model $\Filt$ explicit.

 \begin{lem} [] \label{lem:filtSem}
The following equations hold:
 \[ \begin{array}{r@{\Skip}c@{\Skip}l}
\SemF { `l x.M}{ e } 
	&=& 
	\filt_D\, \Set{\Inter_{i \ele I}(`d_i\prod `k_i\arrow `r_i) \ele \Lang_D \mid \Forall i \ele I \Pred [{ `k_i\arrow `r_i \ele \SemF { M }{ e[x\to (\filt_D `d_i)] } }] } 
	\\ [1mm]
\SemF { MN}{ e } &=& 
	\filt_D \Set{\Inter_{i \ele I}`k_i\arrow `r_i \ele \Lang_D\mid	\Forall i \ele I \ \Exists `d_i \ele \SemF { N }{ e }, `k_i \ele \Lang_C \Pred[ `d_i\prod `k_i\arrow `r_i \ele \SemF { M }{ e } ] }
	\\ [1mm]
\SemF { `m`a.\Cmd}{ e } &=& 
	\filt_D \Set{\Inter_{i \ele I}`k_i\arrow `r_i \ele \Lang_D\mid \Forall i \ele I \ \Exists `k'_i \ele \Lang_C \Pred[{ (`k'_i\arrow `r_i)\prod `k'_i \ele \SemF { \Cmd}{ e [`a\to (\filt_C `k_i)] } }] }
	\\ [1mm]
\SemF { [`a]M}{ e } &=& 
	\filt_C \Set{\Inter_{i \ele I}`d_i\prod `k_i \ele \Lang_C \mid \Forall i \ele I \Pred[ `d_i \ele \SemF { M }{ e } \And `k_i \ele e(`a) ] }
 \end{array} \]
 \end{lem}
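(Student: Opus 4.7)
The plan is to unfold each clause of Def.\skp\ref{def:interpretation} inside the filter model $\Filt=(\Filt_R,\Filt_D,\Filt_C)$, replacing every abstract operation by the concrete map realising it under the isomorphisms of Thm.\skp\ref{iso-theorem}: application by $\cdot$, prepending a denotation to a continuation by $\cons$, and decomposition of a continuation by $H=K^{-1}$. Since by Lem.\skp\ref{lem:SubtypeProp}\,(\ref{lem:SubtypeProp-4}) every non-$`w$ type in $\Lang_D$ is equivalent to a finite meet of arrows and by Lem.\skp\ref{lem:SubtypeProp}\,(\ref{lem:SubtypeProp-1}) every type in $\Lang_C$ is equivalent to a product $`d_1\prod\cdots\prod`d_n\prod`w$, a filter in $\Filt_D$ is determined by the arrows $`k\arrow`r$ it contains and a filter in $\Filt_C$ by the products $`d\prod`k$ it contains.

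Two elementary facts drive the calculation. First, for any $d\ele\Filt_D$ and $`k\ele\Lang_C$, $`r \ele d\cdot\filt_C`k$ iff $`k\arrow`r \ele d$: the forward direction uses contravariance of $\arrow$ in its first argument together with upward-closure of $d$, and the reverse is immediate from the definition of $\cdot$. Second, for any $d\ele\Filt_D$, $k\ele\Filt_C$ and $`d\prod`k\ele\Lang_C$, we have $`d\prod`k\ele d\cons k$ iff $`d\ele d$ and $`k\ele k$; this rests on the equivalence $(`d_1\prod`k_1)\inter(`d_2\prod`k_2)\simC(`d_1\inter`d_2)\prod(`k_1\inter`k_2)$ together with closure of $d$ and $k$ under intersection, which collapses finite meets of products to a single product.

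With these facts the $\lambda$, application and command cases are one-line calculations. For $`lx.M$: the first fact reduces $`d\prod`k\arrow`r \ele \SemF{`lx.M}{e}$ to $`r \ele (\SemF{`lx.M}{e})(\filt_C(`d\prod`k))$; since $H(\filt_C(`d\prod`k)) = \Pair{\filt_D`d}{\filt_C`k}$ by Rem.\skp\ref{rem:GK} together with $H\circ K = \textsf{id}$, the defining clause rewrites this as $`r\ele(\SemF{M}{e[x\to\filt_D`d]})(\filt_C`k)$, and one more appeal to the first fact gives $`k\arrow`r\ele\SemF{M}{e[x\to\filt_D`d]}$, as required. For $MN$: the defining clause becomes $\SemF{MN}{e}\cdot k = \SemF{M}{e}\cdot(\SemF{N}{e}\cons k)$, and the first fact together with the characterisation of the products in $\SemF{N}{e}\cons\filt_C`k$ supplied by the second fact yields the displayed formula. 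For $[`a]M$ the equation is immediate: $\SemF{[`a]M}{e} = K\Pair{\SemF{M}{e}}{e(`a)} = \SemF{M}{e}\cons e(`a)$, whose explicit description by definition of $\cons$ matches the right-hand side.

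The main technical obstacle is the $`m`a.\Cmd$ case. From $\SemF{`m`a.\Cmd}{e}\,k = d'\cdot k'$ with $\Pair{d'}{k'} = H(\SemF{\Cmd}{e[`a\to k]})$, membership $`r \ele \SemF{`m`a.\Cmd}{e}\cdot\filt_C`k$ requires the existence of some $`k'$ such that $`k'\arrow`r$ lies in the first component of $H$ and $`k'$ itself in the second; unpacking $H$, this amounts to having types $(`k'\arrow`r)\prod`k''$ and $`d'''\prod`k'$ both in $\SemF{\Cmd}{e[`a\to\filt_C`k]}$, with \emph{possibly different} right-hand product factors $`k'$ and $`k''$. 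These must be merged into a single $(`k'\arrow`r)\prod`k' \ele \SemF{\Cmd}{e[`a\to\filt_C`k]}$; this is achieved by intersecting the two types in the filter, applying the product-collapse identity, and using upward closure under $\seqC$. Combined with one more application of the first fact, this merging step yields the displayed characterisation.
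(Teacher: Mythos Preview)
Your proof is correct and follows the paper's own approach of unravelling the interpretation clauses through the isomorphisms $F,G,H,K$ of Thm.\ \ref{iso-theorem}. The paper only sketches the $`lx.M$ case, and does so by expressing the semantic function as a directed sup of step functions $\StepFun{\filt_C(`d_i\prod`k_i)}{\filt_R`r_i}$ and then reading these off as arrow types via $G$; your two ``elementary facts'' about $\cdot$ and $\cons$ are the pointwise reformulation of exactly this correspondence, so the arguments coincide. Your explicit merging step in the $`m`a.\Cmd$ case---intersecting $(`k'\arrow`r)\prod`k''$ with $`d'''\prod`k'$ inside the filter and using the product-collapse identity to extract $(`k'\arrow`r)\prod`k'$---is a detail the paper's sketch does not spell out, and it is the right way to close that case.
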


 \begin{Proof}
By unravelling definitions. For example:
 \[ \begin{array}[t]{rcl@{\quad}l}
\Sem{ `l x.M}^{\Filt^D }{ e }
	& = & \Semlambda d\cons k \ele \Filt_C \, . \, \SemF { M }{ e[x\to d] }{ k } \\
	& = & \bigcup \Set{ \join_{i \ele I}\StepFun{\filt_C `d_i\prod `k_i }{ \filt_R\psk `r_i} \mid \filt_R\psk `r_i \subseteq \SemF { M }{ e[x\to \filt_D `d_i] }{ (\filt_C `k_i) } } \\
	& = & \filt_D \Set{ \Inter_{i \ele I}(`d_i\prod `k_i\arrow `r_i) \ele \Lang_D \mid \Forall i \ele I \Pred[{ `k_i\arrow `r_i \ele \SemF { M }{ e[x\to (\filt_D `d_i)] } }] }
	\end{array} \]
using the fact that $ \Semlambda d\cons k \ele \Filt_C \, . \, \SemF { M }{ e[x\to d] }{ k }$ is continuous, hence it is the \emph{sup} of finite joins of the step functions 
$\StepFun{\filt_C `d_i\prod `k_i }{ \filt_R\psk `r_i} $.
Observe that $d\cons k = \filt_C \Set{\Inter_{i \ele I}`d_i\prod `k_i \mid \Forall i \ele I \Pred[ `d_i \ele d \And `k_i \ele k ] }$, that the set $\Set{\join_{i \ele I}\StepFun{\filt_C `d_i\prod `k_i }{ \filt_R\psk `r_i} \mid \ldots}$ is directed and hence its join is its union, and finally that $\filt_R\psk `r_i \subseteq \SemF { M }{ e[x\to \filt_D `d_i] }{ (\filt_C `k_i) }$ if and only if $`k_i\arrow `r_i \ele \SemF { M }{ e[x\to (\filt_D `d_i) } ]$.
% \qed
 \end{Proof}

 \begin{lem} \label{lem:filtSemTypes}
For $A = R,D,C$: if $`s\ele \Lang_A$ then, up to the isomorphisms $\Filt_R\Isom R$, $\Filt_D\prod\Filt_C \Isom \Filt_C$ and
$[\Filt_C\To\Filt_R]\Isom \Filt_D$,
we have:
 \begin{enumerate}
 \item \label{lem:filtSemTypes-1}
	$\TypeToComp_{\Filt_A}(`s) = \filt_A `s$,
 \item \label{lem:filtSemTypes-2}
	$\SemF {`s} = \Set{a \ele \Filt_A \mid `s \ele a}$.
 \end{enumerate}
 \end{lem}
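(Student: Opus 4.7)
The plan is to prove (\ref{lem:filtSemTypes-1}) first, by mutual structural induction on $\Lang_D$ and $\Lang_C$ (with $\Lang_R$ as the base layer), exploiting the explicit presentation of compact points of $\Filt_A$ as principal filters (Lem.\skp\ref{prop:formal-filt}) together with the isomorphisms of Thm.\skp\ref{iso-theorem} and the step-function identifications of Rem.\skp\ref{rem:GK}. Part (\ref{lem:filtSemTypes-2}) will then be almost immediate from Lem.\skp\ref{lem:typeInterpCompactCone} and Fact\skp\ref{filter lemma}.

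For (\ref{lem:filtSemTypes-1}) I would handle the cases following the grammar of $\Lang_A$. For $`s=`w$, we have $\TypeToComp_{\Filt_A}(`w) = \bot_{\Filt_A}$ by Def.\skp\ref{def:typesToComp}, and $\bot_{\Filt_A} = \filt_A `w$ because $`w$ is the top of $\seq_A$. For $`s=`y_a \ele \Lang_R$, under $\Filt_R \Isom R$ the compact $a\ele \Compact(R)$ is identified with $\filt_R `y_a$ by Rem.\skp\ref{rem:Filt_R}, giving $\TypeToComp_{\Filt_R}(`y_a) = \filt_R `y_a$. For an intersection $`s_1\inter`s_2$, the clause $\TypeToComp_{\Filt_A}(`s_1\inter`s_2) = \TypeToComp_{\Filt_A}(`s_1)\join \TypeToComp_{\Filt_A}(`s_2)$ combined with the induction hypothesis yields $\filt_A `s_1 \join \filt_A `s_2$, which equals $\filt_A(`s_1\inter`s_2)$ by Fact\skp\ref{filter lemma}\,(\ref{filter lemma join-meet}).

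The inductive arrow/product cases are where the isomorphisms do the real work. For $`s = `k\arrow `r \ele \Lang_D$, Rem.\skp\ref{rem:ThetaRank} gives $\TypeToComp_{\Filt_D}(`k\arrow `r) = \StepFun{\TypeToComp_{\Filt_C}(`k)}{\TypeToComp_{\Filt_R}(`r)}$; the induction hypothesis rewrites this as $\StepFun{\filt_C `k}{\filt_R `r}$, and Rem.\skp\ref{rem:GK} (which identifies $G\StepFun{\filt_C `k}{\filt_R `r} = \filt_D(`k\arrow `r)$) delivers $\filt_D(`k\arrow `r)$ under the isomorphism $\Filt_D\Isom[\Filt_C\To\Filt_R]$. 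The case $`s = `d\prod `k \ele \Lang_C$ is analogous: the clause for $\TypeToComp_{\Filt_C}$ produces the pair $\Pair{\filt_D `d}{\filt_C `k}$, and the second statement of Rem.\skp\ref{rem:GK} identifies it with $\filt_C(`d\prod`k)$ under $\Filt_C \Isom \Filt_D\prod \Filt_C$.

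Part (\ref{lem:filtSemTypes-2}) then follows in one step. By Lem.\skp\ref{lem:typeInterpCompactCone} applied in the filter model, $\SemF{`s} = \filt_{\Filt_A}\,\TypeToComp_{\Filt_A}(`s)$; substituting (\ref{lem:filtSemTypes-1}) gives $\SemF{`s} = \Set{b \ele \Filt_A \mid \filt_A `s \subseteq b}$, and by Fact\skp\ref{filter lemma}\,(\ref{filter lemma ele}) the condition $\filt_A `s \subseteq b$ is equivalent to $`s \ele b$. The main obstacle is purely notational: $\TypeToComp_A$ was originally introduced for arbitrary domains, and here we must instantiate $A$ by a filter lattice whose compact points are themselves principal filters, so care must be taken to read each use of $\filt$ in the right lattice (compact generators of $A$ versus principal filters in the type pre-order); once these identifications are made explicit via the isomorphisms $F,G,H,K$ of Def.\skp\ref{def:iso-maps}, the induction goes through routinely.
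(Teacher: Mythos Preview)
Your proposal is correct and follows essentially the same route as the paper's own proof: structural induction on types for part (\ref{lem:filtSemTypes-1}) using Rem.\skp\ref{rem:ThetaRank}, Rem.\skp\ref{rem:Filt_R}, and Rem.\skp\ref{rem:GK} for the arrow and product cases, then deriving part (\ref{lem:filtSemTypes-2}) from Lem.\skp\ref{lem:typeInterpCompactCone} and the equivalence $\filt_A`s\subseteq b \Leftrightarrow `s\ele b$. The only nitpick is that Fact\skp\ref{filter lemma}\,(\ref{filter lemma ele}) literally states just one implication; the converse is trivial since $`s\ele\filt_A`s$, so this is not a gap.
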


 \begin{Proof} Recall that the isomorphism $\Filt_R\Isom R$ is established by Prop.\skp\ref{prop:filter-iso} and described in detail in Rem.\skp\ref{rem:Filt_R}, and that $K:\Filt_D\prod\Filt_C \To \Filt_C$ and
$G:[\Filt_C\To\Filt_R]\To \Filt_D$ the are isomorphisms of Def.\skp\ref{def:iso-maps} and Thm.\skp\ref{iso-theorem}. Now
to prove part (\ref{lem:filtSemTypes-1}) we proceed by induction over the structure of types. %and by cases on $A$.
 \begin{description} \itemsep 2pt

 \item [$ `s \equiv `w $] 
Then $\TypeToComp_{\Filt_A}(`w) = \bot_{\Filt_A} = \filt_A`w$.

 \item [$ `s \equiv `s\inter`t $] 
%Then 
%
 $ \begin{array}[t]{rcl@{\quad}l}
\TypeToComp_{\Filt_A}(`s\inter`t) & = & \TypeToComp_{\Filt_A}(`s) \join \TypeToComp_{\Filt_A}(`t) & 
(\textrm{by Def.\skp\ref{def:typesToComp} and Rem.\skp\ref{rem:ThetaRank}}) \\ % [1mm]
	& = & \filt_A `s \join \filt_A`t & (\textrm{by induction}) \\ % [1mm]
	& = & \filt_A(`s\inter`t)
 \end{array} $

 \item [$ `s \equiv `y_a $] 
By Rem.\skp\ref{rem:Filt_R}, under the isomorphism $\Filt_R\Isom R$ the compact point $a\ele \Compact(R)$ is the image of $\filt_R`y_a$; hence, up to isomorphism, we have that $\TypeToComp_{\Filt_R}(`y_a)= a =\filt_R`y_a$.

 \item [$ `s \equiv `d\prod`k $] 
% Then
%
 $ \begin{array}[t]{rcl@{\quad}l}
 \TypeToComp_{\Filt_C}(`d\prod`k) & = & \Pair{\TypeToComp_{\Filt_D }(`d)}{\TypeToComp_{\Filt_C}(`k)} & 
	(\textrm{by Def.\skp\ref{def:typesToComp} and Rem.\skp\ref{rem:ThetaRank}}) \\ % [1mm]
	& = & \Pair{\filt_D`d}{\filt_C`k} & (\textrm{by induction}) \\ % [1mm]
	& = & (\filt_D`d) \cons \ (\filt_C`k) & (\textrm{up to the iso } K ) \\ % [1mm]
	& = & \filt_C(`d\prod`k) & (\textrm{by Rem.\skp\ref{rem:GK}}) 
 \end{array} $

 \item [$ `s \equiv `k\arrow `r $] 
%Then
%
 $ \begin{array}[t]{rcl@{\quad}l}
 \TypeToComp_{\Filt_D }(`k\arrow `r) & = & \StepFun{\TypeToComp_{\Filt_C}(`k)}{\TypeToComp_{\Filt_R}(`r)} &
	(\textrm{by Def.\skp\ref{def:typesToComp} and Rem.\skp\ref{rem:ThetaRank}}) \\ % [1mm]
	& = & \StepFun{\filt_C`k}{\filt_R`r} & (\textrm{by induction}) \\ % [1mm]
	& = & \filt_D(`k\arrow `r) & (\textrm{up to the iso $G$ and by Rem.\skp\ref{rem:GK}})
 \end{array} $
	
 \end{description}

To prove part (\ref{lem:filtSemTypes-2}), observe that for any $`s\ele \Lang_A$ and $a \ele \Filt_A$, $\uparrow_A`s = \Set{`t \ele \Lang_A \mid `s\seqA`t} \subseteq a$ if and only $`s\ele a$. 
Now by Lem.\skp\ref{lem:typeInterpCompactCone} we know that $\SemF {`s} = \filt_{\Filt_A} \TypeToComp_{\Filt_A}(`s)$, which, by part (\ref{lem:filtSemTypes-1}) of this
lemma, implies $\SemF {`s} = \filt_{\Filt_A}(\filt_A `s) = \Set{a \ele \Filt_A \mid \filt_A `s \subseteq a}$. 
By the remark above we have that $\Set{a \ele \Filt_A \mid \filt_A `s \subseteq a} = \Set{a \ele \Filt_A \mid `s \ele a}$. % \qed
 \end{Proof}

The next theorem, together with the Completeness Theorem (Thm.\skp\ref{thm:typeAssCompl}) that it implies, is the main result of this section, which states that the set of types that are assigned to terms and commands by the type assignment system coincides with their interpretation in the filter model.
Its proof essentially depends on Lem.\skp\ref{mu gen Lemma} and Lem.\skp\ref{lem:filtSem}.

 \begin{thm} \label{thm:filter-term interpretation}
Let $A = D,C$.
Given an environment $e \ele \EnvFilt$, then
 \[ \begin{array}{rcl}
	\SemF { T}{ e } &=& \Set{`s \ele \Lang_A \mid \Exists`G,`D \Pred [{ \modelsF e |= `G;`D \And \derLmu `G |- T : `s | `D }] }.
 \end{array} \]
 \end{thm}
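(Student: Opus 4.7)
The proof will proceed by establishing the two inclusions separately.

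For the inclusion $(\supseteq)$, I would argue directly from the Soundness Theorem (Thm.~\ref{thm:typeAssSound}). Suppose there exist $`G,`D$ with $\modelsF e \models `G;`D$ and $\derLmu `G |- T : `s | `D$ derivable. Applied to the filter model $\Filt$, soundness yields $\SemF{T}{e} \in \SemF{`s}$. By Lem.~\ref{lem:filtSemTypes}\,(\ref{lem:filtSemTypes-2}), $\SemF{`s} = \{a \in \Filt_A \mid `s \in a\}$, so $`s \in \SemF{T}{e}$, which is the desired conclusion.

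For the inclusion $(\subseteq)$, I would proceed by induction on the structure of $T$, using the explicit equations for $\SemF{\cdot}{e}$ provided by Lem.~\ref{lem:filtSem} together with the Generation Lemma (Lem.~\ref{gen lemma}) and the admissibility of Strengthening (Lem.~\ref{lem:strengthening}) to reconstruct a derivation. For the base case $T \equiv x$: if $`s \in e(x)$, take $`G = \{x{:}`s\}$ and $`D = \emptyset$; then $\modelsF e \models `G;`D$ since $e(x) \in \SemF{`s}$ by Lem.~\ref{lem:filtSemTypes}\,(\ref{lem:filtSemTypes-2}), and $(\Axiom)$ suffices. For $T \equiv `lx.M$: if $`s \in \SemF{`lx.M}{e}$, Lem.~\ref{lem:filtSem} supplies a finite index set $I$ and tuples $(`d_i,`k_i,`r_i)$ with $\Inter_{i \in I}(`d_i\prod `k_i\arrow `r_i) \seqD `s$ and $`k_i\arrow `r_i \in \SemF{M}{e[x\to \filt_D`d_i]}$ for each $i$. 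By IH applied to each $M$-component, there are $`G_i,`D_i$ with $\modelsF e[x\to \filt_D `d_i] \models `G_i;`D_i$ and $\derLmu `G_i |- M : `k_i\arrow `r_i | `D_i$. Since $`d_i \in \filt_D`d_i \in \SemF{`G_i(x)}$, it follows $`d_i \seqD `G_i(x)$, so by $(\Strength)$ we may replace the assumption on $x$ by $`d_i$, then apply $(\Abs)$ to obtain $\derLmu `G_i\except x |- `lx.M : `d_i\prod `k_i\arrow `r_i | `D_i$. Finally setting $`G = \Inter_i (`G_i\except x)$ and $`D = \Inter_i `D_i$, applying strengthening to align bases/contexts, then $(\inter)$ and $(\seq)$ yields $\derLmu `G |- `lx.M : `s | `D$; and $\modelsF e \models `G;`D$ follows because $\SemF{\inter_i`s_i} = \bigcap_i \SemF{`s_i}$. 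The cases of $MN$, $`m`a.\Cmd$, and $[`a]M$ are handled by analogous manipulations, using in each case the corresponding clause of Lem.~\ref{lem:filtSem}, the appropriate typing rule ($(\App)$, $(\MAbs)$, or $(\TCmd)$), and the same bookkeeping with intersection, strengthening, and subsumption.

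The main obstacle I expect is the bookkeeping when combining the finitely many IH instances into a single derivation: each sub-typing arrives with its own basis and context, and one must verify that the intersection $\Inter_i `G_i$ (resp.\ $\Inter_i `D_i$) is again satisfied by $e$, adjust via $(\Strength)$ so that all sub-derivations share a common basis/context before applying $(\inter)$, and in the $`m`a.\Cmd$ and $[`a]M$ cases carefully track the side-conditions on $`D(`a)$ (in particular invoking the $(\TCmd_2)$ variant of Rem.~\ref{rem:TCmd-rule} when $`a \notin \dom(`D)$). Once that machinery is in place, the induction closes uniformly across all four syntactic cases.
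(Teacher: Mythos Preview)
Your proposal is correct. The $(\subseteq)$ direction follows exactly the paper's argument: structural induction on $T$, unwinding the explicit equations of Lem.~\ref{lem:filtSem}, applying the induction hypothesis componentwise, and then reassembling a single derivation via $(\Strength)$, $(\inter)$, and $(\seq)$. Your remarks about the bookkeeping (aligning bases and contexts by intersection, tracking the $`D(`a)$ side-conditions) are precisely the points the paper works through in each case.

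The difference lies in the $(\supseteq)$ direction. The paper does \emph{not} invoke the Soundness Theorem; instead it proves both inclusions simultaneously within the same structural induction, handling the ``vice-versa'' part of each case by the Generation Lemma (Lem.~\ref{gen lemma}) and the induction hypothesis. Your route---one appeal to Thm.~\ref{thm:typeAssSound} together with Lem.~\ref{lem:filtSemTypes}\,(\ref{lem:filtSemTypes-2})---is shorter and avoids repeating what is essentially the soundness argument inside every inductive case. There is no circularity, since Thm.~\ref{thm:typeAssSound} is established earlier and independently. The paper's approach has the minor advantage of being self-contained (it does not rely on the abstract soundness result and so keeps the whole argument at the level of filters and derivations), but your approach buys a cleaner proof with half the case analysis.
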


 \begin{proof} Because of the logical rules, the set $ \Set{`s \ele \Lang_A \mid \Exists`G,`D \Pred [ \modelsF e |= `G;`D \And \derLmu `G |- T : `s | `D ] }$ is a filter in $\Filt_A$, for $A=D,C$. 
To prove that this filter coincides with $ \SemF { T}{ e }$ we proceed by induction over the structure of terms.

 \begin{description} \itemsep 2pt

 \item [$ T\equiv x $] 
Then $ \SemF { x}{ e } = e\,x$.
By definition% of $\models_{\Filt}$
, $\modelsF e |= `G;\emptyset \Iff e\,x \ele \SemF {`G(x)}$. 
By Lem.\skp\ref{lem:filtSemTypes}\,(\ref{lem:filtSemTypes-2}) we know that $ \SemF {`G(x)} = \Set{d\ele \Filt_D \mid `G(x)\ele d}$, so that $\modelsF e |= `G;\emptyset$ is equivalent to $`G(x)\ele e\,x$.
On the other hand, by Lem.\skp\ref{mu gen Lemma}%\,(\ref{mu gen Lemma 0})
, we have that $\derLmu `G |- x : `d | {} $ if and only if $`G(x)\seqD `d$, hence $\derLmu `G |- x : `d | {} $ if and only if $`d\ele e\,x$.
	
 \item [$ T\equiv `l x.M $] 
For $`d \ele \SemF { `l x.M}{ e }$, by Lem.\skp\ref{lem:filtSem} %\,(\ref{lem:filtSem-LAbs})
there exist $I$ such and $`d_i$, $`k_i$, and $`r_i$ ($i \ele I$) such that $`k_i\arrow `r_i \ele \SemF { M }{ e[x\to (\filt_D `d_i)] }$. 
By induction there exist $`G_i,`D _i ~ (i \ele I)$ such that $ \modelsF e[x\to (\filt_D `d_i)] |= `G_i;`D_i $ and $\derLmu `G_i |- M : `k_i\arrow `r_i | `D_i $. 
Let $`d'_i = `G_i(x)$: then by rule $(\LAbs)$ we have $\derLmu `G_i\except x |- `l x.M : `d'_i\prod `k_i\arrow `r_i | `D_i $. 

On the other hand, from $ \modelsF e[x\to (\filt_D `d_i)] |= `G_i;`D_i $ we know that $`G_i(x) = `d'_i $; since $`d'_i \ele \filt_D `d_i$, also $`d_i\seqD `d'_i$.
Then $`d_i\prod `k_i\seqC `d_i'\prod `k_i$ follows by the co-variance of $\prod $, and 
$`d'_i\prod `k_i\arrow `r_i \seqD `d_i\prod `k_i\arrow `r_i$ by the contra-variance of the arrow in its first argument. 
Hence, by applying rule $(\seq)$, for all $i \ele I$, we obtain $ \derLmu `G_i\except x |- `l x.M : `d_i\prod `k_i\arrow `r_i | `D_i $. 

Take $`G = \Inter_{i \ele I}`G_i\except x$ and $`D = \Inter_{i \ele I}`D_i$: then $ \derLmu `G |- `l x.M : `d_i\prod `k_i\arrow `r_i | `D $ for all $i \ele I$ by applying rule $(\Weak)$, and therefore $ \derLmu `G |- `l x.M : \Inter_{i \ele I}`d_i\prod `k_i\arrow `r_i | `D $ by applying rule $(\inter)$ and $ \derLmu `G |- `l x.M : `d | `D $ by rule $(\seq)$. 
Observe that, for all $i \ele I$, $x\notele \dom(`G_i\except x)$ and consequently $x\notele \dom(`G)$, so that, for all $i \ele I$, $ \modelsF e[x\to (\filt_D `d_i)] |= `G_i;`D_i $ implies $ \modelsF e |= `G_i\except x;`D_i $ and so $ \modelsF e |= `G;`D $ as required.
	
Vice-versa, if $ \derLmu `G |- `lx.M : `d | `D $ and $ \modelsF e |= `G;`D $ then, by Lem.\skp\ref{mu gen Lemma}%\,(\ref{mu gen Lemma i})
, there exist % $I$ such that for all $i \ele I$ there exist 
$`d_i$, $`k_i$, and $`r_i$ ($i \ele I$) such that $ \derLmu `G,x{:}`d_i |- M : `k_i\arrow `r_i | `D $ and $\Inter_{i \ele I} `d_i\prod `k_i\arrow `r_i \seqD `d$.
Observe that $\modelsF e |= `G;`D $ implies $\modelsF e[x\to (\filt_D `d_i)] |= `G, x{:}`d_i;`D $ for all $i \ele I$.
By induction, $`k_i\arrow `r_i \ele \SemF { M }{ e[x\to (\filt_D `d_i)] }$, so by Lem.\skp\ref{lem:filtSem} %\,(\ref{lem:filtSem-LAbs})
 $\Inter_{i \ele I}(`d_i\prod `k_i\arrow `r_i) \ele \SemF { `l x.M}{ e } $. 
Now $\SemF { `l x.M}{ e }$ is a filter and since $\Inter_{i \ele I} `d_i\prod `k_i\arrow `r_i \seqD `d$, we conclude that $`d \ele \SemF { `l x.M}{ e }$. 

 \item [$ T\equiv MN $] 
If $`d \ele \SemF { MN}{ e }$ then, by Lem.\skp\ref{lem:filtSem}%\,(\ref{lem:filtSem-App})
, there exist $I$ and $`k_i$, $`r_i$ and $`d_i \ele \SemF { N }{ e }$ ($i \ele I$) such that $`d_i\prod `k_i\arrow `r_i \ele \SemF { M }{ e }$ and $\Inter_{i \ele I}`k_i\arrow `r_i\seqD `d$. 
By induction, for all $i \ele I$ there exist $`G_{i,j},`D _{i,j}$ for $j=1,2$, such that $\modelsF e |= `G_{i,j},`D _{i,j} $ and $ \derLmu `G_{i,1} |- M : `d_i\prod `k_i\arrow `r_i | `D_{i,1} $ and $ \derLmu `G_{i,2} |- N : `d_i | `D_{i,2} $. 
Take $`G = \Inter_{i,j}`G_{i,j}$ and $`D = \Inter_{i,j}`D_{i,j}$, then $`G\seqD `G_{i,j}$ and $`D\seqC `D_{i,j}$ so that $\modelsF e |= `G;`D $.
By applying rule $(\Strength)$, for all $i \ele I$ we have $ \derLmu `G |- M : `d_i\prod `k_i\arrow `r_i | `D $ and $ \derLmu `G |- N : `d_i | `D $.
By applying rule $(\App)$ we have $ \derLmu `G |- MN : `k_i\arrow `r_i | `D $; we obtain $ \derLmu `G |- MN : \Inter_{i \ele I}`k_i\arrow `r_i | `D $ by applying rule $(\inter)$, and $ \derLmu `G |- MN : `d | `D $ by applying rule $(\seq)$.
	
Vice-versa, assume $ \derLmu `G |- MN : `d | `D $ and $\modelsF e |= `G;`D $. 
By Lem.\skp\ref{mu gen Lemma}%\,(\ref{mu gen Lemma ii})
, there exist $I$ and $`d_i$, $`k_i$, and $`r_i$ ($i \ele I$) such that $ \derLmu `G |- M : `d_i\prod k_i\arrow `r_i | `D $ and $ \derLmu `G |- N : `d_i | `D $ and $\Inter_{i \ele I}`k_i\arrow `r_i \seqD `d$.
For all $i \ele I$, by induction we have $`d_i\prod k_i\arrow `r_i \ele \SemF { M }{ e }$ and $`d_i \ele \SemF { N }{ e }$, and so, by Lem.\skp\ref{lem:filtSem}%\,(\ref{lem:filtSem-App})
, $k_i\arrow `r_i \ele \SemF { MN }{ e }$. 
Since $\SemF { MN }{ e }$ is a filter, we have $\Inter_{i \ele I}`k_i\arrow `r_i \ele \SemF { MN }{ e }$ and therefore $`d \ele \SemF { MN }{ e }$.

 \item [$ T\equiv `m`a.\Cmd $] 
If $`d \ele \SemF { `m`a.\Cmd}{ e }$ then, by Lem.\skp\ref{lem:filtSem}%\,(\ref{lem:filtSem-MAbs})
, there exist $I$ and $`k_i$, $`r_i$, and $`k'_i$ ($i \ele I$) such that $(`k'_i\arrow `r_i)\prod `k'_i \ele \SemF { \Cmd}{ e[`a\to (\filt_C `k_i)] }$. 
For all $i \ele I$, by induction there exist $`G_i$ and $`D_i$ 
such that $\modelsF e[`a\to (\filt_C `k_i)] |= `G_i;`D_i$ and $ \derLmu `G_i |- {\Cmd} : (`k'_i\arrow `r_i)\prod `k'_i | `D_i $. 
Let $`G=\bigwedge_{i\ele I}`G_i$ and, for all $i \ele I$, $`k_i = `D_i(`a)$ and $`D'_i = `D_i\except`a$, then $ \derLmu `G |- {\Cmd} : (`k'_i\arrow `r_i)\prod `k'_i | `a{:}`k_i,`D'_i $ so that $ \derLmu `G |- `m`a.{\Cmd} : `k_i\arrow `r_i | `D'_i $. 
Take $`D = \Inter_{i \ele I} `D'_i$; then $`D\seqC `D'_i$ for all $i \ele I$, so by applying rule $(\Strength)$ we obtain $ \derLmu `G |- {`m`a.\Cmd} : `k_i\arrow `r_i | `D $, from which we derive $ \derLmu `G |- {`m`a.\Cmd} : `d | `D $ by applying rules $(\inter)$ and $(\seq)$.
On the other hand, for all $i \ele I$, since $`a\notele \dom(`D'_i)$ we have $\modelsF e[`a\to (\filt_C `k_i)] |= `G;`D_i$ which implies $\modelsF e |= `G;`D'_i$, so that $`D(`a) = `w$. We conclude that $\modelsF e |= `G;`D $, as desired.
	
Vice-versa, assume $ \derLmu `G |- {`m`a.\Cmd} : `d | `D $ and $\modelsF e |= `G;`D $.
Then by Lem.\skp\ref{mu gen Lemma}%\,(\ref{mu gen Lemma iii})
, there exists $I$ and $`k_i$, $`r_i$, and $`k'_i$ ($i \ele I$) such that $ \derLmu `G |- {\Cmd} : (`k_i\arrow `r_i)\prod `k_i | `a{:}`k'_i,`D $, and $\Inter_{i \ele I}`k'_i\arrow `r_i \seqD `d$. 
But if $\modelsF e |= `G;`D $ then $\modelsF e[`a\to (\filt_C `k_i)] |= `G;`a{:}`k'_i,`D $; 
then by induction, for all $i \ele I$, $(`k_i\arrow `r_i)\prod `k_i \ele \SemF { \Cmd}{ e[`a\to (\filt_C `k_i)] }$. 
Since $\Inter_{i \ele I}`k'_i\arrow `r_i \seqD `d$, by Lem.\skp\ref{lem:filtSem}%\,(\ref{lem:filtSem-MAbs})
, we conclude that $`d \ele \SemF { `m`a.\Cmd}{ e }$.
	
 \item [$ {T\equiv [`a]M} $] 
If $`k \ele \SemF { [`a]M}{ e }$ then by Lem.\skp\ref{lem:filtSem}%\,(\ref{lem:filtSem-Cmd})
, there exist $I$ and $`d_i$, $`k_i$, and $`d_i \ele \SemF { M }{ e }$ ($i \ele I$) such that $`k_i \ele e(`a)$ and $\Inter_{i \ele I}`d_i\prod `k_i \seqC `k$.
For all $i \ele I$, by induction there exist $`G_i,`D _i$ 
such that $\modelsF e |= `G_i;`D_i$ and $`D_i(`a) \seqC `k_i$ and $ \derLmu `G_i |- M : `d_i | `D_i $. Let $`G=\bigwedge_{i\ele I}`G_i$.
Then, for all $i \ele I$, $ \derLmu `G |- [`a]M : `d_i\prod `k_i | `D_i $; take $`D = \Inter_{i \ele I}`D_i$, then for all $i \ele I$, by applying rule $(\Strength)$, we obtain $ \derLmu `G |- [`a]M : `d_i\prod `k_i | `D $. 
We obtain $ \derLmu `G |- [`a]M : \Inter_{i \ele I}`d_i\prod `k_i | `D $ by applying rule $(\inter)$ and then $ \derLmu `G |- [`a]M : `k | `D $ by applying rule $(\seq)$. 
Since $\modelsF e |= `G;`D_i$ and $`D \seqC `D_i$ for all $i \ele I$, we conclude that $\modelsF e |= `G;`D $.
	
Vice-versa, if $ \derLmu `G |- [`a]M : `k | `D $ and $\modelsF e |= `G;`D $ then, by Lem.\skp\ref{mu gen Lemma}%\,(\ref{mu gen Lemma iv})
, there exists $I$ and $`d_i$ and $`k_i$ ($i \ele I$) such that $ \derLmu `G |- M : `d_i | `D $ and $`D(`a) \seqC `k_i $, and $\Inter_{i \ele I} `d_i\prod `k_i \seqC `k$.
By induction $`d_i \ele \SemF { M }{ e }$; from $\modelsF e |= `G;`D $ we have that $`k_i \ele e(`a)$ for all $i \ele I$.
Then $\Inter_{i \ele I}`d_i\prod `k_i \ele \SemF { [`a]M}{ e }$ and therefore that $`k \ele \SemF { [`a]M}{ e }$ since the last set is a filter.
\qed

 \end{description}
 \end{proof}

 \begin{defi} \label{def:e_GD }
Given a basis $`G$ and a context $`D $, we define the environment $e_{`G;`D } \ele \EnvFilt$ by: 
 \[ \begin{array}{rcl}
e_{`G;`D }(x) &\ByDef& \filt_D `G(x) \\
e_{`G;`D }(`a) &\ByDef& \filt_D `D(`a)
 \end{array} \]
 \end{defi}

Because of the definition of $`G(x)$ and $`D(`a)$ for $x\ele \TVar$ and $`a\ele \CVar$, $e_{`G;`D } \ele \EnvFilt$ implies:
 \[ \begin{array}{rcl}
e_{`G;`D }(x) &=& 
 \begin{cases}{ll}
	\filt_D `d & (\textrm{if }x{:}`d \ele `G) \\
	\filt_D `w & (\textrm{otherwise})
 \end{cases}
\\ [4mm]
e_{`G;`D }(`a) &=& 
 \begin{cases}{ll}
	\filt_C `k & (\textrm{if }`a{:}`k \ele `D) \\
	\filt_C `w & (\textrm{otherwise})
 \end{cases}
 \end{array} \]

For this environment, we can show:

 \begin{lem} \label{lem:EnvStrength}
% For any $`G',`D'$: i
If $ \modelsF e_{`G;`D } |= `G';`D' $, then $ `G\seqD`G' $ and $ `D\seqC`D'$.
 \end{lem}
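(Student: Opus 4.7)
The statement unpacks into a routine chain of definitional rewrites, so my plan is to set up the unfolding cleanly rather than compute anything non-trivial.

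First I would rewrite the hypothesis $\modelsF e_{`G;`D} |= `G';`D'$ using Def.~\ref{def:judegmentSat}, obtaining the pointwise conditions $e_{`G;`D}(x) \ele \SemF{`G'(x)}^D$ for every $x \ele \TVar$, and $e_{`G;`D}(`a) \ele \SemF{`D'(`a)}^C$ for every $`a \ele \CVar$. The key observation is that in the filter model, membership of a filter in the interpretation of a type reduces to mere membership of the type in the filter: by Lem.~\ref{lem:filtSemTypes}\,(\ref{lem:filtSemTypes-2}), we have $\SemF{`s}^A = \Set{a \ele \Filt_A \mid `s \ele a}$, so the conditions become $`G'(x) \ele e_{`G;`D}(x)$ and $`D'(`a) \ele e_{`G;`D}(`a)$.

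Next I would plug in the explicit description of $e_{`G;`D}$ from Def.~\ref{def:e_GD }, namely $e_{`G;`D}(x) = \filt_D `G(x)$ and $e_{`G;`D}(`a) = \filt_C `D(`a)$ (recall that we extend $`G$ and $`D$ to total functions by returning $`w$ outside their domains). By the definition of the principal filter $\filt_A`s = \Set{`t \mid `s \seqA `t}$, membership $`G'(x) \ele \filt_D `G(x)$ is exactly $`G(x) \seqD `G'(x)$, and similarly $`D'(`a) \ele \filt_C `D(`a)$ is exactly $`D(`a) \seqC `D'(`a)$. These hold for every $x$ and every $`a$, which is precisely the componentwise condition given in Def.~\ref{def:ext-leq}\,(\ref{def:ext-leq-iii}) defining $`G \seqD `G'$ and $`D \seqC `D'$.

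I do not expect a real obstacle; the only point that needs mentioning is the boundary case of variables (resp.\ names) lying outside $\dom(`G)$ or $\dom(`G')$ (resp.\ $\dom(`D)$, $\dom(`D')$), where the extended values are $`w$. These cases are harmless because $`G(x) = `w \seqD `G'(x)$ forces $`G'(x) \simD `w$ and is automatically consistent with the componentwise definition of $\seqD$ on bases (and analogously for contexts). So the proof really is a three-line chain: Def.~\ref{def:judegmentSat}, then Lem.~\ref{lem:filtSemTypes}\,(\ref{lem:filtSemTypes-2}) together with Def.~\ref{def:e_GD }, then the definition of principal filter, yielding Def.~\ref{def:ext-leq}\,(\ref{def:ext-leq-iii}).
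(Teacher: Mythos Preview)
Your proposal is correct and follows essentially the same route as the paper: unfold Def.~\ref{def:judegmentSat} together with Def.~\ref{def:e_GD }, use Lem.~\ref{lem:filtSemTypes}\,(\ref{lem:filtSemTypes-2}) to rewrite $\filt_A`s \ele \SemF{`t}$ as $`t \ele \filt_A`s$, and read off the pointwise inequalities defining $`G\seqD`G'$ and $`D\seqC`D'$. Your explicit remark about the boundary case (variables or names outside the domains) is slightly more careful than needed, since the uniform treatment of $`G$ and $`D$ as total $`w$-valued functions already absorbs it, but it does no harm.
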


 \begin{Proof}
By definition, if $\modelsF e_{`G;`D } |= `G',`D' $ then:
 \[ \begin{array}{rclcl}
e_{`G;`D }(x) &=& \filt_D `G(x) &\ele& \SemF { `G'(x) } \\
e_{`G;`D }(`a) &=& \filt_C `D(`a) &\ele& \SemF { `D'(`a) }
 \end{array} \] 
for all $x\ele \TVar$ and $`a\ele \CVar$.
From Lem.\skp\ref{lem:filtSemTypes}\,(\ref{lem:filtSemTypes-2}) we know that
$\filt_D `G(x) \ele \SemF { `G'(x) }$ if and only if $`G'(x) \ele \filt_D `G(x)$, that is $`G(x) \seqD `G'(x)$. 
Similarly $\filt_C `D(`a) \ele \SemF { `D'(`a) }$ if and only if $`D(`a)\seqC`D'(`a)$. Hence $`G\seqD`G'$ and $`D\seqC`D'$.
 \end{Proof}

This last result implies that every type which is an element of the interpretation of a term is derivable for that term:

 \begin{lem} \label{aux-lemma}
If $ `d \ele \SemF { M }{e_{`G;`D }} $, then $ \derLmu `G |- M : `d | `D $.
 \end{lem}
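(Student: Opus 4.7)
The plan is to apply the three preceding results in sequence: Thm.\skp\ref{thm:filter-term interpretation}, Lem.\skp\ref{lem:EnvStrength}, and the admissibility of strengthening (Lem.\skp\ref{lem:strengthening}). No induction or fresh syntactic analysis is needed.

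First, I would apply Thm.\skp\ref{thm:filter-term interpretation} to the hypothesis $`d \ele \SemF{M}{e_{`G;`D}}$, which directly furnishes a basis $`G'$ and a name context $`D'$ such that $\modelsF e_{`G;`D} |= `G';`D'$ and $\derLmu `G' |- M : `d | `D' $. Note that in general one does not have $`G' = `G$ or $`D' = `D $, so a further step is required to realign the bases and contexts.

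Next, I would invoke Lem.\skp\ref{lem:EnvStrength} on the satisfiability statement $\modelsF e_{`G;`D} |= `G';`D'$ to obtain $`G \seqD `G'$ and $`D \seqC `D'$. This is the critical use of the specific shape of $e_{`G;`D}$ as defined in Def.\skp\ref{def:e_GD}: each variable and each name is interpreted by a principal filter determined by $`G$ and $`D$, so satisfying another pair $(`G',`D')$ forces the corresponding pointwise inequalities.

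Finally, applying the strengthening rule (Lem.\skp\ref{lem:strengthening}) to $\derLmu `G' |- M : `d | `D' $ with $`G \seqD `G'$ and $`D \seqC `D'$ yields the desired derivation $\derLmu `G |- M : `d | `D $. There is no substantial obstacle here, as each ingredient is already in place; the lemma is essentially a bookkeeping corollary combining the filter-theoretic characterisation of term interpretation with the admissibility of strengthening.
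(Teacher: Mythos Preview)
Your proposal is correct and follows exactly the same route as the paper: apply Thm.\skp\ref{thm:filter-term interpretation} to obtain some $`G',`D'$ with $\modelsF e_{`G;`D} |= `G';`D'$ and $\derLmu `G' |- M : `d | `D' $, use Lem.\skp\ref{lem:EnvStrength} to get $`G\seqD`G'$ and $`D\seqC`D'$, and conclude by the admissible strengthening rule of Lem.\skp\ref{lem:strengthening}.
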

 \begin{proof}
$ \def \arraystretch{1.1} \begin{array}[t]{lcll}
`d \ele \SemF { M }{ e_{`G;`D } } 
	& \Then & 
 \Exists`G',`D' \Pred [ e_{`G;`D } \models `G';`D' \And \derLmu `G' |- M : `d | `D' ] 
	& (\textrm{Thm.\skp\ref{thm:filter-term interpretation}}) \\
	& \Then & 
 \Exists`G',`D' \Pred [ `G\seqD `G' \And `D\seqC `D' \And \derLmu `G' |- M : `d | `D' ] 
	& (\textrm{Lem.\skp\ref{lem:EnvStrength}}) \\
	& \Then & 
 \derLmu `G |- M : `d | `D 
	& \kern-7.5mm (\textrm{rule $(\Strength)$, Lem.\skp\ref{lem:strengthening}}) % \quad \qed
 \end{array} $
 \arrayqed
 \end{proof}

This lemma would not hold without case $(\TCmd_2)$ of rule $(\TCmd)$. As explained in Rem.\skp\ref{rem:TCmd-rule} we should require that $`D\seqC `D'$ implies $\dom(`D) \supseteq \dom(`D')$, which is not always the case.

We can now prove the completeness theorem for our type assignment system.

 \begin{thm}[Completeness] \label{thm:typeAssCompl}
Let ${\ModM} = (R,D,C)$ be a $`l`m$-model. If $ \modelsM `G |= M : `d | `D $, then $ \derLmu `G |- M : `d | `D $.
 \end{thm}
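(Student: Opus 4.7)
The plan is to reduce the claim to the filter model, where the semantic machinery has been arranged so that membership in the interpretation of a type and derivability coincide. Since by Theorems~\ref{thm:iso-theorem} and \ref{iso-theorem} the triple $\Filt = (\Filt_R,\Filt_D,\Filt_C)$ is itself a $`l`m$-model, we can instantiate the assumption $\modelsLmu `G |= M : `d | `D$ at $\ModM = \Filt$; the canonical environment $e_{`G;`D}$ from Definition~\ref{def:e_GD } will play the role of a generic satisfying environment.

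First I would check that $e_{`G;`D} \modelsF `G;`D$. By construction, $e_{`G;`D}(x) = \filt_D `G(x)$, which evidently contains $`G(x)$; Lemma~\ref{lem:filtSemTypes}(\ref{lem:filtSemTypes-2}) identifies $\SemF{`G(x)}$ with the set of filters containing $`G(x)$, and so $e_{`G;`D}(x) \ele \SemF{`G(x)}$. The argument for names is entirely analogous, using $e_{`G;`D}(`a) = \filt_C `D(`a)$.

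Feeding this environment into the instantiated hypothesis then gives $\SemF{M}{e_{`G;`D}} \ele \SemF{`d}$. Invoking Lemma~\ref{lem:filtSemTypes}(\ref{lem:filtSemTypes-2}) a second time, applied now to $`d$, rewrites this as $`d \ele \SemF{M}{e_{`G;`D}}$, and Lemma~\ref{aux-lemma} immediately hands over the required derivation $\derLmu `G |- M : `d | `D$. The case when $M$ is a command $\Cmd$ (and $`d$ is replaced by a continuation type $`k$) proceeds identically, working in $\Filt_C$ instead of $\Filt_D$.

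No substantial obstacle is expected: the theorem is essentially a packaging of work already in place. The genuine content lives in Theorem~\ref{thm:filter-term interpretation} (identifying the denotation of a term or command with the filter of its inferred types) and in the passage from it to Lemma~\ref{aux-lemma} via Lemma~\ref{lem:EnvStrength}, which in turn relies on admissibility of strengthening (Lemma~\ref{lem:strengthening}) to bring the possibly larger bases and contexts produced by Theorem~\ref{thm:filter-term interpretation} back down to the prescribed $`G$ and $`D$.
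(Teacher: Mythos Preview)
Your proposal is correct and matches the paper's argument step for step: pass to the filter model, verify that the canonical environment $e_{`G;`D}$ satisfies $`G;`D$, and chain Definition~\ref{def:judegmentSat}, Lemma~\ref{lem:filtSemTypes}(\ref{lem:filtSemTypes-2}) and Lemma~\ref{aux-lemma}. One small wording point: the hypothesis $\modelsM `G |= M : `d | `D$ is about the fixed $\ModM$, not a universally quantified statement you can ``instantiate'' at $\Filt$; what actually licenses working in $\Filt$ is the isomorphism $\ModM \simeq \Filt$ from Theorem~\ref{thm:iso-theorem} (which is how the paper phrases the step), so it is transport along an isomorphism rather than specialisation.
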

	
 \begin{proof}
Let $ \modelsM `G |= M : `d | `D $: since ${\ModM}$ is isomorphic to the filter model $\Filt = (\Filt_R,\Filt_D,\Filt_C)$,
we have that $\modelsF `G |= M : `d | `D $. By construction, $\modelsF e_{`G;`D } |= `G;`D $, and therefore:
 \[ \def \arraystretch{1.1} \begin{array}{lcll}
 \modelsF `G |= M : `d | `D 
	& \Then & 
 \SemF { M }{ e_{`G;`D } } \ele \SemF {`d} & (\textrm{Def.\skp\ref{def:judegmentSat}}) \\
	& \Then & 
`d \ele \SemF { M }{ e_{`G;`D } } & (\textrm{Lem.\skp\ref{lem:filtSemTypes}\,(\ref{lem:filtSemTypes-2})}) \\
	& \Then &
 \derLmu `G |- M : `d | `D & 
	(\textrm{Lem.\skp\ref{aux-lemma}}) 
 \end{array} \]
\arrayqed[-22pt]

 \end{proof}

 \section{Closure under conversion} \label{sec:closureUnderConv} 

In this section, we will show that our notion of type assignment is closed under conversion, \ie~is closed both under subject reduction and expansion.
We will first show that this follows from the semantical results we have established in the previous section; then we show the same result via a syntactical proof. 
The latter is more informative about the structure of derivations in our system; also we establish the term substitution and, more importantly, the structural substitution lemmas (Lem.\skp\ref{lem:substitution}\skp(\ref{lem:substitution-term}) and Lem.\skp\ref{structural substitution lemma} respectively).

We begin with the abstract proof, which crucially depends on Lem.\skp\ref{aux-lemma} and hence on Thm.\skp\ref{thm:filter-term interpretation}.

 \begin{thm}[Closure under conversion] \label{thm:convInvariance}
Let $ \derLM M = N $. If $ \derLmu `G |- M : `d | `D $, then $ \derLmu `G |- N : `d | `D $.
 \end{thm}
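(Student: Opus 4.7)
The plan is to chain together the semantic machinery established in the previous sections, using the filter model as the bridge between syntactic type derivability and $`l`m$-convertibility. The key observation is that we now have three ingredients at our disposal: (i) soundness of the interpretation with respect to $\LamMuTheory$ (Theorem~\ref{thm:sem_soundness}), which guarantees $\SemF{M} = \SemF{N}$ whenever $\derLM M = N$, in \emph{any} $`l`m$-model and in particular in the filter model $\Filt$; (ii) the characterisation of the filter interpretation as the set of derivable types (Theorem~\ref{thm:filter-term interpretation}); and (iii) Lemma~\ref{aux-lemma}, which tells us that whenever $`d \ele \SemF{M}{e_{`G;`D}}$, we can actually derive $\derLmu `G |- M : `d | `D$.

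First I would take the derivation $\derLmu `G |- M : `d | `D$ and pass from syntax to semantics. By the Soundness Theorem (Theorem~\ref{thm:typeAssSound}) applied to the filter model $\Filt = (\Filt_R, \Filt_D, \Filt_C)$, one obtains $\modelsF `G |= M : `d | `D$. Instantiating this with the canonical environment $e_{`G;`D}$ of Definition~\ref{def:e_GD }, which satisfies $\modelsF e_{`G;`D} |= `G; `D$ by construction, we conclude $\SemF{M}{e_{`G;`D}} \ele \SemF{`d}$. Then by Lemma~\ref{lem:filtSemTypes}\,(\ref{lem:filtSemTypes-2}), this is equivalent to $`d \ele \SemF{M}{e_{`G;`D}}$.

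Next I would invoke convertibility. Since $\derLM M = N$, Theorem~\ref{thm:sem_soundness} applied in the filter model yields $\SemF{M} = \SemF{N}$ as functions on environments; in particular $\SemF{M}{e_{`G;`D}} = \SemF{N}{e_{`G;`D}}$, so $`d \ele \SemF{N}{e_{`G;`D}}$. Finally Lemma~\ref{aux-lemma} delivers the desired conclusion $\derLmu `G |- N : `d | `D$. The same argument applies to commands, replacing $M, N, `d$ with $\Cmd_1, \Cmd_2, `k$ and appealing to the $C$-variants of the cited lemmas.

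There is no real obstacle here: the proof is a one-line composition of previously established results. The only point that deserves care is making sure that the equality $\SemF{M}{e_{`G;`D}} = \SemF{N}{e_{`G;`D}}$ actually follows from $\derLM M = N$ in the filter model specifically, but this is immediate because Theorem~\ref{thm:sem_soundness} was proved for an arbitrary $`l`m$-model and $\Filt$ is one (by Theorem~\ref{thm:iso-theorem} together with Theorem~\ref{iso-theorem}). The value of this abstract argument lies in its brevity; the bulk of the work has already been done in setting up the semantic framework, and the syntactic counterpart (subject reduction and expansion proved directly) is deferred to later in the section.
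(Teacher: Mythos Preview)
Your proposal is correct and follows essentially the same approach as the paper's own proof: both chain together Theorem~\ref{thm:typeAssSound} (soundness), the canonical environment $e_{`G;`D}$, Lemma~\ref{lem:filtSemTypes}\,(\ref{lem:filtSemTypes-2}), Theorem~\ref{thm:sem_soundness} (semantic invariance under conversion), and Lemma~\ref{aux-lemma}, in exactly that order. Your additional remarks about why $\Filt$ is a $`l`m$-model and about the command case are accurate but not needed for the argument as stated.
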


 \begin{proof}
By Thm.\skp\ref{thm:sem_soundness}, if $\derLM M = N $ then $\Sem{ M }^{\ModM}{ e } = \Sem{ N }^{\ModM}{ e }$ for any 
model ${\ModM}$ and environment $e \ele \EnvM$, which holds in particular for $\Filt$ and $ e_{`G;`D }$. So 
 \[ \begin{array}[t]{rcl@{\quad}l}
 \derLmu `G |- M : `d | `D 
	& \Then &
 \modelsF `G |= M : `d | `D & (\textrm{Thm.\skp\ref{thm:typeAssSound}})
	\\ & \Then &
	\SemF { M }{ e_{`G;`D } } \ele \SemF {`d} & (\textrm{since } \modelsF e_{`G;`D } |= `G;`D )
	\\ & \Then &
`d \ele \SemF { M }{ e_{`G;`D } } & (\textrm{Lem.\skp\ref{lem:filtSemTypes}\,(\ref{lem:filtSemTypes-2})})
	\\ & \Then &
`d \ele \SemF {N}{ e_{`G;`D } } & (\textrm{Thm.\skp\ref{thm:sem_soundness}})
	\\ & \Then &
 \derLmu `G |- N : `d | `D & (\textrm{Lem.\skp\ref{aux-lemma}})
 \end{array} \]
\arrayqed[-22pt]
 \end{proof}

To illustrate the type assignment system itself, we will now show a more `operational' proof for the same property, by studying how reductions and expansions of the term in the conclusion (the `subject' of the typing judgement) are reflected by transformations of its typing derivations.
First we show that type assignment is closed for preforming or reversing both name and term substitution.

 \begin{lem}[Term substitution lemma] \label{lem:substitution}
 \begin{enumerate}
\firstitem \label{lem:substitution-name}
	$\derLmu `G |- T[`a/`b] : `s | `D $ with $`b \notele `D $ if and only $\derLmu `G |- T : `s | `b {:}`k,`D $ and $`D(`a) \leq_C `k$.
\item \label{lem:substitution-term}
	$\derLmu `G |- T[L/x] : `s | `D $ with $x\notele `G$ if and only if there exists $`d'$ such that $\derLmu `G,x{:}`d' |- T : `s | `D $ and $\derLmu `G |- L : `d' | `D $.
 \end{enumerate}
 \end{lem}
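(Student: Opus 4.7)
The plan is to prove both parts by induction on the structure of the term/command $T$, making essential use of the Generation Lemma (Lem.~\ref{gen lemma}) and the admissibility of strengthening (Lem.~\ref{lem:strengthening}) and intersection.

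For part (\ref{lem:substitution-name}), the $(\Leftarrow)$ direction goes by induction on the derivation of $\derLmu `G |- T : `s | `b{:}`k,`D$. For each type rule, the substitution $T[`a/`b]$ simply renames name occurrences; the only rules that interact with $`b$ are $(\TCmd)$ and $(\MAbs)$, where the side-condition $\fun{`b{:}`k,`D}(`b) = `k$ must be replaced by $\fun{`D}(`a) = `D(`a) \seqC `k$, followed by an application of $(\seq)$ (using co-variance of $\prod$ in its second argument for $(\TCmd)$). For $(\Rightarrow)$, I would proceed by induction on $T$, invoking the Generation Lemma to decompose the derivation of $T[`a/`b] : `s | `D$. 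The critical case is $T \equiv [`b]M$, where $T[`a/`b] = [`a]M[`a/`b]$; Generation gives types $`d_i$ for $M[`a/`b]$ and the fact that $`D(`a) \seqC`k''$ for some $`k''$ with $\Inter_i `d_i\prod`k'' \seqC `s$. By induction on $M$, we get $\derLmu `G |- M : `d_i | `b{:}`k,`D$ for a suitable $`k$, and applying $(\TCmd)$ with the chosen $`k$ produces the required derivation.

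For part (\ref{lem:substitution-term}), the $(\Leftarrow)$ direction is the standard substitution argument: by induction on the derivation of $\derLmu `G,x{:}`d' |- T : `s | `D$, each occurrence of the axiom $x{:}`d'$ is replaced by the derivation of $L : `d' | `D$ (using weakening to match the basis, in case $T$ introduces extra variables). Logical rules and non-$(\Axiom)$ type rules pass through, with strengthening used in cases that traverse binders. For $(\Rightarrow)$, I would induct on $T$. When $T \equiv x$, take $`d' = `s$ and use the hypothesis on $L$ directly (with $(\Axiom)$ yielding $\derLmu `G,x{:}`s |- x : `s | `D$). When $T \equiv y \neq x$ or $T$ is a constant leaf, take $`d' = `w$. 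The compound cases ($T \equiv MN$, $`lx.M$, $`m`a.\Cmd$, $[`a]M$) are treated by invoking the Generation Lemma, applying the induction hypothesis to each subterm, which produces a family $`d'_j$ of types assignable to $L$, and then taking $`d' = \Inter_j `d'_j$. Rules $(\inter)$ and $(\Strength)$ are used to reconcile the several derivations of $L$ into one derivation of $\derLmu `G |- L : `d' | `D$.

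The main obstacle will be the $(\Rightarrow)$ direction of part (\ref{lem:substitution-term}) in the compound cases: the Generation Lemma yields different types for each component derivation, and care is required to track how many different types are assigned to $L$, take their intersection, and then re-derive all premises with the uniform basis $`G,x{:}`d'$ (possibly using $(\seq)$ to coerce each original $`d'_j$ back from $`d'$). A further subtlety occurs in the $`m`a.\Cmd$ and $[`a]M$ cases, where the context $`D$ grows on either side of the recursive call; Barendregt's convention and the assumption that the name bound differs from any free name in $L$ keep these unchanged. None of these difficulties are substantial, and no new semantic machinery is needed beyond the Generation Lemma and the admissibility results already established.
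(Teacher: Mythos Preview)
Your proposal is correct and follows essentially the same approach as the paper: induction on the structure of $T$ (equivalently, on the definition of substitution), using the Generation Lemma to decompose derivations and taking intersections of the types assigned to $L$ in the compound cases. The paper handles both directions of part (\ref{lem:substitution-term}) simultaneously via the Generation Lemma rather than splitting $(\Leftarrow)$ into a separate induction on derivations, but this is a cosmetic difference; your key observation that in the $(\Rightarrow)$ direction one must take $`d' = \Inter_j `d'_j$ over the types arising from the subterms is exactly the crux of the paper's argument as well.
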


 \begin{proof} 

 \begin{enumerate} \itemsep 3pt

\firstitem %[\ref{lem:substitution-name}] 
By straightforward induction on the structure of derivations. 
Note that the type of the name $`a$ which is substituted for $`b$ in $T[`a/`b]$ is not necessarily the same as that of $`b$ in $T$, but it is weaker in general.

 \item %[\ref{lem:substitution-term}]
By induction on the definition of term substitution; notice that, if $T$ is a term, then $`s = `d$, and if $T$ is a command, then $`s = `k$.

 \begin{description} \itemsep 3pt

 \item [$ T\same x $] 
 \begin {description} 

 \item [$ \Then $] 
If $ \derLmu `G |- x[L/x] : `d | `D $, then $ \derLmu `G |- L : `d | `D $ and we have $ \derLmu `G,x{:}`d |- x : `d | `D $ by $(\Ax)$.

 \item [$ \If $] 
If $ \derLmu `G,x{:}`d' |- x : `d | `D $, then $`d' \seqD `d$ by Lem.\skp\ref{gen lemma}.
From $\derLmu `G |- L : `d' | `D $ and rule $(\seq)$, we have $ \derLmu `G |- L : `d | `D $, so also $ \derLmu `G |- x[L/x] : `d | `D $.

 \end {description}

 \item [$ T\same y\not= x $] 
 \begin {description}
 \item [$ \Then $] By rule $(\Weak)$, $ \derLmu `G,x{:}`w |- y : `d | `D $, and $ \derLmu `G |- L : `w | `D $ by rule $(`w)$.
 \item [$ \If $] By rule $(\Thin)$, since $x \notele \fv{y}$.
 \end {description}

\Comment{%%%%%%%%%%%%%%%%%%%%%%%%%%%%%%%%%%%%%%%%%%%
 \item [$ T\same `l y.N $] 
$ \kern-24mm \begin{array}[t]{lcl} \kern24mm 
\derLmu `G,x{:}`d' |- `ly.N : `d | `D \And \derLmu `G |- L : `d' | `D 
	& \Iff & (\textrm{Lem.\skp\ref{gen lemma}}) \\
\multicolumn{3}{l}{ 
\Exists I \, \Forall i \ele I \, \Exists `d_i,`k_i,`r_i \Pred[ \derLmu `G,x{:}`d',y{:}`d_i |- N : `k_i\arrow `r_i | `D \And \Inter_{i \ele I} `d_i\prod `k_i\arrow `r_i \seqD `d \And \derLmu `G |- L : `d' | `D ] 
} \\ 
	& \Iff & (\textrm{by induction}) \\
\multicolumn{3}{l}{ 
\Exists I \, \Forall i \ele I \, \Exists `d_i,`k_i,`r_i \Pred[ \derLmu `G, y{:}`d_i |- N[L/x] : `k_i\arrow `r_i | `D \And \Inter_{i \ele I} `d_i\prod `k_i\arrow `r_i \seqD `d ] 
} \\
	& \Iff & (\textrm{Lem.\skp\ref{gen lemma}}) \\
\derLmu `G |- `ly.(N[L/x]) : `d | `D 
	~ \ByDef ~ 
\derLmu `G |- (`ly.N)[L/x] : `d | `D
 \end{array} $
 \arrayqed
}% Comment %%%%%%%%%%%%%%%%%%%%%%%%%%%%%%%%%%%%%%%%%%

 \item [$ T \equiv PQ $] 
$ \kern-20mm \begin{array}[t]{lcl} \kern20mm 
\Exists `d' \Pred[ \derLmu `G,x{:}`d' |- PQ : `d | `D \And \derLmu `G |- L : `d' | `D ]
	& \Iff & \\ % (\textrm{Lem.\skp\ref{gen lemma}}) 
\multicolumn{3}{r}{(\textrm{Lem.\skp\ref{gen lemma}; for $\Then$, take $`d' = `d'_1 = `d'_2$, for $\Else$, take $`d' = `d'_1 \inter `d'_2$}) } \\
\multicolumn{3}{l}{\Exists `d'_1,`d'_2, I \, \Forall i \ele I \, \Exists `d_i,`k_i,`r_i, \, [\psk \derLmu `G,x{:}`d'_1 |- P : `d_i\prod `k_i\arrow `r_i | `D \And \derLmu `G,x{:}`d'_2 |- Q : `d_i | `D \And }
	\quad \\ ~ \quad \hfill 
\Inter_{i \ele I}`k_i\arrow `r_i \seqD `d \And \derLmu `G |- L : `d_1' \And \derLmu `G |- L : `d_2' | `D \psk] 
	& \Iff & (\textrm{by induction}) \\
\Exists I \, \Forall i \ele I \, \Exists `d_i,`k_i,`r_i \, [\psk \derLmu `G |- P[L/x] : `d_i\prod `k_i\arrow `r_i | `D \And 
	\quad \\ ~ \quad \hfill 
\derLmu `G |- Q[L/x] : `d_i | `D \And \Inter_{i \ele I}`k_i\arrow `r_i \seqD `d \psk] 
	& \Iff & (\textrm{Lem.\skp\ref{gen lemma}}) \\
\derLmu `G |- P[L/x] : `d | `D \And \derLmu `G |- Q[L/x] : `d | `D 
	~ \ByDef ~ 
\derLmu `G |- (PQ)[L/x] : `s | `D
 \end{array} $

\Comment{%%%%%%%%%%%%%%%%%%%%%%%%%%%%%%%%%%%%%%%%%%%
 \item [$ T\same `m`a.\Cmd $] 
$ \kern-15mm \begin{array}[t]{lcl} \kern15mm 
\derLmu `G,x{:}`d' |- `m`a.\Cmd : `k\arrow `r | `D \And \derLmu `G |- L : `d' | `D 
	& \Iff & (\textrm{Lem.\skp\ref{gen lemma}}) \\
\Exists I \, \Forall i \ele I \, \Exists `k_i,`r_i, `k'_i \,[\psk \derLmu `G,x{:}`d' |- {\Cmd} : (`k_i\arrow `r_i)\prod `k_i | `a{:}`k'_i,`D \And 
	\quad \\ ~ \quad \hfill 
\Inter_{i \ele I}`k'_i\arrow `r_i \seqD `s \And \derLmu `G |- L : `d' | `D \psk]
	&\Iff & (\textrm{by induction}) \\ 
\Exists I \, \Forall i \ele I \, \Exists `k_i,`r_i, `k'_i \,[\psk \derLmu `G |- {\Cmd}[N/x] : (`k_i\arrow `r_i)\prod `k_i | `a{:}`k'_i,`D \And 
	\quad \\ ~ \quad \hfill 
\Inter_{i \ele I}`k'_i\arrow `r_i \seqD `s \psk]
	&\Iff & (\textrm{Lem.\skp\ref{gen lemma}}) \\
\derLmu `G |- `m`a. \Cmd[L/x] : `d | `D 
	~ \ByDef ~ 
\derLmu `G |- `m`a. \Cmd[L/x] : `d | `D 
 \end{array} $

 \item [$ T\same {[`a]}N $] 
$ \kern-15mm \begin{array}[t]{lcl} \kern15mm 
\derLmu `G,x{:}`d' |- [`a]N : `k | `D \And \derLmu `G |- L : `d' | `D 
	&\Iff & (\textrm{Lem.\skp\ref{gen lemma}}) \\
\Exists I \, \Forall i \ele I \, \Exists `d_i,`k_i \,[\psk \derLmu `G |- N : `d_i | `D \And `D(`a) \seqC `k_i \And 
	\quad \\ ~ \quad \hfill 
\Inter_{i \ele I} `d_i\prod `k_i \seqC `k \And \derLmu `G, x{:}`d' |- L : `d' | `D \psk] 
	&\Iff & (\textrm{by induction}) \\ 
\Exists I \, \Forall i \ele I \, \Exists `d_i,`k_i \,[\psk \derLmu `G |- N[L/x] : `d_i | `D \And `D(`a) \seqC `k_i \And 
	\quad \\ ~ \quad \hfill 
\Inter_{i \ele I} `d_i\prod `k_i \seqC `k \psk] 
	&\Iff & (\textrm{Lem.\skp\ref{gen lemma}}) \\
\derLmu `G |- [`a](N[L/x]) : `k | `D 
	~ \ByDef ~ 
\derLmu `G |- ([`a]N)[L/x] : `k | `D 
 \end{array} $

}% Comment %%%%%%%%%%%%%%%%%%%%%%%%%%%%%%%%%%%%%%%%%%

%(
 \item[$ T\same `l y.N $), $(T\same `m`a.\Cmd)$, ($ T\same {[`a]}N $] %)
Straightforward. \qed 
 \end{description}

 \end{enumerate}
 \end{proof}

The next lemma states how the structural substitution $T[`a\Becomes L]$ is related to the type of the name $`a$. 
When $T\ele \Terms$ or $T\same [`b]N$ with $`a \neq `b$, the type of $T[`a\Becomes L]$ remains the same as that of $T$, which is similar to the term substitution lemma, but the context $`D $ used to type $T[`a\Becomes L]$ changes to $`D'$ which is equal to $`D $ but for $`D'(`a) = `d'\prod`D(`a)$, where $`d'$ is a type of $L$ (in the same basis and context). 
When $T\same [`a]N$ the effect of the structural substitution is more complex, and it affects also the type of $T$ with respect to that of $T[`a\Becomes L]$. 
The fact that this does not invalidate the type invariance of terms with respect to structural substitution is due to the form of rule $(\MAbs)$ which essentially is a cut rule: indeed, the `cut type' changes in case of $T$ with respect to that of $T[`a\Becomes L]$, but then is hidden in the conclusion.

In the following proofs we let $I$, $J$, $H$, possibly with apices and indices, range over finite and non-empty sets of indices.

 \begin{lem} [Structural substitution lemma] \label{structural substitution lemma}
Let % $M,N,L \ele \Terms$ and $`a,`b \ele \CVar$ with 
$`a \neq `b$, and assume that $`a \notele \fn(L)$; then:
 \begin{enumerate}

 \item \label{structural substitution lemma i}
$\derLmu `G |- {M[`a\Becomes L]} : `d | `a{:}`k,`D $ if and only if there exists $`d'$ such that $ \derLmu `G |- L : `d' | `D $ and $\derLmu `G |- M : `d | `a{:}`d'\prod`k,`D $.

 \item \label{structural substitution lemma ii}
$ \derLmu `G |- {([`b]N)[`a\Becomes L]} : `k' | `a{:}`k,`D $ if and only if there exists $`d'$ such that $ \derLmu `G |- L : `d' | `D $ and $\derLmu `G |- [`b]N : `k' | `a{:}`d'\prod`k,`D $.

 \item \label{structural substitution lemma iii}
 $\derLmu `G |- {([`a]N)[`a\Becomes L]} : `k' | `a{:}`k,`D $ if and only if either $`k' \sim_C `w$, or
there exist $`d'$, 
 $`k_i$ and $`r_i$ for all $i\ele I$ such that:
 \begin{enumerate}
 \item $\derLmu `G |- L : `d' | `D $
 \item $\derLmu `G |- {[`a]N}: (`d'\prod `k_i\arrow `r_i)\prod `d'\prod `k | `a{:}`d'\prod `k,`D $ for all $i\ele I$
 \item $\bigwedge_{i \ele I}(`k_i\arrow `r_i)\prod `k \seq `k'$
 \end{enumerate}

 \end{enumerate}
 \end{lem}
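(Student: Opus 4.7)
I would prove parts~(i), (ii), (iii) simultaneously by mutual induction on the structure of the term or command being substituted into, using the generation lemma (Lem.\skp\ref{gen lemma}) as the main unpacking tool and rules $(\inter)$, $(\seq)$, $(\Thin)$ and $(\Strength)$ as the main reassembly tools. For part~(i), the cases where $M$ is a variable, application, or $`l$-abstraction are routine: structural substitution distributes component-wise (trivially for variables), so the generation lemma exposes the typing of each component, the IH is applied to it, and the pieces are reassembled via the corresponding type rule. The case $M \equiv `m`b.\Cmd$ with $`b \neq `a$ (by Barendregt) reduces via the generation lemma for $`m$-abstraction to an instance of part~(ii) or part~(iii), depending on the outer name of $\Cmd$. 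Part~(ii) follows from part~(i) applied inside the command, since $([`b]N)[`a\Becomes L] = [`b](N[`a\Becomes L])$ when $`b \neq `a$.

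Part~(iii) is the essential case. By definition, $([`a]N)[`a\Becomes L] = [`a]\bigl((N[`a\Becomes L])L\bigr)$. For the \emph{only if} direction, assume $\derLmu `G |- [`a]((N[`a\Becomes L])L) : `k' | `a{:}`k,`D $; if $`k' \simC `w$ there is nothing to prove, so assume the contrary. Applying the generation lemma first to the command and then to the application yields index sets $J$ and $I_j$ (for $j \ele J$) and types $`d_{ij}, `k_{ij}, `r_{ij}, `d''_j$ such that for all $i,j$:
\begin{itemize}
\item $\derLmu `G |- N[`a\Becomes L] : `d_{ij}\prod `k_{ij}\arrow `r_{ij} | `a{:}`k,`D $;
\item $\derLmu `G |- L : `d_{ij} | `D $ (after thinning, since $`a \notele \fn(L)$);
\item $\bigwedge_{i \ele I_j} `k_{ij}\arrow `r_{ij} \seqD `d''_j$ and $\bigwedge_{j \ele J} `d''_j \prod `k \seqC `k'$.
\end{itemize}
Applying the IH of part~(i) to each typing of $N[`a\Becomes L]$ provides types $`d'_{ij}$ with $\derLmu `G |- L : `d'_{ij} | `D $ and $\derLmu `G |- N : `d_{ij}\prod `k_{ij}\arrow `r_{ij} | `a{:}`d'_{ij}\prod `k,`D $. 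Setting $`d' = \bigwedge_{i,j}(`d_{ij} \inter `d'_{ij})$ then delivers condition~(a). Rule $(\Strength)$ applied to the basis, combined with contravariance of $\arrow$ in its first argument, yields $\derLmu `G |- N : `d'\prod `k_{ij}\arrow `r_{ij} | `a{:}`d'\prod `k,`D $; rule $(\TCmd)$ (re-indexing the pairs $(i,j)$) then delivers condition~(b). Condition~(c) follows from $\bigwedge_{i \ele I_j}`k_{ij}\arrow`r_{ij} \seqD `d''_j$ together with the product-intersection distributivity $(`d_1\prod `k)\inter(`d_2\prod `k) \simC (`d_1\inter `d_2)\prod `k$ from Def.\skp\ref{def:typeTheories}. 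The \emph{if} direction reverses this: from (b) strip the outer product via generation, apply the IH of part~(i) in the forward direction to obtain a typing of $N[`a\Becomes L]$, apply rule $(\App)$ with $L$ typed by $`d'$ as in~(a), apply $(\TCmd)$, and combine via $(\inter)$ and $(\seq)$ using~(c).

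The main obstacle is the \emph{uniformisation} of the type of $L$: the generation lemma naturally produces many distinct witness types for $L$ across the branches of the decomposition, whereas condition~(b) requires a \emph{single} type $`d'$ to appear uniformly both as the left factor of every $`d'\prod `k_i\arrow `r_i$ in the predicate and as $`a{:}`d'\prod `k$ in the context. Choosing $`d'$ as the intersection of all locally required types and re-deriving the required typings via $(\Strength)$ and the variance properties of $\prod $ and $\arrow$ is the non-obvious step; the distributivity of $\prod $ over $\inter$ is in turn what allows the common factor $`k$ to be extracted on the right in condition~(c).
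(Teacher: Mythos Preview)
Your plan is correct and follows essentially the same route as the paper: simultaneous induction, with parts~(i) and~(ii) feeding into each other via the generation lemma, and part~(iii) handled by unpacking $[`a]((N[`a\Becomes L])L)$ twice, applying IH~(i) to $N[`a\Becomes L]$, and then uniformising by taking $`d'$ to be the intersection of all the locally produced witness types for $L$. The paper phrases the induction as one on derivations rather than on term structure, but via the generation lemma the two are interchangeable here; your identification of the uniformisation step (and of the $\prod$/$\inter$ distributivity needed for clause~(c)) matches exactly the non-trivial content of the paper's argument.
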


 \begin{proof} 
By simultaneous induction on derivations. 
Observe that whenever $`a \notele \fn(T)$ for $T\same M$ in part (\ref{structural substitution lemma i}) and $T\same ([`b]N)$ in part (\ref{structural substitution lemma ii}), we have that $T[`a\Becomes L] \same T$, so that the lemma is vacuously true by taking $`d' = `w$. 
We only show the relevant cases.

 \begin{enumerate} \itemsep 3pt

 \item %[{\it \ref{structural substitution lemma i}}] 

\leftmarginiii = -5mm 
 \begin{description} 

 \shiftitem [$\Then$]

 \begin{description} 
\itemindent 1cm

 \item [$`m$]
Then $M = `m `g.\Cmd$, $`d = `k_1\arrow `r$ and $ \derLmu `G |- {\Cmd[`a\Becomes L]} : {(`k_2\arrow `r)\prod`k_2} | `g{:}`k_1, `a{:}`k,`D $ in a sub-derivation.
If $`r \sim_R `w$ then $`k_1\arrow `r \sim_D `w$ and the thesis trivially 
follows by rule $(`w)$. Otherwise assume that $`r \not\sim_R `w$.
We now distinguish two cases:

 \begin{description}

 \item [{$ \Cmd \same {[`b]}N $ with $`a \not= `b$}] 
Then \emph{IH}~(\ref{structural substitution lemma ii}) applies, so there exists $`d'$ such that $\derLmu `G |- L : `d' | `g{:}`k_1,`D $ and $\derLmu `G |- [`b]N : (`k_2\arrow `r)\prod`k_2 | `g{:}`k_1,`a{:}`d'\prod`k,`D $.
Then by applying $(\MAbs)$, $\derLmu `G |- `m `g.[`b]N : `k_1\arrow `r | `a{:}`d'\prod`k,`D $ follows. 
For the second part, since $`g$ is bound in $`m `g.{[`b]}N$, by Barendregt's convention $`g\notele \fn(L)$, so applying rule $(\Thin)$ to $\derLmu `G |- L : `d' | `g{:}`k_1,`D $ gives $ \derLmu `G |- L : `d' | `D $.

\item [{$ \Cmd \same {[`a]}N $}] 
Now \emph{IH}\,(\ref{structural substitution lemma iii}) applies to $\Der'$, so that:
 \begin{enumerate}
 \item $\derLmu `G |- L : `d' | `D $ for some $`d'$;
 \item $\derLmu `G |- {[`a]N}: (`d'\prod `k_i\arrow `r_i)\prod `d'\prod `k | `a{:}`d'\prod `k,`D $ for all $i\ele I$ and some $I$;
 \item $\bigwedge_{i \ele I}(`k_i\arrow `r_i)\prod `k \seq_C (`k_2\arrow `r)\prod`k_2$.
 \end{enumerate}
It follows that $\bigwedge_{i \ele I}(`k_i\arrow `r_i) \seq_D `k_2\arrow `r$ and $`k \seq_C `k_2$. 
By Lem.\skp\ref{lem:SubtypeProp}\skp(\ref{lem:SubtypeProp-3}) the 
first in-equation implies that there exists $i_0 \ele I$ such that $`k_2 \seq_C `k_{i_0}$ and $`r_{i_0} \seq_R `r$. 
Hence we have
$(`d'\prod `k_{i_0}\arrow `r_{i_0})\prod `d'\prod `k \seq (`d'\prod `k_2\arrow `r)\prod `d'\prod `k_2$, and therefore
 \[ \begin{array}{c}
\Inf[\MAbs]
	{\Inf[\seq]
{\derLmu `G |- {[`a]N}: (`d'\prod `k_{i_0}\arrow `r_{i_0})\prod `d'\prod `k | `a{:}`d'\prod `k,`D }
{\derLmu `G |- {[`a]N}: (`d'\prod `k_2\arrow `r)\prod `d'\prod `k | `a{:}`d'\prod `k,`D }
	}{\derLmu `G |- {`m `g.[`a]N} : `k_1\arrow `r | `a{:}`d'\prod `k,`D }
 \end{array}\]

 \end{description}

 \end{description}

 \item [$\Else$]
Let $\derLmu `G |- L : `d' | `D $ and assume $\Der::\derLmu `G |- `m`g.\Cmd : `k_1\arrow `r | `a{:}`d'\prod`k,`D $ ends with rule $(\MAbs)$. Again the thesis
is trivial if $`r \sim_R `w$, so that we suppose that $`r \not\sim_R `w$. As before we have to distinguish the cases
$\Cmd \same [`b]N$ with $`b \neq `a$ and $\Cmd \same [`a]N$. In the first case \emph{IH}~(\ref{structural substitution lemma ii}) directly implies the thesis, so let's consider the case $\Cmd \same [`a]N$, in which $\Der$ ends with
 \[ \begin{array}{c}
\Inf[\MAbs]
	{\derLmu `G |- {[`a]N}: (`k_2\arrow `r)\prod `k_2 | `g{:}`k_1, `a{:}`d'\prod `k,`D }
	{\derLmu `G |- {`m `g.[`a]N} : `k_1\arrow `r | `a{:}`d'\prod `k,`D }
 \end{array}\]
By Lem.\skp\ref{gen lemma} we know that there exist sub-derivations $\Der_i$ of $\Der$, with $i\ele I$ for some $I$, such that
 \[ \begin{array}{ccccc}
\Der_i:: \derLmu `G |- {N}: `d_i | `g{:}`k_1, `a{:}`d'\prod `k,`D, & &
`d'\prod `k \leq_C `k_i, & \textrm{ and } & \bigwedge_{i\ele I} `d_i\prod `k_i \leq_C (`k_2\arrow `r)\prod `k_2
 \end{array}\]
It follows that
$(\bigwedge_{i\ele I} `d_i)\prod `d'\prod `k \seq_C (`k_2\arrow `r)\prod `k_2$ which implies
$\bigwedge_{i\ele I} `d_i \leq_D `k_2\arrow `r$ and $`d'\prod `k \leq_C `k_2$.
Now the assumption $`r \not\sim_R `w$ implies $\bigwedge_{i\ele I} `d_i \not\sim_D `w$, and therefore
 \[ \begin{array}{rcl@{\quad}lccc}
\bigwedge_{i\ele I} `d_i &\sim_D& \bigwedge_{i\ele I, j\ele J_i}\, `k_{i,j}\arrow `r_{i,j} &\leq& `k_2\arrow `r &\leq& `d'\prod `k\arrow`r.
 \end{array} \]
By Lem.\skp\ref{lem:SubtypeProp}\skp(\ref{lem:SubtypeProp-3}) this implies that there are $I' \subseteq I$ and $J'_i \subseteq J_i$ such that
$`d'\prod `k \leq_C `k_2 \leq_C \bigwedge_{i\ele I', j\ele J'_i}\, `k_{i,j}$ and $\bigwedge_{i\ele I', j\ele J'_i}\, `r_{i,j} \leq_R `r$.
Then $`k_{i,j} \sim_C `d_{i,j}\prod `k'_{i,j}$ for every $i,j$, where $`d' \leq_D `d_{i,j}$ and 
$`k \leq_C `k'_{i,j}$. Since clearly $`d_i \leq_D \bigwedge_{j\ele J'_i}\, `k_{i,j}\arrow `r_{i,j}$, for all $i\ele I \supseteq I'$,
by $(\leq)$ we have that
$\derLmu `G |- N : `k_{i,j}\arrow `r_{i,j} | `g{:}`k_1, `a{:}`d'\prod `k,`D $; by $(\TCmd)$ we get
 \[ \derLmu `G |- {[`a]N}: (`k_{i,j}\arrow `r_{i,j})\prod `d'\prod `k | `g{:}`k_1, `a{:}`d'\prod `k,`D \]
for every $i\ele I', j\ele J'_i$. 
Now
 \[ \begin{array}{lll@{\quad}l}
(`k_{i,j}\arrow `r_{i,j})\prod `d'\prod `k & \sim_C & (`d_{i,j}\prod `k'_{i,j}\arrow `r_{i,j})\prod `d'\prod `k & (\textrm{as }`k_{i,j} \sim_C `d_{i,j}\prod `k'_{i,j}) \\
& \leq_C & (`d'\prod `k'_{i,j}\arrow `r_{i,j})\prod `d'\prod `k & (\textrm{as }`d' \leq_D `d_{i,j})
 \end{array}\]
Taking $`k' = \bigwedge_{i\ele I', j\ele J'_i} ( `k'_{i,j}\arrow `r_{i,j})$, \emph{IH}\skp(\ref{structural substitution lemma iii}) applies, 
from which 
$\derLmu `G |- {([`a]N)[`a \Becomes L]} : `k' | `g{:}`k_1, `a{:}`d'\prod `k,`D $ follows. 
Using that $`k \leq_C `k'_{i,j}$ for all $i\ele I'$ and $j\ele J'_i$, we deduce
 \[ \begin{array}{rcl@{\quad}lccc}
`k' &\leq_C& \bigwedge_{i\ele I', j\ele J'_i}(`k\arrow `r_{i,j} )\prod `k &\sim_C&
(`k\arrow \bigwedge_{i\ele I', j\ele J'_i}`r_{i,j} )\prod `k &\leq_C& 
(`k\arrow `r )\prod `k
 \end{array} \]
and therefore $\derLmu `G |- ([`a]N)[`a \Becomes L] : (`k\arrow `r )\prod `k | `g{:}`k_1, `a{:}`d'\prod `k,`D $ by
$(\seq)$, from which we conclude
$\derLmu `G |- {(`m`g[`a]N)[`a \Becomes L]} : `k_1\arrow `r | `a{:}`d'\prod `k,`D $ by $(\MAbs)$ as desired.

 \end{description}

 \item %[{\ref{structural substitution lemma ii}}] 
Note that $([`b]N)[`a\Becomes L] \same [`b] (N[`a\Becomes L])$.

\leftmarginiii = -5mm 
 \begin{description} 

 \shiftitem [$\Then$]

 \begin{description} 
\itemindent 1cm

 \item [$\TCmd$]
Then $M = [`b](N[`a\Becomes L]) : `k' | `a{:}k,`D $, $`k' = `d''\prod `k''$, $`D = `b{:}`k'',`D'$, and $ \derLmu `G |- {N[`a\Becomes L]} : `d'' | `a{:}`k,`b{:}`k'',`D $.
Then by \emph{IH}\,(\ref{structural substitution lemma i}), there exists $`d'$ such that $ \derLmu `G |- L : `d' | `D $ and $\derLmu `G |- N : `d'' | `a{:}`d'\prod`k,`b{:}`k'',`D' $.
By rule $(\TCmd)$ we get $\derLmu `G |- {[`b]N} : `k' | `a{:}`d'\prod`k,`D $.
 \end{description}

 \item[$\Else$] 
The reasoning for this part is the reverse of part $(\Then)$.

 \end{description}

 \item %[\ref{structural substitution lemma iii}] 
We prove the two implications; note that $([`a]N)[`a\Becomes L] \same [`a](N[`a\Becomes L])L$.

 \begin{description}
 \item [$\Then$] 
Let $\Der :: \derLmu `G |- {[`a](N[`a\Becomes L])L} : `k' | `a{:}`k,`D $ be a derivation, and assume that $`k' \not\sim_C `w$. By Lem.\skp\ref{gen lemma} there exists the sub-derivations of $\Der$:
 \[ \begin{array}{rcl@{\quad}l}
\Der_i:: \derLmu `G |- {(N[`a \Becomes L])L} : `d_i | `a{:}`k,`D 
 \end{array} \]
where $`k \seq_C `k_i$ and $\bigwedge_{i\ele I} `d_i\prod`k_i \seq_C `k'$.
If all the $`d_i \sim_D `w$ then we have
$`w\prod `k \leq_C `w\prod `k_i$ for all $i$, hence
 $ \begin{array}{ccccc}
`w\prod `k &\leq_C& \bigwedge_{i\ele I}`w\prod `k_i &\leq_C& `k'.
 \end{array} $
Taking $`d' = `w$ we trivially have that $\derLmu `G |- L: `d' | `D $ and
 \[ \begin{array}{c}
\Inf	[\TCmd]
	{\Inf	[\omega]
{\derLmu `G |- N : `w | `a{:}`w\prod `k,`D }
	}{\derLmu `G |- {[`a]N} : `w\prod `w\prod `k | `a{:}`w\prod `k,`D }
 \end{array}\]
Now taking $`r_i = `w$ we have $`w \sim_D `w\prod `k_i\arrow `w$ so that
$\derLmu `G |- {[`a]N}: (`w\prod `k_i\arrow `w)\prod `w\prod `k | `a{:}`w\prod `k,`D $ for all $i$, and
$(`k_i\arrow `w)\prod `k \sim_C `w\prod `k \leq_C `k'$.

Otherwise (not all the $`d_i \sim_D `w$), assume (without loss of generality) that all the $`d_i \not\sim_D `w$.
By Lem.\skp\ref{gen lemma} every $\Der_i$ has sub-derivations:
 \[ \begin{array}{r@{\quad}c@{\quad}l}
{{\Der^1}}_{i,j}:: \derLmu `G |- {N[`a \Becomes L]} : `d_{i,j}\prod `k_{i,j}\arrow `r_{i,j} | `a{:}`k,`D & \text{and} &{\Der^2}_{i,j}:: \derLmu `G |- L : `d_{i,j} | `a{:}`k,`D 
 \end{array}\]
(where $\bigwedge_{j \ele J_i} `k_{i,j}\arrow `r_{i,j} \leq_D `d_i$)
for $j$ varying over certain $J_i$. Note that since $`a \notele \fn{L}$ we can assume that
${\Der^2}_{i,j} :: \derLmu `G |- L : `d_{i,j} | `D $, without any assumption for the name $`a$.
By \emph{IH}\,(\ref{structural substitution lemma i}), from the existence of ${\Der^1}_{i,j}$ we infer that derivations
 \[ \begin{array}{rcl@{\quad}l}
\widehat{{\Der^1}}_{i,j}:: \derLmu `G |- {N} : `d_{i,j}\prod `k_{i,j}\arrow `r_{i,j} | `a{:}`d_{i,j}'\prod`k,`D &\text{and} &
\widehat{{\Der^2}}_{i,j}:: \derLmu `G |- L : `d_{i,j}' | `D 
 \end{array}\]
exist.
Taking $`d' = \bigwedge_{i\ele I, j\ele J_i} (`d_{i,j} \inter `d'_{i,j})$ we have:
 \[ \begin{array}{c}
\Inf[\inter]
	{\InfBox{{\Der^2}_{i,j}} {\derLmu `G |- L : `d_{i,j}| `D } 
	 \dquad
	 \InfBox{\widehat{\Der}^2_{i,j}}{\derLmu `G |- L : `d_{i,j}' | `D } \quad
	 \forall i\ele I', j\ele J'_i
	}{\derLmu `G |- L : `d' | `D }
 \end{array}\]
Since $`d' \leq_C `d_{i,j}$ implies
$`d_{i,j}\prod `k_{i,j}\arrow `r_{i,j} \seq_D `d'\prod `k_{i,j}\arrow `r_{i,j}$ for all $i\ele I, j\ele J_i$,
from the $\widehat{\Der}^1_{i,j}$ we obtain the derivations:
 \[ \begin{array}{c}
\Inf	[\seq] 
	{\Inf	[\Strength]
{\InfBox{\widehat{\Der}^1_{i,j}}
	{\derLmu `G |- {N} : `d_{i,j}\prod `k_{i,j}\arrow `r_{i,j} | `a{:}`d_{i,j}'\prod`k,`D }
	 \quad `d' \leq_C `d'_{i,j}
 }{\derLmu `G |- {N} : `d_{i,j}\prod `k_{i,j}\arrow `r_{i,j} | `a{:}`d'\prod`k,`D }
	}{\derLmu `G |- {N} : `d'\prod `k_{i,j}\arrow `r_{i,j} | `a{:}`d'\prod`k,`D }
 \end{array}\]	
From this and by $(\TCmd)$ we get
$\derLmu `G |- {[`a]N} : (`d'\prod `k_{i,j}\arrow `r_{i,j})\prod `d'\prod`k | `a{:}`d'\prod`k,`D $
for all $i\ele I$ and $j\ele J_i$. On the other hand from the above, for all $i\ele I$ we have:
 \[ \begin{array}{rcl@{\quad}lc}
\bigwedge_{j \ele J_i} (`k_{i,j}\arrow `r_{i,j})\prod `k &\leq_C& `d_i\prod `k_i &\leq_C& `k'
 \end{array} \]
and hence
$\bigwedge_{i,\ele I, j \ele J_i} (`k_{i,j}\arrow `r_{i,j})\prod `k \leq_C `k'$
as desired.

\item [$\Else$] 
 Let $\Der_i::\derLmu `G |- {[`a]N}:(`d'\prod `k_i\arrow `r_i)\prod `d'\prod `k | `a{:}`d'\prod `k,`D $
 be a derivation for all $i\ele I$. By Lem.\skp\ref{gen lemma} there exists a $J$
 such that there exist the sub-derivations
 $\Der_{i,j}::\derLmu `G |- {N}: `d_{i,j} | `a{:}`d'\prod `k,`D $ and types $`k_{i,j}$ such that for all $j\ele J$ 
 \[ \begin{array}{rcl@{\quad}c@{\quad}rcl}
`d'\prod `k &\leq_C& `k_{i,j} &\textrm{and}& `d_{i,j}\prod `k_{i,j} 
 	&\leq_C& (`d'\prod `k_i\arrow `r_i)\prod `d'\prod `k 
 \end{array} \]
This implies that 
 \[ \begin{array}{rcl@{\quad}lc}
`d_{i,j}\prod `d'\prod `k &\leq_C& `d_{i,j}\prod `k_{i,j} 
 	&\leq_C& (`d'\prod `k_i\arrow `r_i)\prod `d'\prod `k
 \end{array} \]
hence $`d_{i,j} \leq_D `d'\prod `k_i\arrow `r_i$, for all $i$ and $j$. 
By hypothesis $\derLmu `G |- L : `d' | `D $, so that by \emph{IH}\,(\ref{structural substitution lemma i})
there exist the derivations
$\Der'_{i,j}::\derLmu `G |- {N[`a\Becomes L]}: `d_{i,j} | `a{:}`k,`D $ and therefore:
 \[ \begin{array}{c}
 \Inf	[\TCmd]
 	{\Inf	[\App]
{\Inf	[\seq]
{\InfBox{\Der'_{i,j}}{\derLmu `G |- N[`a\Becomes L]: `d_{i,j} | `a{:} `k,`D }
}{\derLmu `G |- N[`a\Becomes L]: `d'\prod `k_i\arrow `r_i | `a{:} `k,`D }	
\qquad
 \Inf	[\Weak]
 	{\InfBox{\derLmu `G |- L : `d' | `D }
 	}{\derLmu `G |- L : `d' | `a{:} `k,`D }
}{\derLmu `G |- {(N[`a\Becomes L])L}: `k_i\arrow`r_i | `a{:} `k,`D }
	}{\derLmu `G |- {[`a](N[`a\Becomes L])L}: (`k_i\arrow`r_i)\prod `k | `a{:} `k,`D }
 \end{array} \]
is a derivation $\Der'_i$ for all $i \ele I$. Hence by $(\inter)$ we get
 \[\derLmu `G |- {[`a](N[`a\Becomes L])L}: \bigwedge_{i \ele I}(`k_i\arrow `r_i)\prod `k | `a{:} `k,`D\] 
and we conclude by the hypothesis $\bigwedge_{i \ele I}(`k_i\arrow `r_i)\prod `k \seq `k'$
and $(\seq)$. \qed

 \end{description}
 \end{enumerate}

 \end{proof}

We are now in place to prove the subject reduction theorem.

 \begin{thm}[Subject reduction] \label{Subject reduction} 
If $M \redbm N$, and $ \derLmu `G |- M : `d | `D $, then $ \derLmu `G |- N : `d | `D $.
 \end{thm}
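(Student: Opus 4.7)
The plan is to prove the statement by induction on the definition of $M \redbm N$, treating first the three base reduction rules $(\beta)$, $(\mu)$, $(\Rename)$, and then handling the compatible closure by a subsidiary induction on the derivation that applies the inductive hypothesis to the reduced subterm. In each of the compatible-closure cases, I would use Lem.~\ref{lem:logicalRules} to reduce to the case where the last rule applied is a type rule, and then descend into the relevant premise using the IH; the only mild subtlety is that I must preserve the bases and contexts across rule $(\MAbs)$, where the name in question disappears, but this is handled by simply reapplying $(\MAbs)$ after the IH.

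For the base cases my strategy is to first apply the Generation Lemma (Lem.~\ref{gen lemma}) to the derivation of the redex, then feed the resulting sub-derivations into the appropriate substitution lemma, and finally rebuild the derivation for the reduct. Concretely:

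\begin{itemize}
\item For $(\beta)\colon (\lambda x.M)N \reduces M[N/x]$, assuming the type $\delta \not\sim_D \omega$ (else trivial by $(\omega)$), Lem.~\ref{gen lemma} on the application followed by a second use on the abstraction produces, for each $i$ in some finite $I$, types $\delta_i,\kappa_i,\rho_i$ with $\derLmu \Gamma, x{:}\delta_i |- M : \kappa_i \arrow \rho_i | \Delta$, $\derLmu \Gamma |- N : \delta_i | \Delta$, and $\Inter_I \kappa_i \arrow \rho_i \seqD \delta$. Part (\ref{lem:substitution-term}) of Lem.~\ref{lem:substitution} then yields $\derLmu \Gamma |- M[N/x] : \kappa_i \arrow \rho_i | \Delta$ for every $i$, and rules $(\inter)$ and $(\seq)$ finish the case.

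\item For $(\mu)\colon (\mu\alpha.\mathcal{C})N \reduces \mu\alpha.\mathcal{C}\StrSub{\alpha}{N}$, again assuming a non-$\omega$ type $\kappa\arrow\rho$, Generation on the application and then on the $\mu$-abstraction yields, for some $I$, sub-derivations $\derLmu \Gamma |- \mathcal{C} : (\kappa_i'\arrow\rho_i)\prod\kappa_i' | \alpha{:}\delta_i\prod\kappa_i, \Delta$ and $\derLmu \Gamma |- N : \delta_i | \Delta$, with the usual inequations matching the conclusion. Here is where the main work lies: I apply part (\ref{structural substitution lemma i}) of Lem.~\ref{structural substitution lemma} to move $N$ into the command, obtaining $\derLmu \Gamma |- \mathcal{C}\StrSub{\alpha}{N} : (\kappa_i'\arrow\rho_i)\prod\kappa_i' | \alpha{:}\kappa_i, \Delta$, and then rebuild the $\mu$-abstraction with rule $(\MAbs)$.

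\item For $(\Rename)\colon [\alpha]\mu\beta.\mathcal{C} \reduces \mathcal{C}[\alpha/\beta]$, Generation on $[\alpha](\mu\beta.\mathcal{C})$ and then on $\mu\beta.\mathcal{C}$ produces sub-derivations of $\mathcal{C}$ in contexts carrying an assumption for $\beta$; part (\ref{lem:substitution-name}) of Lem.~\ref{lem:substitution} renames $\beta$ to $\alpha$ (with the correct subtyping on $\Delta(\alpha)$), after which $(\inter)$ and $(\seq)$ rebuild the desired type $\kappa$.
\end{itemize}

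I expect the $(\mu)$ case to be the principal obstacle, because the correspondence between the assumption $\alpha{:}\delta_i\prod\kappa_i$ obtained from Generation and the hypothesis $\alpha{:}\delta'\prod\kappa$ required by the Structural Substitution Lemma must be arranged carefully: the different $\delta_i$ coming from the several conjuncts have to be reconciled via $(\Strength)$ into a single assumption on $\alpha$, which in turn forces a single $\delta'$ to act as the uniform type of $N$. This reconciliation is precisely what parts (\ref{structural substitution lemma i})--(\ref{structural substitution lemma iii}) of Lem.~\ref{structural substitution lemma} are designed to support, so the proof reduces to threading their statements correctly; all other steps are routine applications of Generation together with the logical rules.
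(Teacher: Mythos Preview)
Your overall plan and the $(\beta)$ and $(\Rename)$ cases follow the paper's approach; the two nested uses of Generation do not directly yield a single index set with aligned types, and you need Lem.~\ref{lem:SubtypeProp}(\ref{lem:SubtypeProp-3}) together with $(\Strength)$ to reconcile them, but this is routine and the paper does exactly that. The $(\mu)$ case, however, has a real gap. You invoke part~(\ref{structural substitution lemma i}) of Lem.~\ref{structural substitution lemma}, but that part is stated for \emph{terms}; for the command $\Cmd$ the relevant clauses are~(\ref{structural substitution lemma ii}) and~(\ref{structural substitution lemma iii}). More importantly, the conclusion you assert, $\derLmu `G |- \Cmd\StrSub{`a}{N} : (`k_i'\arrow`r_i)\prod`k_i' | `a{:}`k_i, `D $, is in general \emph{false} when $\Cmd = [`a]L$: after substitution the command is $[`a](L\StrSub{`a}{N})N$, and rule $(\TCmd)$ forces the second component of its type to be $\fun{`D}(`a) = `k_i$, not $`k_i'$. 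This is precisely the content of part~(\ref{structural substitution lemma iii}): when the outermost name is $`a$, the ``cut type'' visible in the premise of $(\MAbs)$ genuinely changes under structural substitution.

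The paper accordingly splits the $(\mu)$ case on whether $\Cmd = [`b]L$ with $`b \neq `a$ (where part~(\ref{structural substitution lemma ii}) applies and the command type is preserved, so your sketch works) or $\Cmd = [`a]L$ (where part~(\ref{structural substitution lemma iii}) yields the shifted type $(`k_i\arrow`r_i)\prod`k_i$ rather than $(`k_i'\arrow`r_i)\prod`k_i'$). The latter case still goes through, because $(\MAbs)$ only requires \emph{some} type of the shape $(`k''\arrow`r_i)\prod`k''$ in the premise, and the new cut type $`k'' = `k_i$ gives exactly the desired conclusion $`k_i\arrow`r_i$. Your closing remark that the difficulty is ``reconciling the $`d_i$ via $(\Strength)$'' misdiagnoses the issue: the obstacle is not merging the $`d_i$ into a single $`d'$, but accepting that the command type itself shifts and reapplying $(\MAbs)$ with the changed cut type.
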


 \begin{proof} % 
By induction on the definition of the reduction relation; we will only consider the cases for the three basic reduction rules.

 \begin{description} \itemsep 2pt

 \item [{$(`l x . M ) N \red M [ N /x ] $}] 
By Lem.\skp\ref{gen lemma}, if $\derLmu `G |- {(`l x . M ) N} : `d | `D $ then there exists $I$ such that, for all $i\ele I$, $\derLmu `G |- {`l x . M} : `d_i\prod `k_i\arrow `r_i | `D $, $\derLmu `G |- N : `d_i | `D $ and $\bigwedge_{i\ele I} `k_i\arrow `r_i \seq_D `d$. 
By the same lemma, for every $i \ele I$ there exists $J_i$ such that $\derLmu `G, x{:} `d_{i,j} |- M : `k_{i,j}\arrow `r_{i,j} | `D $ for all $j\ele J_i$ and $\bigwedge_{j\ele J_i} `d_{i,j}\prod `k_{i,j}\arrow `r_{i,j} \seq_D `d_i\prod `k_i\arrow `r_i $.
By Lem.\skp\ref{lem:SubtypeProp}\skp(\ref{lem:SubtypeProp-3}) the last in-equation implies that for all $i\ele I$ there is a $J'_i \subseteq J_i$ such that:
 \[ \begin{array}{rcl@{\quad}c@{\quad}rcl}
`d_i\prod `k_i &\leq_C& \bigwedge_{j\ele J'_i}\,`d_{i,j}\prod `k_{i,j} 
	&\textrm{and}&
\bigwedge_{j\ele J'_i}\,`r_{i,j} &\seq_R& `r_i.
 \end{array} \]
It follows that $`d_i \leq_D `d_{i,j}$, and $ `k_i \seq_C `k_{i,j}$ and $`r_{i,j} \seq_R `r_i$, and hence $`k_{i,j}\arrow `r_{i,j} \seq_D `k_i\arrow `r_i$, for all $i\ele I$ and $j\ele J'_i$.
Therefore, by $(\Strength)$ and $(\seq)$ we have $\derLmu `G, x{:} `d_i |- {M} : `k_i\arrow `r_i | `D $, so that by Lem.\skp\ref{lem:substitution}\skp(\ref{lem:substitution-term}) we have $\derLmu `G |- {M[N/x]} : `k_i\arrow `r_i | `D $ for all $i\ele I$.
From this we obtain that $\derLmu `G |- {M[N/x]} : \bigwedge_{i\ele I} `k_i\arrow `r_i | `D $ by $(\inter)$, and conclude by $(\seq)$ using $\bigwedge_{i\ele I} `k_i\arrow `r_i \seq_D `d$.

 \item [{$ (`m `a . \Cmd ) N \red `m `a . \Cmd [ `a\Becomes N ] $}]
By Lem.\skp\ref{gen lemma}, if $\derLmu `G |- {(`m `a . \Cmd ) N} : `d | `D $ then there exists $I$ such that, for all $i\ele I$, $\derLmu `G |- {`m `a . \Cmd} : `d_i\prod `k_i\arrow `r_i | `D $, $\derLmu `G |- N : `d_i | `D $ and $\bigwedge_{i\ele I} `k_i\arrow `r_i \seq_D `d$. 
By the same lemma, for every $i \ele I$ there exists $J_i$ such that $\derLmu `G |- {\Cmd} : (`k_{i,j}\arrow `r_{i,j})\prod `k_{i,j} | `a{:}`k'_{i,j},`D $ for all $j\ele J_i$ and $\bigwedge_{j\ele J_i} `k'_{i,j}\arrow `r_{i,j} \seq_D `d_i\prod `k_i\arrow `r_i $.
By Lem.\skp\ref{lem:SubtypeProp}\skp(\ref{lem:SubtypeProp-3}) the last in-equation implies that for all $i\ele I$ there is a $J'_i \subseteq J_i$ such that:
 \[ \begin{array}{rcl@{\quad}c@{\quad}rcl}
`d_i\prod `k_i &\leq_C& \bigwedge_{j\ele J'_i}\, `k'_{i,j} 
	& \textrm{and} & 
\bigwedge_{j\ele J'_i}\,`r_{i,j} \seq_R `r_i.
 \end{array} \]
Since each $`k'_{i,j} \sim_C `d'_{i,j}\prod `k''_{i,j}$ for some $`d'_{i,j}, `k''_{i,j}$, we have $`d_i \seq_D `d'_{i,j}$ and $`k_i \seq_C `k''_{i,j}$ for all $i\ele I$ and $j\ele J'_i$. Note that also $\derLmu `G |- N : `d'_{i,j} | `D $ by $(\seq)$ for all $i,j$ as above.
	
We distinguish the cases: 

% \begin{description}
% \item [$ \Cmd \same {[`b]}L$, with $`b \neq `a $] 
~\kern-10mm $( \Cmd \same {[`b]}L$, with $`b \neq `a ) : $ 
By Lem.\skp\ref{structural substitution lemma}\skp(\ref{structural substitution lemma ii}),
from $\derLmu `G |- {{[`b]}L} : (`k_{i,j}\arrow `r_{i,j})\prod `k_{i,j} | `a{:} `d'_{i,j}\prod `k''_{i,j},`D $ and 
$\derLmu `G |- N : `d'_{i,j} | `D $ we have that
$\derLmu `G |- {{[`b]}L[`a \Becomes N]} : (`k_{i,j}\arrow `r_{i,j})\prod `k_{i,j} | `a{:} `k''_{i,j},`D $ and
hence $\derLmu `G |- {`m `a. {[`b]}L[`a \Becomes N]} : `k''_{i,j}\arrow `r_{i,j} | `D $ for all $i\ele I$ and $j\ele J'_i$.
Since $`k_i \seq_C `k''_{i,j}$ and $`r_{i,j} \seq_R `r_i$ we deduce that $`k''_{i,j}\arrow `r_{i,j} \seq_D `k_i\arrow `r_i$
for all $i\ele I$ so that $\derLmu `G |- {`m `a. {[`b]}L[`a \Becomes N]} : `k_i\arrow `r_i | `D $ by $(\seq)$
and $\derLmu `G |- {`m `a. {[`b]}L[`a \Becomes N]} : \bigwedge_{i\ele I}`k_i\arrow `r_i | `D $ by $(\inter)$,
and we conclude $\derLmu `G |- {`m `a. {[`b]}L[`a \Becomes N]} : `d | `D $ using 
$\bigwedge_{i\ele I} `k_i\arrow `r_i \seq_D `d$ and $(\seq)$.

%	 \item [$ \Cmd \same {[`a]}L $] 
~\kern-10mm $( \Cmd \same {[`a]}L ) : $ 
we have $\derLmu `G |- {{[`a]}L} : (`k_{i,j}\arrow `r_{i,j})\prod `k_{i,j} | `a{:}`k'_{i,j},`D $ 
and since $`d_i\prod `k_i \leq_C \bigwedge_{j\ele J'_i}\, `k'_{i,j}$ by $(\Strength)$ we know that
$\derLmu `G |- {{[`a]}L} : (`k_{i,j}\arrow `r_{i,j})\prod `k_{i,j} | `a{:} `d_i\prod `k_i,`D $.
By Lem.\skp\ref{gen lemma} in this case we have that
$\derLmu `G |- L : `k_{i,j}\arrow `r_{i,j} | `a{:} `d_i\prod `k_i,`D $ and $`d_i\prod `k_i \seq_C `k_{i,j}$,
from which we have that $`k_{i,j}\arrow `r_{i,j} \seq_D `d_i\prod `k_i\arrow `r_{i,j}$ is a type of $L$
and hence $\derLmu `G |- {{[`a]}L} : (`d_i\prod `k_i\arrow `r_{i,j})\prod `d_i\prod `k_i | `a{:}`d_i\prod `k_i,`D $
by $(\TCmd)$. 
Now since $\derLmu `G |- N : `d_i | `D $ for all $i\ele I$,
by Lem.\skp\ref{structural substitution lemma}\skp(\ref{structural substitution lemma iii}) we obtain
$\derLmu `G |- {({[`a]}L)[`a\Becomes N ]} : \bigwedge_{j\ele J'_i}(`k_i\arrow `r_{i,j})\prod `k_i | `a{:}`k_i,`D $
for all $i$; on the other hand $\bigwedge_{j\ele J'_i} (`k_i\arrow `r_{i,j}) \sim_D 
`k_i\arrow \bigwedge_{j\ele J'_i} `r_{i,j} \seq_D `k_i\arrow `r_i$ by the above, and we conclude:
 \[ \begin{array}{c}
 \Inf[\MAbs]
 	{\Inf[\seq]
 {\InfBox{\derLmu `G |- {({[`a]}L)[ `a\Becomes N ]} : \bigwedge_{j\ele J'_i} (`k_i\arrow `r_{i,j})\prod `k_i | `a{:}`k_i,`D }}
{\derLmu `G |- {({[`a]}L)[ `a\Becomes N ]} : (`k_i\arrow `r_i)\prod `k_i | `a{:}`k_i,`D }
	}
	{\derLmu `G |- {`m `a.({[`a]}L)[ `a\Becomes N ]} : `k_i\arrow `r_i | `a{:}`k_i,`D }
 \end{array}\]
Since this holds for all $i\ele I$ and we know that $\bigwedge_{i\ele I} `k_i\arrow `r_i \seq_D `d$, we eventually derive
$\derLmu `G |- {`m `a.({[`a]}L)[ `a\Becomes N ]} : `d | `a{:}`k_i,`D $ by $(\inter)$ and $(\seq)$.

%	 \end{description}

\item [{$ `m`a.[`b]`m`g.\Cmd \red `m`a.\Cmd[`b/`g] $}]
	Let $ \derLmu `G |- {`m`a.[`a']`m`g.\Cmd} : `d | `D $; then by Lem.\skp\ref{gen lemma}
	there exists $I$ such that for all $i\ele I$
	$\derLmu `G |- {[`a']`m`g.\Cmd} : (`k_i\arrow`r_i)\prod`k_i | `a{:}`k'_i,`D $ and
	$\bigwedge_{i\ele I} `k'_i\arrow`r_i \seq_D `d$. 
	
We distinguish two cases:

% \begin{description}

%\item[$`a \neq `b$]
~\kern-10mm $(`a \neq `b):$
By Lem.\skp\ref{gen lemma} for all $i\ele I$ there exists $J_i$ such that
for all $j\ele J_i$ it is derivable $\derLmu `G |- {`m`g.\Cmd} : `d_{i,j} | `a{:}`k'_i,`D $,
$`D(`b) \leq_C \bigwedge_{i\ele I,j\ele J_i} `k_{i,j}$ and $\bigwedge_{j\ele J_i} `d_{i,j}\prod`k_{i,j} 
\seq_C (`k_i\arrow`r_i)\prod`k_i$, for all $i\ele I$. The latter implies that 
$\bigwedge_{j\ele J_i} `d_{i,j} \seq_D `k_i\arrow`r_i$ and
$`D(`b) \leq_C \bigwedge_{i\ele I}`k_i$. 
By applying Lem.\skp\ref{gen lemma} to $\derLmu `G |- {`m`g.\Cmd} : `d_{i,j} | `a{:}`k'_i,`D $ we know that for some $H_{i,j}$,
$\derLmu `G |- {\Cmd} : (`k_{i,j,h}\arrow`r_{i,j,h})\prod`k_{i,j,h} | `a{:}`k'_i, `g{:}`k'_{i,j,h},`D $
for all $i\ele I$, $j\ele J_i$ and $h\ele H_{i,j}$, and $\bigwedge_{h} `k'_{i,j,h}\arrow`r_{i,j,h} \seq_D `d_{i,j}$ for all $i,j$.
From the in-equations above we derive that 
$\bigwedge_{j\ele J_i,h\ele H_{i,j}} `k'_{i,j,h}\arrow`r_{i,j,h} \seq_D \bigwedge_{j\ele J_i}`d_{i,j} \seq_D `k_i\arrow`r_i$, 
which
implies by Lem.\skp\ref{lem:SubtypeProp}\skp(\ref{lem:SubtypeProp-3}) 
the existence of certain $J'_I \subseteq J_i$ and $H'_{i,j} \subseteq H_{i,j}$ such that
\[ \begin{array}{rcl@{\quad}c@{\quad}rcl@{\quad}l}
`k_i &\seq_C& \bigwedge_{j\ele J'_i,h\ele H'_{i,j}} `k'_{i,j,h}
&\textrm{ and } &
\bigwedge_{j\ele J'_i,h\ele H'_{i,j}} `r_{i,j,h} &\seq_R& `r_i, & 
\textrm{for all }i\ele I.
 \end{array} \]

It follows that $`D(`b) \leq_C \bigwedge_{i\ele I}`k_i \leq_C \bigwedge_{i\ele I,j\ele J'_i,h\ele H'_{i,j}} `k'_{i,j,h}$, so that
by Lem.\skp\ref{lem:substitution}\skp(\ref{lem:substitution-name}) 
$\derLmu `G |- {\Cmd[`b/`g]} : (`k_{i,j,h}\arrow`r_{i,j,h})\prod`k_{i,j,h} | `a{:}`k'_i,`D $, from which we 
obtain by rule $(\MAbs)$ that 
$\derLmu `G |- {`m`a.\Cmd[`b/`g]} : `k'_i\arrow`r_{i,j,h} | `D $
for all $i\ele I, j\ele J'_i$ and $h\ele H'_{i,j}$.

This implies that, by rule $(\inter)$, we can deduce for $`m`a.\Cmd[`b/`g]$ the type 
 \[ \begin{array}{rcl} 
\bigwedge_{i\ele I, j\ele J'_i,h\ele H'_{i,j}} `k'_i\arrow`r_{i,j,h} &\sim_D& 
\bigwedge_{i\ele I} `k'_i\arrow\bigwedge_{j\ele J'_i,h\ele H'_{i,j}}`r_{i,j,h} ;
 \end{array} \] 
but we know from the above that $\bigwedge_{j\ele J'_i,h\ele H'_{i,j}}`r_{i,j,h} \seq_R `r_i$ for all $i\ele I$ and we conclude that
$\bigwedge_{i\ele I} `k'_i\arrow\bigwedge_{j\ele J'_i,h\ele H'_{i,j}}`r_{i,j,h} \seq_D \bigwedge_{i\ele I} `k'_i\arrow`r_i
\seq_D `d$; hence the thesis follows by rule $(\seq)$.
	
% \item[$`a = `b$] 
~\kern-10mm $(`a = `b) : $
Reasoning as in the previous case, but now we have that for all $i\ele I$
$(`a{:}`k'_i,`D)(`a) = `k'_i \seq_C \bigwedge_{j\ele J'_i,h\ele H'_{i,j}} `k'_{i,j,h}$, so that we immediately
get that $\bigwedge_{i\ele I} `k'_i\arrow\bigwedge_{j\ele J'_i,h\ele H'_{i,j}}`r_{i,j,h}$ is a type of 
$`m`a.\Cmd[`b/`g]$ which is less
than $\bigwedge_{i\ele I} `k'_i\arrow`r_i \seq_D `d$, and we are done.
\qed

% \end{description}
 \end{description}
 \end{proof}

We will now show that types are preserved under expansion, the opposite of reduction.

 \begin{thm}[Subject expansion] \label{Subject expansion} 
If $M \redbm N$, and $ \derLmu `G |- N : `d | `D $, then $ \derLmu `G |- M : `d | `D $.
 \end{thm}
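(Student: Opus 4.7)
The plan is to mirror the subject reduction proof (Thm.\skp\ref{Subject reduction}) step by step, exploiting the fact that the substitution lemmas Lem.\skp\ref{lem:substitution} and Lem.\skp\ref{structural substitution lemma}, together with the generation lemma Lem.\skp\ref{gen lemma}, are all biconditionals and hence usable in the reverse (``expansion'') direction. The argument proceeds by induction on $M \redbm N$; we handle the three basic reduction rules, the compatibility closure being routine. Throughout, whenever the target type is $\simD `w$ (or $\simC `w$ for commands) the conclusion is immediate by rule $(`w)$, so we may always assume this is not the case.

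For $(`lx.M)N \red M[N/x]$: starting from $\derLmu `G |- M[N/x] : `d | `D $, Lem.\skp\ref{lem:substitution}\skp(\ref{lem:substitution-term}) produces $`d'$ with $\derLmu `G,x{:}`d' |- M : `d | `D $ and $\derLmu `G |- N : `d' | `D $. Using the equivalences $`y \simD `w\arrow`y$ and $`w \simD `w\arrow`w$, together with distribution of $\arrow$ over $\inter$ (covariant in the codomain), one can rewrite $`d \simD \Inter_{i \ele I}(`k_i\arrow`r_i)$ as an intersection of arrow types. For each $i$, $(\seq)$ gives $\derLmu `G,x{:}`d' |- M : `k_i\arrow`r_i | `D $; then $(\Abs)$ gives $\derLmu `G |- `lx.M : `d'\prod`k_i\arrow`r_i | `D $, and $(\App)$ yields $\derLmu `G |- (`lx.M)N : `k_i\arrow`r_i | `D $. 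A final $(\inter)$ followed by $(\seq)$ concludes.

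For $(`m`a.\Cmd)N \red `m`a.\Cmd\StrSub{`a}{N}$: Lem.\skp\ref{gen lemma} provides $I$ and, for each $i \ele I$, types $`k_i, `r_i, `k'_i$ with $\derLmu `G |- \Cmd\StrSub{`a}{N} : (`k_i\arrow`r_i)\prod`k_i | `a{:}`k'_i, `D $ and $\Inter_{i \ele I}`k'_i\arrow`r_i \seqD `d$. Case on $\Cmd$: when $\Cmd \equiv [`b]L$ with $`b \neq `a$, Lem.\skp\ref{structural substitution lemma}\skp(\ref{structural substitution lemma ii}) supplies $`d'_i$ with $\derLmu `G |- N : `d'_i | `D $ and $\derLmu `G |- \Cmd : (`k_i\arrow`r_i)\prod`k_i | `a{:}`d'_i\prod`k'_i, `D $; when $\Cmd \equiv [`a]L$ use Lem.\skp\ref{structural substitution lemma}\skp(\ref{structural substitution lemma iii}) analogously. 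Setting $`d' = \Inter_{i \ele I}`d'_i$, rule $(\Strength)$ replaces $`d'_i$ by $`d'$ uniformly; then $(\MAbs)$ gives $\derLmu `G |- `m`a.\Cmd : (`d'\prod`k'_i)\arrow`r_i | `D $ and $(\App)$ gives $\derLmu `G |- (`m`a.\Cmd)N : `k'_i\arrow`r_i | `D $ for every $i$, and $(\inter)$ and $(\seq)$ conclude.

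For $`m`j.[`a]`m`b.\Cmd \red `m`j.\Cmd[`a/`b]$: Lem.\skp\ref{gen lemma} applied to the reduct produces $`k_i, `r_i, `k'_i$ with $\derLmu `G |- \Cmd[`a/`b] : (`k_i\arrow`r_i)\prod`k_i | `j{:}`k'_i, `D $ and $\Inter_{i \ele I}`k'_i\arrow`r_i \seqD `d$. Lem.\skp\ref{lem:substitution}\skp(\ref{lem:substitution-name}) then supplies $`k''_i$ with $\derLmu `G |- \Cmd : (`k_i\arrow`r_i)\prod`k_i | `j{:}`k'_i, `b{:}`k''_i, `D $ and $(`j{:}`k'_i, `D)(`a) \seqC `k''_i$. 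Rule $(\MAbs)$ on $`b$ gives $\derLmu `G |- `m`b.\Cmd : `k''_i\arrow`r_i | `j{:}`k'_i, `D $; then $(\TCmd)$ gives $\derLmu `G |- [`a]`m`b.\Cmd : (`k''_i\arrow`r_i)\prod(`j{:}`k'_i, `D)(`a) | `j{:}`k'_i, `D $. Using $(`j{:}`k'_i, `D)(`a) \seqC `k''_i$ and contravariance of $\arrow$ in its first argument, $(\seq)$ rewrites this type into the shape $(`k'\arrow`r_i)\prod`k'$ required by $(\MAbs)$ (with $`k' = (`j{:}`k'_i,`D)(`a)$), which finally yields $\derLmu `G |- `m`j.[`a]`m`b.\Cmd : `k'_i\arrow`r_i | `D $; $(\inter)$ and $(\seq)$ conclude. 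The principal obstacle is the $`m$ case in the sub-case $\Cmd \equiv [`a]L$, which inherits the intricate formulation of Lem.\skp\ref{structural substitution lemma}\skp(\ref{structural substitution lemma iii}): the type of $`a$ is itself altered by the substitution, and one must thread the extracted type $`d'$ of $N$ through several uniform applications of $(\Strength)$ while carefully matching the intersection decompositions on both sides, just as in the corresponding subject reduction argument.
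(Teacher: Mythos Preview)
Your proposal is correct and follows essentially the same route as the paper: induction on the reduction, handling the three base rules via the biconditional substitution lemmas (Lem.\skp\ref{lem:substitution} and Lem.\skp\ref{structural substitution lemma}) in their expansion direction, then rebuilding the redex with $(\Abs)/(\MAbs)$, $(\App)$, $(\inter)$ and $(\seq)$. Your uniform treatment of the $(\Rename)$ case via $(`j{:}`k'_i,`D)(`a)$ neatly subsumes the paper's explicit case split on whether the outer binder coincides with the named variable; and you are right that the sub-case $\Cmd\equiv[`a]L$ of the $(`m)$ rule is where the real work lies --- the paper does not get $(\MAbs)$ to yield $(`d'\prod`k'_i)\arrow`r_i$ in one step there, but first unpacks the conclusion of Lem.\skp\ref{structural substitution lemma}\,(\ref{structural substitution lemma iii}) using Lem.\skp\ref{lem:SubtypeProp}\,(\ref{lem:SubtypeProp-3}) to select a suitable sub-index set before the final $(\seq)/(\MAbs)/(\App)$ chain, exactly the ``careful matching of intersection decompositions'' you anticipate.
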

 \begin{proof} % 
By induction on the definition of reduction, 
where we focus on the rules and assume $`d \not\sim_D `w$ since otherwise the thesis is trivial.

 \begin{description} \itemsep 2pt

 \item [{$ (`l x . M ) N \red M [ N /x ] $}]
If $ \derLmu `G |- M[N/x] : `d | `D $, then by Lem.\skp\ref{lem:substitution}\skp(\ref{lem:substitution-term}) there exists a $`d'$ such that 
$ \derLmu `G,x{:}`d' |- M : `d | `D $ and $ \derLmu `G |- N : `d' | `D $. Since $`d \not\sim_D `w$ 
we have $`d \sim_D \bigwedge_{i\ele I} `k_i\arrow `r_i$, so that
$ \derLmu `G,x{:}`d' |- M : `k_i\arrow `r_i | `D $ by rule $(\seq)$. By rule $(\Abs)$ we get
$ \derLmu `G |- `l x.M : `d'\prod `k_i\arrow `r_i | `D $, and by rule $(\App)$ 
$ \derLmu `G |- (`l x.M)N : `k_i\arrow `r_i | `D $ for all $i\ele I$, so that we conclude by rules $(\inter)$ and $(\seq)$.

 \item [{$ (`m `a . {\Cmd} ) N \red `m `a . {\Cmd} [ `a \Becomes N ] $}]
By Lem.\skp\ref{gen lemma}, from $\derLmu `G |- {`m `a . \Cmd[ `a \Becomes N ]} : `d | `D $ it follows that there is an $I$ 
such that for all $i\ele I$ it is derivable
 \[\derLmu `G |- {{\Cmd}[ `a \Becomes N ]} : (`k_i\arrow `r_i)\prod `k_i | `a{:}`k'_i ,`D \quad \text{with} \quad
 \bigwedge_{i\ele I} `k'_i\arrow `r_i \seq_D `d.\]
We distinguish the following cases of $\Cmd$:

 \begin{description}

 \item [$ \Cmd \same {[`b]}L$, with $`b \neq `a $] 
Then, by Lem.\skp\ref{structural substitution lemma}\,(\ref{structural substitution lemma ii}), for all $i\ele I$
there exists $`d'_i$ such that 
$\derLmu `G |- N : `d'_i | `D $, and $ \derLmu `G |- {{[`b]}L} : (`k_i\arrow `r_i)\prod `k_i | `a{:}`d'_i\prod `k'_i,`D $.
Then, by rule $(`m)$, $ \derLmu `G |- `m`a.{{[`b]}L} : `d'_i\prod `k'_i\arrow `r_i | `D $, and 
$ \derLmu `G |- (`m`a.{{[`b]}L})N : `k'_i\arrow `r_i | `D $ follows by rule $(\App)$. Using that 
$ \bigwedge_{i\ele I} `k'_i\arrow `r_i \seq_D `d$ the thesis follows by rules $(\inter)$ and $(\seq)$.

 \item [$ \Cmd \same {[`a]}L $] 
 By Lem.\skp\ref{structural substitution lemma}\skp(\ref{structural substitution lemma iii}) we know that for all $i\ele I$ there exist
 $J_i$ and types $`d'_i$, $`k_{i,j}$ and $`r_{i,j}$ such that:
 \begin{flatenumerate}
 \item \label{eq:exp i}
 	$\derLmu `G |- N : `d'_i | `D $;
 \item \label{eq:exp ii}
 	$\derLmu `G |- {[`a]L}: (`d'_i\prod `k_{i,j}\arrow `r_{i,j})\prod `d'_i\prod `k'_i | `a{:}`d'_i\prod `k'_i,`D $ for all $i\ele I$; and
 \item \label{eq:exp iii}
 	$\bigwedge_{j \ele J_i}(`k_{i,j}\arrow `r_{i,j})\prod `k'_i \seq_C (`k_i\arrow `r_i)\prod `k_i$.
 \end{flatenumerate}
The in-equation (\ref{eq:exp iii}) implies that $\bigwedge_{j \ele J_i}`k_{i,j}\arrow `r_{i,j} \seq_D `k_i\arrow `r_i$ and 
$`k'_i \seq_C `k_i$ for all $i\ele I$, 
so that by Lem.\skp\ref{lem:SubtypeProp}\skp(\ref{lem:SubtypeProp-3}) 
for all $i\ele I$ there exists $J'_i \subseteq J_i$ such that 
 $ \begin{array}{rcl}
`k_i &\leq_C& \bigwedge_{j\ele J'_i} `k_{i,j} 
 \end{array} $
and
 $ \begin{array}{rcl}
\bigwedge_{j\ele J'_i} `r_{i,j} &\seq_R& `r_i.
 \end{array} $
Then $`d'_i\prod `k_{i,j}\arrow `r_{i,j} \seq_D `d'_i\prod `k_i\arrow `r_i$, and since $`k'_i \seq_C `k_i$
we conclude that
 \[ \begin{array}{rcl}
(`d'_i\prod `k_{i,j}\arrow `r_{i,j})\prod `d'_i\prod `k'_i &\seq_C& (`d'_i\prod `k_i\arrow `r_i)\prod `d'_i\prod `k_i.
 \end{array} \]
From this, (1%\ref{eq:exp i}
), and (2%\ref{eq:exp ii}
) above we get the derivations:
 \[ \begin{array}{c}
\Inf[\App]
	{\Inf[\MAbs]
{\Inf[\seq]
	{\InfBox{\derLmu `G |- {[`a]L}: (`d'_i\prod `k_{i,j}\arrow `r_{i,j})\prod `d'_i\prod `k'_i | `a{:}`d'_i\prod `k'_i,`D }
	}{\derLmu `G |- {[`a]L}: (`d'_i\prod `k_i\arrow `r_i)\prod `d'_i\prod `k_i | `a{:}`d'_i\prod `k'_i,`D }
}{\derLmu `G |- `m`a.{[`a]L}: `d'_i\prod `k'_i\arrow `r_i | `D }
\quad 
\InfBox{ \derLmu `G |- N : `d'_i | `D }
	}{\derLmu `G |- (`m`a.{[`a]L})N : `k'_i\arrow `r_i | `D }
 \end{array}\]
for all $i\ele I$. From these derivations by rule $(\inter)$ we get the type $\bigwedge_{i\ele I} `k'_i\arrow `r_i \seq_D `d$
and we conclude $\derLmu `G |- (`m`a.{[`a]L})N : `d | `D $ by $(\seq)$.

 \item [{$ `m`a[`b]`m`g.\Cmd \red `m`a.\Cmd[`b/`g] $}]
 	If $\derLmu `G |- {`m`a.\Cmd[`b/`g]} : `d | `D $ then by Lem.\skp\ref{gen lemma} we have that for some $I$
	and all $i\ele I$:
 \[ \begin{array}{rcl}
\derLmu `G |- {\Cmd[`b/`g]} : (`k_i\arrow`r_i)\prod `k_i | `a{:} `k'_i,`D & \textrm{and}& 	\bigwedge_{i\ele I} `k'_i\arrow `r_i \leq_D `d.
 \end{array} \]
	Assuming without loss of generality that $`g \notele `D $ we distinguish the cases:
% \begin{description}

%\item[$`a \neq `b$]
~\kern-10mm $(`a \neq `b):$
By Lem.\skp\ref{lem:substitution}\skp(\ref{lem:substitution-name}) we have
that $\derLmu `G |- {\Cmd} : (`k_i\arrow`r_i)\prod `k_i | `g{:}`k'',`a{:} `k'_i,`D $ where $`D(`b) = `k''$. Then
for all $i \ele I$ there is a derivation $\Der_i$:
 \[ \begin{array}{c}
\Inf[\MAbs]
	{\Inf[\TCmd]
{\Inf[\MAbs]
	{\InfBox{\derLmu `G |- {\Cmd} : (`k_i\arrow`r_i)\prod `k_i | `g{:}`k'',`a{:} `k'_i,`D }}
	{\derLmu `G |- {`m`g.\Cmd} : `k''\arrow `r_i | `a{:} `k'_i,`D }
}
{\derLmu `G |- {[`b] `m`g.\Cmd} : (`k''\arrow `r_i)\prod `k'' | `a{:} `k'_i,`D }
	}
	{\derLmu `G |- {`m `a.[`b] `m`g.\Cmd} : `k'_i\arrow `r_i | `D }
 \end{array}\]
so that we conclude by rules $(\inter)$ and $(\seq)$ using $\bigwedge_{i\ele I} `k'_i\arrow `r_i \leq_D `d$.

% \item[$`a = `b$] 
~\kern-10mm $(`a = `b) : $
In this case we reason as before but we get 
$\derLmu `G |- {\Cmd} : (`k_i\arrow`r_i)\prod `k_i | `g{:}`k'_i, \ `a{:} `k'_i,`D $
since the type of $`g$ has to be the same than that of $`b = `a$ and
$(`a{:} `k'_i,`D)(`a) = `k'_i$, so that the assumption about $`g$ now differs in each derivation $\Der_i$.
But then we have:
 \[ \begin{array}{c}
\Inf[\MAbs]
	{\Inf[\TCmd]
{\Inf[\MAbs]
	{\InfBox{\derLmu `G |- {\Cmd} : (`k_i\arrow`r_i)\prod `k_i | `g{:}`k'_i, `a{:} `k'_i,`D }}
	{\derLmu `G |- {`m`g.\Cmd} : `k'_i\arrow `r_i | `a{:} `k'_i,`D }
}
{\derLmu `G |- {[`a] `m`g.\Cmd} : (`k'_i\arrow `r_i)\prod `k'_i | `a{:} `k'_i,`D }
	}
	{\derLmu `G |- {`m `a.[`a] `m`g.\Cmd} : `k'_i\arrow `r_i | `D }
 \end{array}\]
and we conclude as in the previous case.
\qed
% 	\end{description}
 \end{description}
 \end{description}
 \end{proof}

We end this section with two examples.

 \begin{exa} \label{SK}
As stated by the last results, we now show that we can assign to $(`lxyz.xz(yz))(`lab.a)$ any type that is assignable to $`lba.a$ since $(`lxyz.xz(yz))(`lab.a) \red^* `lba.a$. We first derive a type for $`lba.a$.
 \[ \begin{array}{c}
\Inf	[\Abs]
	{\Inf	[\Abs]
		{\Inf	[\Ax]
			{ \derLmu a{:}`k\arrow `r,b{:}`w |- a : `k\arrow `r | {} }
		}{ \derLmu b{:}`w |- `la.a : (`k\arrow `r)\prod `k\arrow `r | {} }
	}{ \derLmu {} |- `lba.a : `w\prod (`k\arrow `r)\prod `k\arrow `r | {} }
 \end{array} \]
Let $`G = x{:}(`k\arrow `r)\prod `w\prod `k\arrow `r , y{:}`w, z{:}`k\arrow `r$, then we can derive:
 \[ \begin{array}{c}
\Inf	[\App]
	{\Inf	[\Abs]
		{\Inf	[\Abs]
			{\Inf	[\Abs]
				{
\hbox to 40mm{ \kern-30mm 
				 \Inf	[\App]
					{\Inf	[\App]
						{\Inf	[\Ax]
							{ \derLmu `G |- x : (`k\arrow `r)\prod `w\prod `k\arrow `r | {} }
						\Inf	[\Ax]
							{ \derLmu `G |- z : `k\arrow `r | {} }
						}{ \derLmu `G |- xz : `w\prod `k\arrow `r | {} }
					\Inf	[`w]
						{ \derLmu `G |- yz : `w | {} }
					}{ \derLmu `G |- xz(yz) : `k\arrow `r | {} \hspace*{4cm}}
}% hbox
				}{ \derLmu `G\except z |- `lz.xz(yz) : (`k\arrow `r)\prod `k\arrow `r | {} }
			}{ \derLmu `G\except z\except y |- `lyz.xz(yz) : `w\prod (`k\arrow `r)\prod `k\arrow `r | {} } 
		}{ \derLmu {} |- `lxyz.xz(yz) : ((`k\arrow `r)\prod `w\prod `k\arrow `r)\prod `w\prod (`k\arrow `r)\prod `k\arrow `r | {} }
	\Inf	[\Abs]
		{\Inf	[\Abs]
			{\Inf	[\Ax]
				{ \derLmu a{:}`k\arrow `r,b{:}`w |- a : `k\arrow `r | {} }
			}{ \derLmu a{:}`k\arrow `r |- `lb.a : `w\prod `k\arrow `r | {} }
		}{ \derLmu {} |- `lab.a : (`k\arrow `r)\prod `w\prod `k\arrow `r | {} }
	}{ \derLmu {} |- (`lxyz.xz(yz))(`lab.a) : `w\prod (`k\arrow `r)\prod `k\arrow `r | {} } 
 \end{array} \]

 \end{exa}

 \begin{exa} \label{ex:mu-self-app}
Consider the reduction $(`m`a.[`a]x)x \red `m `a.([`a]x) [`a \Becomes x] \same `m `a.[`a]xx$.
This last term is not a proof term in the sense of Parigot, but of interest here since typing the self application $xx$ is a characteristic of intersection type systems. 
Let $`d\ele \Lang_D$ be arbitrary. 
Now we have:
 \[ \begin{array}{c} \kern-35mm
\Inf	[\App]
	{\Inf	[\MAbs]
{\Inf	[\TCmd]
	{\Inf	[\seq]
{\Inf	[\Ax]
	{ \derLmu x{:}`d\inter (`d\prod`k\arrow `r) |- x : `d\inter (`d\prod`k\arrow `r) | `a{:}`d\prod`k }
}{ \derLmu x{:}`d\inter (`d\prod`k\arrow `r) |- x : `d\prod`k\arrow `r | `a{:}`d\prod`k }
	}{ \derLmu x{:}`d\inter (`d\prod`k\arrow `r) |- [`a] x : (`d\prod`k\arrow `r)\prod (`d\prod`k) | `a{:}`d\prod`k }
}{ \derLmu x{:}`d\inter (`d\prod`k\arrow `r) |- `m`a.[`a] x : `d\prod`k\arrow `r | {} }
%	\quad
\raise2\RuleH\hbox to 20mm{\kern-17mm
\Inf	[\seq]
	{\Inf	[\Ax]
{ \derLmu x{:}`d\inter (`d\prod`k\arrow `r) |- x : `d\inter (`d\prod`k\arrow `r) | `a{:}`d\prod`k }
	}{ \derLmu x{:}`d\inter (`d\prod`k\arrow `r) |- x : `d | `a{:}`d\prod`k }
}% raise
\multiput(-10,-1)(0,5){6}{.}
	}{ \derLmu x{:}`d\inter (`d\prod`k\arrow `r) |- (`m`a.[`a] x)x : `k\arrow `r | {} }
 \end{array} \]
as well as:
 \[ \begin{array}{c}
\Inf	[\MAbs]
	{\Inf	[\TCmd]
{\Inf	[\App]
	{\Inf	[\seq]
{\Inf	[\Ax]
	{ \derLmu x{:}`d\inter (`d\prod`k\arrow `r) |- x : `d\inter (`d\prod`k\arrow `r) | `a{:}`k }
}{ \derLmu x{:}`d\inter (`d\prod`k\arrow `r) |- x : `d\prod`k\arrow `r | `a{:}`k }
	\quad
	 \Inf	[\seq]
{\Inf	[\Ax]
	{ \derLmu x{:}`d\inter (`d\prod`k\arrow `r) |- x : `d\inter (`d\prod`k\arrow `r) | `a{:}`k }
}{ \derLmu x{:}`d\inter (`d\prod`k\arrow `r) |- x : `d | `a{:}`k }
	}{ \derLmu x{:}`d\inter (`d\prod`k\arrow `r) |- xx : `k\arrow `r | `a{:}`k}
}{ \derLmu x{:}`d\inter (`d\prod`k\arrow `r) |- [`a]xx : (`k\arrow `r)\prod `k | `a{:}`k }
	}{ \derLmu x{:}`d\inter (`d\prod`k\arrow `r) |- `m`a.[`a]xx : `k\arrow `r | {} }
 \end{array} \]
Observe that the `cut type' in the first derivation, $`d\prod`k$ (appearing twice in the type $(`d\prod`k\arrow `r)$ $\prod (`d\prod`k)$ of the premise of rule $(\MAbs)$), differs from the cut type $`k$ in $(`k\arrow `r)\prod `k$ occurring in the premise of $(\MAbs)$ of the second derivation; indeed the latter is of a smaller size than the former.

A similar but simpler derivation can be obtained from the previous one in the case of the reduction $(`m`a.[`a]x)y \red `m `a.([`a]x) [`a \Becomes y] \same `m `a.[`a]xy$, where there is no self-application: indeed we have that
$ \derLmu x{:}`d\prod`k\arrow `r,y{:}`d |- (`m`a.[`a]x)y : `k\arrow `r | {} $ and
$ \derLmu x{:}`d\prod`k\arrow `r,y{:}`d |- `m`a.[`a]xy : `k\arrow `r | {} $ are derivable in a very similar manner.

 \end{exa}

 \section{Characterisation of Strong Normalisation} \label{sec:character}

One of the main results for $`l`m$, proved in \cite{Parigot97}, states that all $`l`m$-terms that correspond to proofs of second-order natural deduction are strongly normalising; the reverse of this property does not hold for Parigot's system, since there, for example, not all terms in normal form are typeable.

The full characterisation of strong normalisation ($M$ is strongly normalising if and only if $M$ is typeable in a given system) for the $`l$-calculus is a property that is shown for various intersection systems (see \cite{Bar2013}, Sect.~17.2 and the references there). So it is a natural question whether there exists a similar characterisation of strongly normalising $`l`m$-terms in the present context, by suitably restricting the system in Sect.\skp\ref{sec:types}.
We answer this question here; the proof is a revised version of that in \cite{BakBdL-ITRS12}, obtained by a simplified type syntax and by just restricting the full system, instead of considering one of its variants. 

To simplify the technical treatment we shall ignore the structural reduction rule
$(\Rename)$; in fact the set of strongly formalisable terms remains the same, no matter whether this rule is considered or not.

We establish the relation between our result to the one for Parigot's system in the next section.

 \subsection{The restricted type system} \label{sec:system}

For the untyped $`l$-calculus, the characterisation of strong normalisation states that a $`l$-term is strongly normalisable if and only if it is typeable in a restricted system of intersection types, where $`w$ is not admitted as a type and consequently the rule $(`w)$ is not part of the system.
Alas a straightforward extension of this result does not hold for the $`l`m$-calculus, at least with respect to the system presented in this paper. 
This is due to the fact that the natural interpretation of a type $`k = `d_1\prod \cdots`d_k\prod `w $ (for $k>0$) is the set of continuations whose leading $k$ elements are in the denotations of $`d_1,\ldots,`d_k$; since continuations are infinite tuples, the ending $`w$ represents the lack of information about the remaining infinite part. 
Therefore these occurrences of $`w$ cannot be simply deleted without substantially changing the semantics of the type system and questioning the soundness of its rules.

We solve the problem of restricting the type assignment system to the extent of typing strongly normalising terms here by defining a particular subset of the type language. 
There the type $`w$ is allowed only in certain harmless positions such that its meaning becomes just the universe of terms we are looking for, \ie~the strongly normalising ones.
This amounts to restricting the sets of types $\Lang_D$ and $\Lang_C$ to those having $`w $ 
only as the final part of product types. 
We shall then suitably modify the standard interpretation of intersection types, adapting Tait's computability argument.

For what concerns the atomic types, a single constant type $`y$ suffices for our purposes. Therefore restricted types are of two sorts instead of three.

 \begin{defi}[Restricted Types and Pre-order] \label{def:restrTypes}

 \begin{enumerate}

 \firstitem
The sets $\LangR_D$ of \emph{(restricted) term types} and $\LangR_C$ of \emph{(restricted) continuation types} are
defined inductively by the following grammars, where $`y$ is a type constant:
 \[ \begin{array}{l@{\quad}rcl}
\LangR_D : & `d & :: =& `k\arrow `y \mid `d\inter `d \\
\LangR_C : & `k & :: = & `w\mid`d\prod `k \mid `k \inter `k
 \end{array} \]

 \item
We define the set $\LangR$ of \emph{(restricted) types} as $\LangR = \LangR_D \union \LangR_C $ and the relations $\seqRestrA$ 
over $\LangR_A$ (for $A=D,C$) as 
the pre-order induced by the least intersection type theories $\TRestr_A$ such that:
 \[ \begin{array}{c@{\dquad}c@{\dquad}c@{\dquad}c}
\Inf {`k \seqRestrC `w}
 &
\Inf	{(`d _1\prod `k _1) \inter (`d _2\prod `k _2) \seqRestrC (`d _1\inter`d _2)\prod ( `k _1\inter `k _2) }
 &
\Inf	{`d_1\seqRestrD `d _2 \quad `k _1\seqRestrC `k _2 }{ `d _1\prod `k _1 \seqRestrC `d _2\prod `k _2 }
 &
\Inf	{`k _2 \seqRestrC `k _1 }{ `k _1\arrow `y \seqRestrD `k _2\arrow `y } 
 \end{array} \]
 \[ \begin{array}{c@{\dquad}c@{\dquad}c@{\dquad}c}
\Inf	{`s \inter `t \seqRestrA `s }
 &
\Inf	{`s \inter `t \seqRestrA `t }
 &
\Inf	{`s \seqRestrA `t _1 \quad `s \seqRestrA `t _2 }{ `s \seqRestrA `t _1\inter `t _2}
 &
(A = D,C)
 \end{array} \]

 \item
We define the \emph{length} of a continuation type, $ \length{`.} : \LangR_C \To \Nat$, as follows:
 \[ \begin{array}{rcl}
\length{`w} &=& 0 \\ 
\length{`d\prod `k} &=& 1+ \length{`k} \\ 
\length{`k_1 \inter`k_2} &=& \max (\length{`k_1} , \length{`k_2})
 \end{array} \]
 \Ugo{Where length is used? I have put this part into comment}
\Steffen{In Theorem {thm:typableAreSN}; a simple find reveals this.}
 \end{enumerate}
 \end{defi}

By definition, we have that $\LangR_D \subseteq \Lang_D$ and $\LangR_C \subseteq \Lang_C$. 
All the rules axiomatising $\seqRestrA$ are instances of the rules axiomatising $\seqD$ and $\seqC$ in Def.\skp\ref{def:typeTHeorFromDomain} and \ref{def:typeTheories}, 
hence ${\seqRestrA} \subseteq {\seqA}$ for $A=D,C$; in other words, the theories $\TypeTheor_A$ can be seen as extensions of the respective theories $\TRestr_A$. 
It is natural to ask whether $\TypeTheor_A$ (for $A=D,C$) is conservative with respect to $\TRestr_A$.
This is not obvious: in a derivation of $`s \seqA `t$ in the formal theory $\TypeTheor_A$, even if $`s,`t \ele \LangR_A$, one could have used a type $`s'\notele \LangR_A$ and the transitivity rule with premises $`s\seqA`s'\seqA`t$, which cannot be derived in the formal presentation of $\seqRestrA$.
For example, consider the inequalities:
 \[ \begin{array}{rcl@{\quad}lc}
`d_1\prod `d_2\prod `w &\seqC& `d_1\prod `w\prod `w &=_C& `d_1\prod `w,
 \end{array} \]
where the axiom $`w\prod `w =_C `w$ is used (see Def.\skp\ref{def:typeTheories}).
If $`d_1,`d_2 \ele \LangR_D$ then both $`d_1\prod `d_2\prod `w$ and $`d_1\prod `w$ are in $\LangR_C$, but $ `d_1\prod `w\prod `w\notele \LangR_C$ because $`w\notele \LangR_D$. 

However this is not a counterexample, since $`d_2\prod `w\seqRestr `w$ which implies that $`d_1\prod `d_2\prod `w\seqRestr `d_1\prod `w$ is derivable in $\TRestr_C$.
As a matter of fact, we can show that $`s \seqRestr `t$ if and only if $`s \seq `t$ for any $`s,`t \ele \LangR$ by a semantic argument. 

 \begin{lem} \label{lem:pre-order-restriction}
Take $R = \Set{\bot \po \top}$ and set $\TypeToComp_R(`y) = \top$.
Then for all $`s,`t \ele \LangR_A$, where $A=D,C$, if $\TypeToComp_A(`t) \po \TypeToComp_A(`s)$ then $`s \seqRestrA `t$.
 \end{lem}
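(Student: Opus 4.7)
The plan is to prove the lemma by simultaneous induction on $\Rank(`s) + \Rank(`t)$, first reducing both types to canonical normal forms inside the restricted theory, and then reading off the desired inequality from Lemma \ref{lem:SubtypeProp}, whose characterisation of $\seqD$ and $\seqC$ is built from axioms and rules all of which are already present in $\TRestr_D$ and $\TRestr_C$.

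The first step is to establish two normal-form representations, available within the restricted theories themselves: (i) every $`d \ele \LangR_D$ satisfies $`d \seqRestrD \Inter_{i \ele I}(`k_i\arrow `y) \seqRestrD `d$ for some finite non-empty $I$ and $`k_i \ele \LangR_C$, which is immediate from the grammar $`d ::= `k\arrow `y \mid `d\inter`d$; and (ii) every $`k \ele \LangR_C$ is equivalent in $\TRestr_C$ to a type of the form $`d_1\prod \cdots\prod `d_n\prod `w$, with $n\geq 0$ and $`d_i \ele \LangR_D$. The proof of (ii) goes by induction on $`k$, using the distributivity axiom of $\TRestr_C$ and the absorption $`w\inter `k' \seqRestrC `k' \seqRestrC `w\inter `k'$ (both directions provable since $`w$ is the top of $\seqRestrC$). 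The subtlety is that, when forming $`k_1\inter `k_2$ with normal forms of different lengths $n_1 < n_2$, one cannot pad the shorter one with $`w$ as a term-type factor (since $`w\notele \LangR_D$); instead, the missing components are absorbed on the continuation side, producing a normal form of length $n_2$ with components $(`d_i \inter `d'_i)$ for $i\seq n_1$ followed by $`d'_i$ for $n_1 < i \seq n_2$ and a trailing $`w$, so that all intermediate expressions stay inside $\LangR$.

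Next, I exploit the specific choice $R = \Set{\bot\po\top}$ and $\TypeToComp_R(`y) = \top$ to observe that $\TypeToComp_D(`d) \neq \bot$ for every $`d \ele \LangR_D$: on atoms $`k\arrow `y$ we have $\TypeToComp_D(`k\arrow `y) = \StepFun{\TypeToComp_C(`k)}{\top}$, which is non-bottom because its value $\top$ is, and joins preserve non-bottomness. Equivalently, $`y\not\simR `w$, which is precisely the side-condition required to invoke Lemma \ref{lem:SubtypeProp}(\ref{lem:SubtypeProp-3}). For the term-type case, write $`s \seqRestrD \Inter_{i \ele I}(`k_i\arrow `y)$ and $`t \seqRestrD \Inter_{j \ele J}(`k'_j\arrow `y)$; the hypothesis gives $`s \seqD `t$ by Lemma \ref{lem:Theta_AsurjOderRev}, hence by Lemma \ref{lem:SubtypeProp}(\ref{lem:SubtypeProp-3}) for each $j$ there is a non-empty $I_j \subseteq I$ with $`k'_j \seqC \Inter_{i \ele I_j}`k_i$. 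Since continuation types strictly decrease the measure, the induction hypothesis upgrades this to $`k'_j \seqRestrC \Inter_{i \ele I_j}`k_i$; the contravariance of arrow in $\TRestr_D$, combined with the ordinary intersection rules, then yields $\Inter_{i \ele I}(`k_i\arrow `y) \seqRestrD `k'_j\arrow `y$ for every $j$, and hence $`s \seqRestrD `t$. For the continuation-type case, write $`s, `t$ in normalised form of lengths $n$ and $m$; a component-wise comparison of the tuples $\TypeToComp_C(`s)$ and $\TypeToComp_C(`t)$, combined with the fact that no component of a restricted product evaluates to $\bot$, forces $m \seq n$ and $\TypeToComp_D(`d'_i)\po\TypeToComp_D(`d_i)$ for every $i\seq m$. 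Induction gives $`d_i\seqRestrD `d'_i$, and combining the product-covariance rule with $`d_{m+1}\prod\cdots\prod `d_n\prod `w \seqRestrC `w$ on the tail delivers $`s \seqRestrC `t$.

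The main obstacle is the continuation-type normalisation (ii) and the analogous length-matching in the product case of the induction: both steps are standard when $`w$ is available as a term type (as in the full theory), but here they must be carried out without ever introducing $`w$ at a term-type position. Once this bookkeeping is handled, the rest of the argument is essentially a routine adaptation of the standard \EATS completeness proof, using only the axioms and rules of $\TRestr_D$ and $\TRestr_C$.
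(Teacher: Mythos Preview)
Your argument is correct and follows the same overall strategy as the paper: structural induction on types, normalising each side and comparing componentwise. The continuation-type case is essentially identical to the paper's, and your discussion of how to carry out the product normalisation without introducing $`w$ at term-type positions is actually more careful than the paper, which simply asserts ``we can assume that $`k = `d_1\prod\cdots\prod`d_n\prod`w$''.

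The one genuine difference is the $D$-case. You first pass through the full theory, obtaining $`s \seqD `t$ from $\TypeToComp_D(`t)\po\TypeToComp_D(`s)$ and then invoking Lemma~\ref{lem:SubtypeProp}(\ref{lem:SubtypeProp-3}) to extract, for each $j$, a \emph{subset} $I_j$ with $`k'_j \seqC \Inter_{i\ele I_j}`k_i$, which you then feed back into the induction hypothesis. The paper instead stays on the semantic side and exploits the two-point lattice $R=\Set{\bot\po\top}$ directly: evaluating the joins of step functions $\join_i\StepFun{\TypeToComp_C(`k_i)}{\top}$ at the point $\TypeToComp_C(`k'_j)$ and using that $\top$ is join-prime, it obtains for each $j$ a \emph{single} index $i_j$ with $\TypeToComp_C(`k_{i_j})\po\TypeToComp_C(`k'_j)$. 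This is slightly sharper and avoids the detour through $\seqD$, but your version is equally valid and arguably more portable, since it does not rely on $R$ being two-valued beyond the single observation that $`y\not\sim_R`w$.
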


 \begin{proof}
By induction over the structure of types. 
 \begin{description}

 \item [$ `s,`t \ele \LangR_D $] 
Let $ `s = `d$, and $`t=`d'$.
Then $`d = \Inter_{i \ele I}(`k_i\arrow `y)$ and $`d' = \Inter_{j \ele J}(`k'_j\arrow `y)$; hence:	
 \[ \begin{array}{lll}
\TypeToComp_D(`d') = \join_{j \ele J} \StepFun{\TypeToComp_C(`k'_j) }{ \top} \po \join_{i \ele I} \StepFun{\TypeToComp_C(`k_i) }{ \top} = \TypeToComp_D(`d) & \Then & (\textrm{by hypothesis}) \\ % [1mm]
\Forall j \ele J\ \Exists i_j \ele I \Pred[ \TypeToComp_C(`k_{i_j}) \po \TypeToComp_C(`k'_j) ] & \Then & (\textrm{*}) \\ % [1mm]
\Forall j \ele J\ \Exists i_j \ele I \Pred[ `k'_j \seqRestrC `k_{i_j} ] & \Then & (\textrm{by induction}) \\ % [1mm]
\Forall j \ele J\ \Exists i_j \ele I \Pred[ `k_{i_j}\arrow `y \seqRestrD `k'_j\arrow `y ] & \Then & \\
`d \seqRestr \Inter_{j \ele J} `k_{i_j}\arrow `y \seqRestrD \Inter_{j \ele J}(`k'_j\arrow `y) = `d'
 \end{array} \]
where (*) follows by contraposition: if for some $j_0 \ele J$ we had 
$\TypeToComp_C(`k_i) \notpo \TypeToComp_C(`k'_{j_0})$ for all $i \ele I$ then 
$\join_{j \ele J} \StepFun{\TypeToComp_C(`k'_j) }{ \top}(\TypeToComp_C(`k'_{j_0})) = \top$ and 
$\join_{i \ele I} \StepFun{\TypeToComp_C(`k_i) }{ \top} (\TypeToComp_C(`k'_{j_0})) = \bot$.

 \item [$ `s, `t \ele \LangR_C $] 
Let $ `s = `k$, $`t=`k'$. 
Since the ordering over $C = D \times D \times \cdots$ is component-wise, 
 \[ \begin{array}{rcl}
\Tuple <d_1, d_2, \ldots > \join \Tuple <d'_1, d'_2, \ldots > &=& \Tuple <d_1\join d'_1, \, d_2\join d'_2, \, \ldots >,
 \end{array} \] 
and $\TypeToComp_C(`k \inter`k') = \TypeToComp_C(`k) \join \TypeToComp_C(`k')$, we can assume that $`k = `d_1\prod \cdots\prod `d_n\prod `w$ and $`k' = `d'_1\prod \cdots\prod `d'_m\prod `w$. 
Hence the hypothesis $\TypeToComp_C(`k) \po \TypeToComp_C(`k')$ reads as
 \[ \begin{array}{rcl@{\quad}lc}
\Tuple <\TypeToComp_D(`d_1), \ldots, \TypeToComp_D(`d_n), \bot, \ldots >
	 &\po& 
\Tuple <\TypeToComp_D(`d'_1), \ldots, \TypeToComp_D(`d'_m), \bot, \ldots > 
 \end{array} \]
where $\bot, \ldots $ stands for infinitely many $\bot$s; then $\TypeToComp_D(`d_1) \po \TypeToComp_D(`d'_i)$ for all $i \seq \min(m,n)$. 
It is easy to see that $\TypeToComp_D(`d) \neq \bot$ for any $`d \ele \LangR_D$, and therefore $n\seq m$.
Then by induction, $`d_i \seqRestrD `d'_i$ for all $i\seq n$. 
Now $`d_{n+1}\prod \cdots\prod `d_m\prod `w\seqRestrC`w$ by definition, hence 
 \[ \begin{array}{rcl}
`d_1\prod \cdots\prod `d_n\prod `d_{n+1}\prod \cdots\prod `d_m\prod `w
	&\seqRestrC& 
`d'_1\prod \cdots\prod `d'_m\prod `w
 \end{array} \]	
\arrayqed[-30pt]

 \end{description}

 \end{proof}

 \begin{thm} \label{cor:pre-order-restriction}
The pre-orders $\seqRestrD$ and $\seqRestrC$ are the restriction to $\LangR$ of $\seqD$ and $\seqC$, respectively.
 \end{thm}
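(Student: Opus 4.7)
My plan is to prove the two inclusions ${\seqRestrA}\subseteq {\seqA}$ and ${\seqA}\cap(\LangR{\times}\LangR)\subseteq{\seqRestrA}$ separately for $A=D,C$.

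The forward inclusion ${\seqRestrA}\subseteq{\seqA}$ is purely syntactic. I would argue by induction on derivations in $\TRestr_A$, observing that each axiom and rule of $\TRestr_A$ is an instance of an axiom or rule of $\TypeTheor_A$. For example, the rule $`k_2\seqRestrC`k_1\Then`k_1\arrow`y\seqRestrD`k_2\arrow`y$ is precisely the instance $`r_1=`r_2=`y$ of the contravariant/covariant arrow rule in Def.\ref{def:typeTheories} (using $`y\seqR`y$); the product covariance rule is identical in both theories; and the product distributivity axiom is shared verbatim. The intersection axioms and rules are also common to both theories.

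The reverse inclusion is the non-trivial direction, and for it I would appeal to the preceding Lemma~\ref{lem:pre-order-restriction}. Fix the particular model with $R=\Set{\bot\po\top}$ and $\TypeToComp_R(`y)=\top$ used in that lemma. By Lemma~\ref{lem:Theta_AsurjOderRev}, $\TypeToComp_A:\Lang_A\To\Compact(A)$ is surjective and order-reversing, hence (by the construction of $\TypeToComp'_A$ in the proof of Lemma~\ref{lem:compactRep}) it also \emph{reflects} the pre-order: for all $`s,`t\ele\Lang_A$, $`s\seqA`t$ if and only if $\TypeToComp_A(`t)\po\TypeToComp_A(`s)$. So if $`s,`t\ele\LangR_A$ satisfy $`s\seqA`t$, then $\TypeToComp_A(`t)\po\TypeToComp_A(`s)$, and Lemma~\ref{lem:pre-order-restriction} delivers $`s\seqRestrA`t$, as required.

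The conceptual subtlety---and the reason Lemma~\ref{lem:pre-order-restriction} was proved beforehand---is that a $\TypeTheor_A$-derivation of $`s\seqA`t$ between types $`s,`t\ele\LangR$ may legitimately pass through intermediate types outside $\LangR$ (e.g.\ via $`w\simD`w\arrow`w$ or $`y\simD`w\arrow`y$, or via products with a $`w$-padded tail of non-restricted shape). A direct syntactic induction would therefore need a delicate normal-form or elimination argument to remove such detours. The semantic route through the two-point domain sidesteps this entirely: it replaces the combinatorial question by an inequality between compact elements of an $\omega$-algebraic lattice, which Lemma~\ref{lem:pre-order-restriction} has already shown to be captured faithfully by $\seqRestrA$ on $\LangR$.
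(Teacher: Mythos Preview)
Your proposal is correct and follows essentially the same approach as the paper: the forward inclusion is dismissed as syntactic (the paper simply says ``trivially'', having noted earlier that all $\TRestr_A$ rules are instances of $\TypeTheor_A$ rules), and the reverse inclusion goes through the two-point model via Lemma~\ref{lem:pre-order-restriction}. The only minor difference is in the bridge step from $`s\seqA`t$ to $\TypeToComp_A(`t)\po\TypeToComp_A(`s)$: the paper routes this through the type interpretation (Cor.~\ref{cor:leq} and Lem.~\ref{lem:typeInterpCompactCone}), whereas you invoke the order-reversing property of $\TypeToComp_A$ directly from Lemma~\ref{lem:Theta_AsurjOderRev}---which is in fact the more direct argument, and your appeal to reflection (the ``if and only if'') is correct but stronger than needed, since only the order-reversing direction is used.
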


 \begin{Proof}
Let $`s,`t \ele \LangR_A$ for either $A=D,C$, then:
 \[ \begin{array}{rcl@{\quad}l}
 `s \seqA `t & \Then & \Sem{`s}^A \subseteq \Sem{`t}^A & (\textrm{Cor.\skp\ref{cor:leq}}) \\ % [1mm]
 & \Then & \filt_A \TypeToComp_A(`s) \subseteq \filt_A \TypeToComp_A(`t) & (\textrm{Lem.\skp\ref{lem:typeInterpCompactCone}}) \\ % [1mm]
 & \Then & \TypeToComp_A(`t) \po \TypeToComp_A(`s) & (\textrm{since }\TypeToComp_A(`s) \ele \filt_A \TypeToComp_A(`t) ) \\ % [1mm]
 & \Then & `s \seqRestrA `t & \textrm{(Lem.\skp\ref{lem:pre-order-restriction})} 
 \end{array} \]
Since trivially ${\seqRestrA} \subseteq {\seqA}$, this establishes the thesis. % \qed
 \end{Proof}

 \begin{defi}[Restricted Bases, Contexts, Judgments, and Type Assignment] 
 \label{def:restrictedTypingSystem}

 \begin{enumerate}

 \firstitem
A \emph{restricted} \\ \emph{basis} is a basis $`G$ such that $`d \ele \LangR_D$ for all $x{:}`d \ele `G$. 
Similarly,
a \emph{restricted name context} is a context $`D $ with $`k \ele \LangR_C$ for all $`a{:}`k \ele `D $. 
Finally, for $T \ele \Terms \union \Commands$ we say that $ \derLmu `G |- T : `s | `D $ is a \emph{restricted judgement} if $`s \ele \LangR$ and
$`G$ and $`D $ are a restricted basis and a restricted name context respectively.

 \item
The restricted judgement $ \derLmu `G |- T : `s | `D $ is \emph{derivable in the restricted typing system}, written $ \derLmuR `G |- T: `s | `D $, if it is derivable 
in the system of Def.\skp\ref{def:intersTypeAss} without using rule $(`w)$, and all the judgements in the derivation are restricted.

 \end{enumerate}
 \end{defi}

Since the restricted system is just the intersection type system of Section\,\ref{sec:types}, where types occurring in judgements are restricted, and rule $(`w)$ is disallowed, we can use results from the previous section in proofs. 
However care is necessary, since the lack of rule $(`w)$ invalidates expansion property, as we illustrate in Sect.\skp\ref{sub:NormType}. 

We only observe that, while it is clear that in the restricted system no term can have type $`w$, this is still the case for commands, because $`w \ele \LangR_C$ and we have subsumption rule in the system. However judgements of the shape $\derLmuR `G |- \Cmd : `w | `D $ cannot occur in any derivation deducing a type for a term. 
In fact, if the subject $M$ of the conclusion is a term including the command $\Cmd$ then,
for some $`a$, $`m`a.\Cmd$ must be a subterm of $M$; but rule $(\MAbs)$, which is the only applicable rule, doesn't admit $\derLmuR `G |- \Cmd : `w | `D $ as a premise.

 \subsection{Typability implies Strong Normalisation} \label{subsec:type-SN}

In this subsection we will show that -- as can be expected of a well-defined notion of type assignment that does not type recursion and has no general rule that types all terms -- all typeable terms are strongly normalising.

For the full system of Def.\skp\ref{def:intersTypeAss} this is not the case. In fact, by means of types not allowed in the restricted system, it is possible to type the fixed-point constructor $`lf.(`lx.f(xx))(`lx.f(xx))$ in a non-trivial way, as shown by the following derivation:

 \begin{exa}
The fixed-point combinator is typeable in the system of Def.\skp\ref{def:intersTypeAss}:
 \[ \begin{array}{c}
\Inf	[\Abs]
	{\Inf	[\App] 
{\Inf	[\Abs] 
	{\Inf	[\App]
{\Inf	[\Ax]
	{\derLmu f{:}`w\prod `w\arrow `y,x{:}`w |- f : `w\prod `w\arrow `y | {} } 
 \quad 
 \Inf	[`w]
	{\derLmu f{:}`w\prod `w\arrow `y,x{:}`w |- xx : `w | {} }
}{ \derLmu f{:}`w\prod `w\arrow `y,x{:}`w |- f(xx) : `w\arrow `y | {} }
	}{ \derLmu f{:}`w\prod `w\arrow `y |- `lx.f(xx) : `w\prod `w\arrow `y | {} } \kern-25mm 
 \Inf	[`w]	
	{\derLmu f{:}`w\prod `w\arrow `y |- `lx.f(xx) : `w | {} } 
}{ \derLmu f{:}`w\prod `w\arrow `y |- (`lx.f(xx))(`lx.f(xx)) : `w\arrow `y | {} }
	 }{ \derLmu |- `lf.(`lx.f(xx))(`lx.f(xx)) : (`w\prod `w\arrow `y)\prod `w\arrow `y | {} }
 \end{array} \]
Notice that this term does not have a normal form, so is not strongly normalisable.
 \end{exa}

We start by showing that if a term is typeable in the restricted system then it is strongly normalising. 
We adapt Tait's computability argument and the idea of saturated sets to our system (see \cite{Krivine-book'93}, Ch.\,3).

 \begin{defi}[Term Stacks]
The set $\Stacks$ of (finite) \emph{term stacks}, whose elements we shall denote by $ \vect{L}$, is defined by
the following grammar:
 \[ \begin{array}{l@{\quad}rcl}
\Stacks : & \vect{L} & :: =& \epsilon \mid L \cons \vect{L} 
 \end{array} \]
where $\epsilon$ denotes the empty stack and $L \ele\Terms$.
Moreover, we define \emph{stack application} as follows:
 \[ \begin{array}{l@{\quad}rl}
M\,\epsilon & \ByDef & M \\
M \, (P\cons \vect{L} ) & \ByDef & (MP) \, \vect{L}
 \end{array} \]
So, if $\vect{L} \equiv L_1\cons \dots \cons L_k\cons\epsilon$ we have $M \, \vect{L} \equiv M L_1 \cdots L_k$. 
We extend the notion of structural substitution to stacks as follows:
 \[ \begin{array}{l@{\quad}rl}
T \strSub[ `a <= \epsilon ] & \ByDef & T \\
T \strSub[ `a <= P \cons \vect{L} ] & \ByDef & (T \strSub[ `a <= P ] ) \strSub[ `a <=\vect{L} ] 
 \end{array} \]
for $T \ele \Terms \union \Commands$, when each $L_i$ does not contain $`a$.
 \end{defi}
We normally omit the trailing $\epsilon$ of a stack. 
Notice that
 \[ \begin{array}{lclclcl}
[`a]M \strSubvec[ `a <= L ] 
 & \equiv &
[`a]M \strSub[ `a <= L_1 ] \strSub[ `a <= L_2 ] \dots \strSub[ `a <= L_n ] 
 & \equiv &
[`a](M\strSub[ `a <= {\vect{L}} ]) \vect{L} 
 \end{array} \]

The notion of string normalisation is formally defined as:

 \begin{defi} 
The set $\SN$ of terms that are \emph{strongly normalisable} is the set of all terms $M \ele \Terms$ such that no infinite reduction sequence out of $M$ exists; we write $\SN(M)$ for $M \ele \SN$, and $ \SN^*$ for the set of finite stacks of terms in $ \SN$, and write $\SN(\vect{L})$ if $\vect{L} \ele \SN^*$.
 \end{defi}

The following property of strong normalising terms is straightforward:

 \begin{prop} \label{SN facts}

 \begin{enumerate}

 \firstitem \label{SN fact head application}
If $\SN(x\vect{ M })$ and $\SN(\vect{N})$, then $\SN(x\vect{ M }\vect{N})$.

 \item \label{SN fact redex}
If $\SN(M[N/x]\vect{P})$ and $\SN(N)$, then $\SN({( `lx.M)N\vect{P}})$.

 \item \label{SN fact add mu abstraction}
If $\SN(M)$, then $\SN(`m`a.[`b]M)$.

 \item \label{SN fact mu in redex}
If $\SN( `m`a.[`b]M \strSubvec[ `a <= N ] \vect{L} ) $ and $ \SN(\vect{N}) $, then $ \SN({( `m`a.[`b]M )\vect{N}\vect{L}}) $.

 \item \label{SN fact mu out redex}
If $\SN(`m`a.[`a]M \strSubvec[ `a <= N ] \vect{N} \vect{L} ) $, then $ \SN({( `m`a.[`a]M )\vect{N}\vect{L}}) $.

 \end{enumerate}
 \end{prop}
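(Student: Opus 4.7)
The plan is to establish each clause by induction on a suitable measure built out of the maximal reduction lengths $\sigma(\cdot)$ of the strongly normalising terms appearing in the hypothesis, using the standard fact that $T \ele \SN$ whenever every one-step $\redbm$-reduct of $T$ is in $\SN$. For each clause, I enumerate the possible redexes of the target term -- at most one head redex together with all the internal redexes in each syntactic component -- and check that every resulting one-step reduct falls under a strictly smaller instance of the inductive hypothesis.

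Clauses \ref{SN fact head application} and \ref{SN fact add mu abstraction} are essentially immediate: $x\vect{M}\vect{N}$ has no head redex, so all reductions are internal to some $M_i$ or $N_j$ and the induction is on $\sigma(\vect{M}) + \sigma(\vect{N})$; likewise $`m`a.[`b]M$ has no head redex, because the rule $(\Rename)$ is explicitly excluded here, so every reduction takes place inside $M$ and the induction is on $\sigma(M)$. For clause \ref{SN fact redex} I induct on $\sigma(M[N/x]\vect{P}) + \sigma(N)$: the head $`b$-contraction yields the hypothesis SN term directly, while an internal reduction in $M$, $N$, or some $P_i$ produces a term of the form $(`lx.M')N\vect{P}$, $(`lx.M)N'\vect{P}$, or $(`lx.M)N\vect{P'}$, and the associated measure strictly decreases because $M[N/x]\vect{P}$ reduces in parallel to $M'[N/x]\vect{P}$, $M[N'/x]\vect{P}$ or $M[N/x]\vect{P'}$, and the reduction in $N$ decreases $\sigma(N)$.

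For clause \ref{SN fact mu in redex} the induction is on $\sigma(`m`a.[`b](M\strSubvec[`a <= N])\vect{L}) + \sigma(\vect{N})$. A head $`m$-reduction of $(`m`a.[`b]M)\vect{N}\vect{L}$ produces $(`m`a.[`b](M\strSub[`a <= N_1]))N_2 \cdots N_k\vect{L}$; the crucial observation is that, since structural substitution composes, the associated hypothesis term of this reduct is literally $`m`a.[`b](M\strSubvec[`a <= N])\vect{L}$, so the first summand of the measure is unchanged while $\sigma(\vect{N'})$ with $\vect{N'} = N_2,\ldots,N_k$ has strictly decreased. Internal reductions in $M$, $\vect{N}$ or $\vect{L}$ are handled by the monotonicity of structural substitution: each produces a term whose associated hypothesis term is a reduct of the original one, again shrinking the measure.

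Clause \ref{SN fact mu out redex} follows the same template but is the main obstacle. The head $`m$-reduction of $(`m`a.[`a]M)\vect{N}\vect{L}$ produces the term $(`m`a.[`a](M\strSub[`a <= N_1])N_1)N_2 \cdots N_k\vect{L}$, in which the body of the $`m$-abstraction has syntactically grown because structural substitution pushes a copy of $N_1$ inside the command. The key calculation, which exploits $`a \notele \fn(N_1)$ from Barendregt's convention, is that $(M\strSub[`a <= N_1])\strSubvec[`a <= N']\,N_1 = (M\strSubvec[`a <= N])\,N_1$ with $\vect{N'} = N_2,\ldots,N_k$, so the associated hypothesis term of the reduct is once again $`m`a.[`a](M\strSubvec[`a <= N])\vect{N}\vect{L}$ while $\sigma(\vect{N'}) < \sigma(\vect{N})$. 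Note that although clause \ref{SN fact mu out redex} does not assume $\SN(\vect{N})$ separately, each $N_i$ occurs as a subterm of the hypothesis SN term and is therefore itself in $\SN$; this justifies the double induction on $\sigma(`m`a.[`a](M\strSubvec[`a <= N])\vect{N}\vect{L}) + \sigma(\vect{N})$ that closes the argument.
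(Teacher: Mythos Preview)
The paper does not actually prove this proposition; it only says the property ``is straightforward'' and moves on. Your detailed argument via induction on reduction lengths is the standard way to spell this out, and the key observations---in particular the calculation that after one head $`m$-step the ``hypothesis term'' built from the shortened stack $\vect{N'}=N_2,\ldots,N_k$ coincides with the original hypothesis term---are exactly right.

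There is one small but genuine slip in your measure for clauses~\ref{SN fact mu in redex} and~\ref{SN fact mu out redex}. You claim that after the head $`m$-contraction, $\sigma(\vect{N'})$ has \emph{strictly} decreased. With the natural reading $\sigma(\vect{N}) = \sum_i \sigma(N_i)$, this fails whenever $N_1$ is already in normal form, since then $\sigma(N_1)=0$ and $\sigma(\vect{N'})=\sigma(\vect{N})$; the first summand of your measure is unchanged by design, so the induction stalls. The fix is routine: include the length $|\vect{N}|$ of the stack in the measure, e.g.\ induct on $\sigma(\textit{hypothesis term}) + |\vect{N}| + \sigma(\vect{N})$, or lexicographically on $(\sigma(\textit{hypothesis term}),\,|\vect{N}|)$. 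The head step then strictly decreases $|\vect{N}|$ while leaving the first component fixed, and all internal reductions are handled exactly as you describe. With this adjustment your proof goes through.
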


 \begin{defi}[Type Interpretation] \label{TypeSem definition}
We define a pair of mappings 
 \[ \TypeSem{\cdot} = \Pair{\TypeSem{\cdot}_1}{\TypeSem{\cdot}_2}: (\LangR_D \To \wp(\Terms)) \times (\LangR_C \To \wp(\Stacks)) \] 
interpreting types as sets of terms and stacks. 
Writing $\TypeSem{\delta} = \TypeSem{\delta}_1$ and $\TypeSem{\kappa} = \TypeSem{\kappa}_2$, the definition is 
as follows:
 \[ \begin{array}{rcl@{\quad}l}
 \TypeSem{`k\arrow `y} &=& \Set{T \mid \Forall \vect{L} \ele \TypeSem{`k } \Pred[\SN(T \vect{L}) ] } 
 \\ 
 \TypeSem{`w} &=& \SN^* 
	\\
% \TypeSem{(`k\arrow`y)\prod `k } &=& \Set{ [`a] N \cons \vect{L} \mid N \ele \TypeSem{`k\arrow`y} \And \vect{L} \ele \TypeSem{`k }} 
 \TypeSem{`d\prod `k } &=& \Set{N \cons \vect{L} \mid N \ele \TypeSem{`d} \And \vect{L} \ele \TypeSem{`k }} 
 \\ 
 \TypeSem{`s \inter `t} &=& \TypeSem{`s} \cap \TypeSem{`t} & (`s,`t \ele \LangR_A,~ A = D,C). 
 \end{array} \]

 \end{defi}

We will now show that the interpretation of a type is a set of strongly normalisable terms and that neutral terms (those starting with a variable) are in the interpretation of any type.

 \begin{lem} \label{TypeSem and SN lemma}
For any $ `d \ele\LangR_D$ and $`k \ele\LangR_C$:
 \begin{enumerate}

 \item \label{TypeSem implies SN}
$\TypeSem{`d} \subseteq \SN $ and $\TypeSem{`k } \subseteq \SN^* $.

 \item \label{SN head implies TypeSem}
$x\vect{N} \ele \SN \Then x\vect{N} \ele \TypeSem{`d}$.

 \end{enumerate}
 \end {lem}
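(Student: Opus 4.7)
The plan is to prove the two statements simultaneously by induction on the structure of types, exploiting the fact that they feed into each other: part (2) provides inhabitants needed to establish part (1) in the arrow case.

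I would first handle part (1) for continuation types by structural analysis. The case $`k = `w$ is immediate from $\TypeSem{`w}= \SN^*$. For $`k = `d\prod`k'$, any $N\cons\vect L \ele \TypeSem{`d\prod`k'}$ satisfies $N \ele \TypeSem{`d} \subseteq \SN$ and $\vect L \ele \TypeSem{`k'} \subseteq \SN^*$ by the induction hypothesis on (1), so $N\cons\vect L \ele \SN^*$. The intersection case follows by monotonicity. For $`d = `d_1\inter`d_2$ part (1) is similarly immediate by IH.

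The interesting case of (1) is $`d = `k\arrow`y$. Given $T \ele \TypeSem{`k\arrow`y}$, to conclude $T \ele \SN$ I need some witness $\vect L \ele \TypeSem{`k}$: then $T\vect L \ele \SN$ by definition of $\TypeSem{`k\arrow`y}$, and since every reduction out of $T$ lifts to one out of $T\vect L$, $T \ele \SN$. To produce such a witness I would establish, as an auxiliary sub-induction on $`k$, that $\TypeSem{`k}$ contains a stack built from fresh term variables: the base $`w$ is witnessed by $\epsilon$; for $`d\prod`k'$ pick a fresh variable $x$ (which is in $\SN$, hence by part~(2) already proved for the smaller type $`d$, lies in $\TypeSem{`d}$) and prepend it to an inhabitant of $\TypeSem{`k'}$ obtained by the sub-IH; for $`k_1\inter`k_2$ one combines the two witnesses using that variables inhabit every term type, so a sufficiently long stack of fresh variables lies in both $\TypeSem{`k_1}$ and $\TypeSem{`k_2}$ simultaneously.

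For part (2), by induction on $`d$: if $`d = `k\arrow`y$, pick any $\vect L \ele \TypeSem{`k}$; by part (1), which I will already have available for $`k$ (smaller than $`d$), $\vect L \ele \SN^*$; then from the assumption $x\vect N \ele \SN$ and Prop.\skp\ref{SN facts}\,(\ref{SN fact head application}) I conclude $x\vect N\vect L \ele \SN$, which is exactly $x\vect N \ele \TypeSem{`k\arrow`y}$. The intersection case $`d = `d_1\inter`d_2$ reduces immediately to two applications of the IH.

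The main obstacle, and the subtlety one must be careful about, is the interdependence: part~(1) for $`k\arrow`y$ uses an inhabitant of $\TypeSem{`k}$ produced via part~(2), and part~(2) for $`k\arrow`y$ invokes part~(1) for $`k$. These references are all to strictly smaller types, so the simultaneous induction is well-founded, but the witness-construction sub-lemma for continuation types, particularly for intersections, is what deserves explicit care; everything else is routine unfolding of the definitions and appeal to Prop.\skp\ref{SN facts}.
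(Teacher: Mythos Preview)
Your proposal is correct and follows essentially the same simultaneous structural induction as the paper. You are in fact more careful than the paper on one point: for part~(1) in the case $`d = `k\arrow`y$ the paper simply writes ``for any $\vect L \ele \TypeSem{`k}$, $\SN(M\vect L)$, so in particular $\SN(M)$'' without explicitly establishing that $\TypeSem{`k}$ is inhabited, whereas your variable-stack witness construction (using part~(2) at smaller types) makes this step precise.
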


 \begin{proof}
By induction on the structure of types.

 \begin{enumerate}

 \item % [\ref{TypeSem implies SN}] 
Let $`d = `k\arrow `y$ and $M \ele \TypeSem{`d}$: 
then, for any $\vect{L} \ele \TypeSem{`k}$, by definition, $\SN(M \vect{L})$, so in particular $\SN(M)$.
The case $`k=`w$ follows by definition; the case $`k = `d\prod `k'$ follows by induction, since $\TypeSem{`d} \subseteq \SN$ and $\TypeSem{`k' }\subseteq \SN^*$. 
The cases $`d = `d_1\inter`d_2$ and $`k = `k_1\inter`k_2$ follow immediately by induction.

 \item % [\ref{SN head implies TypeSem}] 
Let $x\vect{N} \ele \SN$ and $`d = `k\arrow `y$. 
If $\vect{L} \ele \TypeSem{`k }$ then $\vect{L} \ele \SN^*$ by point (\ref{TypeSem implies SN}), so that $x\vect{N}\vect{L} \ele \SN$ by observing that the only possible reductions are inside the components of $\vect{N}$ and $\vect{L}$, which are in $\SN$ by assumption.
Then by definition $ x\vect{N} \ele \TypeSem{`d}$.
 \qed

 \end{enumerate}
 \end{proof}

Observe that $`y\notele \LangR_D$, and therefore we do not have the clause $\TypeSem{`y} = \SN$ in Def.\skp\ref{TypeSem definition}, as was the case in \cite{BakBdL-ITRS12}. 
This clause would be consistent with the previous definition, but having $`y \ele \Lang_D$ (the unrestricted language of term types) enforces the equation $`y =_D `w\arrow `y$ which is false in the above interpretation.
In fact, $`l x.xx \ele \SN = \TypeSem{`y}$, so that $(`l x.xx)\cons \epsilon \ele \SN^* = \TypeSem{`w}$, but $(`l x.xx)((`l x.xx)\cons \epsilon) \equiv (`l x.xx)(`l x.xx) \notele \SN$, and therefore $\TypeSem{`y}\not\subseteq \TypeSem{`w\arrow `y}$. 
In \cite{BakBdL-ITRS12} we managed to avoid this incoherence by ruling out $`w$ from $\LangR_C$ and by interpreting types $`k\arrow `y$ differently according to the shape of $`k$; in fact, in that paper $\TypeSem{`k\arrow `y}$ is the set of representable functions from $\SN^*$ to $\SN$ only when $`k \neq`w$, while $\TypeSem{`w\arrow `y}$ is just $\SN$.

We will now show that our type interpretation respects type inclusion.

 \begin{lem} \label{closed for leq}
For all $`s, `t \ele \LangR $:~
$`s \seqRestr `t \Implies \TypeSem{`s} \subseteq \TypeSem{`t}$.
 \end{lem}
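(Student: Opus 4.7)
The plan is to prove the statement by induction on the derivation of $`s \seqRestr `t$ in the axiomatisation of $\TRestr_A$ (Def.\skp\ref{def:restrTypes}), handling each axiom and inference rule in turn. The pre-orders $\seqRestrD$ and $\seqRestrC$ are mutually defined (via the contravariance rule on $\arrow$ and the congruence rule on $\prod$), so a simultaneous induction is needed, but each case is local and the inductive hypothesis applies uniformly. The intersection axioms ($`s\inter`t \seqRestrA `s$, etc.) and the rule $(`s\seqRestrA `t_1, `s\seqRestrA `t_2)\Then `s\seqRestrA `t_1\inter`t_2$ are immediate by the clause $\TypeSem{`s\inter`t} = \TypeSem{`s}\cap \TypeSem{`t}$, and reflexivity/transitivity are trivial.

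For the specific axioms I would argue as follows. The axiom $`k \seqRestrC `w$ gives $\TypeSem{`k}\subseteq \SN^* = \TypeSem{`w}$, which is exactly Lem.\skp\ref{TypeSem and SN lemma}\,(\ref{TypeSem implies SN}). For the distribution axiom $(`d_1\prod `k_1)\inter(`d_2\prod `k_2)\seqRestrC (`d_1\inter `d_2)\prod(`k_1\inter `k_2)$, an element of the left-hand interpretation is simultaneously of the shape $N\cons\vect{L}$ with $N\in\TypeSem{`d_i}$ and $\vect{L}\in\TypeSem{`k_i}$ for both $i=1,2$, so $N\in\TypeSem{`d_1\inter `d_2}$ and $\vect{L}\in\TypeSem{`k_1\inter `k_2}$, which gives the inclusion. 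The congruence rule for $\prod$ follows directly by applying the inductive hypothesis componentwise.

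The contravariance rule $`k_2\seqRestrC `k_1 \Then `k_1\arrow `y \seqRestrD `k_2\arrow `y$ is the one that exploits contravariance: if $T\in\TypeSem{`k_1\arrow `y}$, \ie~$\SN(T\vect{L})$ for all $\vect{L}\in\TypeSem{`k_1}$, then for any $\vect{L}\in\TypeSem{`k_2}$ the inductive hypothesis on the premise gives $\TypeSem{`k_2}\subseteq \TypeSem{`k_1}$, hence $\vect{L}\in\TypeSem{`k_1}$ and $\SN(T\vect{L})$; thus $T\in\TypeSem{`k_2\arrow `y}$.

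I do not expect a real obstacle: the interpretation in Def.\skp\ref{TypeSem definition} was designed to validate exactly these axioms, and Lem.\skp\ref{TypeSem and SN lemma} already supplies the only non-structural fact needed (namely $\TypeSem{`k}\subseteq \SN^*$ for the $`w$-axiom). The only point that requires mild care is keeping the mutual induction on $\LangR_D$ and $\LangR_C$ synchronised, which is handled by stating the induction hypothesis simultaneously over both sorts of types.
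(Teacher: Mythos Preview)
Your proposal is correct and follows essentially the same approach as the paper: an induction over the rules defining $\seqRestr$ in Def.\skp\ref{def:restrTypes}, using Lem.\skp\ref{TypeSem and SN lemma}\,(\ref{TypeSem implies SN}) for the axiom $`k \seqRestrC `w$ and the evident unfolding of Def.\skp\ref{TypeSem definition} for the remaining cases, in particular contravariance of $\arrow$. The paper's proof collapses the product-congruence and distribution cases into a single argument via the normal form $`d_1\prod\cdots\prod`d_n\prod`w$ of continuation types, whereas you treat each rule separately, but this is only a presentational difference.
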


 \begin{Proof}
By easy induction over the rules in Def.\skp\ref{def:restrTypes}.
For $`k \seqRestrC `w$, notice that $\TypeSem{`k } \subseteq \SN^*= \TypeSem{`w}$ by Lem.\skp\ref{TypeSem and SN lemma}\,(\ref{TypeSem implies SN}). If $`k\seqRestrC `k'$, then $`k = `d_1\prod \cdots\prod `d_n\prod `w$, $`k' = `d'_1\prod \cdots\prod `d'_m\prod `w$, $m \seq n$, and $`d_i\seqRestrD`d'_i$.
By induction, $\TypeSem{`d_i} \subseteq \TypeSem{`d'_i}$ for all $i\seq m$; notice that $\TypeSem{`d_{i+1}\prod \cdots\prod `d_n\prod `w}$ $\subseteq \TypeSem{`w} =\SN^*$ by Lem.\skp\ref{TypeSem and SN lemma}\,(\ref{TypeSem implies SN}), so we can conclude $\TypeSem{`k }\subseteq\TypeSem{`k' }$.
Assume $`k _1\arrow `y \seqRestrD `k _2\arrow `y$ because $`k _2 \seqRestrC `k _1$, then by induction $\TypeSem{`k_2}\subseteq\TypeSem{`k_1}$.
Assume now $M \ele \TypeSem{`k _1\arrow `y}$, then by definition for all $\vect{L} \ele \TypeSem{`k_1}$, we have $M\vect{L} \ele \SN$. 
Since $ \TypeSem{`k_2} \subseteq\TypeSem{`k_1}$, also for all $\vect{L} \ele \TypeSem{`k_2}$ we have $M\vect{L} \ele \SN$, and therefore also $M \ele \TypeSem{`k _2\arrow `y}$.
% \qed
 \end{Proof}

The next lemma states that our type interpretation is closed under expansion for the logical and for the structural reduction, with the proviso that the term or stack to be substituted is an element of an interpreted type as well.

 \begin{lem} \label{Typesem expansion} 
For any $`d, `d' \ele\LangR_D$ and $`k \ele\LangR_C$:
 \begin{enumerate}

 \item \label{Typesem expansion logical}
If $ M[N/x]\vect{P} \ele \TSem{`d} $ and $ N \ele \TSem{`d'} $, then $ ( `lx.M)N\vect{P} \ele \TSem{`d}$.

 \item \label{Typesem expansion structural in}
If $ (`m`a.[`b]M \strSubvec [ `a <= N ] )\vect{P} \ele \TSem{`d} $ and $\vect{N} \ele \TSem{`k' }$, then $ (`m`a.[`b]M) \vect{N} \vect{P} \ele \TSem{`d} $.

 \item \label{Typesem expansion structural out}
If $ (`m`a.[`a]M \strSubvec [ `a <= N ] \vect{N}) \vect{P} \ele \TSem{`d} $, then $ (`m`a.[`a]M) \vect{N} \vect{P} \ele \TSem{`d} $.

 \end{enumerate}
 \end{lem}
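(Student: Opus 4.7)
The plan is to proceed by simultaneous induction on the structure of $`d \ele \LangR_D$, treating all three parts in parallel. The intersection case $`d = `d_1 \inter `d_2$ is immediate, since $\TSem{`d_1\inter`d_2} = \TSem{`d_1}\cap\TSem{`d_2}$, so each of (\ref{Typesem expansion logical})--(\ref{Typesem expansion structural out}) splits cleanly into two inductive invocations. The whole argument therefore reduces to the base case $`d = `k\arrow `y$.

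In that case, unfolding Def.\skp\ref{TypeSem definition}, membership of a term $M'$ in $\TSem{`k\arrow `y}$ is equivalent to $M'\vect{L} \ele \SN$ for every $\vect{L} \ele \TSem{`k}$. Fix such a $\vect{L}$; by Lem.\skp\ref{TypeSem and SN lemma}\,(\ref{TypeSem implies SN}) we have $\vect{L} \ele \SN^*$. The three parts now collapse to strong-normalisation claims in the sense of Prop.\skp\ref{SN facts}, applied with the stack argument of that proposition instantiated to $\vect{P}\vect{L}$.

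Concretely: for (\ref{Typesem expansion logical}), the hypothesis gives $M[N/x]\vect{P}\vect{L} \ele \SN$, while $N \ele \TSem{`d'}$ together with Lem.\skp\ref{TypeSem and SN lemma}\,(\ref{TypeSem implies SN}) gives $N \ele \SN$; Prop.\skp\ref{SN facts}\,(\ref{SN fact redex}) then yields $(`lx.M)N\vect{P}\vect{L} \ele \SN$. For (\ref{Typesem expansion structural in}), the hypothesis provides $(`m`a.[`b]M\strSubvec[`a<=N])\vect{P}\vect{L} \ele \SN$, and $\vect{N} \ele \TSem{`k'} \subseteq \SN^*$ again by Lem.\skp\ref{TypeSem and SN lemma}\,(\ref{TypeSem implies SN}); Prop.\skp\ref{SN facts}\,(\ref{SN fact mu in redex}) then gives $(`m`a.[`b]M)\vect{N}\vect{P}\vect{L} \ele \SN$. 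For (\ref{Typesem expansion structural out}), Prop.\skp\ref{SN facts}\,(\ref{SN fact mu out redex}) applies directly to the assumption $(`m`a.[`a]M\strSubvec[`a<=N]\vect{N})\vect{P}\vect{L} \ele \SN$.

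I do not expect significant obstacles: Prop.\skp\ref{SN facts} has been formulated precisely to absorb an arbitrary trailing stack, so no regrouping of applications is required, and the qualitative distinction between parts (\ref{Typesem expansion structural in}) and (\ref{Typesem expansion structural out}) — namely whether the $`m$-bound name $`a$ is the head of the inner command or not — is already reflected in the corresponding split between clauses (\ref{SN fact mu in redex}) and (\ref{SN fact mu out redex}) of Prop.\skp\ref{SN facts}. The only mildly delicate point is bookkeeping: the stack of Prop.\skp\ref{SN facts} plays the role of the concatenation $\vect{P}\vect{L}$, but this is a matter of notation rather than of substance.
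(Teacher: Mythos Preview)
Your proposal is correct and follows essentially the same approach as the paper: induction on the structure of $`d$, with the intersection case immediate and the arrow case handled by unfolding the definition of $\TSem{`k\arrow`y}$ and invoking the corresponding clause of Prop.\skp\ref{SN facts} (the paper uses $\vect{Q}$ where you write $\vect{L}$, and merely says ``similarly'' for parts (\ref{Typesem expansion structural in}) and (\ref{Typesem expansion structural out}), whereas you spell them out). Your remark that the stacks in Prop.\skp\ref{SN facts} are instantiated with the concatenation $\vect{P}\vect{L}$ is exactly the intended reading.
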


 \begin{Proof} 

By induction on the structure of types.
If $`d = `k\arrow `y$, then from $M[N/x]\vect{P} \ele \TSem{`k\arrow `y}$ by definition we have $ \SN(M[N/x]\vect{P}\vect{Q})$ for all $\vect{Q} \ele \TSem{`k} $ and from $N \ele \TSem{`d'}$ that $\SN(N)$ by Prop.\skp\ref{TypeSem and SN lemma}\,(\ref{TypeSem implies SN}).
Then by Lem.\skp\ref {SN facts}\,(\ref{SN fact redex}) also $ \SN((`lx.M)N\vect{P}\vect{Q}) $, so by definition $(`lx.M)N\vect{P} \ele \TSem{`k\arrow `y}$.
The case $`d = `d_1\inter`d_2$ follows directly by induction.

The second and third case follow similarly, but rather using Prop.\skp\ref {SN facts}\,(\ref{SN fact mu in redex}) and \ref {SN facts}\,(\ref{SN fact mu out redex}), respectively.
%\qed
 \end{Proof}

In Theorem\,\ref{thm:typableAreSN} we will show that all typeable terms are strongly normalisable.
In order to achieve that, in Lem.\skp\ref{lem:replacement} we will first show that for a term $M$ typeable with $`d$, any substitution instance $M \indexRepl $ (\ie~replacing all free term variables by terms, and feeding stacks to all free names) is an element of the interpretation of $`d$, which by Lem.\skp\ref{TypeSem and SN lemma} implies that $M \indexRepl $ is strongly normalisable.
We need these substitutions to be applied all `in one go', so define a notion of parallel substitution.
The main result is then obtained by taking the substitution that replaces term variables by themselves and names by stacks of term variables.
The reason we first prove the result for \emph{any} substitution is that, in the proof of Lem.\skp\ref{lem:replacement}, in the case for $`lx.M$ and $`m`a.\Cmd$ the substitution is extended, by replacing the bound variable or name with a normal term (or stack).

 \begin{defi} \label{Repl definition}
 \begin{enumerate}

 \firstitem
A pair of partial mappings $ \Repl =\Pair{\Repl_1}{\Repl_2}: (\TVar\To\Terms) \times (\CVar\To\Terms^*)$, where we simply write $\Repl(x) = \Repl_1(x)$ and
$\Repl(`a) = \Repl_2(`a)$,
is a \emph{parallel substitution} if, for every $\textsl{p},\textsl{q} \ele \dom (\Repl) $, if $\textsl{p}\not=\textsl{q}$ then $\textsl{p} \notele \fv(\Repl \textsl{q})$ and $\textsl{p} \notele \fn(\Repl \textsl{q})$.

 \item
Borrowing a notation for valuations, for a parallel substitution $\Repl$ we define the application of $\Repl$ to a term or a command by:
\[ \begin{array}{rcl@{\quad}l}
x \indexRepl & \ByDef & N & (\textrm{if } \Repl x = N) 
	\\
y \indexRepl & \ByDef & y & (\textrm{if } y \notele \dom(\Repl) )
	\\
(`l x.M) \indexRepl & \ByDef & `l x.M \indexRepl 
	\\
(MN) \indexRepl & \ByDef & M \indexRepl N \indexRepl \\
([`a]M) \indexRepl & \ByDef & M \indexRepl \cons (\Repl `a) \\
(`m`a .[`a]M) \indexRepl &\ByDef & `m`a .[`a] M \indexRepl \, (\Repl \,`a) \\
 (`m`a .[`b]M) \indexRepl &\ByDef & `m`a .[`b] M \indexRepl & (\textrm{if } `a \neq `b) \\ 
 \\ 
 \end{array} \]

 \item
We define $ \Repl [ N/x ]$ and $ \Repl \strSubvec[ `a <= L ] $ as, respectively,
 \[ \begin{array}{rcl}
 \Repl [ N/x ] \, y &\ByDef& 
 \begin{cases}{l@{\quad}l}
N & (\textrm{if } y=x) \\
 \Repl \, y & (\textrm{otherwise})
 \end{cases}
 \\ [5mm]
 \Repl \strSubvec[ `a <= L ] \, `b &\ByDef& 
 \begin{cases}{l@{\quad}l}
 \vect{L} & (\textrm{if } `a=`b) \\
 \Repl \, `b & (\textrm{otherwise})
 \end{cases}
 \end{array} \]

 \item
We say that $\Repl$ \emph{extends} $`G$ \emph{and} $`D $, if, for all $x{:}`d \ele `G$
and $`a{:}`k \ele `D $, we have, respectively, $\Repl (x) \ele \TypeSem{`d}$ and $\Repl (`a) \ele \TypeSem{`k}$.

 \end{enumerate}
 \end{defi}

% Notice that we do allow a variable to appear in its own image under $\Repl$.
% Steffen: This is incorrect. Since $x$ does not appear in $M[N/x]$ whenever $x$ does not appear in $M$ and $N$, this does not violate Barendregt's convention.

 \begin{lem}[Replacement Lemma] \label{lem:replacement}
Let $ \Repl $ be a parallel substitution that extends $`G$ and $`D $.
Then:
 $ \begin{array}{@{}rcl}
 \textit{if }~ \derLmuR `G |- T : `s | `D &\textit{then}& T \indexRepl \ele \TypeSem{`s}. 
 \end{array} $
 \end{lem}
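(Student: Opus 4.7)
The plan is to proceed by structural induction on the derivation of $\derLmuR `G |- T : `s | `D$. Since rule $(`w)$ is absent from the restricted system, the only logical rules to consider are $(\inter)$ and $(\seq)$; both are immediate from Def.\ref{TypeSem definition} together with Lem.\ref{closed for leq}, so the real work is a case analysis on the last type rule applied.

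Three of the type rules are essentially bookkeeping. Rule $(\Axiom)$ follows because $\Repl$ extends $`G$, so $x\indexRepl = \Repl(x) \ele \TSem{`d}$. Rule $(\App)$ follows by unpacking the definition of $\TSem{`d\prod `k\arrow `y}$: for any $\vect{L} \ele \TSem{`k}$, the IH on the two premises gives $N\indexRepl\cons\vect{L} \ele \TSem{`d\prod `k}$ and $M\indexRepl \ele \TSem{`d\prod `k\arrow `y}$, whence $(MN)\indexRepl\,\vect{L} \ele \SN$, as required. Rule $(\TCmd)$ follows directly from the IH together with the assumption that $\Repl$ extends $`D$, because $([`a]M)\indexRepl = M\indexRepl \cons \Repl(`a)$ and $\Repl(`a) \ele \TSem{`D(`a)} = \TSem{`k}$.

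For $(\Abs)$, with $T\equiv `lx.M$ of type $`d\prod `k\arrow `y$, I would pick an arbitrary $N\cons\vect{L} \ele \TSem{`d\prod `k}$ and apply the IH to the premise with the extended parallel substitution $\Repl[N/x]$, which respects the extended basis $`G, x{:}`d$ since $N \ele \TSem{`d}$. The IH yields $M\indexRepl[N/x]\vect{L} \ele \SN$, and since $N \ele \SN$ by Lem.\ref{TypeSem and SN lemma}\,(\ref{TypeSem implies SN}), Prop.\ref{SN facts}\,(\ref{SN fact redex}) delivers $(`lx.M\indexRepl)N\vect{L} \ele \SN$. As $N\cons\vect{L}$ was arbitrary, this shows $`lx.M\indexRepl \ele \TSem{`d\prod `k\arrow `y}$.

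The main obstacle is the case $(\MAbs)$, where $T \equiv `m`a.\Cmd$ with premise $\derLmuR `G |- \Cmd : (`k'\arrow `y)\prod `k' | `a{:}`k,`D$ and conclusion type $`k\arrow `y$. Given an arbitrary $\vect{L} \ele \TSem{`k}$, I would extend $\Repl$ to $\Repl\strSubvec[`a<=\vect{L}]$ and apply the IH to the premise; the extension is legitimate because $\vect{L} \ele \TSem{`k}$, and the resulting conclusion is that the stack $\Cmd\indexRepl\strSubvec[`a<=\vect{L}]$ lies in $\TSem{(`k'\arrow `y)\prod `k'}$, hence has the form $M'\cons\vect{L}'$ with $M'\vect{L}' \ele \SN$. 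The remaining and most delicate step is to transport this $\SN$-property to the target $(`m`a.\Cmd)\indexRepl\,\vect{L}$: this is exactly the purpose of the two structural-substitution closure properties, namely Prop.\ref{SN facts}\,(\ref{SN fact mu in redex}) for the sub-case $\Cmd\equiv[`b]M$ with $`b\neq `a$, and Prop.\ref{SN facts}\,(\ref{SN fact mu out redex}) for the sub-case $\Cmd\equiv[`a]M$, using $\vect{L} \ele \SN^*$ from Lem.\ref{TypeSem and SN lemma}\,(\ref{TypeSem implies SN}). The delicate bookkeeping lies in matching the shape of the term whose strong normalisation is handed to us by the IH against the form of premise required by these two closure lemmas, an alignment that was prepared precisely by the asymmetric way the definition of $T\indexRepl$ treats the clause $(`m`a.[`a]M)\indexRepl$ by building $\Repl(`a)$ into the substituted body.
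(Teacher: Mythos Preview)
Your proposal is correct and follows essentially the same route as the paper: induction on derivations, with the logical rules handled by Lem.~\ref{closed for leq}, the rules $(\Ax)$, $(\App)$, $(\TCmd)$ by direct unfolding, and the real content in $(\Abs)$ and $(\MAbs)$ via extended substitutions and the closure properties in Prop.~\ref{SN facts}. Two small remarks: in the $(\MAbs)$ case you will also need Prop.~\ref{SN facts}\,(\ref{SN fact add mu abstraction}) to wrap $`m`a.[\,\cdot\,]$ around the term delivered by the IH before Prop.~\ref{SN facts}\,(\ref{SN fact mu in redex}) or (\ref{SN fact mu out redex}) applies (the paper does this explicitly); and the side condition $\SN(\vect{L})$ is needed for (\ref{SN fact mu in redex}), i.e.\ the $`b\neq`a$ sub-case, rather than for (\ref{SN fact mu out redex}). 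The paper also phrases the $(\Abs)$ case via Lem.~\ref{Typesem expansion}\,(\ref{Typesem expansion logical}) instead of directly via Prop.~\ref{SN facts}\,(\ref{SN fact redex}), but that lemma is just a thin wrapper, so your more direct argument is equivalent.
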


 \begin{proof}
By induction on the structure of derivations. 
We show some more illustrative cases. 

 \begin{description} \itemsep 4pt

 \item[$\Ax$] then $M = x$.
Since $x{:}`d \in`G$ and $\Repl $ extends $`G$, we have $x_{\Repl} = \Repl (x) \ele \TypeSem{`d}$.

 \item [$\Abs$] Then $M = `lx.M'$, $`d = `d'\prod `k\arrow`n $, and $ \derLmu `G,x{:}`d' |- M' : `k\arrow`n | `D $.
Take $N \ele \TypeSem{`d'}$; since $x$ is bound, by Barendregt's convention we can assume that it does not occur free in the image of $ \Repl $, so $ \Repl [ N/x ] $ is a % well-defined 
parallel substitution that extends $`G, x{:}`d'$ and $`D$.
Then by induction, we have $M'{\indexRepl [ N/x ]} \ele \TypeSem{`k\arrow`n}$.
Since $x$ does not occur free in the image of $ \Repl $, $M'{\indexRepl [ N/x ] } = M'{\indexRepl}[N/x]$, so also $M'{\indexRepl}[N/x] \ele \TypeSem{`k\arrow`n}$.
By Lem.~\ref{Typesem expansion}\,(\ref{Typesem expansion logical}), also
$(`lx.M'{\indexRepl})N \ele \TypeSem{`k\arrow`n}$.
By definition of $ \TypeSem{`k\arrow`n}$, for any $ \vect{L} \ele \TypeSem{`k}$ we have $\SN({ (`lx.M'{\indexRepl} )N \vect{L} })$; notice that $N \cons \vect{L} \ele \TypeSem{`d \prod `k}$, so
$(`lx.M'){\indexRepl} \ele \TypeSem{`d'\prod `k\arrow`n}$.

 \item[$\App$]
Then $M = PQ$, and there exists $`d$ such that $ \derLmu `G |- P : `d\prod `k\arrow`n | `D $ and $ \derLmu `G |- Q : `d | `D $.
Then by induction, we have $P{\indexRepl} \ele \TypeSem{`d\prod `k\arrow`n}$ and $Q{\indexRepl} \ele \TypeSem{`d}$; notice that $P{\indexRepl}Q{\indexRepl} = (PQ){\indexRepl}$.
Take $ \vect{L} \ele \TypeSem{`k}$, we get $Q{\indexRepl} \cons \vect{L} \ele \TypeSem{`d\prod `k}$, so also $ \SN( P{\indexRepl}Q{\indexRepl} \vect{L} ) $.
But then also $P{\indexRepl}Q{\indexRepl} \ele \TypeSem{`k\arrow `n}$.

\item[$\TCmd$]  Then $T = [`a]M$, $`s = `d\prod `k$, $`D = `a{:}`k,`D'$, and $\derLmuR `G |- M : `d | `D' $ is the premise of the rule.
In this case $([`a]M) \indexRepl = M \indexRepl \cons (\Repl `a)$, and we have that 
$M \indexRepl  \ele \TypeSem{`d}$ by induction, and $\Repl `a \ele \TypeSem{`k}$ by the hypothesis
that $\Repl$ extends $`D = `a{:}`k,`D'$. So $M \indexRepl \cons (\Repl `a) \ele \TypeSem{`d\prod `k}$ by
definition.

 \item[$`m$] 
Then $M = `m`a.\Cmd$, $\Cmd = [`b]N$, $`s = `k \arrow `n $ and the premise of the rule is
$\derLmuR `G |- \Cmd : (`k' \arrow `n) \prod `k' | `a{:}`k, `D$, where  $`a \not\in `D$.
For any $ \vect{L} \ele \TypeSem{`k}$ the parallel substitution $\Repl' = \Repl \strSubvec[ `a <= L ]$
extends $`G$ and $`a{:}`k, `D$ since $\Repl$ extends $`D$, hence by induction:
\begin{equation}\label{eq:par-subst}
\Cmd_{\Repl'} = N_{\Repl'} \cons (\Repl' `b) \ele \TypeSem{(`k' \arrow `n) \prod `k'}
\end{equation}
We then distinguish two different sub-cases.

\begin{description} \itemsep 3pt
 
 \item[$`a = `b$]
Then $M = `m`a. [`a]N$ and $`k = `k'$. If $ \vect{L} \ele \TypeSem{`k}$ then 
$N\indexRepl[ `a \By \vect{L\,} ]  \ele \TypeSem{`k\arrow`n}$ by (\ref{eq:par-subst}) above.
Since $`a$ does not occur free in the image of $\Repl$, $N\indexRepl[ `a \By \vect{L\,} ] = N{\indexRepl} \strSubvec[ `a <= L ]$, so we have $N{\indexRepl}\strSubvec[ `a <= L ]  \ele \TypeSem{`k\arrow`n} $, and therefore $ \SN( N{\indexRepl} \strSubvec[ `a <= L ] \vect{L} ) $.
But then also ${\SN(`m`a.[`a]N{\indexRepl} \strSubvec[ `a <= L ] \vect{L})}$, by Lem.~\ref{SN facts}\,(\ref{SN fact add mu abstraction}), and by Lem.~\ref{SN facts}\,(\ref{SN fact mu out redex}) $\SN ({ (`m`a.[`a]N{\indexRepl}) \vect{L} })$;  
so $ (`m`a.[`a]N){\indexRepl} \ele \TypeSem{`k\arrow`n}$.

 \item[$`a \not= `b$]
Then $`D = `b{:}`k',`D' $.
Assume $\vect{L} \ele \TypeSem{`k}$, then $\Repl\strSubvec[ `a <= L ]  $ extends $`G$ and $`a{:}`k, `b{:}`k',`D' $ 
and, by (\ref{eq:par-subst}), we have $N\indexRepl[ `a \By \vect{L\,} ] \ele \TypeSem{`k' \arrow`n} $.
Now let $\vect{Q} \ele \TypeSem{`k'}$, then $ \SN({ N {\indexRepl[ `a \By \vect{L\,} ]} \vect{Q} }) $ and then also $ \SN({ (N \vect{Q} )\indexRepl[ `a \By \vect{L\,} ] }) $.
Then $\SN(`m`a.[`b](N \vect{Q} )\indexRepl[ `a \By \vect{L\,} ])$  by Lem.~\ref{SN facts}\,(\ref{SN fact add mu abstraction}), so by Def.~\ref{TypeSem definition},  $\SN ({ `m`a.[`b](N \vect{Q} )\indexRepl[ `a \By \vect{L\,} ] }) $; then also 
$ \SN({ `m`a.[`b](N \vect{Q} ){\indexRepl} {\strSubvec[ `a <= L ] } }) $.

Then $ \SN({ (`m`a.[`b](N\vect{Q}){\indexRepl})\vect{L} }) $ by Lem.~\ref{SN facts}\,(\ref{SN fact mu in redex}).
Notice that $[`b]N{\indexRepl} \vect{Q} = [`b]N{\indexRepl} \strSubvec[ `b <= Q ] $; since $\Repl{`b} = \vect{Q}$, we can infer that $[`b]N{\indexRepl} \vect{Q} = [`b]N{\indexRepl} $, so $ \SN({ (`m`a.[`b]N){\indexRepl} \vect{L} }) $.
But then $ (`m`a.[`b]N){\indexRepl} \ele \TypeSem{`k\arrow`n} $.
\end{description}

 \item[$\inter$]
By induction.

 \item[$\leq$]
By induction and Lem.~\ref{closed for leq}.
\qed 

 \end{description}
\end{proof} 

We now come to the main result of this section, that states that all terms typeable in the restricted system are strongly normalisable.

 \begin{thm}[Typeable terms are $ \SN $] \label{thm:typableAreSN}
If $\derLmu `G |- M : `d | `D $ for some $`G$, $`D $ and $`d$, then $M \ele \SN$.
 \end{thm}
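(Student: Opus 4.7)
The plan is to follow the classical Tait/Girard computability strategy: construct an instance of the Replacement Lemma (Lem.~\ref{lem:replacement}) whose conclusion lives in some $\TypeSem{`d}$, then extract strong normalisation via Lem.~\ref{TypeSem and SN lemma}\,(\ref{TypeSem implies SN}).

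Concretely, I will exhibit a parallel substitution $\Repl$ that extends $`G$ and $`D$ by setting $\Repl(x) = x$ for every $x \ele \dom(`G)$, and $\Repl(`a) = y^{`a}_1 \cons \cdots \cons y^{`a}_{n_{`a}}$ for every $`a \ele \dom(`D)$, where $n_{`a} = \length{`D(`a)}$ and the $y^{`a}_i$ are pairwise distinct fresh variables. Checking that $\Repl$ extends the basis $`G$ is immediate from Lem.~\ref{TypeSem and SN lemma}\,(\ref{SN head implies TypeSem}), since every variable inhabits every restricted term type. Checking that $\Repl$ extends $`D$ requires an auxiliary induction on $`k \ele \LangR_C$ establishing that any stack of $n \ge \length{`k}$ pairwise distinct fresh variables lies in $\TypeSem{`k}$: for $`k = `w$ the stack is trivially in $\SN^*$; for $`k = `d \prod `k'$ the head variable inhabits $\TypeSem{`d}$ by Lem.~\ref{TypeSem and SN lemma}\,(\ref{SN head implies TypeSem}) while the induction hypothesis covers the tail; intersections are handled using $\length{`k_1 \inter `k_2} = \max(\length{`k_1}, \length{`k_2})$.

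The Replacement Lemma then yields $M \indexRepl \ele \TypeSem{`d}$, and Lem.~\ref{TypeSem and SN lemma}\,(\ref{TypeSem implies SN}) delivers $M \indexRepl \ele \SN$.

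It remains to conclude $M \ele \SN$ from $M \indexRepl \ele \SN$. Since $\Repl$ only appends stacks of fresh variables after each $[`a]$-subcommand of $M$ and leaves term variables untouched, the $`b$- and $`m$-redex skeleton of $M$ is preserved inside $M \indexRepl$, so an infinite reduction sequence starting at $M$ induces an infinite reduction sequence starting at $M \indexRepl$, contradicting $\SN(M \indexRepl)$. This transfer is the principal technical obstacle: structural substitution does not commute strictly with $`m$-reduction, because reducing $(`m `a.\Cmd)\indexRepl \, Q\indexRepl$ inside $M \indexRepl$ prepends $Q\indexRepl$ to the fresh-variable stack already attached to each $[`a]$-occurrence, so the resulting term differs in the \emph{order} of arguments from $(`m `a.\Cmd \strSub[`a \Becomes Q]) \indexRepl$; the mirroring must therefore be argued by a careful induction on reduction steps, showing that the trailing fresh variables remain inert and that each reduction in $M$ forces at least one corresponding reduction in $M \indexRepl$.
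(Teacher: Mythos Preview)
Your proposal follows the paper's strategy essentially verbatim through the point where $M\indexRepl \ele \SN$: same parallel substitution (identity on term variables, stacks of fresh variables of length $\length{`D(`a)}$ on names), same appeal to Lem.~\ref{lem:replacement} and Lem.~\ref{TypeSem and SN lemma}\,(\ref{TypeSem implies SN}). Your auxiliary induction verifying that such a stack lies in $\TypeSem{`k}$ is the argument the paper leaves implicit.

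The only divergence is in the final step. You argue by simulating reductions of $M$ inside $M\indexRepl$, and you flag a supposed obstacle concerning argument order under $`m$-reduction. That worry is unfounded: in any redex $(`m`a.\Cmd)Q$ occurring in $M$, the name $`a$ is \emph{bound}, hence by Barendregt's convention $`a \notin \dom(\Repl)$, so no fresh-variable stack is ever attached to the $[`a]$-occurrences inside $\Cmd$. The commutation $(\Cmd\strSub[`a<=Q])\indexRepl \equiv \Cmd\indexRepl\strSub[`a<=Q\indexRepl]$ therefore holds on the nose, and the simulation goes through without the delicate induction you anticipate.

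The paper sidesteps even this (easy) verification: from $\SN(M\indexRepl)$, where $M\indexRepl \equiv M\strSubvec[`a_1<=\vect{y}_{`a_1}]\cdots\strSubvec[`a_m<=\vect{y}_{`a_m}]$, it invokes Prop.~\ref{SN facts}\,(\ref{SN fact mu in redex}),(\ref{SN fact mu out redex}) to conclude $\SN\big((`m`a_1.[`b_1]\cdots`m`a_m.[`b_m]M)\,\vect{y}_{`a_1}\cdots\vect{y}_{`a_m}\big)$ for fresh $`b_i$, whence $M \ele \SN$ as a subterm. This is shorter and reuses the expansion facts already established.
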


 \begin{Proof}
Let $ \Repl $ be a parallel substitution such that $ \Repl (x) = x $ for all $ x \ele \dom(`G)$ and $ \Repl (`a) = \vect{y}_{`a} $ for $ `a \ele \dom(`D) $, where the length of the stack $ \vect{y}_{`a}$ is $\length{`k}$ if $`a{:}`k \ele `D $
(notice that $\Repl$ is well defined).
By Lem.\skp\ref{TypeSem and SN lemma}, $ \Repl $ extends $`G$ and $`D $.
Hence, by Lem.\skp\ref{lem:replacement}, $M \indexRepl \ele \TypeSem{`d}$,
and then $M \indexRepl \ele \SN$ by Lem.\skp\ref{TypeSem and SN lemma}\,(\ref{TypeSem implies SN}).
Now 
 \[ \begin{array}{rcl}
M \indexRepl 
	&\equiv& 
M \,[ x_1/ x_1, \ldots, x_n/x_n, \StrSubNoB{`a_1}{\vect{y}_{`a_1}}, \ldots, \StrSubNoB{`a_m}{\vect{y}_{`a_m}}] \\
	&\equiv& 
M \,[ \StrSubNoB{`a_1}{\vect{y}_{`a_1}}, \ldots, \StrSubNoB{`a_m}{\vect{y}_{`a_m}}] 
 \end{array} \]
Then, by Prop.\skp\ref{SN facts}, for any $\vect{`b}$ also $(`m`a_1.[`b_1]\dots `m`a_m.[`b_m]M) \vect{y}_{`a_1} \dots \vect{y}_{`a_m} \ele \SN$, and therefore also $M \ele \SN$. %\qed
 \end{Proof}

 \subsection{Strongly Normalising Terms are Typeable} \label{sub:NormType}

In this section we will show the counterpart of the previous result, \ie~that all strongly normalisable terms are typeable in our restricted intersection system.
This result has been claimed in many papers \cite{Pottinger'80,Bakel-TCS'92}, but has rarely been proven completely.

First we give the shape of terms and commands in normal forms.

 \begin{defi}[Normal Forms] 
The sets $\NF \subseteq \Terms$ (and, implicitly, $\NFC\subseteq \Commands$) of \emph{normal forms} are defined by the grammar:
 \[ \begin{array}{rcl}
N & :: =& x N_1 \cdots N_k \mid `lx.N \mid `m`a.[`b]N ~ (`b \not= `a \textrm{ or } `a \ele N %, \textrm{ and } N \not= `m`g.[`d] N' 
% SvB: no renaming
) 
 \end{array} \]
 \end{defi}

\noindent
It is straightforward to verify that $\NF$ and $\NFC$ coincide with the sets of irreducible terms and commands, respectively.

We can now show that all terms and commands in normal form are typeable in the restricted system.

 \begin{lem} \label{prop:NFaretypeable}

 \begin{enumerate}

 \firstitem \label{prop:NFaretypeable-terms}
If $N \ele \NF$ then there exist $`G,`D $ and $`k\arrow `y \ele \LangR_D$ such that $ \derLmuR `G |- N : `k\arrow `y | `D $.

 \item \label{prop:NFaretypeable-commands}
If ${\Cmd} \ele \NFC$ then there exist $`G,`D $ and $`k \ele \LangR_C$ such that $ \derLmuR `G |- {\Cmd} : (`k\arrow `y)\prod `k | `D $.

 \end{enumerate}
 \end{lem}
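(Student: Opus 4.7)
The plan is to proceed by simultaneous structural induction on $N \ele \NF$ and $\Cmd \ele \NFC$, following the grammar of normal forms. In each case we build a restricted derivation of the required shape using the syntax-directed rules $(\Ax)$, $(\Abs)$, $(\App)$, $(\TCmd)$, $(\MAbs)$, together with $(\inter)$, $(\seq)$ and the admissible rules $(\Weak)$ and $(\Strength)$ from Lem.\skp\ref{lem:weakening} and Lem.\skp\ref{lem:strengthening}. We never need the rule $(`w)$, since every subject we type is either a term (hence given an arrow type $`k\arrow `y$) or a command of the prescribed shape $(`k\arrow `y)\prod `k$, and the restriction to $\LangR$ is preserved throughout because we only form arrows to $`y$, products, and intersections of types already in $\LangR$.

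For the term cases, consider first $N \equiv x N_1\cdots N_k$. By the IH applied to each $N_i$ we obtain $\derLmuR `G_i |- N_i : `k_i\arrow `y | `D_i$ for some $`G_i$, $`D_i$, $`k_i$. Set $`G' = \bigwedge_i `G_i$, $`D = \bigwedge_i `D_i$, and assign to $x$ the type $`t = (`k_1\arrow `y) \prod \cdots \prod (`k_k\arrow `y) \prod `w\arrow `y \ele \LangR_D$. After using $(\Strength)$ to bring each $N_i$-judgement into the common basis $`G = `G' \inter \{x{:}`t\}$ and context $`D$, an application of $(\Ax)$ followed by $k$ applications of $(\App)$ yields $\derLmuR `G |- xN_1\cdots N_k : `w\arrow `y | `D$. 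For $N \equiv `lx.N'$ the IH supplies $\derLmuR `G' |- N' : `k\arrow `y | `D$; if $x{:}`d \ele `G'$ then rule $(\Abs)$ directly produces the type $`d\prod `k\arrow `y$, and otherwise we first apply $(\Weak)$ to add any restricted assumption, such as $x{:}`w\arrow `y$. Finally, if $N \equiv `m`a.[`b]N'$, then by the IH on commands we have $\derLmuR `G |- [`b]N' : (`k\arrow `y)\prod `k | `D$, and rule $(\MAbs)$ yields $\derLmuR `G |- `m`a.[`b]N' : \fun{`D}(`a)\arrow `y | `D\except `a$, whose subject type is $`w\arrow `y \ele \LangR_D$ when $`a\notele `D$ and $`D(`a)\arrow `y \ele \LangR_D$ otherwise.

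The command case $\Cmd \equiv [`a]N$ is the only one requiring a non-trivial adjustment. By the IH on terms we have $\derLmuR `G |- N : `k'\arrow `y | `D$, but a direct application of $(\TCmd)$ would produce $[`a]N : (`k'\arrow `y)\prod \fun{`D}(`a)$, whose two `continuation' components need not coincide. We fix this by setting $`k = `k' \inter \fun{`D}(`a) \ele \LangR_C$ and letting $`D'$ agree with $`D$ except that $`D'(`a) = `k$. Then $`D' \seqC `D$, so $(\Strength)$ gives $\derLmuR `G |- N : `k'\arrow `y | `D'$; since $`k \seqC `k'$, contravariance of the arrow yields $`k'\arrow `y \seqD `k\arrow `y$, and $(\seq)$ produces $\derLmuR `G |- N : `k\arrow `y | `D'$. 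A final application of $(\TCmd)$ delivers $\derLmuR `G |- [`a]N : (`k\arrow `y)\prod `k | `D'$ as required.

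The main obstacle is precisely this alignment step in the command case: rule $(\TCmd)$ rigidly fixes the right-hand factor of the product to be $\fun{`D}(`a)$, while the induction hypothesis hands us an a priori unrelated continuation type $`k'$ for $N$. The resolution exploits two facts proper to the restricted system, namely that $\LangR_C$ is closed under $\inter$ and that the arrow is contravariant in its first argument, so that intersecting $`k'$ with the current type of $`a$ simultaneously shrinks both sides to a common value $`k$; $(\Strength)$ and $(\seq)$ then propagate the change through the derivation. Everywhere else the construction is forced by the shape of the syntax-directed rules, and the restriction to $\LangR$ requires no further verification.
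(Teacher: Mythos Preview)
Your proof is correct and follows essentially the same approach as the paper's: simultaneous induction on the structure of normal forms, using $(\Strength)$/$(\Weak)$ to merge contexts in the head-variable case and the intersection $`k'\inter\fun{`D}(`a)$ to align the two components in the $[`a]N$ case. The only cosmetic difference is that in the command case the paper applies $(\TCmd)$ first and then $(\seq)$ via covariance of $\prod$, whereas you apply $(\seq)$ first via contravariance of $\arrow$ and then $(\TCmd)$; both orderings yield a judgement of the required shape $(`k\arrow `y)\prod `k$.
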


 \begin{proof}
By simultaneous induction over the definitions of $\NF$ and $\NFC$.

 \begin{description}

 \item [$ N \equiv xN_1 \ldots N_k $] 
Since $N_1, \ldots,N_k \ele \NF$, by induction (\ref{prop:NFaretypeable-terms})
we have that, for all $i \seq k$, there exist $`G_i,`D _i$ and $`d_i$ such that $ \derLmuR `G_i |- N_i : `d_i | `D _i $ (the structure of each $`d_i$ plays no role in this part). 
Take $ `G = `G_1 \inter \dots \inter `G_k \inter \Set{ x{:}`d_1\prod \cdots\prod `d_k\prod `w\arrow `y }$ and $ `D = `D_1 \inter \dots \inter `D_k $.
Then, by weakening, also $ \derLmuR `G |- N_i : `d_i | `D $ for all $i \seq k$, and 
$ \derLmuR `G |- x : {(`d_1\prod \cdots\prod `d_k\prod `w)\arrow `y} | `D $.
By repeated applications of $(\App)$ we get $ \derLmuR `G |- xN_1\dots N_k : `w\arrow `y | `D $.

 \item [$ N \equiv `lx.M $] 
By induction (\ref{prop:NFaretypeable-terms}) there exist $`G$, $`d'$, and $`D $ such that $\derLmu `G,x{:}`d' |- M : `k\arrow `y | `D $ (if $x \notele \fv(M)$, we can add $x{:}`d$ by weakening, for any $`d' \ele \LangR_D$).
Then by $(\Abs)$ we obtain % \\
$\derLmu `G |- `lx.M : `d\prod`k\arrow `y | `D $.

 \item [$ N \equiv `m`a.{\Cmd} $] 
By induction (\ref{prop:NFaretypeable-commands}) there exist $`G$, $`k$, $`k'$, and $`D $ such that $ \derLmuR `G |- {\Cmd} : (`k\arrow `y)\prod `k | `a{:}`k',`D $ (if $`a \notele \fn(\Cmd)$, we can add $`a{:}`k'$ by weakening, for any $`k' \ele \LangR_C$). 
We get $ \derLmuR `G |- `m`a.{\Cmd} : `k'\arrow `y | `D $ by applying rule $(\MAbs)$.
	
 \item [$ {\Cmd} \equiv {[`b]N} $] 
By induction (\ref{prop:NFaretypeable-terms}) there exist $`G$, $`d = `k\arrow `y$, and $`D $ such that $ \derLmuR `G |- N : `d | `D $. 
If $`b \notele \dom(`D)$, by weakening also $ \derLmuR `G |- N : `d | `b{:}`k,`D $, and by rule $(\TCmd)$ we obtain $ \derLmuR `G |- [`b]N : (`k\arrow `y)\prod `k | `b{:}`k,`D $.
If $`b \ele \dom(`D)$ then $`D = `b{:}`k',`D' $ for some $`k'$, and we can construct:
 \[ \begin{array}{c}
\Inf	{\Inf	[\TCmd]
{\Inf[\Strength]
	{\InfBox{\derLmu `G |- N : `k\arrow `y | `b{:}`k',`D' }
	}{ \derLmu `G |- N : `k\arrow `y | `b{:}`k\inter`k',`D' }
}{ \derLmu `G |- {[`b]}N : (`k\arrow `y)\prod (`k\inter`k') | `b{:}`k\inter`k',`D' }
	 \quad
	 \InfBox{(`k\arrow `y)\prod (`k\inter`k') \seqRestr (`k\arrow `y)\prod `k}
	}{ \derLmu `G |- {[`b]}N : (`k\arrow `y)\prod `k | `b{:}`k\inter`k',`D' }
 \end{array} \]
which shows the result.
\qed

 \end{description}
 \end{proof}
% SvB: this is confusing
% Notice that that, for example, $`m`a.[`b]`m`g.[`d]N $ is not a normal form with respect to rule $(\Rename)$, which plays no part in the proof.

In the last case, we are forced to use $(`k'\arrow `y)\prod (`k\inter`k')$ instead of $(`k'\arrow `y)\prod `k$ (otherwise we could not apply rule $(\MAbs)$ to type $ `m`a.[`b]N$) which comes at the price of weakening the assumption $`b{:}`k$ to $`b{:}`k\inter`k'$. 
However, this is not a disadvantage since we get, for example, $ \derLmuR `G |- `m`b.{[`b]}N : (`k\inter`k')\arrow `y | `D' $ which safely records in the antecedent type $`k\inter`k'$ the functionality of $N$; notice that, in fact $`k'\arrow `y \seqRestr (`k\inter`k')\arrow `y$.

We will now show that typing in the restricted system is closed under expansion, with the proviso that the term that gets substituted is typeable as well. 
We first establish that types are preserved from a contractum to the respective redex.

 \begin{lem}[Contractum expansion] \label{lem:contractum-exp}

 \begin{enumerate}

\firstitem \label{lem:contractum-exp-beta}
If $ \derLmuR `G |- M[N/x] : `d | `D $ and $ \derLmuR `G |- N : `d' | `D $, then $ \derLmuR `G |- (`l x.M)N : `d | `D $.

\item \label{lem:contractum-exp-mu}
If $\derLmuR `G |- {`m `a.\Cmd[`a \Becomes N]} : `d | `D $ and $\derLmuR `G |- N : `d' | `D $, then $ \derLmuR `G |- {(`m `a.\Cmd)N} : `d | `D $.
 \end{enumerate}
 \end{lem}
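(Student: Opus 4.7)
The plan is to adapt the proof of Theorem \ref{Subject expansion} for the $(`b)$ and $(`m)$ cases, but to verify at every step that the reconstructed derivation remains within the restricted system. The essential novelty is that the original proof silently invokes rule $(`w)$ whenever the variable or name being substituted does not actually occur in the body; here this is forbidden, and the extra hypothesis $\derLmuR `G |- N : `d' | `D$ must supply the missing typing for $N$ in exactly those situations.

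For part (\ref{lem:contractum-exp-beta}), I would first apply the Generation Lemma (Lemma \ref{gen lemma}) together with the shape of $\LangR_D$ to decompose $`d \simD \Inter_{i \ele I}(`k_i\arrow `y)$ with $I$ non-empty and each $`k_i \ele \LangR_C$; it then suffices to derive $\derLmuR `G |- (`lx.M)N : `k_i\arrow `y | `D$ for each $i$ and combine via $(\inter)$ and $(\seq)$. Fix such an $i$: if $x \ele \fv(M)$, the proof of Lemma \ref{lem:substitution}\,(\ref{lem:substitution-term}) invokes $(`w)$ only in the case $T \equiv y \neq x$, which is avoided when $x$ genuinely occurs in $M$, so the lemma's restricted analogue yields some $`d''_i \ele \LangR_D$ with $\derLmuR `G, x{:}`d''_i |- M : `k_i\arrow `y | `D$ and $\derLmuR `G |- N : `d''_i | `D$; if instead $x \notele \fv(M)$, then $M[N/x] \equiv M$ and I would set $`d''_i = `d'$, obtaining the typing of $N$ directly from the hypothesis and that of $M$ by weakening (Lemma \ref{lem:weakening}). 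Since $`d''_i\prod `k_i \ele \LangR_C$ and hence $`d''_i\prod `k_i\arrow `y \ele \LangR_D$, applying $(\Abs)$ and then $(\App)$ delivers the required restricted judgement $\derLmuR `G |- (`lx.M)N : `k_i\arrow `y | `D$.

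For part (\ref{lem:contractum-exp-mu}), the strategy is analogous but uses Lemma \ref{structural substitution lemma} in place of term substitution. After decomposing $`d \simD \Inter_{i \ele I}(`k_i\arrow `y)$, apply the Generation Lemma to $\derLmuR `G |- `m`a.\Cmd[`a \Becomes N] : `k_i\arrow `y | `D$ to extract sub-derivations typing $\Cmd[`a \Becomes N]$ in an extended context $`a{:}`k''_i, `D$ for some restricted $`k''_i$. Split on whether $\Cmd \equiv [`b]L$ with $`b \neq `a$ (handled by Lemma \ref{structural substitution lemma}\,(\ref{structural substitution lemma ii}), which retrieves some $`d''_i \ele \LangR_D$ together with a restricted derivation of $[`b]L$ in the context $`a{:}`d''_i\prod `k''_i, `D$) or $\Cmd \equiv [`a]L$ (handled by Lemma \ref{structural substitution lemma}\,(\ref{structural substitution lemma iii}), whose conclusion already exhibits a type of the shape needed to reconstruct the redex). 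Whenever $`a \notele \fn(\Cmd)$ the structural substitution is vacuous, and I would use the hypothesis $\derLmuR `G |- N : `d' | `D$ to take $`d''_i = `d'$ in place of any $(`w)$-typing of $N$. Concluding each $i$-component by $(\MAbs)$ and then $(\App)$, and combining via $(\inter)$ and $(\seq)$, yields the thesis. The hard part will be the sub-case $\Cmd \equiv [`a]L$, where Lemma \ref{structural substitution lemma}\,(\ref{structural substitution lemma iii}) produces derivations whose types entangle the type $`d''_i$ of $N$ inside the typing of $[`a]L$, requiring a careful alignment with the premise pattern of $(\MAbs)$ and a verification that every intermediate type built along the way remains in $\LangR$.
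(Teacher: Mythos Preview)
Your proposal is correct and follows essentially the same approach as the paper: both reduce to the corresponding cases of Theorem~\ref{Subject expansion}, observing that the only place rule $(`w)$ intervenes is when the substituted variable (resp.\ name) does not occur free in the body, and that in precisely this case the extra hypothesis $\derLmuR `G |- N : `d' | `D$ supplies the required typing of $N$. The paper's proof is terser than yours but makes one point more explicit for part~(\ref{lem:contractum-exp-mu}): after applying the Generation Lemma to $`m`a.\Cmd[`a\Becomes N]$, one must note that the command types $(`k_i\arrow `y)\prod `k_i$ arising in the sub-derivations are never $\sim_C `w$ in the restricted theory (since $`y \not\sim_R `w$), which is what licenses invoking the non-trivial branch of Lemma~\ref{structural substitution lemma}\,(\ref{structural substitution lemma iii}); you gesture at this under ``every intermediate type remains in $\LangR$'' but it is worth isolating explicitly.
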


 \begin{proof}

 \begin{enumerate}

\firstitem % [\ref{lem:contractum-exp-beta}] 
As in the corresponding case of Thm.\skp\ref{Subject expansion}, observing that term types have to be in $\LangR_D$, we have that $`d$ and $`d'$ cannot be equivalent to $`w$.
Remark that in the proof of Thm.\skp\ref{Subject expansion} the fact that $`d'$ might be equivalent to $`w$ is of use omly in case $x\notele \fv{M}$.
Here we have to assume $ \derLmuR `G |- N : `d' | `D $, since otherwise $ \derLmuR `G |- (`l x.M)N : `d | `D $ is not derivable.

\item % [\ref{lem:contractum-exp-mu}] 
% SvB: this paragraph was one single sentence !
As in the corresponding case of Thm.\skp\ref{Subject expansion}.
We observe that by Lem.\skp\ref{gen lemma}, $\derLmuR `G |- {`m `a.\Cmd[`a \Becomes N]} : `d | `D $ and $`d \notequivRD `w$ imply that all the sub-derivations $\derLmuR `G |- \Cmd : (`k_i\arrow`n)\prod `k_i | `a{:}`k'_i,`D $ are such that $\bigwedge_{i\ele I}`k'_i\arrow`n \seqRD `d$ which implies that $(`k_i\arrow`n)\prod `k_i \notequivRC `w$.
Indeed, $(`k_i\arrow`n)\prod `k_i \equivRC `w$ only if $`k_i \equivRC `w$ for all $i$ (that is allowed) and $`n \sim_D `w$, which is not the case in either full and restricted type theories.

Note that the hypothesis $\derLmuR `G |- N : `d' | `D $ is again necessary in case $`a \notele \fn{\Cmd}$. \qed

 \end{enumerate}
 \end{proof}

It is tempting to conclude from Lem.\skp\ref{lem:contractum-exp} that if $M \red N$ by contracting a redex $PQ$ such that $ \derLmuR `G |- Q : `d' | `D $ for some $`d'$, then $ \derLmuR `G |- N : `d | `D $ implies $ \derLmuR `G |- M : `d | `D $. 
But, unfortunately, this is false even for the ordinary $`l$-calculus itself. 

The problem is that in case of $(`b)$-reduction the fact that
 \[ \begin{array}{c}
\textrm{if }\derSN `G |- M[N/x] : `s \textrm{ and } \derSN `G |- N : `t \textrm{, then }\derSN `G |- (`l x.M)N : `s 
 \end{array} \]
does not extend to arbitrary contexts $C[M[N/x]]$: \emph{i.e.}, it is false that
 \[ \begin{array}{c}
\textrm{if }\derSN `G |- C[M[N/x]] : `s \textrm{ and } \derSN `G |- N : `t \textrm{, then }\derSN `G |- C[(`l x.M)N] : `s 
 \end{array} \]
In fact, the basis $`G$ for typing $C[M[N/x]] $ changes in the proof, so induction does not apply.
Reformulating this as 
 \[ \begin{array}{c}
\textrm{if }\derSN `G_1 |- C[M[N/x]] : `s \textrm{ and }\derSN `G_2 |- N : `t \textrm{, then }\derSN `G_1\inter `G_2 |- C[(`l x.M)N] : `s 
 \end{array} \]
(which has been used by various authors)
gives no improvement; the problem is that a free variable in $N$ might be bound in the context, which implies that the derived type might change (get bigger in the sense of $\leq$), and that $`G_1\inter `G_2$ is not a correct basis for $C[(`l x.M)N]$, since it contains types for bound variables.
This suggests the property 
 \[ \begin{array}{c}
\textrm{if }\derSN `G_1 |- C[M[N/x]] : `s \textrm{ and } \derSN `G_2 |- N : `t , \\ \textrm{then there exists }`G_3 \leq `G_1, `r \geq `s \textrm{ such that }\derSN `G_3 |- C[(`l x.M)N] : `r 
 \end{array} \]
but this is also not achievable, since the use of $\leq$ in the derivation for $P$ in $MP$ (where $MP \reduces MQ$) forces a $\geq$ step on $M$ which is not always achievable; this throws the proof for the case of application irreparably out of kilter.

To illustrate this in the context of our system, take $`l x. (`l y. x)(xx)$ which reduces to $`l x.x$, which we can type as follows:
 \[ \begin{array}{c}
\Inf[\LAbs]
	{\Inf[\Ax] {} {\derLmu x{:}`w\arrow `y |- x : `w\arrow `y | }
	 }{ \derLmu |- `l x.x : (`w\arrow `y)\prod `w\arrow `y | }
 \end{array} \]
We cannot infer this type for $`l x. (`l y. x)(xx)$ in the restricted system.
To type the sub-term $xx$ when rule $(`w)$ is not available, the best we can do is (setting $`d = `k\arrow `y$):
 \[ \begin{array}{c}
\Inf	[\App]
	{\Inf	[\seqRestr]
{\Inf	[\Ax]
	{ \derLmu x{:}(`d\prod `k\arrow `y)\inter `d |- x : (`d\prod `k\arrow `y)\inter `d | {} }
}{\derLmu x{:}(`d\prod `k\arrow `y)\inter `d |- x : `d\prod `k\arrow `y | {} }
	 \quad 
	 \Inf	[\seqRestr]
{\Inf	[\Ax]
	{ \derLmu x{:}(`d\prod `k\arrow `y)\inter `d |- x : (`d\prod `k\arrow `y)\inter `d | {} }
}{\derLmu x{:}(`d\prod `k\arrow `y)\inter `d |- x : `d | {} }
	}{ \derLmu x{:}(`d\prod `k\arrow `y)\inter `d |- xx : `d | {} }
 \end{array} \]
Using the same type for $x$, for $`ly.x$ we can construct:
 \[ \begin{array}{c}
\Inf	[\LAbs]
	{\Inf	[\seqRestr]
{\Inf	[\Ax]
	{ \derLmu y{:}`d,x{:}(`d\prod `k\arrow `y)\inter `d |- x : (`d\prod `k\arrow `y)\inter `d | {} }
}{ \derLmu y{:}`d,x{:}(`d\prod `k\arrow `y)\inter `d |- x : `k'\arrow `y | {} }
	}{ \derLmu x{:}(`d\prod `k\arrow `y)\inter `d |- `ly.x : `d\prod `k'\arrow `y | {} }
 \end{array} \]
for any $`k'$ such that $ (`d\prod `k\arrow `y)\inter `d \seqRestr `k'\arrow `y $, and therefore
 \[ \begin{array}{c}
\Inf	[\LAbs]
	{\Inf	[\App]
{\InfBox{ \derLmu x{:}(`d\prod `k\arrow `y)\inter `d |- `ly.x : `d\prod `k'\arrow `y | {} }
 \qquad 
 \InfBox{ \derLmu x{:}(`d\prod `k\arrow `y)\inter `d |- xx : `d | {} }
}{ \derLmu x{:}(`d\prod `k\arrow `y)\inter `d |- (`l y. x)(xx) : `k'\arrow `y | {} }
	 }{ \derLmu |- `lx.(`ly.x)(xx) : ((`d\prod `k\arrow `y)\inter `d)\prod `k'\arrow `y | {} }
 \end{array} \]
So we cannot infer the type $(`w\arrow `y)\prod `w\arrow `y$ for $`l x. (`l y. x)(xx)$ and a general subject-expansion result doesn't hold 
in the restricted system, not even by requiring typability of all substituted subterms.

We can however show the weaker but sufficient statement that although assignable types are not preserved, \emph{typability} is: we cannot type $`l x. (`l y. x)(xx)$ with $(`w\arrow `y)\prod `w\arrow `y$, but still we can type it in the restricted system.

We will established this through considering \emph{leftmost-outermost} reduction ($\Lor$), following a suggestion by Betti Venneri; this technique is also the one used in \cite{Krivine-book'93}, and in \cite{Bakel-NDJFL'04} in the context of the strict type assignment system of \cite{Bakel-TCS'92}; \cite{Bakel-ACM'11} presents the proof for the system with strict types and a co-variant type inclusion relation of \cite{Bakel-TCS'95}.

 \begin {defi} % \label{def:leftmost-outermost}
An occurrence of a redex $\Redex = ( `l x.P)Q$ or $(`m`a.[`b]P)Q$ in a term $M$ is called the \emph{left-most outer-most redex of $M$} ($\Lor(M)$), if and only if:

 \begin {enumerate}
 \item there is no redex $\Redex'$ in $M$ such that $\Redex' = \Cont[ \Redex ] $
with $\Cont[ - ]\not= [-]$ (\emph{outer-most}) ;
 \item there is no redex $\Redex'$ in $M$ such that $M = \Cont_{0}[{ \Cont_{1}[\Redex'] \, \Cont_{2}[ \Redex ]}] $ (\emph{left-most}).
 \end {enumerate}
We write $M \lored N$ when $M$ reduces to $N$ by contracting $\Lor(M)$.
 \end {defi}

The correct subject expansion result (with respect to $\Lor$, Lem.\skp\ref{l.o. lemma}) is now the one used for the proof that all strongly normalisable terms are typeable, which uses induction on the length of the $\Lor$-path.

% The proof of the following lemma is inspired to a similar result in \cite{Bakel-NDJFL'04,Bakel-ACM'11}, where however it is established in the context of strict intersection type assignment for the $`l$-calculus.

 \begin {lem} \label {l.o. lemma}
Assume $M \lored N$ with $\Lor(M) = PQ$. 
If $\derLmuR `G |- N : `d | `D $, and 
 $\derLmuR `G |- Q : `d'' | `D $,
then there exist $`G'$, $`D'$ and $`d'$ such that $\derLmuR `G' |- M : `d' | `D' $.
 \end{lem}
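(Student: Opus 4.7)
\medskip

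\noindent
\textbf{Proof plan.}
My plan is to prove a slightly stronger statement that is easier to maintain by induction, namely: if $M \lored N$ with $\Lor(M) = PQ$ and both $\derLmuR `G |- N : `d | `D $ and $\derLmuR `G |- Q : `d'' | `D $ hold, then $\derLmuR `G |- M : `d | `D $ (the same $`G$, $`D $ and $`d$). The weaker existential conclusion stated in the lemma then follows immediately. The argument proceeds by structural induction on $M$, where the leftmost-outermost restriction is what pins down which sub-term the inductive hypothesis is to be applied to.

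The base case is when $M$ itself is the redex, i.e.\ $M = PQ$ is either $(`lx.P_0)Q$ or $(`m`a.\Cmd)Q$; here the reduct $N$ is the contractum and the result is exactly Lem.\skp\ref{lem:contractum-exp}\skp(\ref{lem:contractum-exp-beta}) or \ref{lem:contractum-exp}\skp(\ref{lem:contractum-exp-mu}), using the assumed typing $\derLmuR `G |- Q : `d'' | `D $. In the other cases $M$ properly contains the redex and we distinguish on the shape of $M$. For $M = `lx.M_0$ (so $N = `lx.N_0$ with $M_0 \lored N_0$), I apply the generation Lem.\skp\ref{gen lemma} to extract, for each $i$ in some $I$, premises $\derLmuR `G,x{:}`d_i |- N_0 : `k_i\arrow`y | `D $ with $\bigwedge_I(`d_i\prod`k_i\arrow`y) \seqRestrD `d$; by Barendregt's convention $x \notele \fv(Q)$ so weakening gives $\derLmuR `G,x{:}`d_i |- Q : `d'' | `D $, and the inductive hypothesis produces $\derLmuR `G,x{:}`d_i |- M_0 : `k_i\arrow`y | `D $; reapplying $(\Abs)$, $(\inter)$ and $(\seqRestr)$ yields $\derLmuR `G |- `lx.M_0 : `d | `D $. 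The cases $M = `m`a.\Cmd_0$ and $M = [`a]M_0$ are handled analogously using the corresponding clauses of Lem.\skp\ref{gen lemma} together with rules $(\MAbs)$ and $(\TCmd)$ respectively (weakening the context with $`a{:}`k_i$ when necessary, noting again that $`a \notele \fn(Q)$).

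The remaining case is $M = M_1 M_2$ with $M$ not itself the redex. Because $\Lor(M) = PQ$ is leftmost-outermost, either $M_1$ contains the redex (and then $\Lor(M_1) = PQ$ and $N = M_1' M_2$ with $M_1 \lored M_1'$) or $M_1$ is in normal form and the redex sits in $M_2$ (so $\Lor(M_2) = PQ$ and $N = M_1 M_2'$ with $M_2 \lored M_2'$). In both sub-cases I apply Lem.\skp\ref{gen lemma} to the assumed typing of $N$ to obtain, for each $i \ele I$, types $`d'_i$, $`k_i$ with $\derLmuR `G |- M_1^{(\prime)} : `d'_i\prod`k_i\arrow`y | `D $ and $\derLmuR `G |- M_2^{(\prime)} : `d'_i | `D $ and $\bigwedge_I(`k_i\arrow`y)\seqRestrD`d$; the inductive hypothesis (applied to the side containing the redex, using the same $`G$, $`D $, $Q$) restores the corresponding typing of $M_1$ or $M_2$, and then $(\App)$, $(\inter)$ and $(\seqRestr)$ reassemble the typing of $M$.

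The main obstacle I foresee is precisely the one highlighted in the excerpt preceding the lemma: a naive formulation of subject expansion that replaces a contractum inside an arbitrary context fails because the basis derived for the redex can strictly decrease the derivable type of surrounding subterms. The leftmost-outermost discipline sidesteps this by ensuring that at every inductive step the contracted redex lies in a \emph{uniquely determined} sub-position, so that generation gives us exactly the types we need at the two sides of every application and the inductive hypothesis delivers a typing of the expanded sub-term \emph{with the same type}, allowing rules $(\App)$, $(\Abs)$, $(\MAbs)$ and $(\TCmd)$ to be reapplied without loss. The second delicate point is the need to carry the assumption $\derLmuR `G |- Q : `d'' | `D $ through all recursive calls; this works because by Barendregt's convention the bound variables/names encountered as we descend into $M$ do not occur free in $Q$, so weakening by Lem.\skp\ref{lem:weakening} extends the typing of $Q$ to whichever enriched basis or context the induction reaches.
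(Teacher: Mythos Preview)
Your proof is correct, and in fact you establish a stronger invariant than the paper does: you preserve the \emph{same} basis, context and type through the induction, whereas the paper only carries the existential conclusion as inductive hypothesis. Because the paper's IH may return a different type for the subterm containing the contracted redex, in the head-variable case $M \equiv z\,P_1\cdots P_n$ with the redex inside some $P_h$ the paper is forced to strengthen the assumption on $z$ (adding a conjunct $`d_{j,1}\prod\cdots\prod`d'_{j,h}\prod\cdots\prod`d_{j,n}\prod`k\arrow`y$ to $`G(z)$) and similarly to rebuild the basis under $`l$- and $`m$-binders; your stronger IH makes all of this unnecessary and the reassembly via $(\App)$, $(\Abs)$, $(\MAbs)$, $(\TCmd)$, $(\inter)$, $(\seqRestr)$ is immediate.

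The one point that deserves a more careful justification is your claim ``by Barendregt's convention $x \notele \fv(Q)$'' when descending under a binder $`lx$ (and the analogous claim for names under $`m`a$). Barendregt's convention alone does \emph{not} give this: $Q$ is a subterm of $M_0$ lying in the scope of $`lx$, so $x$ may well occur free in $Q$. What actually forces $x \notin \fv(Q)$ is the combination of your two typing hypotheses: since the restricted system lacks rule $(`w)$, the judgement $\derLmuR `G |- Q : `d'' | `D$ entails $\fv(Q) \subseteq \dom(`G)$; and since $\derLmuR `G |- N : `d | `D$ with $x$ bound in $N = `lx.N_0$, the convention on judgements gives $x \notin \dom(`G)$. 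Together these yield $x \notin \fv(Q)$, and the weakening step goes through. This is exactly the observation that dissolves the obstruction the paper discusses just before the lemma (the counterexample $`lx.(`ly.x)(xx)$ has $Q = xx$, which cannot be typed in any $`G$ that also types $N = `lx.x$ under the convention). With this clarification your argument is complete, and arguably cleaner than the paper's.
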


 \begin{proof}
First we observe that, since $`d \ele \LangR_D$, $`d = \bigwedge_{i\ele I} `k_i\arrow`y$ for some $I$ so that certain $`k_i \ele \LangR_C$, and $`d \not\sim_D`w$ (in the full type theory).
Hence $\derLmuR `G |- N : `k_i\arrow`y | `D $ for all $i\ele I$. 
Hence it suffices to show the thesis when $`d$ is an arrow type.
We reason by induction over the structure of terms.

 \begin{description}

\item [$ M \same VP_1\dots P_n $] 
	We distinguish two sub-cases:	

 	\begin {description}

 	\item [$V \same \Lor(M)$] Then $M \lored V'P_1\dots P_n \same N$ and $V'$ is the contractum of $V \same PQ$.
By Lem.\skp\ref{gen lemma} we know that $\derLmuR `G |- V'P_1\dots P_n : `k\arrow`y | `D $ implies that 
 \[ \derLmuR `G |- V' : `d_{j,1}\prod \cdots\prod `d_{j,n}\prod `k\arrow`y | `D \] 
for all $j\ele J$, and $\derLmuR `G |- P_h: `d_{j,h} | `D $ for all $h \ele \n$. 
Since $V'$ is the contractum of the redex $PQ$ and $\derLmuR `G |- Q : `d'' | `D $, we can apply Lem.\skp\ref{lem:contractum-exp} and get 
 \[ \derLmuR `G |- V : d_{j,1}\prod \cdots\prod `d_{j,n}\prod `k\arrow`y | `D , \]
from which, repeatedly applying rule $(\App)$, we get $\derLmuR `G |- VP_1\dots P_n : `k\arrow`y | `D $. Take $`G' = `G$, $`d'= `k\arrow`y$ and $`D' = `D $.
 
 \item [$V \same z$] Then $\Lor(M) = \Lor(P_h) $ for some $h \ele \n$, and $N \same VP_1\dots P'_h \dots P_n$ with $P_h \lored P'_h$. 
Reasoning as in the previous case we get $\derLmuR `G |- P'_h: `d_{j,h} | `D $ and 
 \[ \begin{array}{rcl}
`G(z) &\seq& `d_{j,1}\prod \dots\prod `d_{j,h}\prod \dots\prod `d_{j,n}\prod `k\arrow`y.
 \end{array} \] 
By induction there exist $`G_1$, $`d'_{j,h}$ and $`D_1$ such that $\derLmuR `G_{j,h} |- P_h: `d'_{j,h} | `D_{j,h} $. 
We then set 
 \[ \begin{array}{rcl}
`G' &=&`G \ \inter \ \bigwedge_{j\ele J, h\ele H_j}`G_{j,h} \ \inter \ 
	\Set{ z{:}`d_{j,1}\prod \dots\prod `d'_{j,h}\prod \dots\prod `d_{j,n}\prod `k\arrow `y} \\
`D' &=& `D \ \inter \ \bigwedge_{j\ele J, h\ele H_j}`D_{j,h} 
 \end{array} \]
Then by applying rules $(\Strength)$ and $(\App)$ we conclude $\derLmuR `G' |- z\,P_1\dots P_n : `k\arrow`y | `D' $ as desired.	 
 \end{description}
	
 \item [$ M \same `ly.M' $] 
If $M \lored N$, then $N = `ly.N'$ and $M' \lored N'$.
Then there exists $`d_j$ and $`k_j$ such that $`d = \bigwedge_{j\ele J} `d_j\prod `k_j\arrow `y$, and $\derLmuR `G,y{:}`d_j |- N' : `k_j\arrow `y | `D $ for all $j\ele J$.
By induction, there exists $`G''$, $`D''$, $`d'_j$, and $`k'_j$ such that $\derLmuR `G'',y{:}`d'_j |- M' : `k'_j\arrow `y | `D'' $.
Then, by rule $(\Abs)$, $\derLmuR `G'' |- `ly.M' : `d'_j\prod `k'_j\arrow `y | `D'' $ for all $j\ele J$ so that $\derLmuR `G'' |- `ly.M' : \bigwedge_{j\ele J}`d'_j\prod `k'_j\arrow `y | `D'' $;  take $`G' = `G''$, $`D' = `D''$, and $`d' = \bigwedge_{j\ele J}`d'_j\prod `k'_j\arrow `y$.

% \newpage

 \item [{$M \same `m`a.[`b]M'$}]
 If $M \lored N$, then $N = `m`a.[`b]N'$ and $M' \lored N'$. By Lem.\skp\ref{gen lemma}, using the fact that $`d$ is non-trivial by assumption, and assuming $`a\neq`b$ we have that 
 \[
\derLmu `G |- N' : \bigwedge_{j\ele J, h\ele H_j}`k_{j,h}\arrow `y | `a{:}`k'_j, `b{:}`k'_{j,h},`D ,
 \]
where $ \bigwedge_{j\ele J, h\ele H} `k_{j,h}\arrow `y \seq_D `k_j\arrow`y$ for all $j\ele J$ and $\bigwedge _{j\ele J}`k_j\arrow`y \seq_D `d$.
 
By induction there exist $`G_{j,h}$, $`d'_{j,h} = \bigwedge_{j\ele J, h\ele H_j, u\ele U_{j,h}} `k'_{j,h,u}\arrow`y$, $`k''_j$, $`k''_{j,h}$ and $`D_{j,h}$ such that 
 \[
\derLmuR `G_{j,h} |- N' : {\bigwedge_{j\ele J, h\ele H_j, u\ele U_{j,h}} `k'_{j,h,u}\arrow`y} | `a{:}`k''_j, `b{:}`k''_{j,h},`D_{j,h} .
 \] 
Taking
 $ \begin{array}[t]{rcl}
`G' &=&`G \ \inter \ \bigwedge_{j\ele J, h\ele H_j}`G_{j,h} \\
`D' &=& `D \ \inter \ \bigwedge_{j\ele J, h\ele H_j}`D_{j,h} % \\
% `D'' &=& `D' 
\ \inter \ \Set{`a{:} \bigwedge_{j\ele J, h\ele H_j}`k''_j, `b{:} \bigwedge_{j\ele J, h\ele H_j, u\ele U_{j,h}}(`k'_{j,h,u} \inter`k''_{j,h})}
 \end{array} $

\noindent
we have that $\derLmuR `G' |- {[`b]N'} : (`k'_{j,h,u}\arrow`y)\prod `k'_{j,h,u} | `D' $ for all $j\ele J, h\ele H_j, u\ele U_{j,h}$, possibly using rule $(\Strength)$; we get the result by applying rules $(\MAbs)$ and $(\inter)$.
	
 \item [{$M \same `m`a.[`a]M'$}]
This case is similar to the previous one and easier. \qed
	
 \end{description}
 \end{proof}

We observe that considering leftmost-outermost reduction in the proof above is crucial to rule out the case $M \same V P_1 \dots P_j \dots P_n \red V P_1 \dots P'_j \dots P_n$ because $P_j\red P'_j$ where $V \same PQ$ is a redex, hence different than a variable. 
In fact, the induction hypothesis now tells us that if $P'_j:`d'_j$ then $P_j:`d_j$ for some $`d_j$ which is in general unrelated to $`d'_j$; now nothing ensures that $`d_j$ is compatible with the type of $V$. 
Instead, we can solve the problem for $V \same z$ by taking a suitable weaker assumption for the typing of $z$, which is the same trick to circumvent the difficulty noted before the last lemma, that arises both with $`l$ and $`m$-abstraction.

\newpage

We can now show that all strongly normalisable $`l`m$-terms are typeable in the restricted system.

 \begin{thm}[Typeability of $ \SN$-Terms] \label{thm:typeableAreSN}
For all $M \ele \SN$ there exist $`G$ and $`D $ and a type $`d$ such that $ \derLmuR `G |- M : `d | `D $.
 \end{thm}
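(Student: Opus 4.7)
The plan is to prove the theorem by induction on the pair $(\ell(M), |M|)$ ordered lexicographically, where $\ell(M)$ denotes the length of the longest reduction sequence starting from $M$ (finite because $M \ele \SN$) and $|M|$ is the syntactic size of $M$. The base case is when $M$ is already in normal form, which is handled directly by Lem.\skp\ref{prop:NFaretypeable}\,(\ref{prop:NFaretypeable-terms}).

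For the inductive step, I would pick a leftmost-outermost reduction step $M \lored N$, so that $\Lor(M) = PQ$ is either a $`b$-redex $(`l x.P')Q$ or a $`m$-redex $(`m`a.\Cmd)Q$. Since $\ell(N) < \ell(M)$, the primary induction hypothesis applied to $N$ yields a derivation $\derLmuR `G_1 |- N : `d_1 | `D_1 $. To invoke the leftmost-outermost expansion lemma (Lem.\skp\ref{l.o. lemma}), I also need a typing for $Q$. Because $Q$ is a proper subterm of $M$, we have $|Q| < |M|$, and any reduction sequence out of $Q$ can be replayed inside $M$, which gives both $Q \ele \SN$ and $\ell(Q) \seq \ell(M)$; the lexicographic induction hypothesis then delivers $\derLmuR `G_2 |- Q : `d_2 | `D_2 $.

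With both typings in hand, I would use Weakening (Lem.\skp\ref{lem:weakening}) to transport them into a common basis $`G = `G_1 \union `G_2$ and context $`D = `D_1 \union `D_2$, and then apply Lem.\skp\ref{l.o. lemma} to obtain some $`G', `D', `d'$ with $\derLmuR `G' |- M : `d' | `D' $, which is the desired conclusion.

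The main delicate point is the choice of induction measure: when the leftmost-outermost redex sits at a deep position, $\ell(Q)$ may well equal $\ell(M)$, so induction on $\ell$ alone does not suffice; the secondary induction on size ensures strict decrease because $Q$ is a proper subterm. Apart from this, all the ingredients---typeability of normal forms, leftmost-outermost subject expansion, and weakening---are already in place, so the argument goes through without any new conceptual difficulty. The only other subtlety worth noting is that one cannot replace $\lored$ by arbitrary $\redbm$ in this scheme, since Lem.\skp\ref{l.o. lemma} is formulated specifically for the leftmost-outermost strategy, reflecting the failure of general subject expansion discussed before that lemma.
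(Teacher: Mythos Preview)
Your argument is correct and follows the paper's approach: induction on the longest-reduction length, with Lem.\skp\ref{prop:NFaretypeable} for the base case and Lem.\skp\ref{l.o. lemma} for the step. Two minor remarks. First, the size component of your lexicographic order is unnecessary: $\ell(Q) < \ell(M)$ holds strictly, since any reduction sequence out of $Q$ can be carried out inside $M$ and then extended by one further step contracting the surviving redex $PQ'$; the paper accordingly gets by with a single induction on $\#(M)$. Second, to place the two typings under a common basis and context you should take $`G_1\inter`G_2$ and $`D_1\inter`D_2$ (Def.\skp\ref{def:ext-leq}) together with Strengthening (Lem.\skp\ref{lem:strengthening}), not the set-theoretic union with Weakening, because a variable free in both $N$ and $Q$ may receive unrelated types in $`G_1$ and $`G_2$.
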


 \begin{proof}
First observe that $\Lor$ is normalising. %Hence there is no theoretical loss to
Then we reason by induction on the maximum of the lengths of reduction sequences for a strongly normalisable term $M$ to its normal form (denoted by $\#(M)$).

 \begin {description}
 \item [$ \#(M) = 0 $] 
Then $M$ is in normal form, and by Lem.\skp\ref{prop:NFaretypeable}, there exist $`G$ and $ `d$ such that $\derLmuR `G |- M : `d | `D $.

 \item [$ \#(M)\geq 1 $] 
Let $M\lored N$ by contracting the redex $PQ = \Lor(M)$, then $\#(N) < \#(M)$.
Then $\#(Q) < \#(M)$ (since $Q \ele \SN$ is a proper subterm of a redex in $M$). Then by induction there exist $`G_1$, $`G_2$, $`D_1$, $`D_2$, $`d_1$, and $`d_2$ such that $\derLmuR `G_1 |- N : `d_1 | `D_1 $ and $\derLmuR `G_2 |- Q : `d_2 | `D_2 $.
By Lem.\skp\ref {l.o. lemma} there exist $`G$, $`D $, and $ `d$ such that $\derLmuR `G |- M : `d | `D $.% 
\qed

 \end{description}
 \end{proof}

 \section{Simply typed $`l`m$-calculus and Intersection Types} \label{sec:Parigot}

In the previous sections we considered the $`l`m$-calculus as a type-free calculus; in this section we will show that we can establish a connection between our intersection type assignment system and Parigot's logical assignment system. 
We will show that logical formulas translate into types of the appropriate sort, and moreover that this can be done in the restricted system of Sect.\skp\ref{sec:character}. 
Hence, the characterisation result carries over and can be used to establish the strong normalisation property of Parigot's calculus.

% SvB: the two paragraphs that were here need to go to the beginning of the paper; moved them to section 1

We use a version of Parigot's logical system as presented in \cite{Parigot'92}, which is equivalent to the original one if just terms (so not also proper commands, \ie~elements of $\Commands$) are typed. This implies that the rule for $\bot$ does not need to be taken into account.%
%\footnote{Parigot's system also includes rules $(\quaI)$ and $(\quaE)$ which we do not consider here.} 
%We call this propositional fragment of Parigot's original system the \emph{simply-typed $`l`m$-calculus}.

We briefly recall Parigot's first-order type assignment system, that we call the Simply-Typed $`l`m$-calculus.

 \begin{defi}[Parigot's Simply Typed $`l`m$-calculus] \label{def:ParigotSimpleTypes}

 \begin{enumerate}

 \firstitem
The set $\Formulas$ of \emph{Logical Formulas} is defined by the following grammar:
 \[ \begin{array}{rcl}
A,B & :: =& `v \mid A\arrow B
 \end{array} \]
where $`v$ ranges over an infinite, denumerable set of \emph{Proposition (Type) Variables}.

 \item
 \emph{Judgements} are of the form $\derLmuP `P |- M : A | `S $, where $M \ele\Terms$; 
$`P$ and $`S$ are finite mappings from $\TVar$ and $\CVar$, respectively, to formulas, and are normally written as finite sets of pairs of term variables and formulas and of names and formulas respectively, as in $`P = \Set{x_1{:}A_1,\ldots,x_n{:}A_n}$ and $`S = \Set{`a_1{:}B_1,\ldots,`a_m{:}B_m}$.

 \item
The \emph{inference rules} of this system are:
 \[ \begin{array}{rl@{\dquad}rl@{\dquad}rl}
(\Ax) : &
\Inf	{ \derLmu `P , x{:}A |- x : A | `S }
&
(`m_1): &
\Inf	{ \derLmu `P |- M : A | `a{:}A,`S
	}{ \derLmu `P |- `m`a.[`a]M : A | `S }
&
(`m_2): &
\Inf	{\derLmu `P |- M : B | `a{:}A,`b{:}B,`S
	}{ \derLmu `P |- `m`a.[`b]M : A | `b{:}B,`S }
 \end{array} \]
 \[ \begin{array}{rl@{\dquad}rl}
% \\ [6mm]
(\arrI): &
\Inf	%[\Abs]
	{ \derLmu `P,x{:}A |- M : B | `S
	 }{ \derLmu `P |- `l x.M : A\arrow B | `S }
&
(\arrE): &
\Inf	%[\App]
	{\derLmu `P |- M : A\arrow B | `S
	 \quad
	 \derLmu `P |- N : A | `S
	 }{ \derLmu `P |- MN : B | `S }
 \end{array} \]

We write $ \derLmuP `G |- M : A | `S $ to denote that this judgement is derivable in this system.

 \end{enumerate}
 \end{defi}

Through the Curry-Howard correspondence (\emph{formulas as types} and \emph{proofs as terms}), the underlying logic of this system is the \emph{minimal classical logic} (\cite{Ariola-Herbelin'03}).

Comparing Parigot's system with ours we observe that rules $(\arrI)$ and $(\arrE)$ bear some similarity with $(\LAbs)$ and $(\App)$, and rules $(`m_1)$ and $(`m_2)$ are similar to a combination of $(\MAbs)$ and $(\TCmd)$:

 \begin{lem} \label{lem:derivableMuAbs}
The following rules are derivable in the system of Def.\skp\ref{def:intersTypeAss} (and in the restricted system as well):
 \[ \begin{array}{c@{\hspace{6mm}}c}
\Inf	[\MAbs_1]
	{\derLmu `G |- M: `k\arrow `y | `a{:}`k,`D 
	}{ \derLmu `G |- `m`a.[`a]M: `k\arrow `y | `D }
&
\Inf	[\MAbs_2]
	{\derLmu `G |- M: `k'\arrow `y | `a{:}`k, `b{:}`k',`D 
	}{ \derLmu `G |- `m`a.[`b]M: `k\arrow `y | `b{:}`k',`D }
 \end{array} \]
 \end{lem}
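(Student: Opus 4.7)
The plan is to show that each of the two claimed rules can be derived by a straightforward two-step inference, composing an instance of $(\TCmd)$ followed by an instance of $(\MAbs)$. The key observation is that the premise of $(\MAbs)$ requires a command typed with a type of the shape $(`k'\arrow `r)\prod `k'$, and such a command can always be manufactured from a term of type $`k'\arrow `r$ by one application of $(\TCmd)$, provided we have the appropriate assumption $`a{:}`k'$ in the context.

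For $(\MAbs_1)$, I would start from the premise $\derLmu `G |- M: `k\arrow `y | `a{:}`k,`D $ and apply $(\TCmd)$ with side condition $\fun{`a{:}`k,`D}(`a) = `k$ to obtain $\derLmu `G |- [`a]M : (`k\arrow `y)\prod `k | `a{:}`k,`D $. Now the subject has a type fit for the premise of $(\MAbs)$ with $`k' = `k$ and $`r = `y$, so one application of $(\MAbs)$ yields $\derLmu `G |- `m`a.[`a]M : `k\arrow `y | `D $, as required.

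For $(\MAbs_2)$, starting from $\derLmu `G |- M: `k'\arrow `y | `a{:}`k,`b{:}`k',`D $ and applying $(\TCmd)$ with the side condition $\fun{`a{:}`k,`b{:}`k',`D}(`b) = `k'$ (noting $`a \neq `b$, so no clash), I obtain $\derLmu `G |- [`b]M : (`k'\arrow `y)\prod `k' | `a{:}`k,`b{:}`k',`D $. One application of $(\MAbs)$ then discharges the assumption on $`a$, yielding $\derLmu `G |- `m`a.[`b]M : `k\arrow `y | `b{:}`k',`D $.

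Since both derivations use only rules $(\TCmd)$ and $(\MAbs)$ from Def.\skp\ref{def:intersTypeAss}, and neither of these rules was removed in the restricted system of Def.\skp\ref{def:restrictedTypingSystem}, the same derivations work verbatim there: one only needs to observe that all types appearing in intermediate judgements ($`k\arrow `y$, $(`k\arrow `y)\prod `k$, $(`k'\arrow `y)\prod `k'$) are restricted whenever the types in the conclusion and premise are restricted, which is immediate from the grammar of $\LangR_D$ and $\LangR_C$. There is no real obstacle here; the lemma is essentially a cosmetic repackaging that compresses the $(\TCmd)$-then-$(\MAbs)$ pattern into a single step, which will be convenient for the comparison with Parigot's rules $(`m_1)$ and $(`m_2)$ in the sequel.
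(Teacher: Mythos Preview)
Your proposal is correct and follows exactly the paper's approach: both rules are derived by applying $(\TCmd)$ to the premise to obtain a command of type $(`k'\arrow `y)\prod `k'$, followed by one application of $(\MAbs)$. The paper presents precisely these two derivation trees without further comment, so your additional remark about the restricted system is, if anything, slightly more thorough.
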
 

 \begin{proof} Consider the derivations:
 \[ \begin{array}{c@{\hspace{6mm}}c}
\Inf	[\MAbs]
	{\Inf	[\TCmd] 
{\InfBox{\derLmu `G |- M: `k\arrow `y | `a{:}`k,`D } 
}{ \derLmu `G |- {[`a]M} : (`k\arrow `y)\prod `k | `a{:}`k,`D }
	}{ \derLmu `G |- `m`a.[`a]M: `k\arrow `y | `D }
&
\Inf	[\MAbs]
	{\Inf	[\TCmd] 
{\InfBox{ \derLmu `G |- M: `k'\arrow `y | `a{:}`k, `b{:}`k',`D } 
}{ \derLmu `G |- {[`b]M} : (`k'\arrow `y)\prod `k' | `a{:}`k, `b{:}`k',`D }
	}{ \derLmu `G |- `m`a.[`b]M: `k\arrow `y | `b{:}`k',`D }
 \end{array} \]
\arrayqed[-20pt]
 \end{proof}

%\newpage

As an example illustrating the fact that the system for $`l`m$ is more expressive than the simply typed $`l$-calculus, 
we consider the following proof of Peirce's Law, which we take from \cite{Ong-Stewart'97}:
 \[ \begin{array}{c}
\Inf	[\arrowI]
	{\Inf	[`m_1]
		{\Inf	[\arrowE]
			{\Inf	[ \Ax]
				{\derLmu x{:}(A\arrow B)\arrow A |- x : (A\arrow B)\arrow A | `a{:}A } 
%	\quad 
			\Inf	[\arrowI]	
				{\Inf	[`m_2]
					{\Inf	[ \Ax]
						{\derLmu x{:}(A\arrow B)\arrow A,y{:}A |- y : A | `a{:}A,`b{:}B }
					}{ \derLmu x{:}(A\arrow B)\arrow A,y{:}A |- `m`b.[`a]y : B | `a{:}A }	 
				}{ \derLmu x{:}(A\arrow B)\arrow A |- `ly.`m`b.[`a]y : A\arrow B | `a{:}A } 
			}{ \derLmu x{:}(A\arrow B)\arrow A |- x(`ly.`m`b.[`a]y) : A | `a{:}A }
		}{ \derLmu x{:}(A\arrow B)\arrow A |- `m`a.[`a](x(`ly.`m`b.[`a]y)) : A | {} }
	}{ \derLmu {} |- `lx.`m`a.[`a](x(`ly.`m`b.[`a]y)) : ((A\arrow B)\arrow A)\arrow A | {} }
 \end{array} \]
We observe that the term $`lx.`m`a.[`a](x(`ly.`m`b.[`a]y))$ is typable in the restricted type system (and hence in the full system as well) by a derivation with the very same structure. 
Indeed, let $`d_{A\arrow B} \ByDef `k_{A\arrow B}\arrow `y$, where $`k_{A\arrow B} \ByDef (`k_A\arrow `y)\prod `k_B$ and $`k_A,`k_B \ele \LangR_C$ 
are arbitrary; 
set further $`k_{(A\arrow B)\arrow A} \ByDef `d_{A\arrow B}\prod `k_A$ and $`d_{(A\arrow B)\arrow A} \ByDef `k_{(A\arrow B)\arrow A}\arrow `y$ ($ \ByDef `d_{A\arrow B}\prod `k_A\arrow `y $), then

(with $`G = x{:}((`k_A\arrow `y)\prod `k_B\arrow `y)\prod `k_A\arrow `y $)
 \[ \begin{array}{c} % \kern-1cm
\Inf	[\LAbs]
	{\Inf	[\MAbs_1]
		{\Inf	[\App]
			{\Inf	[ \Ax]
				{ \derLmu `G |- x : ((`k_A\arrow `y)\prod `k_B\arrow `y)\prod `k_A\arrow `y | `a{:}`k_A }
%	 \quad
%\raise1.5\RuleH \hbox to 2cm {\kern-75mm
	 		\Inf	[\LAbs]
				{\Inf	[\MAbs_2]
					{\Inf	[ \Ax]
						{\derLmu `G,y{: }{ `k_A\arrow `y} |- y : `k_A\arrow `y | `a{:}`k_A,`b{:}`k_B }
					}{ \derLmu `G,y{: }{ `k_A\arrow `y} |- `m`b.[`a]y : `k_B\arrow `y | `a{:}`k_A }
				}{ \derLmu `G |- `ly.`m`b.[`a]y : (`k_A\arrow `y)\prod `k_B\arrow `y | `a{:}`k_A }
%}% raise
% \multiput(-10,0)(0,5){4}{.}
			}{ \derLmu `G |- x(`ly.`m`b.[`a]y) : `k_A\arrow `y | `a{:}`k_A } 
		}{ \derLmu `G |- `m`a.[`a](x(`ly.`m`b.[`a]y)) : `k_A\arrow `y | {} }
	 }{ \derLmu {} |- `lx.`m`a.[`a](x(`ly.`m`b.[`a]y)) : (((`k_A\arrow `y)\prod `k_B\arrow `y)\prod `k_A\arrow `y)\prod `k_A\arrow `y | {} }
 \end{array} \]

%\newpage

As suggested in the example above, we can interpret formulas into intersection types of our system. Notably types from the restricted language $\LangR$ do suffice. 

 \begin{defi} % \label{def:translation}
The translation functions $\cdot^D:\Formulas\To \LangR_D$
and $\cdot^C:\Formulas\To \LangR_C$ are defined by: 
 \[ \begin{array}{rcl}
 `v^C &\ByDef& `y\prod `w 
 \\
(A\arrow B)^C &\ByDef& (A^C\arrow `y)\prod B^C
 \\
A^D &\ByDef& A^C\arrow `y
 \end{array} \]
We extend these mappings to bases and name contexts by: $`P^D = \Set{x{:}A^D \mid x{:}A \ele `P}$ and $`S^C = \Set{`a{:}A^C \mid `a{:}A \ele `S}$. 
 \end{defi}
It is straightforward to show that the above translations are well defined.

 \begin{thm}[Derivability preservation] \label{thm:translation}
If $ \derLmuP `P |- M:A | `S $ then $ \derLmu `P^{D } |- M:A^{D } | `S^{C} $.
 \end{thm}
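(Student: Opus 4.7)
The plan is to prove Theorem~\ref{thm:translation} by straightforward induction on the structure of the derivation $\derLmuP `P |- M : A | `S $ in Parigot's system. The translation has been carefully designed so that the five Parigot rules match the structure of the intersection type rules (and the derived rules of Lem.~\ref{lem:derivableMuAbs}) almost mechanically; the bulk of the work is already packaged into the definition of $(\cdot)^D$ and $(\cdot)^C$.

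The key observation that drives every inductive case is the unfolding of the translation on an arrow formula:
\[
(A\arrow B)^D \;=\; (A\arrow B)^C\arrow `y \;=\; \bigl((A^C\arrow `y)\prod B^C\bigr)\arrow `y \;=\; A^D\prod B^C\arrow `y.
\]
In particular $A^D$ is always of shape $`k\arrow `y$ with $`k = A^C$, which is exactly the form required by rules $(\LAbs)$, $(\App)$, $(\MAbs_1)$ and $(\MAbs_2)$ when the result type is $A^D$ or $B^D$.

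First I would handle the two propositional rules. For $(\arrI)$, the induction hypothesis gives $\derLmu `P^D, x{:}A^D |- M : B^D | `S^C $, \ie~$\derLmu `P^D, x{:}A^D |- M : B^C\arrow `y | `S^C $; applying $(\LAbs)$ with $`d = A^D$, $`k = B^C$, $`r = `y$ yields $\derLmu `P^D |- `lx.M : A^D\prod B^C\arrow `y | `S^C $, which is $\derLmu `P^D |- `lx.M : (A\arrow B)^D | `S^C $ by the identity above. The case $(\arrE)$ is dual: the IH supplies $\derLmu `P^D |- M : A^D\prod B^C\arrow `y | `S^C $ and $\derLmu `P^D |- N : A^D | `S^C $, and rule $(\App)$ delivers $\derLmu `P^D |- MN : B^C\arrow `y | `S^C $, \ie~$\derLmu `P^D |- MN : B^D | `S^C $. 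The axiom $(\Ax)$ is immediate.

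The two $`m$-rules are handled using the derived rules $(\MAbs_1)$ and $(\MAbs_2)$ of Lem.~\ref{lem:derivableMuAbs}. For $(`m_1)$, the IH gives $\derLmu `P^D |- M : A^C\arrow `y | `a{:}A^C, `S^C $, and $(\MAbs_1)$ with $`k = A^C$ produces $\derLmu `P^D |- `m`a.[`a]M : A^C\arrow `y | `S^C $, which is $\derLmu `P^D |- `m`a.[`a]M : A^D | `S^C $. For $(`m_2)$, the IH gives $\derLmu `P^D |- M : B^C\arrow `y | `a{:}A^C, `b{:}B^C, `S^C $, and $(\MAbs_2)$ with $`k = A^C$, $`k' = B^C$ yields $\derLmu `P^D |- `m`a.[`b]M : A^C\arrow `y | `b{:}B^C, `S^C $, as required.

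There is essentially no obstacle: since the translation decodes each formula into an arrow type whose argument records exactly the continuation information required by the intersection rules, every Parigot rule has a direct intersection-typed counterpart with no need for subtyping, intersection, or $`w$. The only mild subtlety is the case distinction $(`m_1)$ vs.~$(`m_2)$, which is precisely what motivated the split into the two derived rules in Lem.~\ref{lem:derivableMuAbs}, and the fact that all the translated types lie in $\LangR$, so the result also holds in the restricted system of Sect.~\ref{sec:character} -- a remark that will be exploited to derive strong normalisation of Parigot's calculus.
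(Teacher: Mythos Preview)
Your proposal is correct and follows essentially the same approach as the paper: an induction on the Parigot derivation, with each of the five rules mapped to its intersection-typed counterpart ($(\Ax)$, $(\LAbs)$, $(\App)$, $(\MAbs_1)$, $(\MAbs_2)$) via the unfolding $(A\arrow B)^D = A^D\prod B^C\arrow `y$. The paper presents the same cases in the same way, and also notes that the translation lands in the restricted system.
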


 \begin{proof} 
Each rule of the simply-typed $`l`m$-calculus has a corresponding one in the restricted intersection type system; hence it suffices to show that rules are preserved when translating formulas into types. 
The case of $(\Ax)$ is straightforward. 

 \begin{description} \itemsep 6pt
\def\InfBox{}
 \item [$ \arrI $] 
 $ \begin{array}[t]{ccc}
\Inf	[\LAbs]
	{\InfBox{ \derLmu `P^D, x{:}A^D |- M : B^D | `S^C }
	}{ \derLmu `P^D |- `lx.M : (A\arrow B)^D | `S^C }
& \ByDef &
\Inf	[\LAbs]
	{\InfBox{ \derLmu `P^D,x{:}A^C\arrow `y |- M : B^C\arrow `y | `S^C }
	}{ \derLmu `P^D |- `lx.M : {((A^C\arrow `y)\prod B^C)\arrow `y} | `S^C }
 \end{array} $

 \item [$ \arrE $] 
 $ \begin{array}[t]{l}
\Inf	[\App]
	{\InfBox{ \derLmu `P^D |- M : (A\arrow B)^D | `S^C }
	 \quad 
	 \InfBox{ \derLmu `P |- N : A^D | `S^C }
	}{ \derLmu `P |- MN : B^D | `S } 
 \quad \ByDef 
 \\ [6mm]
\Inf	[\App]
	{\InfBox{ \derLmu `P^D |- M : {((A^C\arrow `y)\prod B^C)\arrow `y} | `S^C }
	 \quad 
	 \InfBox{ \derLmu `P |- N : {A^C\arrow `y} | `S^C }
	}{ \derLmu `P^D |- MN : {B^C\arrow `y} | `S^C } 
 \end{array} $

 \item [$ `m_1 $] 
 $ \begin{array}[t]{ccc}
\Inf	[\MAbs_1]
	{\InfBox{ \derLmu `P^D |- M : A^D | `a{:}A^C,`S^C }
	}{ \derLmu `P^D |- `m`a.[`a]M : A^D | `S^C }
& \ByDef &
\Inf	[\MAbs_1]
	{\InfBox{ \derLmu `P^D |- M : {A^C\arrow `y} | `a{:}A^C,`S^C }
	}{ \derLmu `P^D |- `m`a.[`a]M : {A^C\arrow `y} | `S^C }
 \end{array} $

 \item [$ `m_2 $] 
 $ \begin{array}[t]{ccc}
\Inf	[\MAbs_2]
	{\InfBox{ \derLmu `P^D |- M : B^D | `a{:}A^C,`b{:}B^C,`S^C }
	}{ \derLmu `P^D |- `m`a.[`b]M : A^D | `b{:}B^C,`S^C }
& \ByDef &
\Inf	[\MAbs_2]
	{\InfBox{ \derLmu `P^D |- M : {B^C\arrow `y} | `a{:}A^C,`b{:}B^C,`S^C }
	}{ \derLmu `P^D |- `m`a.[`b]M : {A^C\arrow `y} | `b{:}B^C,`S^C }
 \end{array} $
\arrayqed
 \end{description}
 \end{proof}

Strong normalisation of typeable terms in Parigot's simply-typed $`l`m$-calculus (first proved in \cite{Parigot'92}) now follows as a corollary of our characterisation result.

 \begin{cor}[Strong Normalisability of Parigot's Simply Typed $`l`m$-calculus] % \label{cor:translation}
If $\derLmuP `P |- M:A | `S $, then $M \ele \SN$.
 \end{cor}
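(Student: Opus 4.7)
The plan is to derive the corollary as a direct combination of two results already established in the paper: the derivability preservation theorem (Thm.~\ref{thm:translation}) and the characterisation of strong normalisation via the restricted intersection type system (Thm.~\ref{thm:typableAreSN}).

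First I would apply Thm.~\ref{thm:translation} to the hypothesis $\derLmuP `P |- M:A | `S$, obtaining a derivation $\derLmu `P^{D} |- M : A^{D} | `S^{C}$ in our intersection type system. The crucial observation at this point is that the translation $\cdot^D,\cdot^C$ maps logical formulas into the restricted sublanguages $\LangR_D$ and $\LangR_C$: by a straightforward induction over the structure of formulas one sees that $A^D \ele \LangR_D$ and $A^C \ele \LangR_C$ for every $A \ele \Formulas$, since the base clause $`v^C = `y\prod `w$ and the inductive clauses only combine arrows of the form $(\cdot)\arrow `y$ with products, which is precisely the shape allowed by $\LangR$. Consequently $`P^D$ is a restricted basis and $`S^C$ a restricted name context.

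Next I would argue that the derivation produced by Thm.~\ref{thm:translation} in fact lies in the restricted system $\derLmuR$. This requires inspecting the proof of Thm.~\ref{thm:translation}: each rule $(\Ax)$, $(\LAbs)$, $(\App)$, $(\MAbs_1)$, $(\MAbs_2)$ used in the translation is available in the restricted system (rules $(\MAbs_1)$ and $(\MAbs_2)$ being derivable there by Lem.~\ref{lem:derivableMuAbs}, whose proof uses only $(\MAbs)$ and $(\TCmd)$), and no instance of rule $(`w)$ ever appears. Hence $\derLmuR `P^D |- M : A^D | `S^C$.

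Finally I would invoke Thm.~\ref{thm:typableAreSN}, according to which typeability in the restricted system implies strong normalisation, to conclude that $M \ele \SN$. The whole argument is essentially bookkeeping: the only point needing care is checking that the translation's image sits in $\LangR$ and that no logical rule $(`w)$ leaks into the translated derivation, so that we genuinely land in the restricted fragment rather than the full system, for which Thm.~\ref{thm:typableAreSN} would fail (as witnessed by the typeability of the fixed-point combinator discussed at the beginning of Sect.~\ref{subsec:type-SN}).
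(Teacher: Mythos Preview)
Your proposal is correct and follows essentially the same route as the paper: apply Thm.~\ref{thm:translation} (together with Lem.~\ref{lem:derivableMuAbs} to ensure the derived rules $(\MAbs_1)$, $(\MAbs_2)$ are available in the restricted system) to land in $\derLmuR$, then conclude by Thm.~\ref{thm:typableAreSN}. You spell out in more detail than the paper why the translated derivation actually lives in the restricted fragment, which is a welcome clarification; note also that the paper's own proof cites Thm.~\ref{thm:typeableAreSN} where Thm.~\ref{thm:typableAreSN} is intended.
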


 \begin{Proof} By Thm.\skp\ref{thm:translation} and Lem.\skp\ref{lem:derivableMuAbs}, if $\derLmuP `P |- M : A | `S $ then $\derLmuR`P^D |- M : A^D | `S^C $. 
That $M$ is $\SN$ now follows from Thm.\skp\ref{thm:typeableAreSN}.
% \qed
 \end{Proof}

We end this section by observing that negation can be added to the syntax of logical formulas without disrupting the correspondence with the (restricted) intersection types. Let $\bot$ be added to the formula syntax as a special atom for falsehood. Then consider the logical system which is obtained from Def.\skp\ref{def:ParigotSimpleTypes} by replacing rules $(`m_1)$ and $(`m_2)$ by:
%\footnote{The present form system is that one in \cite{Alex-Thesis}, p. 91.}:
%
 \[ \begin{array}{c@{\hspace{1cm}}c}
\Inf	[\textit{Activate}]
	{ \derLmu `P |- {\Cmd} : {\bot} | `a{:}A,`S
	}{ \derLmu `P |- {`m`a.\Cmd} : A | `S }
 & 
\Inf	[\textit{Passivate}]
	{\derLmu `P |- M : A | `a{:}A,`S
	}{ \derLmu `P |- [`a]M : {\bot} | `a{:}A,`S }
 \end{array} \]
The resulting system is the same as in \cite{Bierman'98}, and in \cite{Parigot00} Section~3.1 but with two contexts instead of a single one. 
The only difference is that in rule $(\textit{Passivate})$ we require the assumption $`a{:}A$ in the name contexts, which is just added to the conclusion in both \cite{Bierman'98} and \cite{Parigot00} since there contraction and weakening are implicitly assumed.

In this new system, the rules $(`m_1)$ and $(`m_2)$ are admissible. By defining $\neg A \ByDef A\arrow\bot$ and considering all the formulas in the $`S$ context as negated, rule $(\textit{Activate})$ corresponds to the \emph{reductio ad absurdum} rule of classical logic. On the other hand rule $(\textit{Passivate})$ can be read as the $\neg$-elimination rule, saying that from $A$ and $\neg A$ we get falsity.

To interpret $\bot$ into our intersection type system we need to keep track of the contradiction from which it arises; therefore we write the slightly different rule:
 \[ \begin{array}{c}
\Inf	[\textit{Passivate}']
	{\derLmu `P |- M : A | `a{:}A,`S
	}{ \derLmu `P |- [`a]M : \bot_A | `a{:}A,`S }
 \end{array} \]
where $\bot_A$ is a new constant for each formula $A$. Now, by adding
 \[ \begin{array}{rcl}
 \bot_A^C &\ByDef& (A^C\arrow `y)\prod A^C
 \end{array} \]
to the translation, we obtain:
 \[ \begin{array}{ccc}
\Inf	{ \derLmu `P^D |- {\Cmd} : \bot_B^C | `a{:}A^C,`S^C
	}{ \derLmu `P^D |- {`m`a.\Cmd} : A^D | `S^C }
& \ByDef &
\Inf	[\MAbs]
	{\derLmu `P^D |- {\Cmd} : (B^C\arrow `y)\prod B^C | `a{:}A^C,`S^C
	}{ \derLmu `P^D |- {`m`a.\Cmd} : A^C\arrow `y | `S^C } 
 \\ [8mm]
\Inf	{\derLmu `P^D |- M : A^D | `a{:}A^C,`S^C
	}{ \derLmu `P^D |- [`a]M : \bot_A^C | `a{:}A^C,`S^C }
& \ByDef &
\Inf	[\TCmd]
	{\derLmu `P^D |- M : A^C\arrow `y | `a{:}A^C,`S^C
	}{ \derLmu `P^D |- [`a]M : (A^C\arrow `y)\prod A^C | `a{:}A^C,`S^C }
 \end{array} \]

 \section{Related work} \label{sec:related}

The starting point of the present work is \cite{Bakel-ITRS'10}, where a type assignment system for $`l`m$-terms with intersection and union types was proved to be invariant under reduction and expansion. 
That system was without apparent semantical justification, which motivated the present new construction. 
With respect to the system of \cite{Bakel-ITRS'10}, our system does not use union types, and introduces product types for continuations, which is its main characteristic. 
The introduction of product types is inspired to the continuation model of \cite{Streicher-Reus'98}, which is the main source of this paper, together with \cite{BCD'83}, which contains the first construction of a $`l$-model as a filter model. 
However, as explained in the introduction and in the main body of the paper, we have followed the inverse path, from the model to the type system, building over \cite{Coppo-et.al'84} and \cite{Abramsky'91}.

In \cite{Bakel-Barbanera-deLiguoro-TLCA'11} we conjectured that in an appropriate subsystem of the present one, it should be possible to type exactly all strongly normalising $`l`m$-terms.
We established that result in \cite{BakBdL-ITRS12}, though via a less elegant variant of our system than the restricted system in Sect.\skp\ref{sec:character}.
The first to state the characterisation result for the $`l$-calculus was Pottinger \cite{Pottinger'80}, using a notion of type assignment similar to the intersection system of \cite{Coppo-Dezani'78,Coppo-Dezani-Venneri'80}, but extended in that it is also closed for $`h$-reduction, which is equivalent to adding a co-variant type inclusion relation; in particular, this is a system that is defined without the type constant $`w$.
However, to show that all typeable terms are strongly normalisable, \cite{Pottinger'80} only \emph{suggests} a proof using Tait's computability technique \cite{Tait'67}.
A detailed proof, using computability, in the context of the $`w$-free BCD-system \cite{BCD'83} is given in \cite{Bakel-TCS'92}; to establish the same result, saturated sets are used by Krivine in \cite{Krivine-book'93} (Ch.\skp4), in Ghilezan's survey \cite{Ghilezan'96}, and in \cite{Bakel-ACM'11}.

The converse of that result, namely the property that all strongly normalisable terms are typeable has proven to be more elusive: it has been claimed in many papers but not shown in full (we mention \cite{Pottinger'80,Bakel-TCS'92,Ghilezan'96}); in particular, the proof for the property that type assignment is closed for subject expansion (the converse of subject reduction) is dubious.
Subject expansion can only reliably be shown for \emph{left-most outermost} reduction, which is used for the proofs in \cite{Krivine-book'93,Bakel-Dezani-LATIN'02,Bakel-NDJFL'04,Bakel-ACM'11}, and our result follows that approach.

The translation in Sect.\skp\ref{sec:Parigot} that maps simple types into our extension of intersection types is a form of negative translation; in \cite{KikSak14} it is extended to the system in \cite{Bakel-ITRS'10}, thereby relating the original intersection and union type assignment system for $`l`m$ to ours.

The model in \cite{Streicher-Reus'98} is not a model of de Groote and Saurin's $`L`m$-calculus, but a variant of it, dubbed a `stream model' in \cite{NakazawaK-CLC12}; it provides a sound interpretation of the extended calculus. 
Building over stream models, in \cite{deLiguoro:ApproxLM12} it has been proven that the same type theory, but with different rules of the type assignment, gives a finitary description of the model matching the reduction in the stronger sense that the approximation theorem holds.

In \cite{NakazawaN14} an extension of $`L`m$ is considered, called $`L`m_{\sf cons}$. 
A type assignment for $`L`m_{\sf cons}$ based on \cite{deLiguoro:ApproxLM12}'s type system is proposed, and subject reduction and strong normalisation of the reduction on the typed $`L`m_{\sf cons}$ are proven. 
In \cite{Nakazawa-ITRS14} the same system is shown to enjoy Friedman's theorem.

 \section*{Conclusions} \label{conclusions}
We have presented a filter model for the $`l`m$-calculus which is an instance of Streicher and Reus's continuation model, and a type assignment system such that the set of types that can be given to a term coincides with its denotation in the model. 
The type theory and the type assignment system can be viewed as the logic for reasoning about the computational meaning of $`l`m$-terms, much as is the case for $`l$-calculus. 

By restricting the assignment system to a subset of the intersection types we have obtained an assignment system where exactly the strongly normalising $`l`m$-terms are typeable. 

Finally, we have given a translation of intersection types into logical formulas and proved that if a term is typeable in Parigot's type assignment system for $`l`m$, then it is typeable by its translation in the restricted intersection type system. 
As a by-product we have a new proof that proof terms in the $`l`m$-calculus are strongly normalising.

% \bibliography{references} \bibliographystyle{plain}

 \end{document}

% ] } )